\documentclass{article}

\makeatletter
% --------------------------------------------------------------------------------
\usepackage[style=authoryear,natbib=true,sorting=nyt,backend=bibtex8]{biblatex}
\addbibresource{ref.bib}
\usepackage{rotating} % provides sidewaystable and sidewaysfigure

% To compile this file, run "latex thesis", then "biber thesis"
% (or "bibtex thesis", if the output from latex asks for that instead),
% and then "latex thesis" (without the quotes in each case).

% Double spacing, if you want it.  Do not use for the final copy.
% \def\dsp{\def\baselinestretch{2.0}\large\normalsize}
% \dsp

% If the Grad. Division insists that the first paragraph of a section
% be indented (like the others), then include this line:
\usepackage{indentfirst}

%%%%%%%%%%%%%%%% Custom %%%%%%%%%%%%%%%%
\usepackage{authblk}
\usepackage{url,amsmath,amsfonts,amssymb}
\usepackage{caption}
\usepackage{subcaption}
\usepackage{algorithm}
\usepackage{algorithmic}
\usepackage{graphicx}
\usepackage{listings}
\usepackage[T1]{fontenc}
\usepackage{enumerate}   
\usepackage{color}
\usepackage{relsize}
\usepackage{multirow}
\usepackage[utf8]{inputenc}
\usepackage{booktabs}       % professional-quality tables
\usepackage{nicefrac}       % compact symbols for 1/2, etc.
\usepackage{microtype}      % microtypography
\usepackage{appendix}
\usepackage[dvipsnames]{xcolor}
\usepackage{pifont,xr}

%\usepackage{tkz-graph}
%\usetikzlibrary{shapes.geometric}%
\thispagestyle{empty}
\usepackage{geometry}
\geometry{verbose,tmargin=1in,bmargin=1in,lmargin=1in,rmargin=1in}
%%%%%%%%%

\numberwithin{equation}{section}

\DeclareMathOperator*{\calO}{\mathcal O}

%%%%%%%%%

\newtheorem{assumption}{Assumption}

\newtheorem{example}{Example}
\newtheorem{theorem}{Theorem}
\newtheorem{lemma}[theorem]{Lemma}
\newtheorem{corollary}[theorem]{Corollary}
\newtheorem{proposition}[theorem]{Proposition}

\newcommand{\GFCModel}{BMU}
\newcommand{\GFCMethod}{Ucut}

  % end of proof

%\newenvironment{proof}{\par\noindent{\textbf{Proof}\ }}{\hfill\BlackBox\\[2mm]}
%\newenvironment{proof}{\noindent {Proof} }{\endprf\par}

\def \endprf{\hfill {\vrule height6pt width6pt depth0pt}\medskip}
\newenvironment{proof}{\noindent {\textbf{Proof}} }{\endprf\par}

\newcommand{\Rbb}{{\mathbb R}}

\newcommand{\Ebb}{{\mathbb E}}
\newcommand{\Ibb}{\mathbb{I}}
\newcommand{\Pbb}{\mathbb{P}}

\newcommand{\Ocal}{\mathcal{O}}
\newcommand{\Ecal}{\mathcal{E}}
\newcommand{\ftil}{\tilde{f}}

\newcommand{\Var}{\textit{var}}

\newlength{\toppush}
\setlength{\toppush}{2\headheight}
\addtolength{\toppush}{\headsep}

% ------------------------------------------------------------------------------------

\bibliography{ref}
\begin{document}
\title{Binomial Mixture Model With U-shape Constraint}

\author[1]{Yuting Ye \thanks{yeyt@berkeley.edu}}
\author[2]{Peter J. Bickel \thanks{bickelp@berkeley.edu}}
\affil[1]{Biostatistics Division, UC Berkeley}
%\affil[2]{Department of Data Science and Statistics, SUSTech}
\affil[2]{Department of Statistics, UC Berkeley}
%\date{\today} 

\maketitle

\begin{abstract}
  In this article, we study the binomial mixture model under the regime that the binomial size $m$ can be relatively large compared to the sample size $n$. This project is motivated by the GeneFishing method \citep{liu2019genefishing}, whose output is a combination of the parameter of interest and the subsampling noise. To tackle the noise in the output, we utilize the observation that the density of the output has a U shape and model the output with the binomial mixture model under a U shape constraint. We first analyze the estimation of the underlying distribution $F$ in the binomial mixture model under various conditions for $F$. Equipped with these theoretical understandings, we propose a simple method Ucut to identify the cutoffs of the U shape and recover the underlying distribution based on the Grenander estimator \citep{grenander1956theory}. It has been shown that when $m = \Omega(n^{\frac{2}{3}})$, the identified cutoffs converge at the rate $\calO(n^{-\frac{1}{3}})$. The $L_1$ distance between the recovered distribution and the true one decreases at the same rate. To demonstrate the performance, we apply our method to varieties of simulation studies, a GTEX dataset used in \citep{liu2019genefishing} and a single cell dataset from Tabula Muris.
\end{abstract}

% \setcounter{tocdepth}{1}
% \tableofcontents
%-------------------------------------------------------------------------------------------------
%-------------------------------------------------------------------------------------------------
%\iffalse
\section{Introduction}\label{sec:bmu_intro}
%\subsection{Binomial Mixture Model}\label{sec:bmu_bm_model}
The binomial mixture model has received much attention since the late 1960s. In the field of performance evaluation, \citet{lord1969estimating,lord1975empirical, sivaganesan1993robust} utilized this model to address the problem of psychological testing. \citet{thomas1989binomial} used a two-component binomial mixture distribution to model the individual differences in children's performance on a classification task. \citet{grilli2015binomial} employed a binomial finite mixture to model the number of credits gained by freshmen during the first year at the School of Economics of the University of Florence. In addition, the binomial mixture model is commonly applied to analyze population survey data in ecology. \citet{royle2004n,kery2008estimating, o2015partitioning,wu2015bayesian} estimate absolute abundance while accounting for imperfect detection using binomial detection models. The binomial mixture model was also used to estimate bird and bat fatalities at wind power facilities \citep{McDonald2020.01.21.914754}.

Formally, we say $X$ is a random variable which has a binomial mixture model if
\begin{equation}\label{eq:general_binomial_mixture}
X \sim \int \text{Binomial}(m, s) dQ(m, s)
\end{equation}
where $Q(\cdot, \cdot)$ is a bivariate measure of the binomial size $m$ and the success probability $s$ on $\mathbb{N} \times [0, 1]$. In the field of population survey in ecology, $m$ can be modeled as Poisson or negative binomial random variable while $s$ can be modeled as a beta random variable, linked to a linear combination of additional covariates by a logistic function \citep{royle2004n,kery2008estimating, o2015partitioning,wu2015bayesian, McDonald2020.01.21.914754}. Such models are always identifiable thanks to the parametric assumptions. In the field of performance evaluation, $m$ is always known and usually small (restricted by the intrinsic nature of the problem), thus \eqref{eq:general_binomial_mixture} is reduced to
\begin{equation}\label{eq:binomial_mixture}
  \left \{
  \begin{array}{l}
    s \sim F,\\
    X|s \sim \text{Binomial}(m, s),\\
  \end{array}
  \right .
\end{equation}
where $F$ is a probability distribution on $[0, 1]$. For instance, $m$ refers to the number of questions of a psychological test in \citet{lord1969estimating,lord1975empirical, sivaganesan1993robust}. When the univariate probability distribution $F$ corresponds to a finite point mass function (pmf), \eqref{eq:binomial_mixture} is a binomial finite mixture model as in \citet{thomas1989binomial, grilli2015binomial}; when $F$ corresponds to a density (either parametric or nonparametric), \eqref{eq:binomial_mixture} is a hierarchical binomial mixture model as in \citet{lord1969estimating,lord1975empirical, sivaganesan1993robust}. Such models suffer from an unidentifiability issue --- only the first $m$ moments of $F$ can be identified even with unlimited sample size \citep{teicher1963identifiability}; more identifiability details are discussed in Appendix \ref{appendix:bmu_identifiability}.  

In this study, we are interested in a regime unlike that for performance evaluation where $m$ is known and small or that for population survey in ecology where $m$ is unknown. We study \eqref{eq:binomial_mixture} with a known $m$ that can be as large as the magnitude of $n$, which has not been investigated before in the literature. This study is motivated by the GeneFishing method \citep{liu2019genefishing}, which is proposed to take care of extreme inhomogeneity and low signal-to-noise ratio in high-dimensional data. This method employs subsampling and aggregation techniques to evaluate the functional relevance of a candidate gene to a set of bait genes (the bait genes are pre-selected by experts concerning certain biological processes). Briefly speaking, it clusters a subsample of candidate genes and the bait genes repetitively. One candidate gene is thought of as functioning similarly to the bait genes if the former and the latter are always grouped during the clustering procedure. More details about GeneFishing can be found in Appendix \ref{appendix:bmu_gene_fishing}.

Mathematically, suppose there are $n$ objects (e.g., genes) and $m$ binomial trials (e.g., the number of clustering times which can be manually tuned). For each object, we assume it has a success probability $s_i$ (e.g., a rate reflecting how relevant one candidate gene is to the baits genes) that is i.i.d. sampled from an underlying distribution $F$. Then, an observation $X_i$ (e.g., the times one candidate gene and the bait genes are grouped together) or $\hat{s}_i := X_i/m$ (denoted as capture frequency ratio in GeneFishing, or CFR) is independently generated from the binomial mixture model \eqref{eq:binomial_mixture}. A discussion on the independence assumption is put in Appendix \ref{appendix:bmu_gene_fishing}.

Our goal is to estimate and infer the underlying distribution $F$. When $m$ is sufficiently large, the binomial randomness plays little role, and there is no difference between $\hat{s}_i$ and $s_i$ for estimation or inference purposes. In other words, the permission of a large $m$ alleviates the identifiability issue of the binomial mixture model. Thereby, a question naturally arises as follows. 
\begin{itemize}
  \item [\textbf{Q1}] For the binomial mixture model \eqref{eq:binomial_mixture}, what is the minimal binomial size $m$ so that any distribution estimators can work as if there is no binomial randomness?
\end{itemize}
We can reformulate Q1 in a more explicit way:
\begin{enumerate}
\item [\textbf{~~Q1'}] Considering that the binomial mixture model \eqref{eq:binomial_mixture} is trivially identifiable when $m = \infty$, is there a minimal $m$ such that we can have an ``acceptable'' estimator of $F$ under various conditions:
  \begin{itemize}
  \item General $F$ without additional conditions.
  \item $F$ with a density.
  \item $F$ with a continuously differentiable density.
  \item $F$ with a monotone density.
%  \item $F$ with a U-shape density.
  \end{itemize}
\end{enumerate}

If the above question can be answered satisfactorily, we can cast our attention back to the GeneFishing example. In \citet{liu2019genefishing}, there remains an open question on how large $\hat{s}_i$ should be so that the associated gene is labeled as a ``discovery''. We have an important observation that can be leveraged towards this end: the histograms of $\hat{s}_i$ appear in a U shape; see Figure \ref{fig:hist_CFR}. Hence, the second question we want to answer is 
\begin{itemize}
\item [\textbf{Q2}] Suppose the underlying distribution $F$ has a U shape, how to make decisions based on the data generated from the binomial mixture model?
\end{itemize}

\begin{figure}[ht]
  \centering
  % \begin{minipage}{\textwidth}
  %   \centering
  %   \includegraphics[width = \textwidth, height = 0.35\textwidth]{../../report4/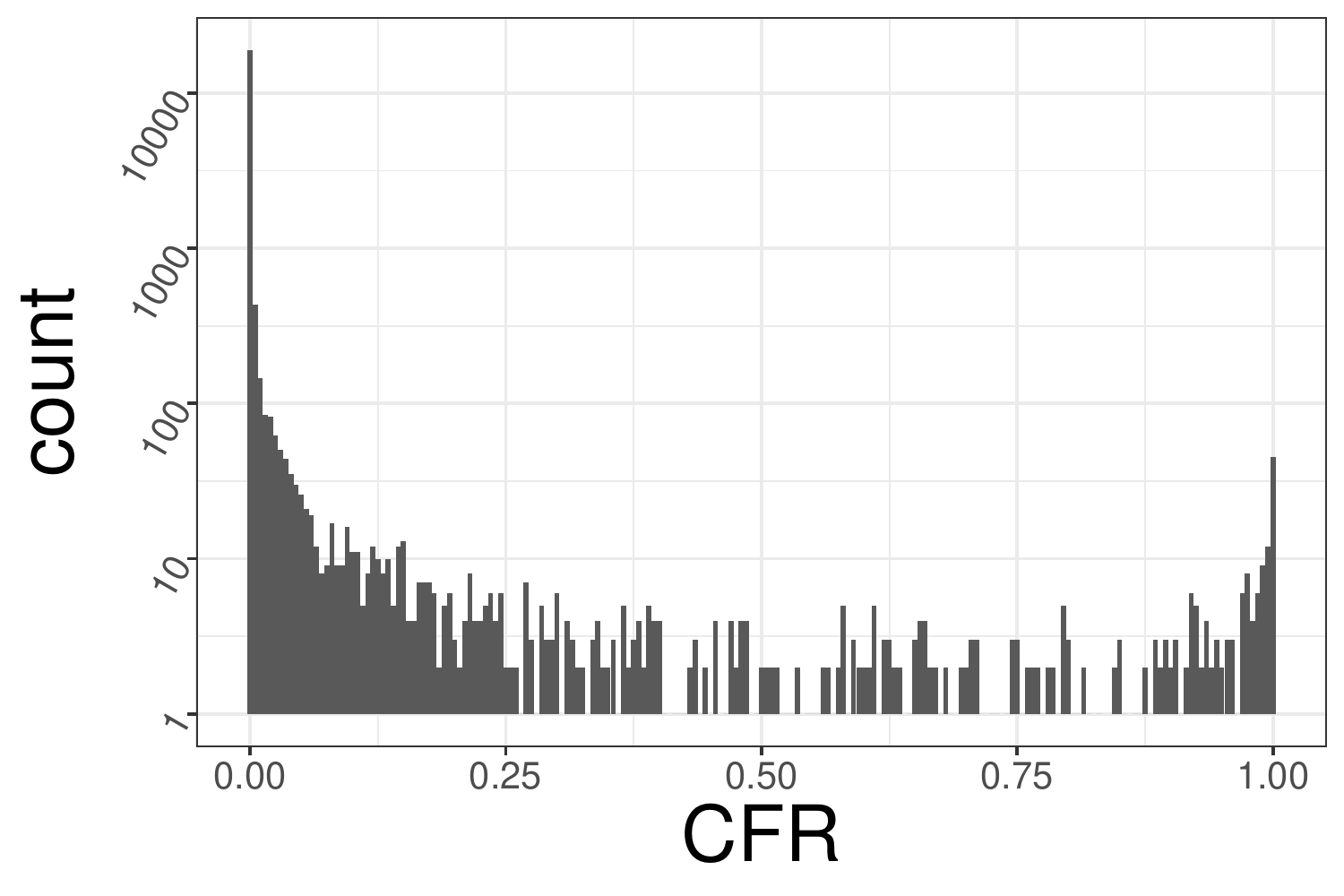}
  %   \subcaption{Hitogram of fished-out counts}
  % \end{minipage}\\
  \begin{minipage}{0.24\textwidth}
      \centering
      \includegraphics[width = \textwidth]{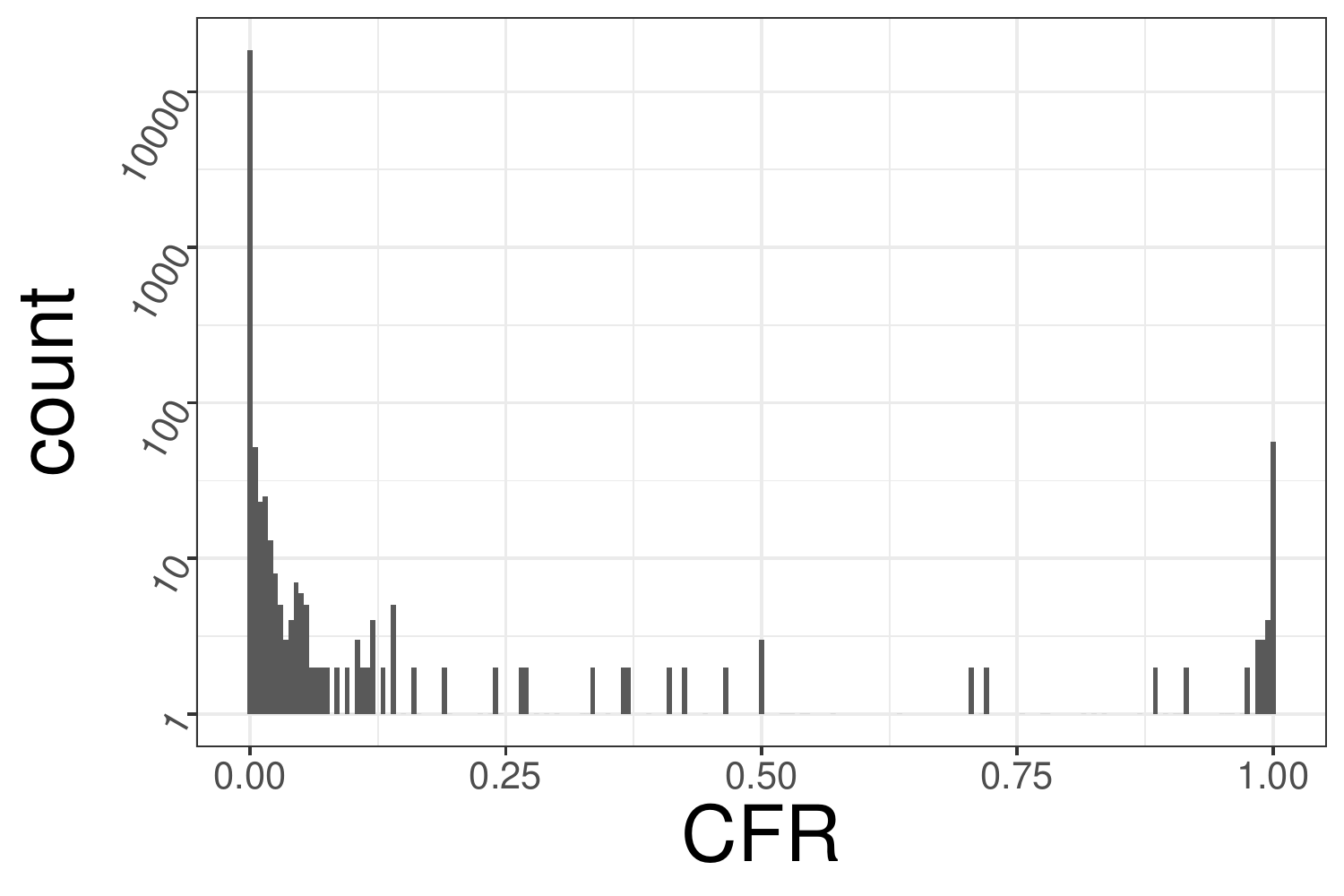}
      \subcaption{Liver}
      %\subfloat[Liver]{\hspace{.5\linewidth}}    
  \end{minipage}
  \begin{minipage}{0.24\textwidth}
      \centering
      \includegraphics[width = \textwidth]{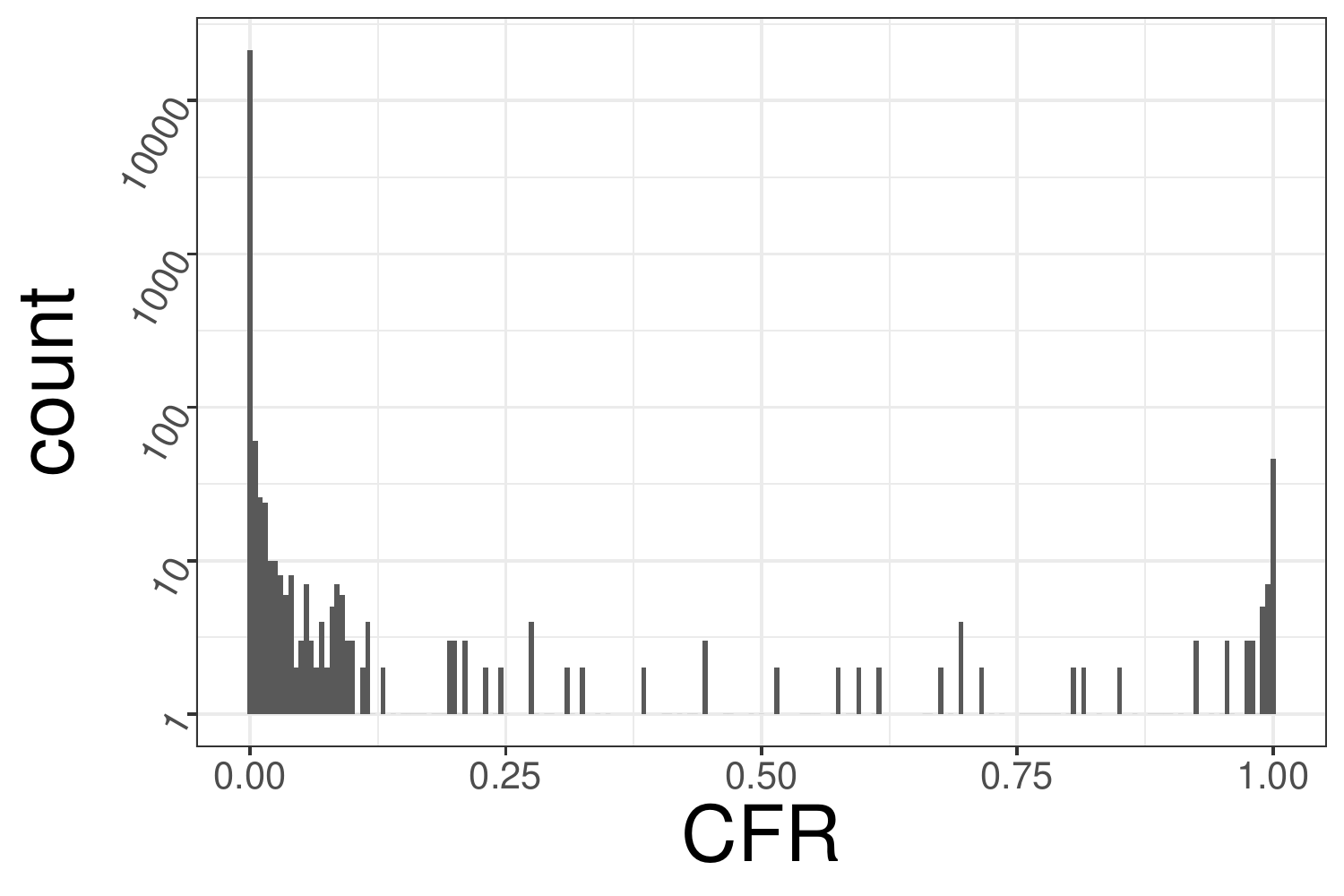}
      \subcaption{Colon Transverse}
      %\subfloat[Colon Transverse]{\hspace{.5\linewidth}}    
  \end{minipage}
  \begin{minipage}{0.24\textwidth}
      \centering
      \includegraphics[width = \textwidth]{AC_hist.pdf}
      \subcaption{Artery Coronary}
      %\subfloat[Artery Coronary]{\hspace{.5\linewidth}}    
  \end{minipage}
  \begin{minipage}{0.24\textwidth}
      \centering
      \includegraphics[width = \textwidth]{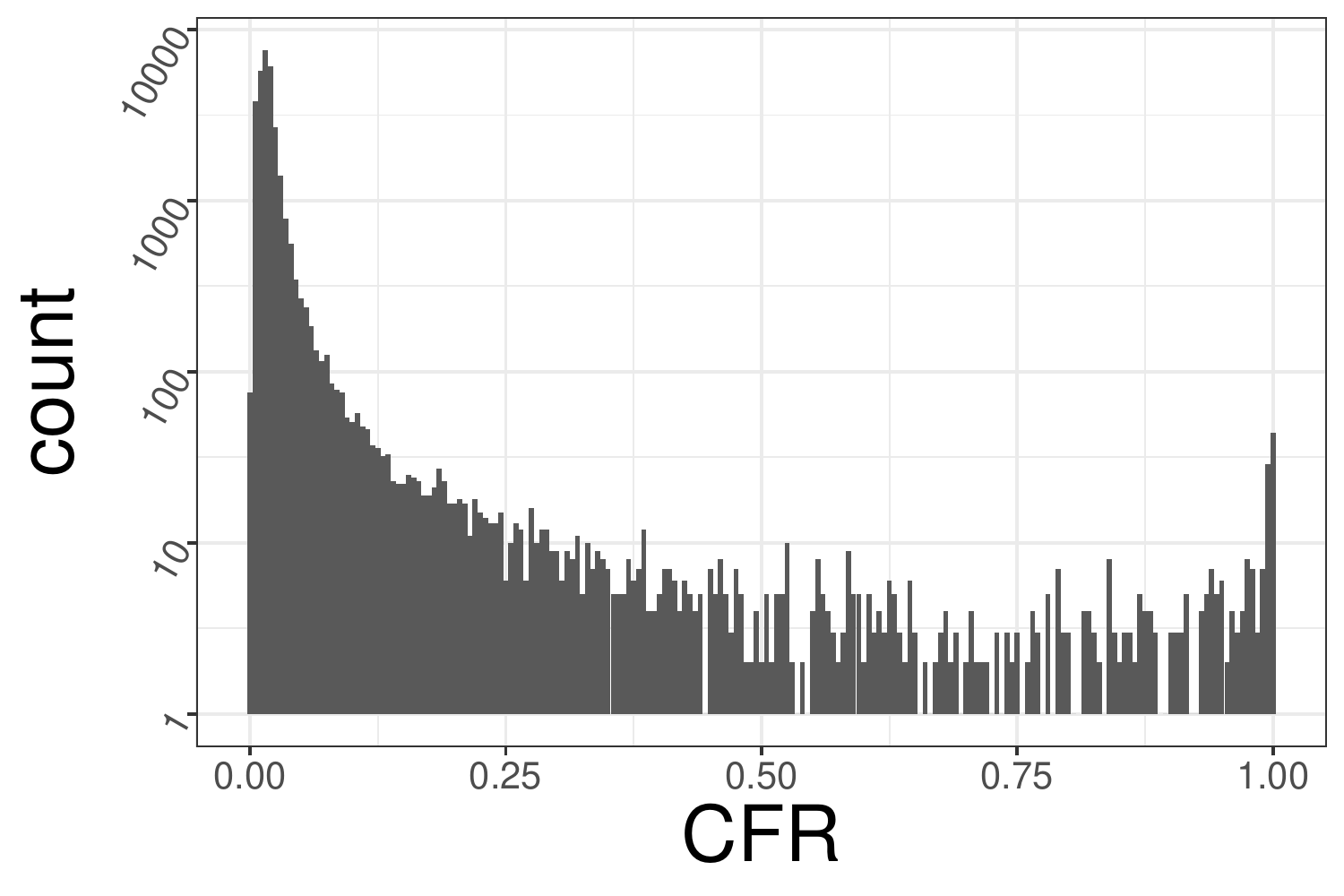}
      \subcaption{Testis}
      %\subfloat[Testis]{\hspace{.5\linewidth}}    
  \end{minipage}
  \caption{Histograms of the CFRs obtained by applying GeneFishing to different tissues.}
  \label{fig:hist_CFR}
\end{figure}

 % This question is also related to the identifiability issue of the binomial mixture model; see Appendix \ref{appendix:bmu_identifiability}. As with the second question, it is the motivating example that leads us to the binomial mixture model; see Appendix \ref{appendix:bmu_gene_fishing}. The shape-constraint model in the first question is proposed to depict the output from the GeneFishing method \citep{liu2019genefishing}. 

% \subsection{Goals}\label{sec:bmu_goals}
% In this study, the goal is to estimate and infer the underlying distribution $F$ on $[0, 1]$. We have two specific questions to answer:
% When $m$ is sufficiently large, the binomial randomness plays little role, and there is no difference between $\hat{s}_i$ and $s_i$ for estimation or inference purposes. Therefore, the first question naturally arises when looking into the binomial mixture with a large $m$. This question is related to the identifiability issue of the binomial mixture model. After we obtain some insights into the first question, we are ready to answer the second question that motivates this study, i.e., the cutoff selection problem for GeneFishing.

\subsection{Main contributions}
Our contributions are twofold, which correspond to the two questions raised above. One tackles the identifiability issue for the binomial mixture model when $m$ can be relatively large compared to $n$. The other one answers the motivating question --- how to select the cutoff for the output of GeneFishing.

\subsubsection{New results for large $m$ in Binomial mixture model}\label{sec:contribution1}
Based on the results of \citet{teicher1963identifiability} and \citet{wood1999binomial}, the only hope is to use a large $m$ if we want to solve, or at least alleviate, the identifiability issue for arbitrary mixtures of binomial distributions. When $n$ is sufficiently large, we show that regardless of the identifiability of the model \eqref{eq:binomial_mixture}, we can find an estimator of $F$, according to the conditions on $F$, such that the estimator locates in a ball of radius $r(m)$ of $F$  in terms of some metrics such as $L_1$ distance and Kolmogorov distance, where $r(m)$ is a decreasing function of $m$. Specifically,
\begin{itemize}
\item \textbf{[Corollary of Theorem \ref{thm:CDF_deviation_Lp}]} For general $F$, if the $L_p$ distance is used, then $r(m) = \frac{C_1}{m^{\frac{1}{2p}}}$ for the empirical cumulative distribution function, where $C_1$ is a positive constant that depends on $p$.
\item \textbf{[Corollary of Theorem \ref{thm:noisy_DKW}]} If $F$ has a bounded density and the Kolmogorov distance is used, then $r(m) = \frac{C_2}{\sqrt{m}}$ for the empirical cumulative distribution function, where $C_2$ only depends on the maximal value of the density of $F$.
\item \textbf{[Corollary of Theorem \ref{thm:noisy_hist_density_upper_risk}]} If $F$ has a density whose derivative is absolutely continuous, and the truncated integrated $L_2$ distance is used, then $r(m) = \frac{C_3}{\sqrt{m}}$ for the histogram estimator, where $C_3$ only depends on the density of $F$.
\item \textbf{[Corollary of Theorem \ref{thm:noisy_l1_grenander}]} If $F$ has a bounded monotone density and the $L_1$ distance is used, then $r(m) = \frac{C_4}{\sqrt{m}}$ for the Grenander estimator, where $C_4$ only depends on the density of $F$.
%\item \textbf{[Corollary of Theorem \ref{thm:c0_c1_converge_finite_sample}]} If $F$ has a U-shape density as specified in Section \ref{sec:bmu_model}, then $r(m) = \frac{C_5}{\sqrt{m}}$ for the \GFCMethod\/ introduced in Section \ref{sec:bmu_method_npmle}, where $C_5$ only depends on the density of $F$.    
\end{itemize}

\subsubsection{The cutoff selection for GeneFishing}
To model the CFRs generated by GeneFishing, we use the binomial mixture model with the U-shape constraint, under the regime where the binomial size $m$ can be relatively large compared to the sample size $n$. With the theoretical understandings mentioned in Section \ref{sec:contribution1}, we propose a simple method Ucut to identify the cutoffs of the U shape and recover the underlying distribution based on the Grenander estimator. It has been shown that when $m = \Omega(n^{\frac{2}{3}})$, the identified cutoffs converge at the rate $\calO(n^{-\frac{1}{3}})$. The $L_1$ distance between the recovered distribution and the true one decreases at the same rate. We also show that the estimated cutoff is larger than the true cutoff with a high probability. The performance of our method is demonstrated with varieties of simulation studies, a GTEX dataset used in \citep{liu2019genefishing} and a single cell dataset from Tabula Muris.

\subsection{Outline}\label{sec:bmu_outline}
The rest of the paper is organized as follows. %Section \ref{sec:bmu_identifiability} summarizes the identifiability issue of the binomial mixture model in literature.
Section \ref{sec:bmu_notation} introduces the notation used throughout the paper.  To answer Q1 (or more specifically, Q1'), Section \ref{sec:bmu_traditional_estimators} analyzes the estimation and inference of the underlying distribution $F$ in the binomial mixture model \eqref{eq:binomial_mixture}, under various conditions imposed on $F$. % , the kernel density estimator in Section \ref{sec:bmu_bm_kde}
Equipped with these analysis tools, we cast our attention back to the GeneFishing method to answer Q2. Section \ref{sec:bmu_U_shape} first introduces a model with U-shape constraint, called BMU, to model the output of GeneFishing. The cutoffs of our interest are also introduced in this model. Then in the same section, we propose a non-parametric maximum likelihood estimation (NPMLE) method \GFCMethod\/ based on the Grenander estimator to identify the cutoffs and estimate the underlying U-shape density. We also provide a theoretical analysis of the estimator. Next, we apply \GFCMethod\/ to several synthetic datasets and two real datasets in Section \ref{sec:bmu_exp}. All the detailed proofs of the theorems mentioned in previous sections are in Appendix.
%\fi

\section{Notation}\label{sec:bmu_notation}
Denote by $F$ a probability distribution. Let $s \sim F$ and $m \cdot \hat{s} \sim Binomial(m, s)$ given $s$, where $m$ is a positive integer; see Model \eqref{eq:binomial_mixture} for details. If there is no confusion, we also use $F$ to represent the associated cumulative distribution function (CDF), i.e, $F(x) = \Pbb[s \leq x]$. By $F^{(m)}$ denote the binomial mixture CDF for $\hat{s}$, i.e., $F^{(m)}(x) = \Pbb[\hat{s} \leq x]$.

Suppose there are $n$ samples independently generated from $F$, i.e., $s_1, \ldots, s_n$. Correspondingly, we have $m \cdot \hat{s}_i|s_i \sim Binomial(m, s_i)$ independently. By $F_n$ and $F_{n,m}$ denote the empirical CDF of $s_i$'s and $\hat{s}_i$'s, respectively. Specifically,
$$F_n(x) = \frac{1}{n} \sum_{i=1}^n \Ibb[s_i \leq x] ~~~~\text{and}~~~~ F_{n,m}(x) = \frac{1}{n} \sum_{i = 1}^n \Ibb[\hat{s}_i \leq x].$$
If $F$ has a density, define the Grenander estimator $\ftil_n(x)$ ($\ftil_{n,m}(x)$) for $s_i$'s ($\hat{s}_i$'s) as the left derivative of the least concave majorant of $F_n$ ($F_{n,m}$) evaluated at $x$ \citep{grenander1956theory}. %If $F$ has point mass function (pmf), let $\ptil_n(x)$ ($\ptil_{n,m}(x)$) be the Grenander estimator for $s_i$'s ($\hat{s}_i$'s) evaluated at $x$.

% Define $\alpha_l(x; v) = \Pbb[v \leq x]$, $\alpha_{mid}(x, y; v) = \Pbb[x < v < y]$, $\alpha_r(y; v) = \Pbb[v \geq y]$, where $v$ can be $s$ or $\hat{s}$. Define $N_l(x; \{v_i\}_{i=1}^n) := \#\{v_i \leq x, i = 1,\ldots, n\}$, $N_{mid}(x, y; \{v_i\}_{i=1}^n) := \#\{x < v_i \leq y, i = 1,\ldots, n\}$, $N_r(y; \{v_i\}_{i=1}^n) := \#\{v_i > y, i = 1,\ldots, n\}$, $N(x; \{v_i\}_{i = 1}^n) = \#\{v_i = x, i = 1, \ldots, n\}$, where $\{v_i\}_{i=1}^n$ can be $\{s_i\}_{i=1}^n$ or $\{\hat{s}_i\}_{i=1}^n$. 

For a density $f$, we use $f_{max}$ and $f_{\min}$ to denote its maximal and minimal function values on the domain of $f$. We use $\Ibb$ to denote the indicator function, and use $x^+$, $x^-$ to denote the right and left limit of $x$ respectively. 

\section{Estimation and Inference of $F$ under Various Conditions}\label{sec:bmu_traditional_estimators}
When $m$ is sufficiently large, $\hat{s}$ behaves like $s$, which implies that we can directly estimate the underlying true $F$ of the binomial mixture model \eqref{eq:binomial_mixture}. A natural question arises whether there exists a minimal binomial size $m$ so that the identifiability issue mentioned in Appendix \ref{appendix:bmu_identifiability} is not a concern. We investigate the estimation of general $F$, $F$ with a density, $F$ whose density has an absolutely continuous derivative, and $F$ with a monotone density. We consider the empirical CDF estimator for the first two conditions, the histogram estimator for the third condition, and the Grenander estimator for the last condition. The investigations into the estimation of $F$ under various conditions provide us with the analysis tools to design and analyze the cutoff method for GeneFishing.

\subsection{General $F$}\label{sec:bmu_bm_general_CDF}
We begin with the estimation of $F$, based on $\{\hat{s}_i\}_{i=1}^n$, without additional conditions except that $F$ is defined on $[0, 1]$. Towards this end, the empirical CDF estimator might be the first method that comes into one's mind. It is easy to interpret using a diagram, and it exists no matter whether $F$ corresponds to a density, a point mass function, or a mixture of both.

To measure the deviation of the empirical CDF from $F$, we use the $L_p$ distance with $p\geq 1$. The $L_p$ distance between two distributions $F_1$ and $F_2$ is defined as
$$\mathcal{L}_p(F_1, F_2) := \left (\int_{\mathbb{R}} \vert F_1(x) - F_2(x) \vert^p dx\right )^{\frac{1}{p}}.$$
The $L_p$ distance has two special cases that are easily interpretable from a geometric perspective. First, when $p = 1$, it looks at the cumulative differences between the two CDFs.% ; see the grey shaded area in Figure \ref{fig:Lp_illustration}.
The $L_1$ distance is known to be equivalent to the 1-Wasserstein ($\mathcal{W}_1$) distance on $\mathbb{R}$, which is also known as the Kantorovich-Monge-Rubinstein metric.
% \begin{eqnarray*}
%   \mathcal{W}_1(F_1, F_2) &:=& \int_0^1 \vert F_1^{-1}(t) - F_2^{-1}(t)\vert dt
%                           = \int_{0}^{1}\int_{\min\{F_1^{-1}(t), F_2^{-1}(t)\}}^{\max\{F_1^{-1}(t), F_2^{-1}(t)\}} dxdt\\
%                           &\overset{(i)}{=}&\int_{\mathbb{R}}\int_{\min\{F_1(x), F_2(x)\}}^{\max\{F_1(x), F_2(x)\}} dtdx
%                           = \int_{\mathbb{R}} \vert F_1(x) - F_2(x)\vert dx
%                           = \mathcal{L}_p(F_1, F_2),
% \end{eqnarray*}
% where Equation (i) holds because of the Fubini's theorem.
Second, when $p = \infty$, it corresponds to the Kolmogorov-Smirnov (K-S) distance:
$$d_{KS}(F_1, F_2) := \sup_x |F_1(x) - F_2(x)|,$$
which measures the largest deviation between two CDFs. % ; see the length of the red vertical shaded line in Figure \ref{fig:Lp_illustration}.
The K-S distance is a weaker notion of the total variation distance on $\mathbb{R}$ (total variation is often too strong to be useful). %It might also remind the readers, who have a training in undergraduate probability, of the celebrated Glivenko-Cantelli theorem. 

% \begin{figure}[ht]
%   \centering
%     \includegraphics[width=0.4\textwidth]{Lp_illustration.pdf}
%   \caption{The length of the vertical shaded line in red represents the $d_{KS}(F_1, F2)$; the area of the grey shaded region represents $\mathcal{L}_1(F_1, F_2)$.}\label{fig:Lp_illustration}
% \end{figure}

In the sequel, we study $\mathcal{L}_p(F^{(m)}, F)$ for $F$ supported on $[0, 1]$. Theorem \ref{thm:CDF_deviation_Lp} states that without any conditions imposed on $F$, the $L_p$ distance between $F^{(m)}$ and $F$ is bounded by $\Ocal(m^{-\frac{1}{2p}})$. One key to this theorem lies in the finite support of $F$, which enables the usage of the Fubini's theorem. Along with the Dvoretzky-Kiefer-Wolfowi (DKW) inequality \citep{massart1990tight}, it implies that the $L_p$ distance between $F_{n,m}$ and $F$ is bounded by $\Ocal(m^{-\frac{1}{2p}}) + \Ocal(n^{-\frac{1}{2}})$. Particularly, we have $\mathcal{L}_1(F^{(m)}, F) = \Ocal(m^{-\frac{1}{2}})$ and $\mathcal{L}_1(F_{n,m}, F) = \Ocal(m^{-\frac{1}{2}}) + \Ocal(n^{-\frac{1}{2}})$.

\begin{theorem}[The $L_p$ distance between $F^{(m)}$/$F_{n,m}$ and $F$]\label{thm:CDF_deviation_Lp}
  For a general $F$ on $[0, 1]$, it follows that
  $$\left ( \int_0^1 \vert F^{(m)}(x) - F(x) \vert^p dx\right )^{\frac{1}{p}} \leq \frac{C(p)}{m^{\frac{1}{2p}}},$$
  where $C(p)$ is a positive constant that depends on $p$. It indicates that
  $$\Ebb\left (\int_0^1 \vert F_{n,m}(x) - F(x) \vert^p dx\right )^{\frac{1}{p}} \leq \frac{C(p)}{m^{\frac{1}{2p}}} + \frac{K}{\sqrt{n}},$$
  where $K$ is another universal positive constant.
\end{theorem}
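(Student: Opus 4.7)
The plan is to bound $|F^{(m)}(x) - F(x)|$ pointwise by an expectation of an indicator difference, raise to the $p$-th power using Jensen, swap the integral over $x$ with the outer expectation via Fubini, and then exploit the identity $\int_0^1 |\mathbb{I}[a \le x] - \mathbb{I}[b \le x]|\,dx = |a-b|$ valid for $a,b \in [0,1]$. This reduces everything to bounding the first absolute moment $\Ebb[|\hat s - s|]$, which, conditional on $s$, is controlled by the binomial variance.

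More concretely, writing $F^{(m)}(x) - F(x) = \Ebb\bigl[\mathbb{I}[\hat s \le x] - \mathbb{I}[s \le x]\bigr]$ and using that the integrand takes values in $\{-1,0,1\}$, Jensen gives
\begin{equation*}
|F^{(m)}(x) - F(x)|^p \le \Ebb\bigl|\mathbb{I}[\hat s \le x] - \mathbb{I}[s \le x]\bigr|.
\end{equation*}
Integrating in $x$ over $[0,1]$ and applying Fubini then yields
\begin{equation*}
\int_0^1 |F^{(m)}(x) - F(x)|^p\,dx \le \Ebb|\hat s - s|.
\end{equation*}
Finally, by Cauchy--Schwarz and $\text{Var}(\hat s \mid s) = s(1-s)/m \le 1/(4m)$, one gets $\Ebb|\hat s - s| \le 1/(2\sqrt{m})$, giving $C(p) = (1/2)^{1/p}$. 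The finite support of $F$ on $[0,1]$ is essential here because otherwise the Fubini step would be vacuous.

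For the second inequality, I would use the $L^p$ triangle inequality to split
\begin{equation*}
\mathcal{L}_p(F_{n,m}, F) \le \mathcal{L}_p(F_{n,m}, F^{(m)}) + \mathcal{L}_p(F^{(m)}, F),
\end{equation*}
take expectations, and bound the second term by the part just proved. For the first term, I would use that the domain has length one, so $\mathcal{L}_p(F_{n,m}, F^{(m)}) \le d_{KS}(F_{n,m}, F^{(m)})$. Since $F_{n,m}$ is the empirical CDF of $n$ i.i.d.\ draws from $F^{(m)}$, the DKW inequality gives $\Ebb\bigl[d_{KS}(F_{n,m}, F^{(m)})\bigr] \le K/\sqrt{n}$ via the standard tail-integration $\Ebb[Y] = \int_0^\infty \Pr[Y > t]\,dt$ with $\Pr[Y > t] \le \min(1, 2e^{-2nt^2})$.

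I do not anticipate any serious obstacle. The only subtle points are (i) ensuring that Jensen is applied to $|\cdot|^p$ rather than directly to the signed quantity (the sign cancellation is fortuitously unnecessary because $|g|^p = |g|$ when $g \in \{-1,0,1\}$), and (ii) the step $\mathcal{L}_p \le d_{KS}$ which relies on the bounded support $[0,1]$. Everything else is standard.
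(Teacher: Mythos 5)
Your proposal is correct, but the key estimate is obtained by a genuinely different route from the paper's. The paper bounds $|F^{(m)}(x)-F(x)|$ pointwise by $\Ebb[\exp\{-m(x-s)^2/2\}]$ via the sub-Gaussianity of $\hat s - s$ (Hoeffding), pushes the $p$-th power inside by Jensen to get $\Ebb[\exp\{-mp(x-s)^2/2\}]$, and then uses Fubini together with the Gaussian integral $\int_0^1 \exp\{-mp(x-s)^2/2\}\,dx \le \sqrt{2\pi/(mp)}$, yielding $C(p) = (2\pi/p)^{1/(2p)}$. You instead apply Jensen directly to the indicator difference, so that $|F^{(m)}(x)-F(x)|^p \le \Ebb\bigl|\Ibb[\hat s\le x]-\Ibb[s\le x]\bigr|$, and after Fubini invoke the exact identity $\int_0^1|\Ibb[a\le x]-\Ibb[b\le x]|\,dx=|a-b|$, reducing everything to $\Ebb|\hat s-s|\le 1/(2\sqrt m)$ by Cauchy--Schwarz on the conditional binomial variance. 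Your argument is more elementary (no concentration inequality, no Gaussian integral), gives the slightly sharper constant $C(p)=2^{-1/p}$, and makes explicit that the bound is really $\mathcal{L}_p^p(F^{(m)},F)\le \mathcal{W}_1(F^{(m)},F)$ via the natural coupling. What the paper's sub-Gaussian device buys, by contrast, is a \emph{pointwise} bound that it reuses later for the Kolmogorov--Smirnov results (Proposition \ref{prop:CDF_deviation} and its descendants), where your integrated identity gives nothing. The treatment of the empirical term via Minkowski, the domination of $\mathcal{L}_p$ by $d_{KS}$ on $[0,1]$, DKW, and tail integration is identical in substance to the paper's.
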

\begin{proof}
  By definition, it follows that
  \begin{eqnarray*}
    \vert F^{(m)}(x) - F(x) \vert &=& \vert \Ebb [\Ibb(\hat{s} \leq x)] - \Ebb [\Ibb (s \leq x)] \vert \\
                                  &=& \vert \Ebb [\Ibb(\hat{s} \leq x < s)] - \Ebb[\Ibb(s \leq x < \hat{s})] \vert.\\
                                  &\leq & \max\{\vert \Ebb [\Ibb(\hat{s} \leq x < s)] \vert, \vert \Ebb [\Ibb(s \leq x < \hat{s})]\vert\}
  \end{eqnarray*}
  Note that
  \begin{eqnarray*}
    \Ebb[\Ibb(\hat{s} \leq x < s)] &=& \Ebb[\Ebb[\Ibb(\hat{s} \leq x < s)|s]]\\
                                   &=& \Ebb[\Ebb[\Ibb(\hat{s} - s \leq x - s < 0)|s]]\\
                                   &\leq& \Ebb[\exp\{-m(x-s)^2/2\}],
  \end{eqnarray*}
  where $\hat{s} - s$ is bounded in $[-1, 1]$, and thus it is a sub-Gaussian random variable with the variance less or equal to $1$ \citep{hoeffding1993}. The same argument applies to $\Ebb[\Ibb(s\leq x < \hat{s})]$. Therefore, we have
  \begin{eqnarray*}
    && \left ( \int_0^1 \vert F^{(m)}(x) - F(x) \vert^p dx \right )^{\frac{1}{p}}\\
    &\leq&  \left (\int_0^1 \Ebb [\exp\{-mp(x-s)^2/2\}]dx \right )^{\frac{1}{p}}\\
                                              &=& \left( \int_0^1 \int_0^1 \exp\{-mp(x-s)^2/2\} dF(s)dx \right )^{\frac{1}{p}}\\
                                              &\overset{(i)}{=}& \left (\int_0^1 \int_0^1 \exp\{-mp(x-s)^2/2\} dxdF(s) \right )^{\frac{1}{p}}\\
                                              &\leq& \left (\int_0^1 \frac{\sqrt{2\pi}}{\sqrt{mp}} dF(s)\right )^{\frac{1}{p}}\\
    &=& \frac{(2\pi)^{\frac{1}{2p}}}{(mp)^{\frac{1}{2p}}},
  \end{eqnarray*}
  where Equation $(i)$ holds by the Fubini's theorem. Further, by noting that $F_{n,m}(x) - F(x) = F_{n,m}(x) - F^{(m)}(x) + F^{(m)}(x) - F(x)$ and using the DKW inequality, it follows that
  $$\Ebb \int_0^1 \vert F_{n,m}(x) - F(x)\vert dx \leq \frac{(2\pi)^{\frac{1}{2p}}}{(mp)^{\frac{1}{2p}}} + \frac{\sqrt{2\pi}}{\sqrt{n}}.$$
\end{proof}

Notwithstanding, Theorem \ref{thm:CDF_deviation_Lp} does not establish a useful bound for the K-S distance that corresponds to the $L_{\infty}$ distance --- there remains a non-negligible constant $\underset{p\rightarrow \infty}{\lim} \frac{(2\pi)^{\frac{1}{2p}}}{(mp)^{\frac{1}{2p}}} = 1$ which does not depend on $m$. In fact, the K-S distance might evaluate the estimate of $F$ from a too stringent perspective. Proposition \ref{prop:CDF_deviation_pmf} shows that no matter how small $m$ is, there is an $F$ with a point mass function and a point $x_0$ such that $\vert F^{(m)}(x_0) - F(x_0)\vert$ is larger than a constant that is independent of $m$. This result is attributed to the ``bad'' points with non-trivial masses like $x_0$. Such a ``bad'' point gives rise to a sharp jump in $F$, which $F^{(m)}$ cannot immediately catch up with due to the discretization nature of the binomial randomness. It leads to difficulty in recovering the underlying distribution $F$ of the binomial mixture model.

On the other hand, the $L_p$ distance with $p < \infty$ does not suffer from the issue of the K-S distance --- it can be regarded as looking at an average of the absolute distance between $F^{(m)}$ and $F$ when the support of $F$ is finite. To be specific, even if there are ``bad'' points $x_1, x_2, \ldots$ such that $|F^{(m)}(x_i) - F(x_i)|$ has a non-trivial difference, $i = 1, 2, \ldots$, this difference will diminish outside a small neighbor of $x_i$ of a width $\Ocal(\frac{1}{m})$. Therefore, when taking the integral, the averaging distance decreases as $m$ grows. Furthermore, if $F$ has a density, the issue of ``bad'' points no longer exists for the K-S distance either. In this case, the K-S distance is an appropriate choice to measure the difference between $F^{(m)}$ and $F$; see Section \ref{sec:bmu_bm_CDF_deviation} for details.

\begin{proposition}[$F^{(m)}$ can deviate in K-S far from $F$ with a pmf]\label{prop:CDF_deviation_pmf}  
  There exists an $F$ with a pmf, such that $\sup_{x}|F^{(m)}(x) - F(x)| \geq c > 0$, where $c$ is a constant.
\end{proposition}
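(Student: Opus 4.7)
The plan is to exhibit a single concrete distribution with one atom at an interior point and show that the binomial smoothing of $F$ cannot catch up with the unit jump of $F$ there. The natural candidate is $F = \delta_{1/2}$, the Dirac point mass at $1/2$, which is a pmf supported on $[0,1]$. Right-continuity of the CDF gives $F(1/2) = 1$, whereas under the mixture $\hat{s} = Y/m$ with $Y \sim \text{Binomial}(m, 1/2)$, and symmetry of $Y$ about $m/2$ should force $F^{(m)}(1/2)$ to stay near $1/2$ rather than near $1$.

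I would then compute $F^{(m)}(1/2) = \Pbb[Y \leq m/2]$ by splitting on the parity of $m$. For odd $m$ the value $m/2$ is not an integer, so by symmetry exactly half of the binomial mass lies strictly below $m/2$, giving $F^{(m)}(1/2) = 1/2$. For even $m$ the same symmetry yields
$$F^{(m)}(1/2) = \tfrac12 + \tfrac12 \binom{m}{m/2} 2^{-m}.$$
To get a uniform bound I would verify that $\binom{m}{m/2} 2^{-m} \leq 1/2$ for all even $m \geq 2$: the value is exactly $1/2$ at $m = 2$ (the unique extremum), and Stirling's formula gives the decay $\binom{m}{m/2} 2^{-m} \leq \sqrt{2/(\pi m)}$ for $m \geq 4$. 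Combining both parities, $F^{(m)}(1/2) \leq 3/4$ uniformly in $m$, so $|F^{(m)}(1/2) - F(1/2)| \geq 1/4$, which proves the proposition with $x_0 = 1/2$ and $c = 1/4$.

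There is no substantive obstacle in this argument; the only care required is to keep the bound on the central binomial coefficient uniform in $m$, which is handled by the direct check at $m = 2$ together with Stirling for $m \geq 4$. Conceptually, the example just formalizes the remark immediately preceding the proposition: an atom of $F$ at an interior point $s_0 \in (0,1)$ creates a sharp jump that the binomial kernel cannot track for any finite $m$, and the same reasoning would yield an analogous (possibly smaller) constant for any other atom location.
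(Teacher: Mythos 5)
Your proof is correct, and it is a cleaner execution of the same underlying idea as the paper's. Both arguments place a single atom near $1/2$ and exploit the fact that roughly half of the binomial mass falls on either side of the atom, so $F^{(m)}$ cannot reproduce the unit jump of $F$. The difference is in where you evaluate: the paper puts the atom at $\tfrac12+\kappa$ for a small irrational $\kappa$ and tests at a point $x_0$ just \emph{below} the atom, where $F(x_0)=0$ but $F^{(m)}(x_0)\approx \tfrac12$ (invoking the CLT informally and using the irrationality of $\kappa$ to keep the atom off the lattice $\{k/m\}$). You instead put the atom exactly at $1/2$ and test \emph{at} the atom, where $F(1/2)=1$ but $F^{(m)}(1/2)=\Pbb[Y\le m/2]\le \tfrac12+\tfrac12\binom{m}{m/2}2^{-m}\le \tfrac34$, giving the explicit, non-asymptotic constant $c=1/4$ valid for every $m$. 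What your route buys is rigor and uniformity: no appeal to an approximate CLT statement, no irrational-offset device, and an explicit constant; the parity split and the bound $\binom{m}{m/2}2^{-m}\le 1/2$ (equality at $m=2$, monotone decreasing thereafter) are all that is needed. The paper's route, in exchange, illustrates the slightly more general point made in the surrounding discussion --- that the failure occurs throughout a small left-neighborhood of the atom of width $\Ocal(1/m)$ --- but as written it is only a sketch. Your version is a complete proof of the stated proposition.
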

\begin{proof}
  Let $F(x)$ be the delta function taking the mass at $\frac{1}{2} + \kappa$, where $\kappa$ is an extremely small positive irrational number. Then by CLT, with probability about $1/2$, $\hat{s}$ is no larger than $\frac{\tilde{m}}{m}$, where $\tilde{m}$ is the largest integer such that $\frac{\tilde{m}}{m} < \frac{1}{2} + \kappa$. Take any $x_0$ in $(\frac{\tilde{m}}{m}, \frac{1}{2} + \kappa)$. It follows that $F(x_0) = 0$ but $F^{(m)}(x_0) \approx \frac{1}{2}$. It implies that $F^{(m)}(x_0) - F(x_0) $ is larger than a constant that is independent of $m$.
\end{proof}

\subsection{$F$ with a density}\label{sec:bmu_bm_CDF_deviation}
%% When there is no confusion, we use $F$ to denote the cumulative distribution function (CDF) of $s_i$'s, i.e., $F(x) := \Pbb(s \leq x)$. Denote by $F^{(m)}(x) := \Pbb(\hat{s}_1 \leq x)$ the population CDF of $\hat{s}_i$'s. 
In this section, we focus on $F$ with a density so that the K-S distance can be employed. %We are interested in the K-S distance because it is linked to the convergence of the empirical CDF in distribution to $F$.  % The K-S distance gives the celebrated K-S test, which compares two distributions nonparametrically.
In addition, we stick to this metric partly because it is related to the Grenander estimator for monotone density estimation \citep{grenander1956theory, birge1989grenander}, which constitutes our method for the GeneFishing cutoff selection; see Section \ref{sec:bmu_bm_grenander} and Section \ref{sec:bmu_method_npmle}. 

To bound the K-S distance between $F_{n,m}$ and $F$, i.e., $\sup_x |F_{n,m}(x) - F(x)|$, we just need to bound $F_{n,m} - F^{(m)}$ and $F^{(m)} - F$ by noticing that $F_{n,m} - F = F_{n,m} - F^{(m)} + F^{(m)} - F$.
By the DKW inequality, we have a tight bound
$$\mathbb{P}[\sup_x |F_{n,m}(x) - F^{(m)}(x)| > t ] \leq 2 \exp\{-2nt^2\},~~~~\forall t > 0.$$
So it only remains to study the deviation of $F^{(m)}$ from $F$. In Proposition \ref{prop:CDF_deviation}, we show that when $F$ has a derivative bounded from both  below and above, the K-S distance between $F^{(m)}$ and $F$ is bounded by $\Ocal(\frac{1}{m})$ from below and by $\Ocal(\frac{1}{\sqrt{m}})$ from above. 

\begin{proposition}[Deviation of $F^{(m)}$ from $F$ with a density]\label{prop:CDF_deviation}
  
    Suppose $f$ is a density function on $[0, 1]$ with $0 < f_{\min} \leq f_{\max} < \infty$. It follows that 
$$\frac{f_{\min}}{m+1} \leq \sup_x |F^{(m)}(x) - F(x)| \leq f_{\max} \cdot \frac{\sqrt{2\pi}}{\sqrt{m}}.$$
\end{proposition}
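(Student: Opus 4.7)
The plan is to prove the two bounds separately; both are short, and neither presents a serious obstacle. The upper bound recycles the Hoeffding step from the proof of Theorem~\ref{thm:CDF_deviation_Lp}, while the lower bound is a one-line evaluation at a boundary point exploiting the atom that $\hat s$ puts at $0$.

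For the upper bound, I would start from the pointwise inequality already derived en route to Theorem~\ref{thm:CDF_deviation_Lp}, namely
$$|F^{(m)}(x) - F(x)| \;\le\; \mathbb{E}\bigl[\exp\{-m(x-s)^2/2\}\bigr] \;=\; \int_0^1 \exp\{-m(x-s)^2/2\}\, f(s)\, ds,$$
which follows from the fact that $\hat s - s \in [-1,1]$ is sub-Gaussian with variance proxy at most $1$ conditionally on $s$. Using $f(s)\le f_{\max}$ to pull the density out and then enlarging the $s$-integration from $[0,1]$ to the whole real line (which can only increase a nonnegative integrand) gives the Gaussian integral $\int_{-\infty}^\infty \exp\{-m(x-s)^2/2\}\,ds = \sqrt{2\pi/m}$, so $|F^{(m)}(x) - F(x)| \le f_{\max}\sqrt{2\pi}/\sqrt{m}$ uniformly in $x$. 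Taking $\sup_x$ completes this half of the proof.

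For the lower bound, the key observation is that $\hat s$ has a genuine atom at $0$ of size $\mathbb{E}[(1-s)^m]$, whereas $F$ has no atom there. I would simply evaluate at $x = 0$: since the hypotheses force $F$ to be absolutely continuous on $[0,1]$ we have $F(0) = 0$, while
$$F^{(m)}(0) \;=\; \mathbb{P}[\hat{s}=0] \;=\; \int_0^1 (1-s)^m\, f(s)\, ds \;\ge\; f_{\min}\int_0^1 (1-s)^m\, ds \;=\; \frac{f_{\min}}{m+1}.$$
Hence $\sup_x |F^{(m)}(x) - F(x)| \ge F^{(m)}(0) - F(0) \ge f_{\min}/(m+1)$. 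One could equivalently evaluate just to the left of $x=1$ using $\mathbb{P}[\hat s = 1] = \int_0^1 s^m f(s)\,ds$, which yields the same bound.

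The only subtle point worth being explicit about is why $F(0) = 0$ under the stated hypotheses; this is because ``$f$ is a density on $[0,1]$ with $0 < f_{\min}\le f_{\max} < \infty$'' rules out atoms and forces $F$ continuous at the boundary. This is precisely the same mechanism that made the counterexample in Proposition~\ref{prop:CDF_deviation_pmf} place mass at an irrational point off the grid $\{k/m\}$: as soon as $F$ has an atom at a grid point or at the boundary, $F^{(m)}$ can match it there and this lower bound argument would fail. Beyond this observation, neither direction requires real work; the proposition really is a direct corollary of the Hoeffding bound (already written out in Theorem~\ref{thm:CDF_deviation_Lp}) together with a Beta integral.
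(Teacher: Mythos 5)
Your proposal is correct and follows essentially the same route as the paper's proof: the lower bound via the atom of $\hat{s}$ at $0$ giving $F^{(m)}(0)=\int_0^1(1-u)^m f(u)\,du\ge f_{\min}/(m+1)$, and the upper bound via the conditional sub-Gaussian (Hoeffding) estimate $|F^{(m)}(x)-F(x)|\le\int_0^1\exp\{-m(x-u)^2/2\}f(u)\,du\le f_{\max}\sqrt{2\pi}/\sqrt{m}$. Your explicit remark on why $F(0)=0$ is a small clarification the paper leaves implicit, but it is not a different argument.
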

\begin{proof}
  For the lower bound, we have
  $$\Pbb(\hat{s} \leq 0) - \Pbb(s \leq 0) = \Pbb(\hat{s} \leq 0) = \int_0^1 (1 - u)^mf(u)du \geq f_{\min} \int_0^1 (1 - u)^m du = \frac{f_{\min}}{m+1}.$$  
  For the upper bound, note that
  \begin{eqnarray*}
    F^{(m)}(x) - F(x) = \Pbb(\hat{s} \leq x) - \Pbb(s \leq x)
                      = \Ebb (\Ibb[\hat{s} \leq x] - \Ibb [s \leq x])
                      = \Ebb (\Ibb [\hat{s} \leq x < s] - \Ibb [s \leq x < \hat{s}]),
  \end{eqnarray*}
  thus
  $$|F^{(m)}(x) - F(x)| \leq \max \{\Ebb \Ibb [\hat{s} \leq x < s], \Ebb \Ibb [s \leq x < \hat{s}]) \}.$$
We have
  \begin{eqnarray*}
    \Ebb \Ibb[\hat{s} \leq x < s] &=& \Pbb(\hat{s} \leq x < s)\\
    &=&\Pbb(\hat{s} - s \leq x - s < 0) \\
    &=& \Ebb[\Pbb(\hat{s} - s \leq x - s < 0) | s] \\
    &\leq& \Ebb[\exp(-m(x - s)^2/2)] \\
    &=& \int_0^1 \exp(-m(x - u)^2/2) f(u) du\\
    &\leq& f_{\max} \cdot \frac{\sqrt{2\pi}}{\sqrt{m}},
  \end{eqnarray*}
  Similarly, we have $\Ebb \Ibb[s \leq x < \hat{s}] \leq f_{\max} \cdot \frac{\sqrt{2\pi}}{\sqrt{m}}$. So it follows that
  $$\sup_x |\Pbb (\hat{s}\leq x) - \Pbb(s \leq x)| \leq \max \{ \sup_x \Ebb \Ibb [\hat{s} \leq x < s], \sup_x \Ebb \Ibb [s \leq x < \hat{s}]\} \leq f_{\max} \cdot \frac{\sqrt{2\pi}}{\sqrt{m}}.$$
\end{proof}
Proposition \ref{prop:CDF_deviation} shows that the largest deviation of the binomial mixture CDF from the underlying CDF is at least the order $\Ocal(m^{-1})$ and at most the order $\Ocal(m^{-\frac{1}{2}})$. In fact, the condition that $f$ is bounded can be relaxed to that $f$ is $L_p$-integrable with $p > 1$ using the H{\" o}lder inequality, but the rate will be $\Ocal(m^{-\frac{p}{2(p-1)}})$ correspondingly. Our result is the special case with $p = \infty$. Moreover, $F$ with a density is a necessary condition for Proposition \ref{prop:CDF_deviation} --- we have seen in Proposition \ref{prop:CDF_deviation_pmf} that if $F$ has a point mass function, the deviation of $F^{(m)}$ from $F$ cannot be controlled w.r.t $m$.

Proposition \ref{prop:CDF_deviation_bounds} shows that there exist two simple distributions that respectively attain the lower bound and the upper bound. On the other hand, Proposition \ref{prop:CDF_deviation} can be further improved: if the underlying density is assumed to be smooth, the upper bound is lowered to $\Ocal(m^{-1})$; see Proposition \ref{prop:CDF_deviation_smooth} in Section \ref{sec:bmu_bm_hist}. % Simulation results in Table \ref{tbl:noisy_sim_l1_dist} corroborate that the deviation is $\Ocal(m^{-\frac{1}{2}})$ when the underlying density is smooth. 

\begin{proposition}[Tightness of Proposition \ref{prop:CDF_deviation}]\label{prop:CDF_deviation_bounds}
  
  The upper bound and the lower bound in Proposition \ref{prop:CDF_deviation} are tight. In other words, there exist an $F_1$ and $F_2$ such that $\sup_{x}|F_1^{(m)}(x) - F_1(x)| \leq C_1 \cdot \frac{1}{m+1}$ and $\sup_{x}|F_2^{(m)}(x) - F_2(x)| \geq C_2 \cdot \frac{1}{\sqrt{m}}$, where $C_1$ and $C_2$ are two positive constants.
\end{proposition}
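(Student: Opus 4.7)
The plan is to exhibit two concrete distributions that realize the two extreme rates. For the lower-bound tightness ($F_1$) I will use the uniform distribution on $[0,1]$, and for the upper-bound tightness ($F_2$) I will use a piecewise-constant density with a jump at an interior point, which forces a $1/\sqrt{m}$ Gaussian-smoothing error that does not cancel.

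For $F_1 = U[0,1]$, the key identity is the classical Beta-Binomial integration
$$\Pbb(X=k) = \int_0^1 \binom{m}{k} s^k (1-s)^{m-k}\,ds = \binom{m}{k}\,\frac{k!(m-k)!}{(m+1)!} = \frac{1}{m+1},$$
so $\hat{s} = X/m$ is uniform on the grid $\{0,1/m,\ldots,1\}$. Therefore $F_1^{(m)}$ is a step function with equal jumps of size $1/(m+1)$ at the grid points, and $F_1(x)=x$. The supremum of $|F_1^{(m)}(x)-x|$ is attained immediately before or after a grid point, and a direct calculation at $x=k/m^{\pm}$ gives $(m-k)/(m(m+1))$ or $k/(m(m+1))$; maximizing over $k$ yields $\sup_x |F_1^{(m)}(x)-F_1(x)| = 1/(m+1)$, so any $C_1 \geq 1$ works.

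For $F_2$, I take $f_2(x) = a$ on $[0,1/2)$ and $f_2(x) = b$ on $[1/2,1]$ with $a\neq b$ and $a+b=2$, and evaluate the K-S gap at $x_0 = 1/2$. Decomposing
$$F_2^{(m)}(1/2) - F_2(1/2) = -\int_0^{1/2}\!\Pbb(X > m/2\mid s)\,f_2(s)\,ds + \int_{1/2}^1\!\Pbb(X \leq m/2\mid s)\,f_2(s)\,ds,$$
I will plug in the Berry-Esseen / local-CLT approximation $\Pbb(X \leq m/2\mid s) = \Phi\bigl(\sqrt{m}(1/2-s)/\sqrt{s(1-s)}\bigr) + O(m^{-1/2})$ uniformly in $s$, and then change variables $u = 2\sqrt{m}(1/2-s)$. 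Since $s(1-s)\to 1/4$ on the $O(1/\sqrt{m})$ window that matters, the leading contribution from each integral is $(1/(2\sqrt{m}))\int_0^\infty [1-\Phi(u)]\,du = 1/(2\sqrt{2\pi m})$, weighted by $-a$ and $b$ respectively, so the asymmetry yields
$$F_2^{(m)}(1/2) - F_2(1/2) = \frac{b-a}{2\sqrt{2\pi m}} + o(m^{-1/2}).$$
Taking $b>a$ and choosing $m$ large enough gives $\sup_x |F_2^{(m)}(x)-F_2(x)| \geq C_2/\sqrt{m}$ for a suitable constant $C_2$. The main obstacle is controlling the CLT-approximation error uniformly across the regime where $|s-1/2|$ is close to $1/\sqrt{m}$; this is handled by the standard Berry-Esseen bound for binomials (error $O(1/\sqrt{m s(1-s)})$), since the contribution outside a neighborhood of $1/2$ is exponentially small by Hoeffding's inequality and hence negligible relative to the $1/\sqrt{m}$ target.
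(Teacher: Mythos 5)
Your two witnesses coincide with the paper's: the uniform distribution for the $1/(m+1)$ bound, and a two-piece constant density with a jump at $1/2$, evaluated at $x_0=1/2$, for the $1/\sqrt{m}$ bound (the paper fixes $a=1.8$, $b=0.2$; your general $a+b=2$, $a\neq b$ is the same example). On the $F_1$ side you in fact do more than the paper: the proposition asks for a bound on the supremum over all $x$, and the paper only records the deviation at $x=0$, whereas your grid computation shows the supremum itself equals $1/(m+1)$. On the $F_2$ side the two arguments diverge technically. The paper splits $f=f_1+f_2$ with $f_2\equiv 0.2$ uniform (which contributes only $O(1/m)$ by the first part) and $f_1=1.6\cdot\Ibb(x\in[0,1/2])$, and then lower-bounds $\int_0^{1/2}\Pbb(X\ge m/2\mid s)\,ds$ non-asymptotically using a classical inequality of Bahadur for the binomial upper tail together with Stirling's formula; you keep both halves and extract the asymmetry $\frac{b-a}{2\sqrt{2\pi m}}$ as a difference of two Gaussian-tail integrals via Berry--Esseen. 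Both routes work, but one step of yours must be made explicit: the pointwise Berry--Esseen error is $O\bigl(1/\sqrt{m\,s(1-s)}\bigr)$, and integrating it over all of $[0,1/2]$ produces a term of the \emph{same} order $1/\sqrt{m}$ as your main term, so the asserted ``$+\,o(m^{-1/2})$'' does not follow from Berry--Esseen alone. The repair is precisely the localization you allude to: restrict to $|s-1/2|\le\delta$ (the complement contributes an exponentially small amount by Hoeffding), where the integrated Berry--Esseen error is at most $C\delta/\sqrt{m}$, and then choose $\delta$ small enough that $C\delta<\tfrac{|b-a|}{4\sqrt{2\pi}}$; this makes the error constant strictly smaller than the main-term constant and delivers the required $C_2/\sqrt{m}$ lower bound, which is all the proposition needs.
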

\begin{proof}
  The lower bound of Proposition \ref{prop:CDF_deviation} can be attained by the uniform distribution. Specifically, if $f \equiv 1$, $\Pbb(\hat{s} = k/m) = \frac{1}{m+1}$. So $|\Pbb(\hat{s} \leq 0) - \Pbb(s \leq 0)| = \frac{1}{m+1}$.

  On the other hand, the upper bound can be attained by another simple distribution. Let
  \begin{equation}\label{eq:two_step_func}
    f(x) = 1.8 \cdot \Ibb(x \in [0, 1/2]) + 0.2 \cdot \Ibb(x \in (1/2, 1])
  \end{equation}
  We can show that this density $f$ leads to $|\Pbb(\hat{s} \leq 1/2) - \Pbb(s \leq 1/2)| \geq \frac{C}{\sqrt{m}}$, where $C$ is a positive constant. It is a consequence of the central limit theorem for the binomial distribution. % This is attributed to the fact that there is a significant drop at $x = 1/2$ in $f(x)$.
  The detailed proof is delegated to Appendix \ref{appendix:bmu_proof_CDF_deviation_bounds}.
\end{proof}

Given Proposition \ref{prop:CDF_deviation}, we can get the rate of of $\sup_{x}|F_{n,m}(x) - F(x)|$ along with the DKW inequality, which is $\Ocal(n^{-\frac{1}{2}})+\Ocal(m^{-\frac{1}{2}})$ as shown in Theorem \ref{thm:noisy_DKW}. By taking integral of $\Pbb(\sup_x [F_{n,m}(x) - F(x)] > t)$ w.r.t $t$, we immediately have Corollary \ref{corollary:noisy_DKW}. %It is easy to see that Theorem \ref{thm:noisy_DKW} can be generalized to a zero-inflation or one-inflation density. %However, Proposition \ref{prop:CDF_deviation_pmf} indicates that in general it cannot be extended to a pmf or a mixture of a density and a pmf.

\begin{theorem}[The rate of K-S distance between $F_{n,m}$ and $F$ with a density]\label{thm:noisy_DKW}
  ~\\
  Suppose $f$ is a density function on $[0, 1]$ with $f_{\max} < \infty$. The data is generated as Model \eqref{eq:binomial_mixture}. It follows that
  $$\Pbb(\sup_x [F_{n,m}(x) - F(x)] > t) \leq \exp(-nt^2/2) + \Ibb (f_{\max} \cdot \frac{2\sqrt{2\pi}}{\sqrt{m}} > t),$$
where $t \geq \sqrt{\frac{\log 2}{2n}}$. The two-side tail bound also holds as follow
  $$\Pbb(\sup_x |F_{n,m}(x) - F(x)| > t) \leq 2\exp(-nt^2/2) + \Ibb (f_{\max} \cdot \frac{2\sqrt{2\pi}}{\sqrt{m}} > t),$$
  where $t > 0$.
\end{theorem}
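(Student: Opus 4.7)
The plan is a clean triangle-inequality split into a stochastic piece and a deterministic bias piece. At every $x$ I would write
\begin{equation*}
F_{n,m}(x) - F(x) = \bigl[F_{n,m}(x) - F^{(m)}(x)\bigr] + \bigl[F^{(m)}(x) - F(x)\bigr],
\end{equation*}
and handle each term with an off-the-shelf tool: the DKW inequality controls the first (since $\hat{s}_1,\ldots,\hat{s}_n$ are i.i.d.\ with CDF $F^{(m)}$), while Proposition \ref{prop:CDF_deviation} bounds the second deterministically by $f_{\max}\sqrt{2\pi}/\sqrt{m}$.

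Next I would allocate the threshold symmetrically as $t = t/2 + t/2$, giving the inclusion
\begin{equation*}
\bigl\{\sup_x [F_{n,m}(x) - F(x)] > t\bigr\} \subseteq \bigl\{\sup_x [F_{n,m}(x) - F^{(m)}(x)] > t/2\bigr\} \cup \bigl\{\sup_x [F^{(m)}(x) - F(x)] > t/2\bigr\}.
\end{equation*}
In the regime $t > f_{\max}\cdot \frac{2\sqrt{2\pi}}{\sqrt{m}}$, where the indicator in the statement vanishes, we have $t/2 > f_{\max}\sqrt{2\pi}/\sqrt{m}$, so the bias event is empty by Proposition \ref{prop:CDF_deviation}. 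The one-sided DKW inequality applied to the sample $\hat{s}_i$ then yields $\Pbb(\sup_x [F_{n,m}(x) - F^{(m)}(x)] > t/2) \leq \exp(-2n(t/2)^2) = \exp(-nt^2/2)$, matching the first term in the claimed bound. In the complementary regime the indicator equals $1$ and the right-hand side already exceeds $1$, so the inequality is trivial. The one-sided restriction $t\geq\sqrt{\log 2/(2n)}$ simply delineates the range in which the form of DKW being invoked is non-vacuous, and requires no extra work.

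For the two-sided version I would rerun exactly the same split and the same case analysis, replacing the one-sided DKW by its two-sided counterpart, which pays an extra factor of $2$ and gives $2\exp(-nt^2/2)$. Proposition \ref{prop:CDF_deviation} applies unchanged because its bound is already stated in absolute value.

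I do not foresee a genuine obstacle: the content of the theorem is exactly the triangle inequality plus the two pre-existing bounds. The only real design choice is the allocation of $t$ between bias and stochastic error, and equal splitting is forced by the numerical constants: the theorem's indicator threshold is precisely twice the deterministic bias bound, so giving $t/2$ to each side makes the bias event disappear exactly when the indicator turns off, while simultaneously making the DKW exponent $2n(t/2)^2$ coincide with the claimed $nt^2/2$. The only remaining delicacy is bookkeeping to confirm that the regimes glue together cleanly at $t = f_{\max}\cdot 2\sqrt{2\pi}/\sqrt{m}$.
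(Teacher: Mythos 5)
Your proof is correct and follows exactly the paper's route: the same triangle-inequality decomposition through $F^{(m)}$, with the DKW inequality controlling $F_{n,m}-F^{(m)}$ and Proposition \ref{prop:CDF_deviation} controlling $F^{(m)}-F$, and your equal split of $t$ is precisely what produces the stated exponent $-nt^2/2$ and the indicator threshold. The only loose end is that invoking the one-sided Massart bound at level $t/2$ formally requires $t/2 \geq \sqrt{\log 2/(2n)}$ rather than the theorem's stated $t \geq \sqrt{\log 2/(2n)}$, but the paper's own two-line proof glosses over this identically, so it is a quibble with the theorem's stated range rather than with your argument.
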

\begin{proof}
  Note that
  $$\sup_x |F_{n,m}(x) - F(x)| \leq \sup_x |F_{n,m}(x) - F^{(m)}(x)| + \sup_x |F^{(m)}(x) - F(x)|.$$
  The first term can be bounded by the original DKW inequality and the second term can be bounded using the result of Proposition \ref{prop:CDF_deviation}. Then we conclude the second result. The first result can be obtained in the same fashion.
\end{proof}

\begin{corollary}\label{corollary:noisy_DKW}
  
  Under the same setup of Theorem \ref{thm:noisy_DKW}, we have
  $\Ebb \sup_x [F_{n,m}(x) - F(x)] \leq \frac{\sqrt{2\pi}}{\sqrt{n}} + \min\{1, \frac{f_{\max}\cdot 2\sqrt{2\pi}}{\sqrt{m}}\}$,
  and
  $\Ebb \sup_x |F_{n,m}(x) - F(x)| \leq \frac{2\sqrt{2\pi}}{\sqrt{n}} + \min\{1, \frac{f_{\max}\cdot 2\sqrt{2\pi}}{\sqrt{m}}\}$.
\end{corollary}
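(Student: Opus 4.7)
The plan is to use the standard layer-cake (tail-integration) identity $\Ebb Y = \int_0^\infty \Pbb(Y > t)\,dt$ for the nonnegative random variable $Y = \sup_x [F_{n,m}(x) - F(x)]$ (and likewise for its absolute-value counterpart), and then to feed in the tail bound already proved in Theorem \ref{thm:noisy_DKW}. Writing $a := f_{\max} \cdot \frac{2\sqrt{2\pi}}{\sqrt{m}}$, the indicator $\Ibb(a > t)$ kills the binomial-bias term as soon as $t \geq a$, so the natural idea is to split the integral at $t=a$.

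For the two-sided inequality, I would proceed as follows. On the interval $t \in [0,a]$ I would use the trivial bound $\Pbb(Y>t) \leq 1$ (valid because $Y \leq 1$ always), contributing at most $a$. On the interval $t \in [a,\infty)$ the indicator vanishes and Theorem \ref{thm:noisy_DKW} gives $\Pbb(Y > t) \leq 2\exp(-nt^2/2)$; extending the lower limit of integration down to $0$ and using the Gaussian integral $\int_0^\infty \exp(-nt^2/2)\,dt = \tfrac{1}{2}\sqrt{2\pi/n}$ yields a contribution of $\sqrt{2\pi/n}$. Summing the two pieces and then replacing $a$ by $\min(1,a)$ (using again that $Y \leq 1$, so $\Ebb Y \leq 1$) gives the stated two-sided bound up to constants.

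The one-sided case is handled identically, with one small wrinkle: the one-sided tail bound in Theorem \ref{thm:noisy_DKW} is only asserted for $t \geq \sqrt{\log 2 /(2n)}$. I would therefore split the integral into three pieces, $[0,\sqrt{\log 2/(2n)}]$, $[\sqrt{\log 2/(2n)},a]$, and $[a,\infty)$, bounding the first two by $1$ (cost $O(1/\sqrt n) + a$) and the third by the Gaussian integral $\int_0^\infty \exp(-nt^2/2)\,dt = \tfrac{1}{2}\sqrt{2\pi/n}$. Absorbing the $\sqrt{\log 2/(2n)}$ piece into the $\sqrt{2\pi}/\sqrt{n}$ constant gives the claim. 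Replacing $a$ by $\min(1,a)$ at the very end then produces the advertised bound.

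There is no real obstacle here: the argument is essentially a mechanical tail-integration that converts the probabilistic statement of Theorem \ref{thm:noisy_DKW} into its expected-value counterpart. The only minor bookkeeping points are (i) correctly truncating the indicator term at $t=a$ to obtain the $\min\{1,\cdot\}$ form in the final bound, and (ii) handling the small-$t$ gap $[0,\sqrt{\log 2/(2n)}]$ in the one-sided case by the trivial probability bound so that the Gaussian-integral constant absorbs it.
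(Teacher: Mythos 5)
Your proposal is correct and follows essentially the same route as the paper: both integrate the tail bound of Theorem \ref{thm:noisy_DKW} via the layer-cake identity, bound the indicator term's contribution by $\min\{1, f_{\max}\cdot 2\sqrt{2\pi}/\sqrt{m}\}$, and bound the exponential term by the Gaussian integral $\int_0^\infty e^{-nt^2/2}\,dt$. Your extra step of handling the region $t \in [0,\sqrt{\log 2/(2n)}]$ in the one-sided case (where the theorem's bound is not asserted) is a small refinement the paper elides, and the constants indeed absorb it.
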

\begin{proof}
  By Theorem \ref{thm:noisy_DKW}, it follows that
  \begin{eqnarray*}
    \Ebb[\sup_x [F_{n,m}(x) - F(x)]] &=& \int_0^1 \Pbb(\sup_x [F_{n,m}(x) - F(x)] > t) dt\\
    &\leq& \int_0^1 [\exp(-nt^2/2) + \Ibb (f_{\max} \cdot \frac{2\sqrt{2\pi}}{\sqrt{m}} > t)] dt \\
    &\leq& \frac{\sqrt{2\pi}}{\sqrt{n}} + \min\{1, \frac{f_{\max}\cdot 2\sqrt{2\pi}}{\sqrt{m}}\}.
  \end{eqnarray*}
  The two-side expectation can be proved in a similar manner.
\end{proof}

\subsection{$F$ with A Smooth Density}\label{sec:bmu_bm_hist}
In this section, we investigate the estimation of $F$ with a smooth density. Under this condition, we first obtain a stronger result than Proposition \ref{prop:CDF_deviation} for $F^{(m)}$, based on a truncated K-S distance on the interval $[a, 1-a]$, where $0 < a < 1/2$. Proposition \ref{prop:CDF_deviation_smooth} shows that if the density of $F$ is smooth, the truncated K-S distance decreases at $\Ocal(\frac{1}{m})$. The proof is based on the fact the binomial distribution random variable $m\cdot \hat{s}_i$ behaves like a Gaussian random variable when the binomial probability $s_i$ is bounded away from $0$ and $1$. When $s_i$ is close to $0$ and $1$, the Gaussian approximation cannot be used since it has an unbounded variance $\frac{1}{ms_i(1-s_i)}$. The proof is deferred to Appendix \ref{appendix:bmu_proof_CDF_deviation_smooth}.

\begin{proposition}[Deviation of $F^{(m)}$ from $F$ with a smooth density]\label{prop:CDF_deviation_smooth}
  
    Suppose $f$ is a density function on $[0, 1]$ with $f'$ being absolutely continuous. Let $s \sim f$ and $m \cdot \hat{s} \sim Binomial(m, s)$. It follows for any $0< a < 1/2$ that
    $$\sup_{x \in [a, 1-a]} |F^{(m)}(x) - F(x)| \leq C \cdot \frac{1}{m},$$
    where $C$ is some constant that only depends on $f$ and $a$.
\end{proposition}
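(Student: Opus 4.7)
The strategy is to analyze
\[
F^{(m)}(x) - F(x) = \int_0^1 \bigl[G_m(x, s) - \Ibb(s \leq x)\bigr] f(s) \, ds, \qquad G_m(x, s) := \Pbb(\hat{s} \leq x \mid s),
\]
by approximating the binomial CDF $G_m(x, s)$, at fixed $s$, by the Gaussian CDF with mean $s$ and variance $s(1-s)/m$, together with its leading Edgeworth correction. The truncation $x \in [a, 1-a]$ is used, via a matching cut-off on the $s$-axis, to keep $s(1-s)$ bounded away from zero so that the Gaussian approximation is uniformly valid.

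First I would split the integration domain into a \emph{boundary piece} $B := [0, a/2] \cup [1 - a/2, 1]$ and a \emph{bulk piece} $U := [a/2, 1 - a/2]$. For $s \in B$ and $x \in [a, 1-a]$ one has $|s - x| \geq a/2$, so Hoeffding's inequality conditional on $s$ gives $|G_m(x,s) - \Ibb(s \leq x)| \leq e^{-m a^2 / 2}$, and the $B$-contribution to the integral is exponentially small in $m$, hence absorbed into $\calO(1/m)$.

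On $U$ the product $s(1-s)$ is lower bounded by a constant $c_a > 0$, so the second-order Edgeworth expansion (with continuity correction) applies uniformly:
\[
G_m(x, s) = \Phi(z) + \frac{1 - 2s}{6\sqrt{m\,s(1-s)}}\,(1 - z^2)\,\phi(z) + r_m(x, s), \qquad z = \frac{\sqrt{m}(x-s)}{\sqrt{s(1-s)}},
\]
with $|r_m(x, s)| \leq C_a / m$. Substituting into the $U$-integral and Taylor expanding $f(s) = f(x) + f'(x)(s - x) + R(s, x)$ (legitimate since $f'$ absolutely continuous on $[0,1]$ is bounded, with $|R(s,x)| \leq M_f (s-x)^2$), two kinds of cancellation deliver the $\calO(1/m)$ rate. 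For the Gaussian piece, the $f(x)$ term integrates against $\Phi(z) - \Ibb(s \leq x)$ to essentially zero by symmetry of $\Phi(z) - \Ibb(z \leq 0)$ on $\Rbb$, while the $f'(x)(s-x)$ and $R$ contributions inherit an extra factor $\sigma^2(s) = s(1-s)/m = \calO(1/m)$. For the Edgeworth correction, since $\int (1-z^2)\phi(z)\,dz = 0$, the $f(x)$ contribution vanishes, and the $f'(x)(s-x)$ piece produces only $\calO(1/m)$ after a further change of variables $s \mapsto z$.

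The hardest step will be the careful book-keeping of the change of variables $s \mapsto z$ when $\sigma = \sigma(s)$ varies with $s$: the effective bounds of integration in $z$ depend on $s$ through $\sigma(s)$, and the tail errors from extending the $s$-integral to $\Rbb$ must be shown to be $\calO(1/m)$ using Gaussian tail bounds together with the bulk lower bound on $s(1-s)$. Once these uniform-in-$x$ estimates are combined, the bound $\sup_{x \in [a, 1-a]} |F^{(m)}(x) - F(x)| \leq C/m$ follows, with $C$ depending on $a$ (through $c_a$) and on $f$ (through $\sup_{[0,1]} |f|$ and $\sup_{[0,1]} |f'|$).
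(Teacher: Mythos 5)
Your route is genuinely different from the paper's and, with care, it works. The paper does not expand the binomial CDF directly: it bins $[0,1]$ at scale $h$, writes $F^{(m)}(x)-F(x)$ as a sum of differences of joint bin probabilities $\Pbb(\hat{s}\in B(x-dh),\,s\in B(x+d'h))-\Pbb(\hat{s}\in B(x+dh),\,s\in B(x-d'h))$, approximates each by a Gaussian double integral via a local limit theorem (Lemma \ref{lemma:DeMoivreLaplace}), and extracts the cancellation by applying the mean value theorem to $g(s)=\frac{1}{s(1-s)}\exp\{-\tfrac{A}{2s(1-s)}\}f(s)$, i.e.\ by the near-symmetry of the Gaussian transition kernel in its two arguments; the resulting bound $\calO(h/\sqrt{m}+h^2+1/m)$ is then optimized at $h=m^{-1/2}$. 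You instead work with the CDF difference directly, Taylor-expand $f$ at $x$, and let the oddness of $\Phi(z)-\Ibb(z\ge 0)$ and the scaling $ds\sim m^{-1/2}\,dz$ do the cancelling; this avoids binning entirely and is conceptually cleaner, at the price of invoking a uniform Edgeworth expansion. The one step you must repair is that expansion as displayed: for the lattice binomial CDF evaluated at a fixed non-lattice $x$, the remainder after subtracting $\Phi(z)$ and the smooth skewness term is \emph{not} $\calO(1/m)$ pointwise --- there is an oscillatory continuity-correction term of exact order $m^{-1/2}$ (of the form $\frac{S_1(mx)}{\sqrt{m s(1-s)}}\phi(z)$ with $S_1$ a bounded sawtooth), reflecting the $\Theta(m^{-1/2})$ jumps of the binomial CDF. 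Your parenthetical ``with continuity correction'' does not make it disappear; rather, you must carry it along and observe that, being bounded by $\tfrac{C}{\sqrt{m}}\phi(z)$ on the bulk, it too integrates against $f(s)\,ds\sim f\, m^{-1/2}\,dz$ to $\calO(1/m)$ --- the same mechanism you already use for the skewness term. With that amendment, plus the restriction of the change of variables $s\mapsto z$ to a neighborhood $|s-x|\le\epsilon_a$ where it is monotone (the complement being exponentially small by Hoeffding), your argument delivers the stated bound; note also that only boundedness of $f'$ is actually needed for your remainder $R(s,x)$, since a first-order bound $|R|\le 2\sup|f'|\,|s-x|$ already yields $\calO(1/m)$ after the $z$-substitution.
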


Then, we investigate the histogram estimator, since it is the one of simplest nonparametric density estimators and it has a theoretical guarantee when the density is smooth. Let $L$ be an integer and define bins
$$B_1 = [0, \frac{1}{L}), B_2 = [\frac{1}{L}, \frac{2}{L}), \ldots, B_L = [\frac{L-1}{L}, 1].$$
Let $Y_l$ be the number of observations in $B_l$, $\hat{p}_l = Y_l/n$ and $p_l = \Pbb(s_1 \in B_l)$. It is known that under certain smoothness conditions, the histogram converges in a cubic root rate for the rooted mean squared error (MSE) \citep[Chapter 6]{wasserman2006all}. Next we study how the histogram behaves on the binomial mixture model. Denote
\begin{equation*}
\left \{
\begin{array}{ll}
  \hat{f}_{n,m}(x) = \frac{1}{n} \sum_{i = 1}^n \frac{1}{h} \Ibb (\hat{s}_i \in B(x))\\
  \hat{f}_n(x) = \frac{1}{n} \sum_{i = 1}^n \frac{1}{h} \Ibb (s_i \in B(x)),
\end{array} \right.
\end{equation*}
where $h$ is the bandwidth, $B(x)$ denotes the bin that $x$ falls in. Theorem \ref{thm:noisy_hist_density_upper_risk} shows that the histogram estimator based on $\hat{s}_i$'s has the same convergence rate in terms of the MSE metrics as the histogram estimator based on $s_i$'s if $m = \Omega(n^{\frac{2}{3}})$ and $h = \Ocal(n^{-\frac{1}{3}})$. This rate might not be improved even if $f$ has higher order continuous derivatives since $\Ebb |\hat{f}_{n,m}(x) - \Ebb \hat{f}_n(x)|$ is bounded by $\Ocal(\frac{1}{\sqrt{m}} + h + \frac{1}{mh})$, which dominates $\Ebb|\hat{f}_n(x) - f(x)| = \Ocal(h^{\nu})$ when $f$ has a $\nu$-order continuous derivative. The proof of Theorem \ref{thm:noisy_hist_density_upper_risk} is delegated to Appendix \ref{appendix:bmu_proof_noisy_hist_density_upper_risk}.

\begin{theorem}[Upper bound of the histogram risk for binomial mixture model]\label{thm:noisy_hist_density_upper_risk}
  ~\\
  Let $R(a, b) = \int_a^{b} \Ebb(f(x) - \hat{f}_{n,m}(x))^2 dx$ be the integrated risk on the interval $[a, b]$. Assume that $f^{\prime}$ is absolutely continuous. It follows that 
  $$R(a, 1 - a) \leq C_1 \cdot (h^2 + \frac{1}{m} + \frac{1}{m^2h^2} + \frac{1}{nh}), \forall 0 < a < \frac{1}{2},$$
  Furthermore, if $m \geq C_2 \cdot n^{\frac{2}{3}}$, $h = C_3 \cdot n^{-\frac{1}{3}}$, we have
  $$R(a, 1 - a) \leq C_4 \cdot n^{-\frac{2}{3}}, \forall 0 < a < \frac{1}{2}.$$
  Here $C_1$, $C_2$ and $C_4$ are positive constants that only depend on $a$ and $f$, $C_3 > 0$ only depends on $f$.
\end{theorem}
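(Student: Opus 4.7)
\medskip

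\noindent\textbf{Proof proposal.} My plan is to carry out the standard bias--variance decomposition for the histogram estimator, treating the binomial noise as an additive perturbation on top of the usual (noiseless) histogram analysis. Fix $x \in [a, 1-a]$ and let $B_l = [(l-1)/L, l/L)$ be the bin containing $x$ (with $L = 1/h$). Write $p_l = F(l/L) - F((l-1)/L)$ and $p_l^{(m)} = F^{(m)}(l/L) - F^{(m)}((l-1)/L)$, so that $\Ebb \hat f_{n,m}(x) = p_l^{(m)}/h$. I would decompose
\[
\Ebb \hat f_{n,m}(x) - f(x) \;=\; \underbrace{\tfrac{p_l^{(m)} - p_l}{h}}_{\text{noise bias}} \;+\; \underbrace{\tfrac{p_l}{h} - f(x)}_{\text{histogram bias}}.
\]

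\noindent For the histogram bias, the assumption that $f'$ is absolutely continuous gives the standard pointwise bound $|p_l/h - f(x)| = O(h)$, with constants depending on $f$. For the noise bias, the numerator equals $[F^{(m)}(l/L) - F(l/L)] - [F^{(m)}((l-1)/L) - F((l-1)/L)]$. For $h$ small enough (depending on $a$), every bin that intersects $[a,1-a]$ lies in some $[a',1-a']$ with $0 < a' < a$, so Proposition~\ref{prop:CDF_deviation_smooth} applied at both bin endpoints yields $|p_l^{(m)} - p_l| \le 2C/m$, hence $|p_l^{(m)} - p_l|/h = O(1/(mh))$. Squaring the total bias via $(h + 1/(mh))^2 = h^2 + 2/m + 1/(m^2 h^2)$ is precisely where the $1/m$ cross-term in the theorem appears; integrating over $[a,1-a]$ keeps the same order.

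\noindent For the variance, I would use the i.i.d.\ structure of the $\hat s_i$'s to get
\[
\Var(\hat f_{n,m}(x)) \;=\; \frac{p_l^{(m)}(1-p_l^{(m)})}{n h^2} \;\le\; \frac{p_l^{(m)}}{n h^2}.
\]
Integrating over $[a, 1-a]$ and summing over bins gives $\sum_l p_l^{(m)}/(nh) \le 1/(nh)$, which yields the $1/(nh)$ term. Adding the integrated squared bias and variance contributions delivers the first bound. The second bound is then a direct substitution: with $h = C_3 n^{-1/3}$ and $m \ge C_2 n^{2/3}$, each of $h^2$, $1/m$, $1/(m^2 h^2)$, and $1/(nh)$ is $O(n^{-2/3})$.

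\noindent The main obstacle I anticipate is the boundary handling: Proposition~\ref{prop:CDF_deviation_smooth} only provides the $O(1/m)$ bound on a strictly interior interval, so I have to argue that, for any fixed $a$, one can enlarge the interior window slightly to $[a', 1-a']$ and absorb this enlargement into a constant depending on $a$ and $f$. The bins straddling the endpoints $a$ and $1-a$ occupy total Lebesgue measure $O(h)$, and their contribution even under the weaker $O(1/\sqrt m)$ bound from Proposition~\ref{prop:CDF_deviation} is $O(h)\cdot O(1/(mh^2)) = O(1/(mh))$, which is dominated by $1/m + 1/(m^2h^2)$ via AM--GM. Apart from this bookkeeping, the argument is routine.
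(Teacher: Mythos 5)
Your proposal is correct, and it reaches the stated bound by a genuinely different route from the paper. Both proofs share the same variance computation ($\int_a^{1-a}\Var(\hat f_{n,m}) \leq \sum_l p_l^{(m)}(1-p_l^{(m)})/(nh) \leq 1/(nh)$) and the same standard $O(h^2)$ histogram-bias term. Where you differ is in the ``noise bias'': the paper bounds $\Ebb\hat f_{n,m}(x)-\Ebb\hat f_n(x) = \frac{1}{h}[\Pbb(\hat s_1\in B(x))-\Pbb(s_1\in B(x))]$ directly, by decomposing the discrepancy over inter-bin transitions at distance $d=1,\dots,D$ with $D\asymp\sqrt{\log m/(mh^2)}$, invoking a local limit theorem for the binomial (Lemma \ref{lemma:DeMoivreLaplace}), a Hoeffding tail bound for the far transitions, and a mean-value-theorem comparison of the two Gaussian kernels; this yields $|\Ebb\hat f_{n,m}(x)-\Ebb\hat f_n(x)|\leq K(f_{\max}+f'_{\max})(m^{-1/2}+h+(mh)^{-1})$, whose square reproduces $h^2+1/m+1/(m^2h^2)$. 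You instead apply Proposition \ref{prop:CDF_deviation_smooth} at the two bin endpoints to get $|p_l^{(m)}-p_l|\leq 2C/m$, hence a noise bias of $O(1/(mh))$ with the $1/m$ term recovered from the cross term of $(h+1/(mh))^2$; your boundary bookkeeping (interior enlargement to $[a',1-a']$ plus the AM--GM absorption of the straddling bins) is sound, and in fact for $h$ small all relevant bin endpoints already lie in $[a/2,1-a/2]$, so the fallback is not even needed. The one caveat worth flagging: in this paper the logical dependency runs the other way --- the appendix proof of Proposition \ref{prop:CDF_deviation_smooth} explicitly defers to the computations in the proof of Theorem \ref{thm:noisy_hist_density_upper_risk}. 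There is no circularity in substance (the Proposition is a population-level statement about $F^{(m)}$ versus $F$ and can be proven independently), but your shortcut does not avoid the local-limit-theorem machinery; it relocates it into the black box you are citing. As a modular argument, yours is cleaner and shorter; as a self-contained one, it would still owe a proof of the $O(1/m)$ CDF deviation bound, which is essentially the computation the paper carries out here.
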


Finally, we conclude this section with a study on $F$ whose density is discretized into a point mass function of $K$ non-zero masses. In contrast to the existing results in \citet{teicher1963identifiability}, we allow $K$ to be as large as $\sqrt{n}$, and study the finite-sample rate of the histogram estimator. Let $p(x)$ be a point mass function such that
\begin{equation*}
p(x) = \sum_{k=1}^K \alpha(k) \Ibb(x = x_k),\label{eq:hist_pmf}
\end{equation*}
where $\alpha(k) \geq 0$ and $\sum_{k=1}^K \alpha(k) = 1$, $x_k = \frac{(k-1) + 1/2}{K}$. Denote by $I_k$ the interval centered at $x_k$ and of length $1/K$. We can make $(\alpha(1), \ldots, \alpha(K))$ ``smooth'' by letting $\alpha(k) = \int_{I_k} f(t)dt$, where $f$ is a smooth function. Denote
\begin{equation*}
\left \{
\begin{array}{ll}
  \hat{\alpha}_{n,m}(k) = \frac{1}{n} \sum_{i = 1}^n \Ibb (\hat{s}_i \in I_k)\\
  \hat{\alpha}_n(k) = \frac{1}{n} \sum_{i = 1}^n \Ibb (s_i \in I_k) \\
\end{array} \right.
\end{equation*}
Then the MSE can be defined as $R(\hat{\alpha}, \alpha) = \frac{1}{K}\sum_{k=1}^K (\hat{\alpha}(k) - \alpha(k))^2$. It is known that $R(\hat{\alpha}_n, \alpha) = \Ocal(\frac{1}{n})$. Theorem \ref{thm:noisy_hist_mass_upper_risk} shows that the same rate can be achieved by $\hat{\alpha}_{n,m}$ when $m = \Omega(\sqrt{n}\max\{\sqrt{n}, K\})$. The proof is deferred to Appendix \ref{appendix:bmu_proof_noisy_hist_mass_upper_risk}.

\begin{theorem}[Upper bound of the histogram risk for finite binomial mixture]\label{thm:noisy_hist_mass_upper_risk}
  ~\\
  Let $\alpha(k) = \int_{I_k} f(t)dt$, where $f^{\prime}$ is absolutely continuous and $\int (f^{\prime})^2 dx < \infty$. Let
  $R(a, b) = \frac{\sum_{aK \leq k \leq bK} (\hat{\alpha}_{n,m}(k) - \alpha(k))^2}{\sum_{k = 1}^K \Ibb(aK \leq k \leq bK)}$ be the risk on the interval $[a, b]$.  It follows that
  $$R(a, 1 - a) \leq C_1 \cdot (\frac{1}{n} + \frac{1}{m} + \frac{K^2}{m^2}), \forall 0 < a < \frac{1}{2}.$$
  Furthermore, if $m \geq C_2 \cdot \sqrt{n}\max\{\sqrt{n}, K\}$, then 
  $$R(a, 1 - a) \leq C_3 \cdot \frac{1}{n}, \forall 0 < a < \frac{1}{2}.$$
  Here $C_1, C_2, C_3$ are positive constants that only depend on $a$ and $f$.
\end{theorem}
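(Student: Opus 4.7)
The plan is a bias-variance decomposition of the per-bin error, separating out the oracle histogram error (where the $s_i$'s are observed) from the extra error caused by the binomial noise (observing $\hat{s}_i$'s instead). First I would write
$(\hat{\alpha}_{n,m}(k) - \alpha(k))^2 \leq 2(\hat{\alpha}_{n,m}(k) - \hat{\alpha}_n(k))^2 + 2(\hat{\alpha}_n(k) - \alpha(k))^2$,
sum over the bins with $aK \leq k \leq (1-a)K$, and divide by the number of such bins. The oracle piece is a standard Binomial variance calculation: $\hat{\alpha}_n(k)$ is unbiased for $\alpha(k)$ with variance $\alpha(k)(1-\alpha(k))/n$, and summing gives $\sum_k \alpha(k)/n \leq 1/n$. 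Dividing by $(1-2a)K$ yields an $\Ocal(1/n)$ contribution, matching the first term of the claimed bound.

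Next, for the noise piece I would set $\delta_{ki} := \Ibb(\hat{s}_i \in I_k) - \Ibb(s_i \in I_k)$ so that $\hat{\alpha}_{n,m}(k) - \hat{\alpha}_n(k) = \frac{1}{n}\sum_i \delta_{ki}$ with the $\delta_{ki}$'s i.i.d. in $i$. Then $\Ebb(\hat{\alpha}_{n,m}(k) - \hat{\alpha}_n(k))^2 = \text{Var}(\delta_{k1})/n + (\Ebb \delta_{k1})^2$. For the variance term, the crucial observation is that a single sample lives in exactly one bin, so the random vector $(\delta_{k1})_k$ has at most two nonzero entries (one for the bin of $s_1$, one for the bin of $\hat{s}_1$), giving $\sum_k \delta_{k1}^2 \leq 2$ deterministically and hence $\sum_k \text{Var}(\delta_{k1}) \leq 2$. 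After dividing by $n$ and the number of bins, this piece contributes at most $\Ocal(1/(nK))$, well inside the stated rate.

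For the bias term I would write $\Ebb\delta_{k1} = [F^{(m)}(b_k) - F(b_k)] - [F^{(m)}(a_k) - F(a_k)]$ with $I_k = [a_k, b_k]$ and invoke Proposition \ref{prop:CDF_deviation_smooth}, which under the assumed smoothness of $f$ gives $|F^{(m)}(x) - F(x)| \leq C/m$ uniformly on any strictly interior interval. Choosing the interval $[a/2, 1-a/2]$, once $K$ is large enough compared to $1/a$ all endpoints of bins with $aK \leq k \leq (1-a)K$ lie inside it, so $|\Ebb\delta_{k1}| \leq 2C/m$ and $(\Ebb\delta_{k1})^2 \leq 4C^2/m^2$. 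Summing over $(1-2a)K$ bins and normalizing gives an $\Ocal(1/m^2)$ contribution, which is dominated by the $\Ocal(1/m + K^2/m^2)$ term in the statement. Adding the three pieces yields $R(a, 1-a) \leq C_1(1/n + 1/m + K^2/m^2)$; under the scaling $m \geq C_2\sqrt{n}\max(\sqrt{n}, K)$ each noise term is $\Ocal(1/n)$, giving the final claim.

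The main obstacle, in my view, is the bias estimate: it is what forces the smoothness assumption on $f$ and requires the nontrivial Proposition \ref{prop:CDF_deviation_smooth} rather than the weaker $\Ocal(1/\sqrt{m})$ bound of Proposition \ref{prop:CDF_deviation}. A secondary subtlety is ensuring that the constant from Proposition \ref{prop:CDF_deviation_smooth} remains uniform in $k$: because that constant degrades near $\{0,1\}$, one has to pre-shrink the interval as above and verify that every bin endpoint in range sits in the smaller interval, which only costs a multiplicative constant in the final rate.
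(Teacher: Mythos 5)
Your proof is correct, but it takes a genuinely different route from the paper's. The paper's own argument for this theorem is essentially ``repeat the proof of Theorem \ref{thm:noisy_hist_density_upper_risk} in the discrete setting,'' i.e.\ it re-runs the local limit theorem / Gaussian-approximation machinery (Lemma \ref{lemma:DeMoivreLaplace}) directly on the bin-to-bin transition probabilities $\Pbb(\hat{s}_1 \in I_k, s_1 \in I_{k+d})$, using the extra fact that $\max_{x\in I_{k+d}}|x-x_k| = (d+\tfrac12)/K$, and this is where the stated $\frac1m + \frac{K^2}{m^2}$ terms come from. You instead do a clean second-moment decomposition per bin and collapse the entire bias analysis into the telescoped CDF deviation $\Ebb\delta_{k1} = [F^{(m)}(b_k)-F(b_k)] - [F^{(m)}(a_k)-F(a_k)]$, citing Proposition \ref{prop:CDF_deviation_smooth} as a black box (which is legitimate, though note that proposition is itself proved by the same machinery, so the dependence is only repackaged). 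What your route buys is brevity and in fact a sharper interior bound of order $\frac{1}{nK} + \frac{1}{m^2}$, which implies the stated $C_1(\frac1n + \frac1m + \frac{K^2}{m^2})$ and the $O(1/n)$ conclusion under the stated scaling of $m$; your observation that $\sum_k \delta_{k1}^2 \le 2$ deterministically is the right way to keep the variance contribution summable over bins. The one loose end is the regime $K \lesssim 1/a$, where bin endpoints with $aK \le k \le (1-a)K$ need not lie in your pre-shrunk interval $[a/2, 1-a/2]$; there you should fall back on the cruder $O(1/\sqrt{m})$ bound of Proposition \ref{prop:CDF_deviation}, which still gives a per-bin squared bias of $O(1/m)$ and hence stays within the claimed rate. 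With that two-line patch the argument is complete.
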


\subsection{$F$ with A Monotone Density}\label{sec:bmu_bm_grenander}
Next, we shift our attention to $F$ with a monotone density $f$. It is motivated by the U shape in the histograms of the GeneFishing output, where we can decompose the U shape into a decreasing part, a flat part, and an increasing part; see Section \ref{sec:bmu_model}. To estimate $f$, a natural solution is the Grenander estimator \citep{grenander1956theory, jankowski2009estimation}. Specifically, we first construct the least concave majorant of the empirical CDF of $F$; then its left derivative is the desired estimator. The detailed review of the Grenander estimator is deferred to Appendix \ref{appendix:bmu_review_grenander}.

%\subsubsection{The Grenander Density Estimator for the Binomial Mixture Model}\label{sec:bmu_grenander_density}
In the sequel, we establish convergence and inference results for $\ftil_{n,m}$ that is the Grenander estimator based on $\hat{s}_i$'s. Theorem \ref{thm:noisy_l1_grenander} states $\tilde{f}_{n,m}$ achieves the convergence rate of $\Ocal(n^{-\frac{1}{3}})$ w.r.t the $L_1$ distance if $m = \Omega(n^{\frac{2}{3}})$.

%% \begin{theorem}\label{thm:l1_grenander}
%%   Suppose $f$ is a decreasing density on $[0, 1]$ with $f_{\max} = \max_x f(x) < \infty$. It follows that 
%%   $$\Ebb_f \int_{0}^1 |\ftil_{n}(x) - f(x)| dx \leq C \cdot n^{-\frac{1}{3}},$$
%%   where $C$ is a constant that only depends on $f$.
%% \end{theorem}

\begin{theorem}[$L_1$ convergence of $\ftil_{m,n}$]\label{thm:noisy_l1_grenander}
  
  Suppose $f$ is a decreasing density on $[0, 1]$ with $f_{\max} < \infty$. It follows that 
  $$\Ebb_f \int_{0}^1 |\ftil_{n, m}(x) - f(x)| dx \leq C_1 \cdot n^{-\frac{1}{3}} + C_2 \cdot m^{-\frac{1}{2}}.$$
  Furthermore, if $m \geq C_3 \cdot n^{\frac{2}{3}}$, we have
    $$\Ebb_f \int_{0}^1 |\ftil_{n, m}(x) - f(x)| dx \leq  C_4 \cdot n^{-\frac{1}{3}}.$$
  Here $C_1, C_2, C_3, C_4$ are positive constants that only depend on $f$.
\end{theorem}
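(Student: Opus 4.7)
The plan is to combine the classical $L^1$-risk bound for the noiseless Grenander estimator with the sup-norm control of the binomial-noise perturbation on the empirical CDF that was established earlier in the paper.

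I would first decompose, by the triangle inequality,
$$\int_0^1 |\ftil_{n,m}(x) - f(x)|\, dx \le \int_0^1 |\ftil_{n,m}(x) - \ftil_n(x)|\, dx + \int_0^1 |\ftil_n(x) - f(x)|\, dx.$$
The expectation of the second summand is $\Ocal(n^{-1/3})$ by the classical $L^1$-risk bound for the Grenander estimator of a bounded monotone density (e.g., Birg\'e 1989), so it contributes the $C_1 n^{-1/3}$ piece. The whole task therefore reduces to showing that the first summand has expectation $\Ocal(m^{-1/2})$.

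Both $\ftil_n$ and $\ftil_{n,m}$ are left derivatives of the LCMs of $F_n$ and $F_{n,m}$. Using Proposition \ref{prop:CDF_deviation} together with the DKW inequality,
$$\Ebb \sup_x |F_{n,m}(x) - F_n(x)| \le \Ebb\sup_x |F_{n,m} - F^{(m)}| + \sup_x |F^{(m)} - F| + \Ebb\sup_x |F - F_n| = \Ocal(n^{-1/2} + m^{-1/2}).$$
I would then invoke the max--min representation $\ftil_{n,m}(x) = \max_{s \le x}\min_{t \ge x} (F_{n,m}(t) - F_{n,m}(s))/(t-s)$, and analogously for $\ftil_n$, to control the level sets $\{x : |\ftil_{n,m}(x) - \ftil_n(x)| > u\}$: if $x$ lies in such a set then the slope of $F_{n,m} - F_n$ over some interval containing $x$ exceeds $u$ in magnitude, and a uniform bound $\|F_{n,m} - F_n\|_\infty \le \delta$ forces that interval to have length at most $2\delta/u$. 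Layer-cake integration, truncated at $u \asymp f_{\max}$ since both Grenander estimators are uniformly bounded, then converts $\delta = \Ocal(n^{-1/2} + m^{-1/2})$ into the desired density-level additive bound.

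The main obstacle is preserving the $m^{-1/2}$ rate (rather than the $m^{-1/4}$ that a naive Marshall's-lemma plus derivative-recovery argument would give) when translating the CDF sup-norm bound into an $L^1$ density bound. My strategy for this is to split the perturbation into the deterministic binomial bias $F^{(m)} - F$ and the stochastic empirical piece $F_{n,m} - F^{(m)}$: the former is a uniform deterministic shift that passes additively through the max--min representation without being squeezed by the LCM's derivative, whereas the latter is of DKW-size $\Ocal(n^{-1/2})$ and feeds directly into the Birg\'e-type local oscillation estimates that yield the $n^{-1/3}$ piece. Combining the two pieces gives the claimed $C_1 n^{-1/3} + C_2 m^{-1/2}$ bound, and substituting $m \ge C_3 n^{2/3}$ immediately collapses both terms to $C_4 n^{-1/3}$.
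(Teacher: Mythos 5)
Your decomposition $\int_0^1|\ftil_{n,m}-f|\le\int_0^1|\ftil_{n,m}-\ftil_n|+\int_0^1|\ftil_n-f|$ handles the second term correctly via Birg\'e, but the first term is where the proof lives, and the mechanism you propose for it does not close. The core difficulty is that a sup-norm bound $\|F_{n,m}-F_n\|_\infty\le\delta$ on the CDFs does \emph{not} convert into an $L_1$ bound of order $\delta$ on the LCM derivatives: for a difference $w$ of two decreasing functions with primitive $W$, the sharp general bound is $\int|w|\lesssim\sqrt{\|W\|_\infty\cdot TV(w)}$, not $\lesssim\|W\|_\infty$ (take $w$ alternating between $\pm h$ on intervals of length $\ell$: then $\|W\|_\infty\asymp h\ell$, $TV(w)\asymp h/\ell$, yet $\int|w|=h$). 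Indeed, if your conversion were valid it would apply equally to $F_n$ versus the concave $F$ and yield $\Ebb\int|\ftil_n-f|=\calO(n^{-1/2})$, contradicting the known $n^{-1/3}$ minimax rate. Your level-set argument has the same problem concretely: even granting that $|\ftil_{n,m}(x)-\ftil_n(x)|>u$ forces an interval of length $\le 2\delta/u$ with large slope of $F_{n,m}-F_n$, the level set is a union of one such interval per knot region of the Grenander estimators (of which there are order $n^{1/3}$), and the layer-cake integral $\int_0^{f_{\max}}(2\delta/u)\,du$ diverges at $u\to0$ in any case. Finally, the claim that the deterministic bias $F^{(m)}-F$ ``passes additively through the max--min representation'' is not an argument: that perturbation is not a constant shift (only constants leave the LCM derivative untouched), it is merely sup-norm small, so it is subject to exactly the square-root loss you are trying to avoid.

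The paper avoids this entirely by not comparing $\ftil_{n,m}$ with $\ftil_n$ at all. It re-runs Birg\'e's partition argument directly on the noisy data: by the analogue of Birg\'e's Lemma 1 it suffices to bound $\Ebb\int_I|f-\ftil_{n,m}^I|$ on each cell $I$ of a partition, and there the quantity $b\ftil_{n,m}(I)$ is rewritten as $2\sup_{x\in I}[G_{N_m}(x)-U(x)]$ for the \emph{conditional} empirical CDF of the points landing in $I$, which the noisy DKW bound (Corollary \ref{corollary:noisy_DKW}) controls by $\calO(N_m^{-1/2}+m^{-1/2})$. The crucial point is that this per-cell bound enters multiplied by the cell mass $\approx f(I)$, so the $m^{-1/2}$ contributions sum over the partition to $\sum_i f_i\,m^{-1/2}=\calO(m^{-1/2})$ while the $\sqrt{f_i/n}$ terms optimize to $n^{-1/3}$; a second ingredient, $f^{(m)}(I)\le f(I)(1+Cm^{-1/2})$ from Proposition \ref{prop:CDF_deviation}, keeps the cell masses under control. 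If you want to salvage your route, you would need a genuinely local (cell-wise, mass-weighted) comparison of this type rather than a global sup-norm one; as written, the step from $\delta=\calO(n^{-1/2}+m^{-1/2})$ to $\Ebb\int|\ftil_{n,m}-\ftil_n|=\calO(m^{-1/2})$ is a gap.
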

Theorem \ref{thm:noisy_l1_grenander} follows by Corollary \ref{corollary:noisy_DKW} and \citet{birge1989grenander}. The details can be found in Appendix \ref{appendix:bmu_proof_noisy_l1_grenander}. Furthermore, we discuss the minimal $m$ we need to ensure the $\Ocal(n^{-\frac{1}{3}})$ convergence rate in terms of the $L_1$ distance. We have seen that $m = \Omega(n^{\frac{2}{3}})$ is a sufficient condition. The question arises naturally whether it is necessary. Based on the density \eqref{eq:two_step_func} used in Proposition \ref{prop:CDF_deviation_bounds}, we can show that $m = \Omega(n^{\frac{2}{3}})$ is also the minimal binomial size; see Proposition \ref{prop:lower_bound_noisy_l1_grenander}.

\begin{proposition}\label{prop:lower_bound_noisy_l1_grenander}
  When $f$ takes \eqref{eq:two_step_func}, i.e.,
$$f(x) = 1.8 \cdot \Ibb(x \in [0, 1/2]) + 0.2 \cdot \Ibb(x \in (1/2, 1]),$$
  it needs $m \geq C_1 \cdot n^{\frac{2}{3}}$ to ensure
  $$\Ebb_f \int_0^1 |\ftil_{n,m}(x) - f(x) | dx \leq C_2 \cdot n^{-\frac{1}{3}},$$
  where $C_1, C_2 > 0$ only depend on $f$.
\end{proposition}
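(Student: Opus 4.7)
I reduce the claim to a lower bound on the bias of a ``population Grenander.'' Define $\bar{F}^{(m)}$ to be the least concave majorant of $F^{(m)}$ on $[0,1]$, and let $\bar{f}^{(m)}$ be its left derivative; this is a decreasing, bounded density on $[0,1]$ with total mass one, even though $F^{(m)}$ is discrete. The two ingredients I want are: (i) $\tilde{f}_{n,m}$ converges to $\bar{f}^{(m)}$ at the usual Grenander rate $n^{-1/3}$; and (ii) $\bar{f}^{(m)}$ is itself biased away from $f$ by $\Theta(m^{-1/2})$ for the two-step density \eqref{eq:two_step_func}. Combining them by the pointwise triangle inequality (integrated and averaged) gives
\[ \Ebb_f \int_0^1 |\tilde{f}_{n,m}(x) - f(x)|\,dx \;\geq\; \int_0^1 |\bar{f}^{(m)}(x) - f(x)|\,dx \;-\; \Ebb_f \int_0^1 |\tilde{f}_{n,m}(x) - \bar{f}^{(m)}(x)|\,dx \;\geq\; \frac{c}{\sqrt{m}} - \frac{C_0}{n^{1/3}}. \]
If the left-hand side is at most $C_2 n^{-1/3}$, then $c/\sqrt{m} \leq (C_0+C_2)\,n^{-1/3}$, i.e.\ $m \geq C_1 n^{2/3}$ with $C_1 = c^2/(C_0+C_2)^2$.

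For (ii), I would invoke the identity $\int_0^1 |\bar{f}^{(m)} - f|\,dx \geq \sup_x |\bar{F}^{(m)}(x) - F(x)|$, which holds because $\bar{F}^{(m)}(0) = F(0) = 0$ and $\bar{F}^{(m)}(x) - F(x) = \int_0^x (\bar{f}^{(m)} - f)(t)\,dt$. A CLT expansion, in the same spirit as the appendix proof of Proposition \ref{prop:CDF_deviation_bounds}, evaluated at $x=1/2$ shows $F(1/2) - F^{(m)}(1/2) \geq c/\sqrt{m}$ with a definite positive sign, since the heavier left half of $f$ leaks mass across $1/2$ more than the lighter right half leaks back. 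Because $f$ is decreasing, the mixture pmf $p_k := \Pbb(\hat{s} = k/m)$ is decreasing in $k$ (a short Beta-integral computation), so $F^{(m)}$ has decreasing jumps and $\bar{F}^{(m)}$ is simply the piecewise-linear interpolant through the knots $(k/m, F^{(m)}(k/m))$; hence $0 \leq \bar{F}^{(m)}(1/2) - F^{(m)}(1/2) \leq \max_k p_k = O(1/m)$. Subtracting, $F(1/2) - \bar{F}^{(m)}(1/2) \geq c/\sqrt{m} - O(1/m) \geq c'/\sqrt{m}$ for $m$ large enough, which is the desired population bias.

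For (i), Birg\'e's rate for the Grenander estimator applied to the distribution $F^{(m)}$ --- whose effective decreasing density $\bar{f}^{(m)}$ is uniformly bounded essentially by $f_{\max} = 1.8$ --- yields $\Ebb_f \int |\tilde{f}_{n,m} - \bar{f}^{(m)}|\,dx \leq C_0 n^{-1/3}$; this is precisely the ingredient entering the upper bound in Theorem \ref{thm:noisy_l1_grenander}. The main obstacle is step (ii): one must pin down the sign of $F^{(m)}(1/2) - F(1/2)$ so that the LCM-induced upward correction $\bar{F}^{(m)} \geq F^{(m)}$ does not destroy the bias, and simultaneously cap that correction at $O(1/m)$. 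Once those are in hand, (i) together with the triangle inequality delivers the claim.
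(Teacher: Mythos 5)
Your overall strategy differs from the paper's and its skeleton is sound, but step (i) as written has a real gap. You invoke ``Birg\'e's rate for the Grenander estimator applied to the distribution $F^{(m)}$'' to get $\Ebb_f\int_0^1|\ftil_{n,m}-\bar f^{(m)}|\,dx \le C_0 n^{-1/3}$, but Birg\'e's theorem (Theorem \ref{thm:l1_grenander}) is a statement about i.i.d.\ samples from a distribution \emph{with a decreasing density}; here the samples $\hat s_i$ are drawn from the discrete law $F^{(m)}$, which has no density, and what you need is a misspecified-Grenander rate toward the LCM-derivative $\bar f^{(m)}$. That statement is plausibly true (and could be proved by rerunning the partition argument behind Theorem \ref{thm:noisy_l1_grenander} with $\bar F^{(m)}$ in place of $F$, checking that the constants in \eqref{eq:functional_bound} are uniform in $m$ since $\bar f^{(m)}\le f_{\max}$), but it is not ``precisely the ingredient entering the upper bound in Theorem \ref{thm:noisy_l1_grenander}'': that theorem controls $\Ebb\int|\ftil_{n,m}-f|$, not $\Ebb\int|\ftil_{n,m}-\bar f^{(m)}|$, and trying to deduce the latter from the former via the triangle inequality reintroduces an $O(m^{-1/2})$ term that kills your lower bound. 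By contrast, your step (ii) — which you flag as the main obstacle — is actually fine: the sign of $F(1/2)-F^{(m)}(1/2)$ is exactly what the appendix proof of Proposition \ref{prop:CDF_deviation_bounds} establishes, and your observation that the mixture pmf is decreasing (so the LCM correction is $O(1/m)$) is correct and a nice touch.

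The paper avoids needing any density-level rate for the stochastic error by staying at the CDF level. It writes, with $x^*=\argmax_x|F^{(m)}(x)-F(x)|$,
$$\Ebb_f\int_0^1|\ftil_{n,m}-f|\,dx \;\ge\; \Ebb_f\Bigl|\int_0^{x^*}(\ftil_{n,m}-f)\Bigr| \;=\;\Ebb_f|\tilde F_{n,m}(x^*)-F(x^*)| \;\ge\; |F^{(m)}(x^*)-F(x^*)| - \Ebb_f|\tilde F_{n,m}(x^*)-F^{(m)}(x^*)|,$$
then controls the last term by Marshall's lemma ($\sup_x|\tilde F_{n,m}-F^{(m)}|\le\sup_x|F_{n,m}-F^{(m)}|$) and DKW, giving $K_1/\sqrt m - K_2/\sqrt n$ directly; the stochastic error is $O(n^{-1/2})$, which is negligible against $n^{-1/3}$, and the conclusion follows. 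So the paper never compares $\ftil_{n,m}$ to a population Grenander at all. If you want to keep your decomposition, you must either supply a genuine proof of (i) for discrete sampling distributions, or — more economically — replace (i) by the Marshall-plus-DKW CDF argument, which is both shorter and gives a better ($n^{-1/2}$) error term.
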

\begin{proof}
  Note that
  \begin{eqnarray*}
    \Ebb_f \int_0^1 \vert \ftil_{n,m}(x) - f(x) \vert dx &\overset{(i)}{\geq}&  \Ebb_f \int_0^{x^*} \vert \ftil_{n,m}(x)  - f(x)\vert dx\\                                                 
                                                 &\geq&  \Ebb_f\vert \int_0^{x^*} [\ftil_{n,m}(x)  - f(x)]dx \vert \\
                                                 &=&  \Ebb_f \vert \tilde{F}_{n,m}(x^*) - F(x^*)\vert\\
                                                  &\geq&  \Ebb_f \vert F^{(m)}(x^*) - F(x^*) \vert - \Ebb_f \vert \tilde{F}_{n,m}(x^*) - F^{(m)}(x^*) \vert\\
    &\overset{(ii)}{\geq}&  \Ebb_f\vert F^{(m)}(x^*) - F(x^*) \vert -  \sup_x \Ebb_f\vert F_{n,m}(x) - F^{(m)}(x) \vert\\
    &\overset{(iii)}{\geq}& \frac{K_1}{\sqrt{m}} - \frac{K_2}{\sqrt{n}},
  \end{eqnarray*}
  where $x^* := arg\max_x \vert F^{(m)}(x) - F(x) \vert$ in Inequality (i); Inequality (ii) holds because $\sup_{x}\vert \tilde{F}_{n,m}(x) - F^{(m)}(x)\vert \leq \sup_x \vert F_{n,m}(x) - F^{(m)}(x)\vert$, which can be easily derived by Marshall's lemma \citep{marshall1970discussion}; Inequality (iii) holds because of Proposition \ref{prop:CDF_deviation_bounds} and the DKW inequality, and $K_1, K_2 > 0$ are two positive constants that only depend $f$. To ensure the expected $L_1$ distance between $\ftil_{n,m}$ and $f$ is bounded by $C_2 \cdot n^{-\frac{1}{3}}$, it is necessary to have
  $$\frac{C_2}{\sqrt[3]{n}} \geq \frac{K_1}{\sqrt{m}} - \frac{K_2}{\sqrt{n}},$$
  which implies that $m \geq C_1 \cdot n^{\frac{2}{3}}$, for some positive constant $C_1$ that depends on $f$.
\end{proof}

Next, we study the local asymptotics of $\tilde{f}_{n,m}$. For the binomial mixture model, we yield a similar result for $\ftil_{n,m}$ as the local asymptotics of $\ftil_n$ when $m$ grows faster than $n$, as shown in Theorem \ref{thm:noisy_grenander_local_asymp}.

\begin{theorem}[Local Asymptotics of $\ftil_{m,n}$]\label{thm:noisy_grenander_local_asymp}
  
  Suppose $f$ is a decreasing density on $[0, 1]$ and is smooth at $t_0 \in (0, 1)$.  Then when $\frac{m}{n} \rightarrow \infty$ as $n \rightarrow \infty$, if $f_{\max} < \infty$, we have
  \begin{itemize}
  \item [(A)] If $f$ is flat in a neighborhood of $t_0$, and $[a, b]$ is the flat part containing $t_0$, it follows that
    $$\sqrt{n}(\ftil_{m,n}(t_0) - f(t_0)) \overset{d}{\rightarrow} \hat{S}_{a, b}(t_0),$$
    where $\hat{S}_{a, b}(t)$ is the slope at $F(t)$ of the least concave majorant in $[F(a), F(b)]$ of a standard Brownian Bridge in $[0, 1]$.
  \item [(B)] If $f(t) - f(t_0) \sim f^{(k)}(t_0)(t - t_0)^k$ near $t_0$ for some $k$ and $f^{(k)}(t_0) < 0$, it follows that
    $$n^{\frac{k}{2k+1}}[\frac{f^k(t_0) |f^{(k)}(t_0)|}{(k+1)!}]^{-\frac{1}{2k+1}} (\ftil_{m,n}(t_0) - f(t_0)) \overset{d}{\rightarrow} V_k(0),$$
    where $V_k(t)$ is the slope at $t$ of the least concave majorant of the process $\{B(t) - |t|^{k+1}, t \in (-\infty, \infty)\}$, and $B(t)$ is a standard two-sided Brownian motion on $\Rbb$ with $B(0) = 0$.
  \end{itemize}
\end{theorem}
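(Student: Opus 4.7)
The plan is to reduce both parts to the well-known local asymptotics of the ordinary Grenander estimator $\tilde{f}_n$ based on the unobserved clean samples $s_i$, exploiting that $m/n \to \infty$ renders the binomial perturbation negligible at the $\sqrt{n}$-scale. The workhorse is the functional weak convergence $\sqrt{n}(F_{n,m} - F) \rightsquigarrow \mathbb{G}_F$, where $\mathbb{G}_F$ is the $F$-Brownian bridge. I would obtain this from the decomposition $\sqrt{n}(F_{n,m} - F) = \sqrt{n}(F_{n,m} - F^{(m)}) + \sqrt{n}(F^{(m)} - F)$: the first summand is a standard Donsker limit for the empirical process of the $\hat{s}_i$, whose Gaussian limit $\mathbb{G}_{F^{(m)}}$ agrees with $\mathbb{G}_F$ because $F^{(m)} \to F$ uniformly (Proposition \ref{prop:CDF_deviation}); the second summand is deterministic and uniformly $O(m^{-1/2}) = o(n^{-1/2})$ by the same proposition.

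For Case (A), the classical switch relation for decreasing-density Grenander estimators gives $\{\tilde{f}_{n,m}(t_0) \leq c\} = \{\argmax_t [F_{n,m}(t) - ct] \leq t_0\}$, which I apply with $c = f(t_0) + x/\sqrt{n}$. Since $f$ is monotone and equals $f(t_0)$ on $[a,b]$, the map $t \mapsto F(t) - f(t_0)t$ is constant on $[a,b]$ and strictly smaller outside; hence the argmax concentrates on $[a,b]$ with probability tending to one. On $[a,b]$ the centered, rescaled objective is $\sqrt{n}(F_{n,m} - F)(t) - xt + \mathrm{const}$, which converges in $C[a,b]$ to $\mathbb{G}_F(t) - xt + \mathrm{const}$ by the functional CLT above. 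The time change $y = F(t)$ on $[F(a), F(b)]$ converts the argmax computation into the slope of the LCM of a standard Brownian bridge on $[F(a), F(b)]$; continuity of the LCM-slope functional plus the continuous mapping theorem yield the limit $\hat{S}_{a,b}(t_0)$.

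For Case (B), I localize at scale $n^{-1/(2k+1)}$: with $c = f(t_0) + xn^{-k/(2k+1)}$ and $t = t_0 + n^{-1/(2k+1)}u$, the centered objective rescaled by $n^{(k+1)/(2k+1)}$ reduces to
\[
n^{1/(2(2k+1))}\bigl[\mathbb{G}_{n,m}(t_0 + n^{-1/(2k+1)}u) - \mathbb{G}_{n,m}(t_0)\bigr] + \tfrac{f^{(k)}(t_0)}{(k+1)!}\,u^{k+1} - xu + o(1),
\]
using the Taylor expansion $F(t_0 + h) - F(t_0) - f(t_0)h \sim \tfrac{f^{(k)}(t_0)}{(k+1)!}h^{k+1}$. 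For the clean process $\mathbb{G}_n$ the local-increment CLT gives a two-sided Brownian motion with variance $f(t_0)$ per unit length, and the Kim--Pollard/Groeneboom argmax continuous mapping theorem (with tightness supplied by the coercive drift, since $k+1$ must be even for $f$ to be monotone decreasing with $f^{(k)}(t_0) < 0$) then delivers the $V_k(0)$ limit. The step I expect to be the main obstacle is transferring this local CLT from $\mathbb{G}_n$ to $\mathbb{G}_{n,m}$, namely
\[
n^{1/(2(2k+1))}\sup_{|u|\leq M}\bigl|(\mathbb{G}_{n,m}-\mathbb{G}_n)(t_0 + n^{-1/(2k+1)}u) - (\mathbb{G}_{n,m}-\mathbb{G}_n)(t_0)\bigr| = o_P(1).
\]
My plan is a second-moment argument: the pointwise increment is $\tfrac{1}{\sqrt{n}}\sum_i [\Ibb(\hat{s}_i \in J) - \Ibb(s_i \in J)]$ for $J = (t_0, t_0 + h]$, and a mismatch in the $i$th indicator requires $|\hat{s}_i - s_i|$ to exceed the distance from $s_i$ to $\partial J$. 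Binomial concentration $\Pbb(|\hat{s}_i - s_i| > \epsilon) \leq 2e^{-2m\epsilon^2}$ with $\epsilon \asymp \sqrt{(\log m)/m}$, together with the $f$-smoothness controlling the mass of $s_i$ within $\epsilon$ of $\partial J$, yields a variance bound that is $o(n^{-2(k+1)/(2k+1)})$ under the hypothesis $m/n \to \infty$, which for $k \geq 1$ implies $m \gg n^{2/(2k+1)}$. A chaining argument over a grid in $u$ upgrades the pointwise bound to the uniform one over $|u| \leq M$, after which the argmax continuous mapping theorem concludes the proof.
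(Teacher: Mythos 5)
Your overall strategy is sound and reaches the same conclusion, but it takes a genuinely different route at the key technical step. The paper first establishes a KMT-type strong approximation for the noisy empirical process (Theorem \ref{thm:noisy_KMT_approx}): a sequence of Brownian bridges $B_n$ with $\sup_x|\sqrt{n}(F_{n,m}(x)-F(x))-B_n(F(x))|$ controlled up to an extra error $O(\sqrt{n/m})$, and then follows \citet{wang1992nonparametric} essentially verbatim through the switching relation; Part (B) is dispatched with ``follows in a similar manner.'' You instead use the weak-convergence decomposition $\sqrt{n}(F_{n,m}-F)=\sqrt{n}(F_{n,m}-F^{(m)})+\sqrt{n}(F^{(m)}-F)$ plus the argmax continuous mapping theorem, and for Part (B) you explicitly confront the transfer of the local-increment CLT from the clean to the noisy empirical process via a second-moment/chaining bound on $F_{n,m}-F_n$ over intervals of length $n^{-1/(2k+1)}$. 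This is a real difference in emphasis, and arguably an improvement: the paper's global error $O(\sqrt{n/m})$ at the $\sqrt{n}$ scale is adequate for Part (A) but, after rescaling by $n^{1/(2(2k+1))}$ for Part (B), is no longer $o(1)$ under only $m/n\to\infty$, so a localized argument of the kind you sketch is genuinely needed there.

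One loose end in your Part (B): your second-moment bound handles the fluctuation of $\frac{1}{\sqrt{n}}\sum_i[\Ibb(\hat{s}_i\in J)-\Ibb(s_i\in J)]$, but this sum is not mean-zero. Its mean is $\sqrt{n}\bigl[(F^{(m)}-F)(t_0+h)-(F^{(m)}-F)(t_0)\bigr]$, and if you only invoke the crude bound $\sup_x|F^{(m)}(x)-F(x)|=O(m^{-1/2})$ from Proposition \ref{prop:CDF_deviation}, the resulting contribution to the rescaled drift is $n^{(k+1)/(2k+1)}O(m^{-1/2})$, which for $k=1$ requires $m\gg n^{4/3}$ rather than $m\gg n$. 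To close this under the stated hypothesis you must use the smoothness of $f$ at $t_0$ to upgrade the local bias to $O(m^{-1})$ (in the spirit of Proposition \ref{prop:CDF_deviation_smooth}), after which $n^{(k+1)/(2k+1)}/m\to 0$ follows from $m\geq n$. You invoke smoothness only to control the mass near $\partial J$ in the variance term, so this deterministic bias increment should be stated and bounded separately.
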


The proof of Theorem \ref{thm:noisy_grenander_local_asymp} relies on Proposition \ref{prop:CDF_deviation} and the Koml{\' o}s-Major-Tusn{\' a}dy (KMT) approximation \citep{komlos1975approximation}. Given these two results, we can show that if $f$ is upper bounded, there exists a sequence of Brownian bridges $\{B_n(x), 0 \leq x \leq 1\}$ such that
$$\Pbb \left \{ \sup_{0 \leq x \leq 1} |\sqrt{n} (F_{n,m}(x) - F(x)) - B_n(F(x))| > \frac{\tilde{a}\sqrt{n}}{\sqrt{m}} + \frac{a\log n}{\sqrt{n}} + t \right \} \leq b (e^{-c\sqrt{n} t} + e^{-dmt^2}),$$
where $\tilde{a} > 0$ only depends on $f$ and $a, b, c, d$ are universal positive constants. % To make the term $\frac{\sqrt{n}}{\sqrt{m}}$ term diminish, we need $m/n \rightarrow \infty$.
Then following \citet{wang1992nonparametric}, we can prove Theorem \ref{thm:noisy_grenander_local_asymp}. The details are deferred to Appendix \ref{appendix:bmu_proof_noisy_grenander_local_asymp}.

Finally, we conclude this section by discussing the histogram estimator and the Grenander estimator (for a density). Both of them are bin estimators but differ in the choice of the bin width. One can pick the bin width for the histogram to attain optimal convergence rates \citep{wasserman2006all}. On the other hand, the bin widths of the  Grenander estimator are chosen completely automatically by the estimator and are naturally locally adaptive \citep{birge1989grenander}. The consequence is that the Grenander estimator can guarantee monotonicity, but the histogram estimator cannot. If the underlying model is monotone, the Grenander estimator has a better convergence rate than the histogram estimator. Notably, the convergence theory of the histogram estimator cannot be established unless the density is smooth, while that of the Grenander estimator only requires the density is monotone and $L_p$ integrable ($p > 2$) \citep{birge1989grenander}.

In our setup, we show that when $m = \Omega(n^{\frac{2}{3}})$, both the Grenander estimator and the histogram estimator, based on $\{\hat{s}_i\}_{i=1}^n$, have the same rate at $\calO(n^{-\frac{1}{3}})$ in $L_1$ distance (the $L_2$ convergence of the histogram can imply the $L_1$ convergence). It seems that both methods are comparable. Nonetheless, we mention that the conditions for the convergence of the two estimators are different. The Grenander estimator requires a bounded monotone density, while the histogram requires a smooth density. % If the density is monotone and has an absolutely continuous derivative, the Grenander estimator might just need $m = \Omega(n^{\frac{1}{3}})$, which is illustrated in Remark \ref{rmk:noisy_l1_grenander_monotone}.

\section{Estimation and Inference of $F$ with U-shape Constraint}\label{sec:bmu_U_shape}
Now we have sufficient insight into the estimation of $F$ under various conditions in the binomial mixture model \eqref{eq:binomial_mixture}. We are ready to cast our attention back to the cutoff selection problem for the GeneFishing method, i.e., distinguishing the relevant genes (to the baits genes or the associated biological process) from the irrelevant ones. To answer this question, we leverage the observation that the histogram of $\{\hat{s}_i\}_{i=1}^n$ appears to have a U shape as shown in Figure \ref{fig:hist_CFR}.

\subsection{Model}\label{sec:bmu_model}
We decompose the density or the pmf of $F$ into three parts: the first part decreases, the second part remains flat, and the last part increases. The first part is assumed to be purely related to the irrelevant genes; the second part is associated with the mixture of the irrelevant and the relevant genes; the last part is purely corresponding to the relevant genes. Denote by $c_l$ and $c_r$ the transition points from the first part to the second part and the second part to the third part, respectively. Then the question is reduced to identifying $c_r$ and getting an upper confidence bound on $c_r$. In the sequel, we formally write this assumption as Assumption \ref{asmpt:U_shape} when $F$ is associated with a continuous random variable. The corresponding mathematical formulations for the pmf are similar, so we omit them here.

\begin{assumption}\label{asmpt:U_shape}
Let $f$ be the derivative of $F$, i.e., the probability density function. We assume $f$ consists of three parts:
%As the assumption stated above, it consists of three parts:
\begin{equation}
  f(x) = \left \{
  \begin{array}{ll}
    f_l(x) = \alpha_l \cdot g_l(x), &\text{ if } x \in [0, c_l]\\
    \frac{\alpha_{mid}}{c_r - c_l}, & \text{ if } x \in (c_l, c_r]\\
    f_r(x) = \alpha_r \cdot g_r(x), & \text{ if } x \in (c_r, 1]
  \end{array}
\right.,
\label{eq:U_constraint}
\end{equation}
where $0 < c_l < c_r < 1$, $g_l$ is a decreasing function, $g_r$ is an increasing function such that $\int_{0}^{c_l} g_l(x) dx = 1$, $\int_{1}^{c_r} g_r(x) dx = 1$, and $\alpha_l + \alpha_r + \alpha_{mid} = 1$ with $\alpha_l, \alpha_r, \alpha_{mid} > 0$.
\end{assumption}
For the U-shaped constraint, we also need
$$\min\{f_l(c_l^-), f_r (c_r^+)\} \geq \frac{\alpha_{mid}}{c_r - c_l}.$$

The shape constraint \eqref{eq:U_constraint} is determined by six parameters $\{\alpha_l, \alpha_r, c_l, c_r, g_l, g_r\}$, but they are not identifiable. Below is an example of such unidentifiability.
\begin{example}[Identifiability Issue for \eqref{eq:U_constraint}]\label{ex:identifiability_U}
  \begin{equation*}
  \begin{array}{ll}
    \tilde{\alpha}_l = \alpha_l + \frac{\alpha_{mid}}{c_r - c_l} \cdot \tau;  & \tilde{\alpha}_r = \alpha_r\\
    \tilde{c}_l = c_l + \tau; & \tilde{c}_r = c_r \\
    \tilde{g}_l = \left \{ \begin{array}{ll}
      g_l\cdot \alpha_l /\tilde{\alpha}_l, & \text{ if } x \in [0, c_l]\\
      \frac{\alpha_{mid}}{(c_r - c_l) \cdot \tilde{\alpha}_l}, & \text{ if } \in (c_l, c_l + \tau]
    \end{array}
    \right . ; & \tilde{g}_r = g_r
  \end{array}
\end{equation*}
The parameters $\{\tilde{\alpha}_l, \tilde{\alpha}_r, \tilde{c}_l, \tilde{c}_r, \tilde{g}_l, \tilde{g}_r\}$ yield the same model as $\{\alpha_l, \alpha_r, c_l, c_r, g_l, g_r\}$ if $\tau < c_r - c_l$.
\end{example}
The identifiability issue results from the vague transitions from one part to the next adjacent part in Model \eqref{eq:U_constraint}. To tackle it, we need to introduce some assumptions to sharpen the transitions. For example, if $f$ is smooth, then a sharp transition means the first derivative of $f$, i.e., the slope, significantly changes at this point. In general, we do not impose the smoothness on $f$ and use the finite difference as the surrogate of the slope. To be specific, suppose there exist $\delta_l, \delta_r > 0$ and neighborhoods of  $c_l$ and $c_r$, whose sizes are $\tau_l$ and $\tau_r$ respectively, such that

$$
\begin{array}{ll}
  f_l(x) \geq \frac{\alpha_{mid}}{c_r - c_l} + \delta_l \cdot (c_l - x),& \text{ if } x \in [c_l - \tau_l, c_l)\\
  f_r(x) \geq \frac{\alpha_{mid}}{c_r - c_l} + \delta_r \cdot (x - c_r),& \text{ if } x \in (c_r, c_r + \tau_r].\\
\end{array}
$$
For the sake of convenience, we consider a stronger condition that drop off the factors $c_l - x$ and $x - c_r$, which is called \textbf{Assumption \ref{asmpt:density_gap}}. It indicates that the density jumps at the transition points $c_l$ and $c_r$. 
\begin{assumption}
  There are two positive parameters $\delta_l$ and $\delta_r$ such that
  $$
  \begin{array}{l}
    f_l(c_l^-) \geq \frac{\alpha_{mid}}{c_r - c_l} + \delta_l\\
    f_r(c_r^+) \geq \frac{\alpha_{mid}}{c_r - c_l} + \delta_r.
  \end{array}
  $$
  \label{asmpt:density_gap}
\end{assumption}

Together, we refer to the Binomial mixture with Assumption \ref{asmpt:U_shape} and Assumption \ref{asmpt:density_gap} as the \textbf{\GFCModel\/} model.

% % Discussion on Unimodal distribution
% The U-shape can be regarded as an inverse version of the unimodal density, which has been intensively studied \citep{rao1969estimation, bickel1996some}. \citet{bickel1996some} has been shown that if the estimated mode is consistent, the $L_1$-norm of the difference between the estimated unimodal density (based on the Grenander estimator) and the true density is of the cubic-root rate with respect to the sample size. For a flat region, the density estimator at the interior point of the flat region is also consistent at the same rate \citep{groeneboom2014nonparametric}. For the boundaries points at the flat regions, \citet{wang1992nonparametric} gives a similar result for the density estimator at the boundaries if the density has left or right nonzero derivatives at the boundaries. This article complements this line of works by considering a specific gapped U-shape density under the binomial mixture model, which has not been studied in the literature. We give a theoretical analysis of the identified cutoffs by our proposed method. We show that when $m = \Omega(n)$, the cubic-root rate can be attained for both the cutoff estimation and the density estimation under certain conditions.

\subsection{Method}\label{sec:bmu_method_npmle}
Let $c_l^{(0)}$ and $c_r^{(0)}$ be the underlying ground truth of the two cutoffs in \GFCModel\/. Our goal is to identify the cutoff that separates the relevant genes and the irrelevant genes in the GeneFishing method. Specifically, we want to find an estimator $\hat{c}_r$ for $c_r^{(0)}$ and study the behavior of $\Pbb[\hat{c}_r \geq c_r^{(0)}]$.

Define $\alpha_l(x; v) = \Pbb[v \leq x]$, $\alpha_{mid}(x, y; v) = \Pbb[x < v < y]$, $\alpha_r(y; v) = \Pbb[v \geq y]$, where $v$ can be $s$ or $\hat{s}$. Define $N_l(x; \{v_i\}_{i=1}^n) := \#\{v_i \leq x, i = 1,\ldots, n\}$, $N_{mid}(x, y; \{v_i\}_{i=1}^n) := \#\{x < v_i \leq y, i = 1,\ldots, n\}$, $N_r(y; \{v_i\}_{i=1}^n) := \#\{v_i > y, i = 1,\ldots, n\}$, $N(x; \{v_i\}_{i = 1}^n) = \#\{v_i = x, i = 1, \ldots, n\}$, where $\{v_i\}_{i=1}^n$ can be $\{s_i\}_{i=1}^n$ or $\{\hat{s}_i\}_{i=1}^n$. 

Since we are working on $\hat{s}_i$'s rather than on $s_i$'s, we denote $\alpha_l(x) = \Pbb[\hat{s} \leq x]$ and $N_l(x) = N_l(x; \{\hat{s}_i\}_{i=1}^n)$ for simplicity. Similarly, we can get simplified notation $\alpha_{mid}(x, y)$, $\alpha_r(y)$, $N_{mid}(x, y)$, $N_r(y)$, $N(x)$. In the rest of the paper, we sometimes use $\alpha_l$, $\alpha_r$, $\alpha_{mid}$ for $\alpha_l(c_l)$, $\alpha_r(c_r)$ and $\alpha_{mid}(c_l, c_r)$ respectively if no confusion arises. %The same simplification principle is applied to $g_l$ and $g_r$ which essentially hinge on $c_l$ and $c_r$ as well.

\subsubsection{The Non-parametric Maximum Likelihood Estimation}\label{sec:bmu_bmgU_npmle}
To estimate the parameters in \GFCModel\/, we consider the non-parametric maximum likelihood estimation (NPMLE). We first solve the problem given $c_l$ and $c_r$, then searching for optimal $c_l$ and $c_r$ using grid searching. The NPMLE problem is:
\begin{eqnarray}
  H_{full}(c_l, c_r) := \max && \sum_{\hat{s}_i \leq c_l} \log g_l(\hat{s}_i) + \sum_{\hat{s}_i > c_r} \log g_r(\hat{s}_i)   \label{opt:gU_mle_known_c0_c1} \\
  && +  N_l(c_l)\log \alpha_l + N_{mid}(c_l, c_r) \log \frac{\alpha_{mid}}{c_r - c_l} + N_r(c_r) \log \alpha_r \nonumber\\
    s.t. && \int_{0}^{c_l} g_l = 1, \int_{c_r}^1 g_r = 1, g_l \text{ decreasing }, g_r \text{ increasing}\nonumber \\
         && \alpha_l, \alpha_r, \alpha_{mid} > 0, \alpha_l + \alpha_r + \alpha_{mid} = 1\nonumber \\
         && \left .
           \begin{array}{l}
         \alpha_l g_l(c_l^-) \geq \frac{\alpha_{mid}}{c_r - c_l} + d_l\\
         \alpha_r g_r(c_r^+) \geq \frac{\alpha_{mid}}{c_r - c_l} + d_r
           \end{array}
          \right \}\label{ineq:changepoint_gap}
\end{eqnarray}
Here $d_l$ and $d_r$ are two parameters to tune, and we call the inequalities \eqref{ineq:changepoint_gap} the \textbf{change-point-gap} constraint, which correspond to Assumption \ref{asmpt:U_shape} and Assumption \ref{asmpt:density_gap}. Given $c_l$ and $c_r$, the variables to optimize over are
$$S := \{\alpha_l, \alpha_r, g_l(\hat{s}_1), \ldots, g_l(\hat{s}_{i_l}), g_r(\hat{s}_{i_r}), \ldots, g_r (\hat{s}_n)\},$$
where $i_l := \max_i \cdot \Ibb (\hat{s}_i < c_l)$, $i_r := \min_i \cdot \Ibb (\hat{s}_i \geq c_r)$. Since $\log x$ is continuous and concave w.r.t $x$, and the feasible set is convex (it is easy to check that $\{(x, y, z): xy \geq z; x, y, z \geq 0\}$ is a convex set), the problem \eqref{opt:gU_mle_known_c0_c1} is a convex optimization with a unique optimizer. 

There are mainly two difficulties for the optimization problem \eqref{opt:gU_mle_known_c0_c1}. First, the change-point-gap constraint \eqref{ineq:changepoint_gap} complicates the monotone density estimation. Second, it is not easy to optimize over $\alpha_l$, $\alpha_r$ and $g_l$, $g_r$ simultaneously. %We can think of an EM algorithm to alternatively maximize between $\alpha_l$, $\alpha_r$ and $g_l$, $g_r$. However, this algorithm can be involved and hard to analyze. 

\subsubsection{\GFCMethod\/: A Simplified Estimator}\label{sec:bmu_gU_simplified_est}
Fortunately, % we have the below observations that motivate us to think of
the following observation suggests a simplified optimization problem.
\begin{itemize}
\item Note that $\alpha_l$ and $\alpha_r$ are the population masses for $x \leq c_l$ and $x \geq c_r$, which can be well estimated by the empirical masses $\hat{\alpha}_l(c_l) = N_l(c_l)/n$, $\hat{\alpha}_{mid}(c_l, c_r) = N_{mid}(c_l, c_r)/n$ and $\hat{\alpha}_r(c_r) = N_r(c_r)/n$.
\item If \GFCModel\ is true with $\delta_l \geq d_l$ and $\delta_r \geq d_r$, and the solution to the optimization \eqref{opt:gU_mle_known_c0_c1} without \eqref{ineq:changepoint_gap} at $c_l = c_l^{(0)}$, $c_r = c_r^{(0)}$ is good enough, then the change-point-gap constraint \eqref{ineq:changepoint_gap} is satisfied with high probability.
\item From Figure \ref{fig:hist_CFR}, we can see that the flat region is wide. We can easily pick an interior point within the flat region.
\end{itemize}
  Inspired by these observations, we replace the population masses with the empirical masses and drop off the change-point-gap constraint. We obtain the simplified objective function as follows:
\begin{eqnarray}
  H_{simplified}(c_l, c_r) := \max &&\sum_{\hat{s}_i \leq c_l} \log g_l(\hat{s}_i) + \sum_{\hat{s}_i > c_r} \log g_r(\hat{s}_i)  \label{opt:gU_mle_known_c0_c1_emp_mass} \\
  && + N_l(c_l)\log \hat{\alpha}_l(c_l) + N_{mid}(c_l, c_r) \log \frac{\hat{\alpha}_{mid}(c_l, c_r)}{c_r - c_l} + N_r(c_r) \log \hat{\alpha}_r(c_r)\nonumber\\
    s.t. && \int_{0}^{c_l} g_l = 1, \int_{c_r}^1 g_r = 1, g_l \text{ decreasing }, g_r \text{ increasing}\nonumber %% \\
         %% && \left .
         %%   \begin{array}{l}
         %% \hat{\alpha}_l(c_l) g_l(c_l) \geq \frac{\hat{\alpha}_{mid}(c_l, c_r)}{c_r - c_l} + \delta_l\\
         %% \hat{\alpha}_r(c_r) g_r(c_r) \geq \frac{\hat{\alpha}_{mid}(c_l, c_r)}{c_r - c_l} + \delta_r
         %%   \end{array}
         %%  \right \},\label{ineq:changepoint_gap_emp_mass}
\end{eqnarray}
where 
\begin{align*}
  &\hat{\alpha}_l(c_l) = N_l(c_l)/n = \frac{\# \{i | \hat{s}_i \leq c_l\}}{n}\\
  &\hat{\alpha}_{mid}(c_l, c_r) = N_{mid}(c_l, c_r)/n = \frac{\# \{i | c_l < \hat{s}_i \leq c_r\}}{n}\\
  &\hat{\alpha}_{1}(c_r) = N_r(c_r)/n = \frac{\# \{i | \hat{s}_i > c_r\}}{n}.
\end{align*}

The problem \eqref{opt:gU_mle_known_c0_c1_emp_mass} is reduced to two monotone density estimations, which the Grenander estimator can solve. As we point out in the above observations, we can easily identify an interior point $\mu$ in the flat region. We fit an Grenander estimator for the decreasing $g_l$ on $[0, \mu]$ and an Grenander estimator for the increasing $g_r$ on $(\mu, 1]$. There are three advantages of using the interior point $\mu$. First, it significantly reduces the computational cost by estimating the two Grenander estimators just once, regardless of the choices of $c_l$ and $c_r$. Second, it bypasses the boundary issue of the Grenander estimators since we are mainly concerned with the behaviors of the estimators at the points $c_l < \mu$ and $c_r > \mu$. Moreover, the usage of $\mu$ disentangles the mutual influences of the left decreasing part and the right increasing part; thus, it makes the analysis of the estimators simple. %We will analyze the algorithm in the following sections for the density case and the pmf case separately. 

Once we fit the Grenander estimator, we check whether the change-point-gap constraint \eqref{ineq:changepoint_gap} holds for different pairs of $c_l$ and $c_r$. Finally, we pick the feasible pair with the maximal likelihood. We call this algorithm \textbf{\GFCMethod\/} (U-shape cutoff), which is summarized in Algorithm \ref{algo:gU_grid_search}.

\begin{algorithm}[ht]
  \caption{\GFCMethod\/: estimation of the \GFCModel\ model by grid-searching the optimal cutoff pair.}
  \label{algo:gU_grid_search}
  \begin{algorithmic}[1]
    \REQUIRE Data: $\{\hat{s}_1, \ldots, \hat{s}_n\}$;\\ ~~~~~~~~~The density gaps: $d_l$, $d_r$;\\ ~~~~~~~~~The interior point $\mu$ of the flat region;\\ ~~~~~~~~~The searching interval: $[0, c_l^{(\max)}]$, and $(c_r^{(\min)}, 1]$, where $c_l^{(\max)} <\mu$ and $c_r^{(\min)} > \mu$;\\ ~~~~~~~~~The unit for grid searching: $\gamma$.
    \STATE Initiate $c_l^{*} = \text{NULL}$, $c_r^{*} = \text{NULL}$; $\ell(c_l^*, c_r^*) = -\infty$.
    \STATE Estimate $\hat{\alpha}_l(\mu) = N_l(\mu)/n$.     
    \STATE Fit the Grenander estimator on $[0, \mu]$ to get $\tilde{g}_l$ and on $(\mu, 1]$ to get $\tilde{g}_r$.
    \FOR{$c_l \in \{0, \gamma, 2\gamma, \ldots, c_l^{(\max)}\}$}
    \FOR{$c_r \in \{c_r^{(\min)}, c_r^{(\min)} + \gamma, c_r^{(\min)} + 2\gamma, \ldots, 1\}$}
    \STATE Estimate $\hat{\alpha}_{mid}(c_l, \mu)$, and $\hat{\alpha}_{mid}(\mu, c_r)$.
    \STATE Let $\tilde{d}_l = \frac{\hat{\alpha}_{mid}(c_l, \mu)}{\hat{\alpha}_l(\mu)\cdot (\mu - c_l)} + \frac{d_l}{\hat{\alpha}_l(\mu)}$, $\tilde{d}_r = \frac{\hat{\alpha}_{mid}(\mu, c_r)}{(1 - \hat{\alpha}_l(\mu))\cdot (c_r - \mu)} + \frac{d_r}{1-\hat{\alpha}_l(\mu)}$.
    \STATE Let $\ell(c_l, c_r)$ be the corresponding $H_{simplified}(c_l, c_r)$ defined in the problem \eqref{opt:gU_mle_known_c0_c1_emp_mass}.% using $\hat{\alpha}_l(c_l)$, $\hat{\alpha}_r(c_r)$, $\hat{\alpha}_{mid}(c_l, c_r)$, $\tilde{g}_l^{(c_l)}$ and $\tilde{g}_r$.
    \STATE Let $flag = \Ibb [\tilde{g}_l(c_l) \geq \tilde{d}_l$ \text{ and } $\tilde{g}_r(c_r) \geq \tilde{d}_r]$.
    \IF{$flag$ and $\ell(c_l, c_r) > \ell(c_l^*, c_r^*)$}    
    \STATE $(c_l^*, c_r^*) \leftarrow (c_l, c_r)$.
    \ENDIF
    \ENDFOR
    \ENDFOR
    \IF{$\ell(c_l^*, c_r^*) > -\infty$}
    \STATE \textbf{Return:} $c_l^*$, $c_r^*$, $\tilde{g}_l, \tilde{g}_r$, $\hat{\alpha}_l(\mu)$, $\ell(c_l^*, c_r^*)$. %$\hat{\alpha}_l(c_l^*), \hat{\alpha}_r(c_r^*), \hat{\alpha}_{mid}(c_l^*, c_r^*)$, 
    \ELSE
    \STATE \textbf{Return:} $False$.
    \ENDIF
  \end{algorithmic}
\end{algorithm}

\subsection{Analysis}\label{sec:bmu_analysis_simplified_est}
%% \begin{assumption}\label{asmpt:more_conds_density}
%%   The underlying density $f$ is bounded and has non-trivial mass in the mid, that is, 
%%   \begin{enumerate}
%%   \item $\alpha (c_l, c_r) > 0$.
%%   \item $f_0(0) < \infty$ and $f_1(1) < \infty$. 
%%   \end{enumerate}  
%% \end{assumption}

For Algorithm \ref{algo:gU_grid_search}, the question arises whether $(c_l^{(0)}, c_r^{(0)})$ is a feasible pair for the change-point-gap constraint \eqref{ineq:changepoint_gap}. Theorem \ref{thm:true_c0_c1_valid} answers this question by claiming that there exist $c_l$ in a small neighborhood of $c_l^{(0)}$ and $c_r$ in a small neighborhood of $c_r^{(0)}$ such that $\tilde{g}_l(c_l) > \tilde{d}_l$ and $\tilde{g}_r(c_r) > \tilde{d}_r$ for appropriate choices of $d_l$ and $d_r$. This implies that we can safely set aside the constraint \eqref{ineq:changepoint_gap} when solving the problem \eqref{opt:gU_mle_known_c0_c1_emp_mass}. The proof of Theorem \ref{thm:true_c0_c1_valid} is deferred to Appendix \ref{appendix:bmu_proof_true_c0_c1_valid}.

\begin{theorem}[Feasibility of Gap Constraint for $(c_l^{(0)}, c_r^{(0)})$]\label{thm:true_c0_c1_valid}
  
  Suppose $f$ is a distribution satisfying Assumption \ref{asmpt:U_shape} and Assumption \ref{asmpt:density_gap}, with $\alpha_{mid} (c_l^{(0)}, c_r^{(0)}) > 0$, $f_{\max} < \infty$. If $m \geq C_1 \cdot \left (\max\{N_l(c_l^{(0)}), N_r(c_r^{(0)})\}\right)^{\frac{2}{3}}$, and $d_l < \delta_l$, $d_r < \delta_r$, there exist $c_l$ and $c_r$ such that
  $$c_l \leq c_l^{(0)} \text{ with } |c_l - c_l^{(0)}| \leq C_2 \cdot N_l(c_l^{(0)})^{-\frac{1}{3}}$$
  and
  $$c_r \geq c_r^{(0)} \text{ with } |c_r - c_r^{(0)}| \leq C_3 \cdot N_r(c_r^{(0)})^{-\frac{1}{3}}$$
  such that $\tilde{g}_l$, $\tilde{g}_r$, $\tilde{d}_l$ and $\tilde{d}_r$ produced by Algorithm \ref{algo:gU_grid_search} satisfy $\tilde{g}_l(c_l) > \tilde{d}_l$ and $\tilde{g}_r(c_r) > \tilde{d}_r$, provided the input $\mu \in (c_l^{(0)}, c_r^{(0)})$. Furthermore, the resulting density estimator $\ftil_{m,n}$ satisfies
  $$\int_0^1 |\ftil_{m,n} - f(x)|dx \leq  C_4 \cdot \left \{ N_l(c_l^{(0)})^{-\frac{1}{3}} + N_r(c_r^{(0)})^{-\frac{1}{3}} \right \}.$$
  Here $C_1, C_2, C_3, C_4$ are positive constants that only depend on $f$.
  %If $f$ is further assumed to be smooth, we only require $m = \Omega(\max\{\sqrt{N_l(c_l)}, \sqrt{N_r(c_r)}\})$.
\end{theorem}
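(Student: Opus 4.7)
The plan is to apply Theorem \ref{thm:noisy_l1_grenander} separately to the two Grenander estimators produced inside Algorithm \ref{algo:gU_grid_search}, and then use the density jumps postulated by Assumption \ref{asmpt:density_gap}, combined with the monotonicity of the fits, to locate admissible candidates $c_l$ and $c_r$. Because $\tilde g_l$ is the Grenander fit on $\{\hat s_i : \hat s_i \leq \mu\}$ and its effective sample size is $N_l(\mu)\ge N_l(c_l^{(0)})$, the target is the decreasing density $f_l^*(x) := f(x)/\alpha_l(\mu)$ on $[0,\mu]$. Under $m \geq C_1 \max\{N_l(c_l^{(0)}), N_r(c_r^{(0)})\}^{2/3}$, Theorem \ref{thm:noisy_l1_grenander} yields $\Ebb\|\tilde g_l - f_l^*\|_{L_1[0,\mu]} \leq \epsilon_l := C\, N_l(c_l^{(0)})^{-1/3}$, and likewise $\Ebb\|\tilde g_r - f_r^*\|_{L_1(\mu,1]} \leq \epsilon_r := C\, N_r(c_r^{(0)})^{-1/3}$ for the increasing target $f_r^* := f/(1-\alpha_l(\mu))$.

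Next I would produce $c_l$ by a contradiction argument on a small interval to the left of $c_l^{(0)}$. Set $\eta_l := 4\alpha_l(\mu)\,\epsilon_l / (\delta_l - d_l)$. For $x\in[c_l^{(0)}-\eta_l, c_l^{(0)})$ the monotonicity of $g_l$ together with Assumption \ref{asmpt:density_gap} gives
\[
f_l^*(x) \;\geq\; \frac{1}{\alpha_l(\mu)}\Bigl(\tfrac{\alpha_{mid}}{c_r^{(0)}-c_l^{(0)}} + \delta_l\Bigr).
\]
Now observe that the population limit of $\tilde d_l$ at any $c_l$ in this interval is
\[
\tilde d_l^{\,\star}(c_l) \;=\; \frac{1}{\alpha_l(\mu)}\Bigl(\tfrac{\alpha_{mid}(c_l,\mu)}{\mu - c_l} + d_l\Bigr),
\]
which, since $\alpha_{mid}(c_l,\mu)/(\mu-c_l)$ is continuous in $c_l$ and equals $\alpha_{mid}/(c_r^{(0)}-c_l^{(0)})$ at $c_l=c_l^{(0)}$, differs from $(\alpha_{mid}/(c_r^{(0)}-c_l^{(0)}) + d_l)/\alpha_l(\mu)$ by $\Ocal(\eta_l)$. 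If $\tilde g_l(c_l) < \tilde d_l^{\,\star}(c_l) + (\delta_l - d_l)/(2\alpha_l(\mu))$ held at every $c_l$ in the interval, then by monotonicity of $\tilde g_l$ the same inequality would hold throughout $[c_l^{(0)}-\eta_l, c_l^{(0)}]$, forcing the $L_1$ gap to exceed $\eta_l\cdot(\delta_l-d_l)/(2\alpha_l(\mu)) = 2\epsilon_l$, a contradiction. Hence some admissible $c_l$ exists; an entirely symmetric argument on $(c_r^{(0)}, c_r^{(0)}+\eta_r]$ (monotonicity of the increasing $\tilde g_r$) produces $c_r$. Finally, the gaps $|\hat\alpha_l(\mu)-\alpha_l(\mu)|$ and $|\hat\alpha_{mid}(c_l,\mu)/(\mu-c_l) - \alpha_{mid}/(c_r^{(0)}-c_l^{(0)})|$ arising from the empirical version of $\tilde d_l$ are $\Ocal(n^{-1/2}+m^{-1/2})$, and thus dominated by $\epsilon_l$, so the strict inequality $\tilde g_l(c_l)>\tilde d_l(c_l)$ survives for $n$ large.

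For the $L_1$ bound on the assembled density estimator $\tilde f_{m,n}$, I split $\int_0^1 |\tilde f_{m,n}-f|$ into the three pieces $[0,c_l]$, $(c_l,c_r]$, $(c_r,1]$. On $[0,c_l]$ the piece is $\hat\alpha_l(\mu)\tilde g_l$; its $L_1$ distance to $f$ is at most $\hat\alpha_l(\mu)\|\tilde g_l - f_l^*\|_{L_1[0,\mu]} + |\hat\alpha_l(\mu) - \alpha_l(\mu)|$ plus a boundary term from $[c_l,\mu]$ of size $\Ocal(\eta_l)$, and each of these is $\Ocal(N_l(c_l^{(0)})^{-1/3})$ after using the DKW inequality and Proposition \ref{prop:CDF_deviation} for the $\hat\alpha$ fluctuation. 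The symmetric piece on $(c_r,1]$ gives $\Ocal(N_r(c_r^{(0)})^{-1/3})$. The middle flat piece contributes an error controlled by the fluctuation of $\hat\alpha_{mid}/(c_r-c_l)$ about $\alpha_{mid}/(c_r^{(0)}-c_l^{(0)})$ plus $\eta_l+\eta_r$, again dominated by $N_l(c_l^{(0)})^{-1/3}+N_r(c_r^{(0)})^{-1/3}$. Summing yields the claim.

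The main obstacle is the pointwise existence step: the $L_1$ bound only controls an average deviation on $[0,\mu]$, yet we need a single location $c_l$ in a specific $\Ocal(\epsilon_l)$-neighborhood of $c_l^{(0)}$ where $\tilde g_l$ strictly dominates a \emph{random} threshold $\tilde d_l$ depending on $c_l$. The idea that unlocks it is to exploit the monotonicity of $\tilde g_l$ so that a single counterexample point propagates to the whole neighborhood, converting the pointwise statement into an average one that contradicts Theorem \ref{thm:noisy_l1_grenander}. The finite-sample noise in $\tilde d_l$ is benign because $\delta_l > d_l$ leaves a constant-size gap to spare.
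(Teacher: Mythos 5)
Your proposal is correct and follows essentially the same route as the paper's proof: both rest on the $\Ocal(N^{-1/3})$ $L_1$ rate from Theorem \ref{thm:noisy_l1_grenander}, both use the surplus $\delta_l - d_l > 0$ to convert a hypothetical failure of the gap constraint on a whole neighborhood of $c_l^{(0)}$ into an integrated $L_1$ deviation that contradicts that rate (the paper defines $t$ as the largest feasible point below $c_l^{(0)}$ and bounds $c_l^{(0)}-t$ directly, which is your contradiction argument in a slightly different wrapper), both dismiss the $\Ocal(n^{-1/2}+m^{-1/2})$ fluctuations of the empirical $\hat\alpha$ thresholds as lower order, and both finish with the same three-piece decomposition of $\int_0^1|\ftil_{n,m}-f|$. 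Your explicit normalization by $\alpha_l(\mu)$ to identify the conditional target $f/\alpha_l(\mu)$ of the Grenander fit on $[0,\mu]$ is a minor presentational improvement, not a different method.
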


Besides knowing there are some feasible points near $c_l^{(0)}$ and $c_r^{(0)}$, we want to have a clear sense of the optimal cutoff pair produced by Algorithm \ref{algo:gU_grid_search}. Theorem \ref{thm:feasible_c0_c1_CI} says that the optimal cutoff for the left (right) part is smaller (larger) than $c_l^{(0)}$ ($c_r^{(0)}$) with high probability.

\begin{theorem}[Tail Bounds of Identified Cutoffs]\label{thm:feasible_c0_c1_CI}
  
  Suppose $(\hat{c}_l, \hat{c}_r)$ is the identified optimal cutoff pair produced by Algorithm \ref{algo:gU_grid_search}, provided an input $\mu \in (c_l^{(0)}, c_r^{(0)})$. Under the same assumptions as Theorem \ref{thm:true_c0_c1_valid}, particularly $n\rightarrow \infty$,  $m/\max\{N_l(c_l^{(0)}), N_r(c_r^{(0)})\} \rightarrow \infty$,
  $$\Pbb[\hat{c}_l > c_l^{(0)}] \leq \Pbb[\hat{S}_{c_l^{(0)}, \mu}(c_l^{(0)}) \geq \sqrt{N_l(c_l^{(0)})}  \cdot \frac{d_l}{\alpha_l(\mu)} - C_1],$$
  and
  $$\Pbb[\hat{c}_r < c_r^{(0)}] \leq \Pbb[\hat{S}_{\mu, c_r^{(0)}}(c_r^{(0)}) \geq \sqrt{N_r(c_r^{(0)})}  \cdot \frac{d_r}{1 - \alpha_l(\mu)} - C_2] ,$$
 where $C_1$, $C_2$ are positive constants, and $C_1$ only depends on $\alpha_l(\mu)$, $d_l$, $C_2$ only depends on $\alpha_r(\mu)$, $d_r$; $\hat{S}_{a, b}(t)$ is the slope at $F(t)$ of the least concave majorant in $[F(a), F(b)]$ of a standard Brownian Bridge in $[0, 1]$. %If $f$ is further assumed to be smooth, we only require $m = \Omega(\max\{\sqrt{N_l(c_l)}, \sqrt{N_r(c_r)}\})$.
\end{theorem}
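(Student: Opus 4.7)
The plan is to reduce the event $\{\hat{c}_l > c_l^{(0)}\}$ to a statement about the Grenander estimator $\tilde{g}_l$ overshooting the (conditional) flat density at $c_l^{(0)}$ by at least $d_l/\alpha_l(\mu)$, and then invoke the local asymptotic distribution in Theorem \ref{thm:noisy_grenander_local_asymp}(A). By construction of Algorithm \ref{algo:gU_grid_search}, the returned $\hat{c}_l$ is the first coordinate of some feasible pair, so $\{\hat{c}_l > c_l^{(0)}\}$ is contained in the event that some candidate $c_l \in (c_l^{(0)}, c_l^{(\max)}]$ satisfies the change-point-gap constraint $\tilde{g}_l(c_l) \geq \tilde{d}_l(c_l)$. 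Since $\tilde{g}_l$ is non-increasing on $[0,\mu]$ and $\tilde{d}_l(c_l)$ is approximately constant over $c_l \in (c_l^{(0)}, \mu)$ (verified next), the existence of any feasible $c_l > c_l^{(0)}$ forces $\tilde{g}_l(c_l^{(0)+}) \geq \tilde{d}_l(c_l^{(0)+})$ up to a negligible error.

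The second step is to expand $\tilde{d}_l(c_l) = \frac{\hat{\alpha}_{mid}(c_l,\mu)}{\hat{\alpha}_l(\mu)(\mu - c_l)} + \frac{d_l}{\hat{\alpha}_l(\mu)}$ for $c_l \in (c_l^{(0)}, \mu)$. Under Assumption \ref{asmpt:U_shape} the interval $(c_l, \mu)$ lies entirely inside the flat region, so Proposition \ref{prop:CDF_deviation} and Corollary \ref{corollary:noisy_DKW} together with standard concentration give
$$\frac{\hat{\alpha}_{mid}(c_l,\mu)}{\mu - c_l} = \frac{\alpha_{mid}}{c_r^{(0)} - c_l^{(0)}} + O_p\bigl(m^{-1/2} + n^{-1/2}\bigr),$$
and $\hat{\alpha}_l(\mu) = \alpha_l(\mu) + O_p(m^{-1/2} + n^{-1/2})$. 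Writing $f_0$ for the true conditional density of $\hat{s}$ given $\hat{s} \leq \mu$ on the flat region, $f_0 = \frac{\alpha_{mid}/(c_r^{(0)} - c_l^{(0)})}{\alpha_l(\mu)}$, this yields $\tilde{d}_l(c_l) = f_0 + d_l/\alpha_l(\mu) + o_p(1)$.

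Finally, I would apply Theorem \ref{thm:noisy_grenander_local_asymp}(A) to $\tilde{g}_l$, viewed as the Grenander estimator based on the $N_l(\mu)$ observations in $[0,\mu]$ whose conditional density has a flat sub-interval $[c_l^{(0)},\mu]$. Since $m/N_l(c_l^{(0)}) \to \infty$ implies $m/N_l(\mu) \to \infty$, the theorem gives $\sqrt{N_l(c_l^{(0)})}\bigl(\tilde{g}_l(c_l^{(0)+}) - f_0\bigr) \overset{d}{\to} \hat{S}_{c_l^{(0)},\mu}(c_l^{(0)})$ once the scaling between the conditional and unconditional distributions is matched. Combining with the reduction and expansion above, the feasibility condition becomes
$$\sqrt{N_l(c_l^{(0)})}\bigl(\tilde{g}_l(c_l^{(0)+}) - f_0\bigr) \geq \sqrt{N_l(c_l^{(0)})}\cdot\frac{d_l}{\alpha_l(\mu)} - C_1,$$
where $C_1$ depends only on $\alpha_l(\mu)$ and $d_l$ and absorbs the $o_p(1)$ slack. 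Passing to the Brownian-bridge limit yields the stated bound, and the tail bound on $\Pbb[\hat{c}_r < c_r^{(0)}]$ follows by the mirror argument applied to the increasing Grenander estimator $\tilde{g}_r$ on $(\mu,1]$ at the right endpoint $c_r^{(0)}$.

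The main obstacle, I expect, is tracking the correct scaling between the Grenander estimator for the conditional distribution on $[0,\mu]$ (based on $N_l(\mu)$ observations with CDF $F(\cdot)/\alpha_l(\mu)$) and the unconditional Brownian-bridge process appearing in $\hat{S}_{c_l^{(0)},\mu}(c_l^{(0)})$ as defined in the theorem statement --- a factor of $\alpha_l(\mu)$ enters through this rescaling and must be absorbed into the constant $C_1$. A secondary technical concern is making the monotonicity reduction to $c_l^{(0)+}$ uniform over the grid: the $c_l$-dependent fluctuations of $\tilde{d}_l(c_l)$ must be controlled by DKW-type uniform bounds from Section \ref{sec:bmu_bm_CDF_deviation} before reducing to a pointwise evaluation at $c_l^{(0)+}$ is legitimate.
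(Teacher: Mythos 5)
Your proposal follows essentially the same route as the paper's proof: reduce $\{\hat{c}_l > c_l^{(0)}\}$ to feasibility of the change-point-gap constraint at the first grid point above $c_l^{(0)}$ via monotonicity of $\tilde{g}_l$ and approximate constancy of $\tilde{d}_l$ on the flat region, replace the empirical masses by their expectations up to $O_p(n^{-1/2})$, and invoke the flat-region case of Theorem \ref{thm:noisy_grenander_local_asymp} to pass to the Brownian-bridge slope $\hat{S}_{c_l^{(0)},\mu}$, with a mirror argument on $(\mu,1]$. The scaling concern you flag (normalizing by $\sqrt{N_l(c_l^{(0)})}$ versus $\sqrt{N_l(\mu)}$) is handled in the paper simply by the inequality $N_l(\mu)\geq N_l(c_l^{(0)})$, which weakens the bound in the right direction.
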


The proof of Theorem \ref{thm:feasible_c0_c1_CI} is in fact reduced to proving any cutoff pair $(c_l, c_r)$ with $c_l > c_l^{(0)}$ or $c_r < c_r^{(0)}$ does not satisfy the change-point-constraint with high probability. Since $\tilde{g}_l$ and $\tilde{g}_r$ estimated in Algorithm \ref{algo:gU_grid_search} are decreasing and increasing respectively, if $c_l > c_l^{(0)}$ (or $c_r < c_r^{(0)}$) violates the constraint, then $c_l'$ (or $c_r'$) will violate it with high probability if $c_l' > c_l$ (or $c_r' < c_r$). So it is reduced to considering the smallest $c_l > c_l^{(0)}$ and the largest $c_r < c_r^{(0)}$ in the grid searching space of Algorithm \ref{algo:gU_grid_search}. Then the result can be concluded using Theorem \ref{thm:noisy_grenander_local_asymp}. The detail is deferred to Appendix \ref{appendix:bmu_proof_feasible_c0_c1_CI}.

Finally, we show in Theorem \ref{thm:c0_c1_converge_finite_sample} that the identified $\hat{c}_r$ converges to $c_r^{(0)}$ at the rate of $\Ocal ([N_r(c_r^{(0)})]^{-\frac{1}{3}})$ if $m = \Omega(n^{\frac{2}{3}})$. And the estimated density also converges to the true one at the rate of $\Ocal (\max \{[N_r(c_r^{(0)})]^{-\frac{1}{3}}, [N_l(c_l^{(0)})]^{-\frac{1}{3}}\})$. The proof can be found in Appendix \ref{appendix:bmu_proof_c0_c1_converge_finite_sample}.

\begin{theorem}[$L_1$ Convergence of Identified Cutoff]\label{thm:c0_c1_converge_finite_sample}
  
  Suppose $f$ is a distribution satisfying Assumption \ref{asmpt:U_shape} and Assumption \ref{asmpt:density_gap}, with $\alpha_{mid} (c_l^{(0)}, c_r^{(0)}) > 0$, $f_{\max} < \infty$. Let $\Delta_l = c_l - c_l^{(0)}$, $\Delta_r = c_r - c_r^{(0)}$. If we have
  $$m \geq C_1 \cdot \left (\max\{N_l(c_l^{(0)}), N_r(c_r^{(0)})\} \right )^{\frac{2}{3}},$$
  then
  $$|\widehat{\Delta}_l| \leq C_2 \cdot N_l(c_l^{(0)})^{-\frac{1}{3}},~|\widehat{\Delta}_r| \leq C_3 N_r(c_r^{(0)})^{-\frac{1}{3}},$$
  where $\widehat{\Delta}_l$ and $\widehat{\Delta}_r$ are associated with $\hat{c}_l$ and $\hat{c}_r$ output by Algorithm \ref{algo:gU_grid_search}. Furthermore, the resulting $\tilde{f}_{n,m}$ satisfies
  $$\Ebb_f \int_0^1 |\tilde{f}_{n,m}(x) - f(x)|dx \leq C_4 \cdot \left \{ N_l(c_l^{(0)})^{-\frac{1}{3}} + N_r(c_r^{(0)})^{-\frac{1}{3}} \right \}.$$
  Here $C_1, C_2, C_3, C_4$ are positive constants that do not depend on $n$, $N_l(c_l^{(0)})$ and $N_r(c_r^{(0)})$.
  %If $f$ is smooth, we only need $m = \Omega(\max\{\sqrt{N_l(c_l^{(0)})}, \sqrt{N_r(c_r^{(0)})}\})$ to achieve the same results.
\end{theorem}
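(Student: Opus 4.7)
The plan is to combine Theorems \ref{thm:true_c0_c1_valid} and \ref{thm:feasible_c0_c1_CI} with the Grenander $L_1$ rate from Theorem \ref{thm:noisy_l1_grenander}. The argument breaks into two stages: sandwiching the identified cutoffs $(\hat{c}_l, \hat{c}_r)$ in an $\Ocal(N^{-1/3})$ window around the truth, and then converting this into an $L_1$ bound on the reconstructed density.

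First I would invoke Theorem \ref{thm:true_c0_c1_valid} to produce a reference pair $(\tilde{c}_l, \tilde{c}_r)$ that is feasible for Algorithm \ref{algo:gU_grid_search} and lies within the $\Ocal(N_l(c_l^{(0)})^{-1/3})$ and $\Ocal(N_r(c_r^{(0)})^{-1/3})$ windows of the truth. Since Algorithm \ref{algo:gU_grid_search} returns the feasible pair maximizing $H_{simplified}$, we obtain the anchor inequality $H_{simplified}(\hat{c}_l, \hat{c}_r) \geq H_{simplified}(\tilde{c}_l, \tilde{c}_r)$. Next, Theorem \ref{thm:feasible_c0_c1_CI} supplies one-sided tail bounds: under $m = \Omega(N^{2/3})$, the thresholds $\sqrt{N_l(c_l^{(0)})}\, d_l/\alpha_l(\mu)$ and $\sqrt{N_r(c_r^{(0)})}\, d_r/(1-\alpha_l(\mu))$ diverge while the Brownian-bridge slopes $\hat{S}$ stay stochastically bounded, so $\Pbb[\hat{c}_l > c_l^{(0)}]$ and $\Pbb[\hat{c}_r < c_r^{(0)}]$ vanish. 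This yields $\hat{c}_l \leq c_l^{(0)}$ and $\hat{c}_r \geq c_r^{(0)}$ with high probability.

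For the matching lower bound $\hat{c}_l \geq c_l^{(0)} - C_2 \cdot N_l(c_l^{(0)})^{-1/3}$ (and its mirror for $\hat{c}_r$), the plan is a likelihood-gap argument. For any candidate $c_l$ with $c_l^{(0)} - c_l$ much larger than the target rate, the middle term $N_{mid}(c_l, c_r) \log \tfrac{\hat{\alpha}_{mid}(c_l, c_r)}{c_r - c_l}$ fits a flat density over $[c_l, c_r]$ while the true density is strictly decreasing on $[c_l, c_l^{(0)}]$ with a jump of size at least $\delta_l$ at $c_l^{(0)}$ by Assumption \ref{asmpt:density_gap}. A Taylor expansion of the population version of $H_{simplified}$ in $(c_l, c_r)$ around $(c_l^{(0)}, c_r^{(0)})$ produces a deterministic deficit growing quadratically in $c_l^{(0)} - c_l$, while the empirical fluctuation of $H_{simplified}$ at a fixed cutoff pair is $\Ocal_p(n^{-1/2})$; combined with the anchor inequality this forces $\hat{c}_l$ into the stated $\Ocal(N_l^{-1/3})$ window. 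The symmetric calculation handles $\hat{c}_r$.

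Once the cutoffs are sandwiched, I would control $\Ebb_f \int_0^1 |\tilde{f}_{n,m}(x) - f(x)| dx$ by splitting $[0,1]$ into the outer decreasing and increasing regions (where Theorem \ref{thm:noisy_l1_grenander}, applied to the Grenander fits on $[0,\mu]$ and $(\mu,1]$ and re-weighted by the empirical masses $\hat{\alpha}_l(\mu)$, $1-\hat{\alpha}_l(\mu)$, yields the $\Ocal(N_l^{-1/3})$ and $\Ocal(N_r^{-1/3})$ rates, since the $\Ocal(n^{-1/2})$ empirical mass error is absorbed under $m = \Omega(n^{2/3})$), the symmetric-difference slabs around each cutoff (of width $\Ocal(N^{-1/3})$ by the cutoff sandwich, and $f$ bounded), and the middle flat region (where $|\hat{c}_r - \hat{c}_l| \cdot |\tfrac{\hat{\alpha}_{mid}}{\hat{c}_r - \hat{c}_l} - \tfrac{\alpha_{mid}}{c_r^{(0)} - c_l^{(0)}}|$ is again $\Ocal(N^{-1/3})$ by elementary algebra). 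The hardest step is the cutoff lower bound: Theorem \ref{thm:feasible_c0_c1_CI} rules out only upward excursions of $\hat{c}_l$, so a separate quantitative identifiability result, driven by Assumption \ref{asmpt:density_gap} through the non-degenerate curvature of the population likelihood at the true cutoffs, is needed to prevent $\hat{c}_l$ from drifting much below $c_l^{(0)}$.
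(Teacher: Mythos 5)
Your overall architecture (anchor a feasible pair near the truth via Theorem \ref{thm:true_c0_c1_valid}, use the argmax property of Algorithm \ref{algo:gU_grid_search}, and separately handle the two directions of cutoff deviation) starts in the same place as the paper, but the core of your argument diverges and has a genuine gap. The paper does \emph{not} localize the cutoffs by expanding the likelihood in $(c_l,c_r)$. Instead it converts the likelihood comparison directly into an $L_1$ bound: it writes $\frac{1}{n}\Ebb_f H(c_l,c_r,\ftil_{n,m}) = -KL(f\|\ftil_{n,m}) + C$ and sandwiches the KL divergence between $(\Ebb_f\int|f-\ftil_{n,m}|)^2$ and $\frac{1}{f_{\min}}(\Ebb_f\int|f-\ftil_{n,m}|)^2$ via the chain $TV \le \sqrt{KL} \le \sqrt{\chi^2}$, so that ``maximal likelihood among feasible pairs'' translates into ``$L_1$ error no worse than the anchor's, up to constants,'' i.e.\ $\Ocal(N_l^{-1/3}+N_r^{-1/3})$. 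The cutoff rates are then deduced \emph{from} the $L_1$ bound by Lemma \ref{lemma:l1_delta01}, which shows that $|\Delta_l|\le C_2 N_l^{-1/3}$ is a \emph{necessary} condition for the $L_1$ error to be that small: a misplacement $|\Delta_l|$ forces an $L_1$ discrepancy of at least $\kappa|\Delta_l| - K_2 N_l^{-1/3}$ (using the flat middle part when $c_l<c_l^{(0)}$, and the change-point-gap constraint when $c_l>c_l^{(0)}$). That lemma is exactly the ``separate quantitative identifiability result'' you flag as missing at the end of your proposal, and it is driven by $L_1$ geometry rather than by likelihood curvature.

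The concrete problem with your route is the rate arithmetic in the likelihood-gap step. Under Assumption \ref{asmpt:density_gap} the density has a jump of size $\ge\delta_l$ at $c_l^{(0)}$, so the population likelihood deficit from misplacing $c_l$ by $\Delta$ is first order (linear) in $\Delta$, not quadratic; and in either case balancing a deterministic deficit against an $\Ocal_p(n^{-1/2})$ empirical fluctuation yields $\Delta=\Ocal(n^{-1/2})$ or $\Ocal(n^{-1/4})$, neither of which is the claimed $N^{-1/3}$. The reason is that the binding error in the comparison $H_{simplified}(\hat c_l,\hat c_r)\ge H_{simplified}(\tilde c_l,\tilde c_r)$ is not the $n^{-1/2}$ fluctuation at a fixed pair but the Grenander estimation error of $\tilde g_l,\tilde g_r$ themselves, which enters the likelihood at order $(N^{-1/3})^2$ in KL and at order $N^{-1/3}$ in $L_1$; your accounting omits this term, so the sketched balancing does not produce the stated rate. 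Your final region-splitting for the $L_1$ bound is fine once the cutoffs are localized, but as written the localization step is not established.
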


% \begin{remark}
% We want to point out that Theorem \ref{thm:true_c0_c1_valid}, Theorem \ref{thm:feasible_c0_c1_CI} and Theorem \ref{thm:c0_c1_converge_finite_sample} are based on the assumption that the underlying density $f$ in BMU is bounded, thus we require $m = \Omega(n^{\frac{2}{3}})$. If we further assume $f$ is smooth, we might relax the condition on $m = \Omega(n^{\frac{2}{3}})$ to $m = \Omega(n^{\frac{1}{3}})$. The reason is that we can get a $\frac{1}{m}$ rate rather than a $\frac{1}{\sqrt{m}}$ rate on the truncated K-S distance of the CDF of $\hat{s}_i$'s and the CDF of $s_i$'s when $f$ is assumed to be smooth; see Proposition \ref{prop:CDF_deviation_smooth}. This is an improved version of Proposition \ref{prop:CDF_deviation}, which is the foundation of Theorem \ref{thm:noisy_l1_grenander} and Theorem \ref{thm:noisy_grenander_local_asymp} that are used to prove the three theorems in this section.
% \end{remark}

% \iffalse
\section{Experiments of Ucut}\label{sec:bmu_exp}
\subsection{Numerical Experiments}\label{sec:bmu_numerical_exp}

% \subsection{Simulation I}
% \begin{itemize}
% \item Sensitivity Analysis for $\mu$. Show that the choice of $\mu$ does not affect $\hat{c}_l$ and $\hat{c}_r$ too much as long as $\mu \in (c_l^{(0)}, c_r^{(0)})$.
% \item Simulation when the \GFCModel\ model is correct.
% \item Simulation when the \GFCModel\ model is misspecified, e.g., independence, gap constraint.
% \item Evaluation similar to what is used in \citet{liu2019genefishing}.
% \item FDR (assuming the middle part the equal weights for the null and the alternative distributions).
% \end{itemize}
\subsubsection{Data generating process}\label{sec:bmu_sim_data_generation}
To confirm and complement our theory, we use extensive numerical experiments to examine the finite performance of \GFCMethod\/ on the estimation of $c_r$. We study a BMU model that is comprised of linear components. Specifically, the model consists of three parts with boundaries $c_l$ and $c_r$. The middle part is a flat region of height $\delta_m$. The left part is a segment with the right end at $(c_l, \delta_m + \delta_l)$ and slope $s_l < 0$ while the right part is a segment with the left end at $(c_r, \delta_m + \delta_r)$ and slope $s_r > 0$; see Figure \ref{fig:lin_exp} (a) for illustration. We call it the \textbf{linear valley} model. We normalize the linear valley model to produce the density of interest. We call the normalized gaps $\tilde{\delta}_l$ and $\tilde{\delta}_r$.

\begin{figure}[ht]
  \begin{minipage}{0.49\textwidth}
    \includegraphics[width=\textwidth]{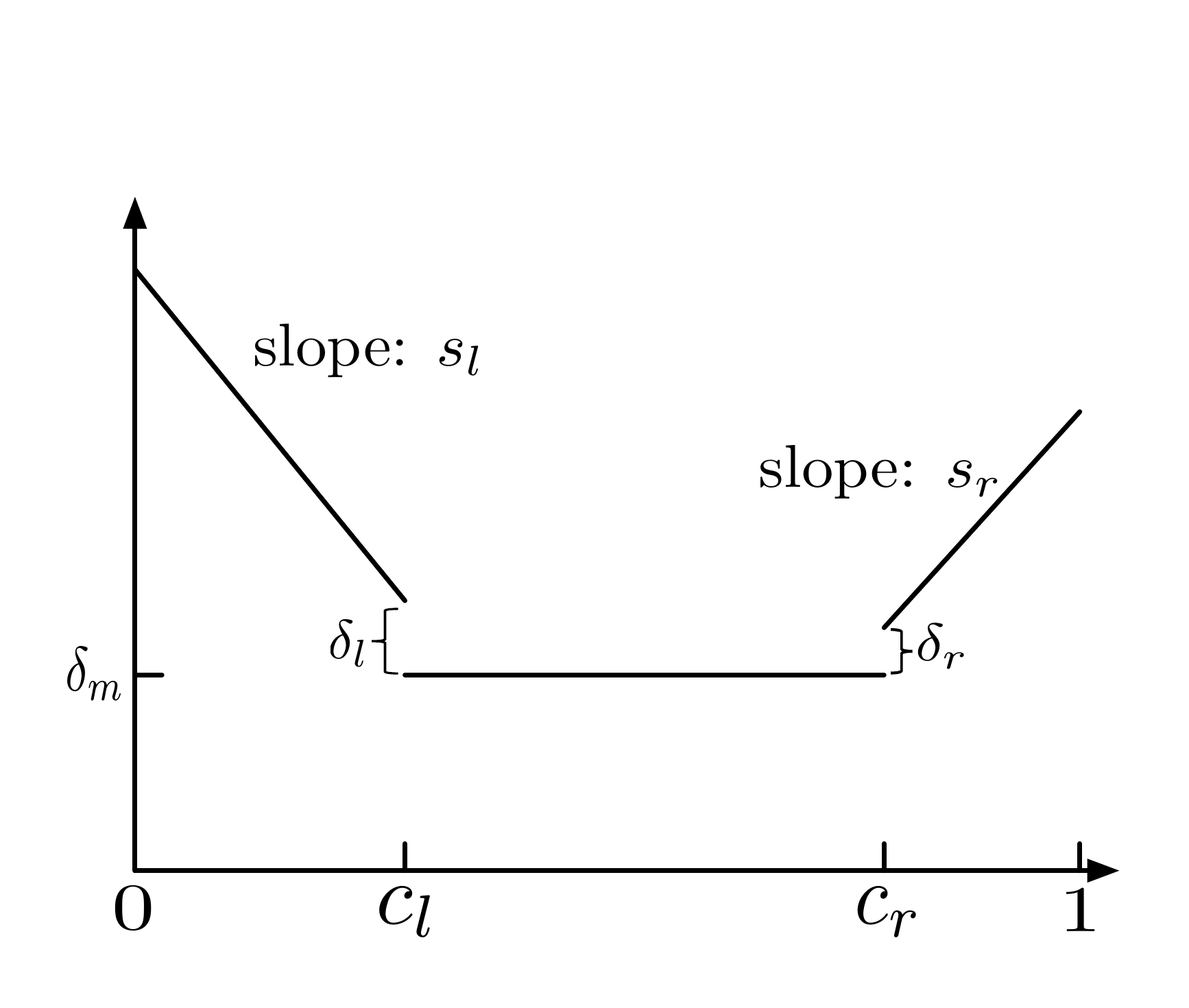}
    \subcaption{original model}
    %\subfloat[original model]{\hspace{.5\linewidth}}    
  \end{minipage}
  \begin{minipage}{0.49\textwidth}
    \includegraphics[width=\textwidth]{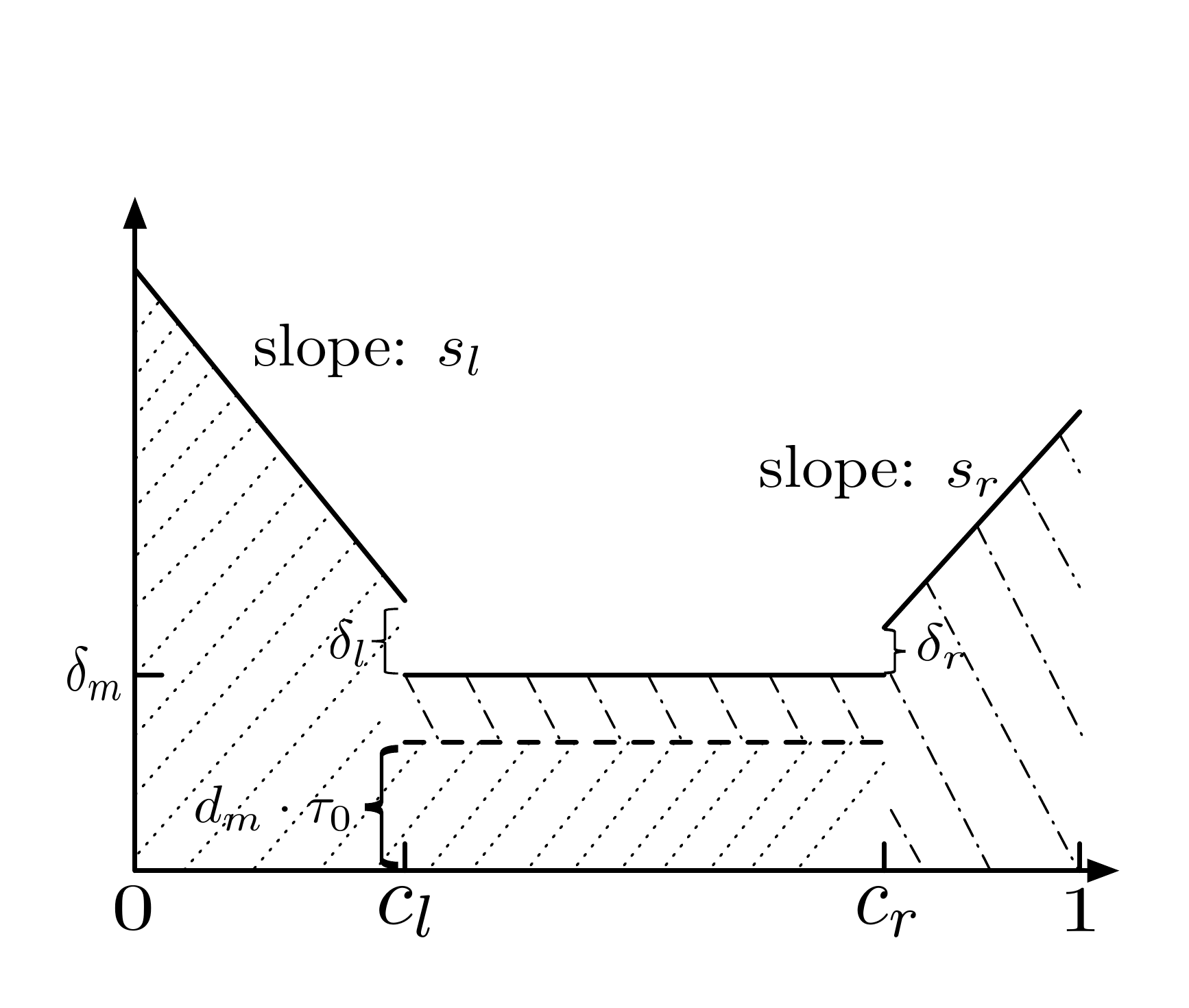}
    \subcaption{two-group model}
    %\subfloat[two-group model]{\hspace{.5\linewidth}}    
  \end{minipage}
  \caption{The linear valley model.}\label{fig:lin_exp}
\end{figure}

The linear valley model depicts the mixture density of the null distribution and the alternative distribution. We only assume that the left part ahead of $c_l$ purely belongs to the null distribution while the right part ahead of $c_r$ purely belongs to the alternative distribution. The null and the alternative distributions can hardly be distinguished in the flat middle part. To understand the linear valley model from the perspective of the two-group model, we assume that $\tau_0 \times 100\%$ of the middle part belongs to the null distribution while the remaining belongs to the alternative model. In Figure \ref{fig:lin_exp} (b), the part in left slash corresponds to the null density $f_0$ while the part in right slash corresponds to the alternative density $f_1$. Let $\pi_0$ be the area in the left slash divided by the total area. Then the marginal density can be written as $f = \pi_0 f_0 + (1 - \pi_0) f_1$. Since any $\tau_0 \in [0, 1]$ gives the same $f$, the middle part is not identifiable. It is necessary to estimate and infer the right cutoff $c_r$, so that we can safely claim all the samples beyond $c_r$ are from the alternative distribution.

By default, we set $c_l = 0.3$, $c_r = 0.9$, $\delta_m = 1$, $\delta_l = 0.5$, $\delta_r = 0.5$, $s_l = -3$, $s_r = 1$. We sample $n = 10,000$ samples $\{s_1, \ldots, s_n\}$ from the linear valley model. Then for each $i \in \{1, \ldots, n\}$, we get the observations $\hat{s}_i \sim Binom(m, s_i)$ independently, where $m = 1,000$ if it is not specified particularly. The value of $\tau_0$ does not affect the data generation but it affects the FDR and the power of any method that yields discoveries.

Finally, the left part and the right part are not necessary to be linear. To investigate the effect of general monotone cases and misspecified cases (e.g., unimodal densities), we replace the left part and the right part with other functions; see Appendix \ref{sec:bmu_more_simulation}. %In this section, we only present the results for the linear valley model and relegate the results for other synthetic data to Section \ref{sec:bmu_more_results}.
All codes to replicate the results in this paper can be found at github.com/Elric2718/GFcutoff/.

% consider the unnormalized left part
% $$g_l(x) = \frac{3}{20} \cdot \frac{\Gamma(2)}{\Gamma(0.5)\Gamma(1.5)} \cdot (\frac{x}{c_l})^{-0.5} \cdot (1 - \frac{x}{c_l})^{0.5}, x \in [0, c_l],$$
% and the unnormalized right part
% $$g_r(x) = \frac{1}{20} \cdot \frac{\Gamma(2.8)}{\Gamma(2)\Gamma(0.8)} \cdot \frac{x}{1 - c_r} \cdot (1 - \frac{x}{1 - c_r})^{-0.2}, x \in (c_r, 1]$$
% which are Beta densities. To investigate the effect of the 

\subsubsection{Robustness to model parameters}
When using Algorithm \ref{algo:gU_grid_search}, we use the middle point $\mu = 0.5$, the left gap parameter $d_l = 0.8\cdot \tilde{\delta}_l$, the right gap parameter $d_r = 0.8 \cdot \tilde{\delta}_r$, the searching unit $\gamma = 0.001$. We first investigate how the binomial size $m$ affects the estimation of $c_r$. Using the default setup as described in Section \ref{sec:bmu_sim_data_generation}, we vary the binomial size $m \in \{10^2, 10^3, 2\times 10^3, 5\times 10^3, 10^4, Inf\}$, where $Inf$ refers to the case that there is no binomial randomness and we observe $s_i$'s directly. As shown in Figure \ref{fig:lin_convergence} (a), $\hat{c}_r$ converges to the true $c_r^{(0)}$ as $m$ grows. When $m = 10^3$, the estimated $c_r$ is as good as that of using $s_i$'s directly. This corroborates our theory that we only need $m \sim n^{\frac{2}{3}} \approx 500$ when the BMU model holds. Note that in the linear setup, even with $m = 10^2$, $\hat{c}_r$ is larger than true $c_r^{(0)}$ with large probability. It implies that \GFCMethod\/ is safe in the sense that it will make few false discoveries by using $\hat{c}_r$ as the cutoff.

In the sequel, we stick to $m = 10^3$ since it works well enough for the linear valley model. We investigate whether the width of the middle flat region affects the estimation of $c_r$. We consider $c_l = 0.5 - w/2$, $c_r = 0.5 + w/2$ with $w \in \{0.6, 0.4, 0.2, 0.1, 0.\}$ while other model parameters are set by default. In Figure \ref{fig:lin_convergence} (b), the estimation of $c_r$ is quite satisfying when the width is no smaller than $0.2$. When the width drops to $0.1$ or smaller, the estimation is not stable but still conservative in the sense that $\hat{c}_r > c_r$ in most cases.

Finally, we examine how the gap size influences the estimation of $c_r$. We take $\delta_l = \{0.5, 0.3, 0.2 0.1, 0.01\}$ and $\delta_r = \{0.5, 0.3, 0.2 0.1, 0.01\}$. Figure \ref{fig:lin_convergence} (c) shows that the estimation of $c_r$ is robust to the gap sizes as long as the input $d_l$ and $d_r$ are smaller than the true gaps. This gives us confidence in applying \GFCMethod\/ to identify the cutoff even when there is no gap, which is more realistic.

\begin{figure}
  \begin{minipage}{0.33\textwidth}
    \centering
    \includegraphics[width=\textwidth]{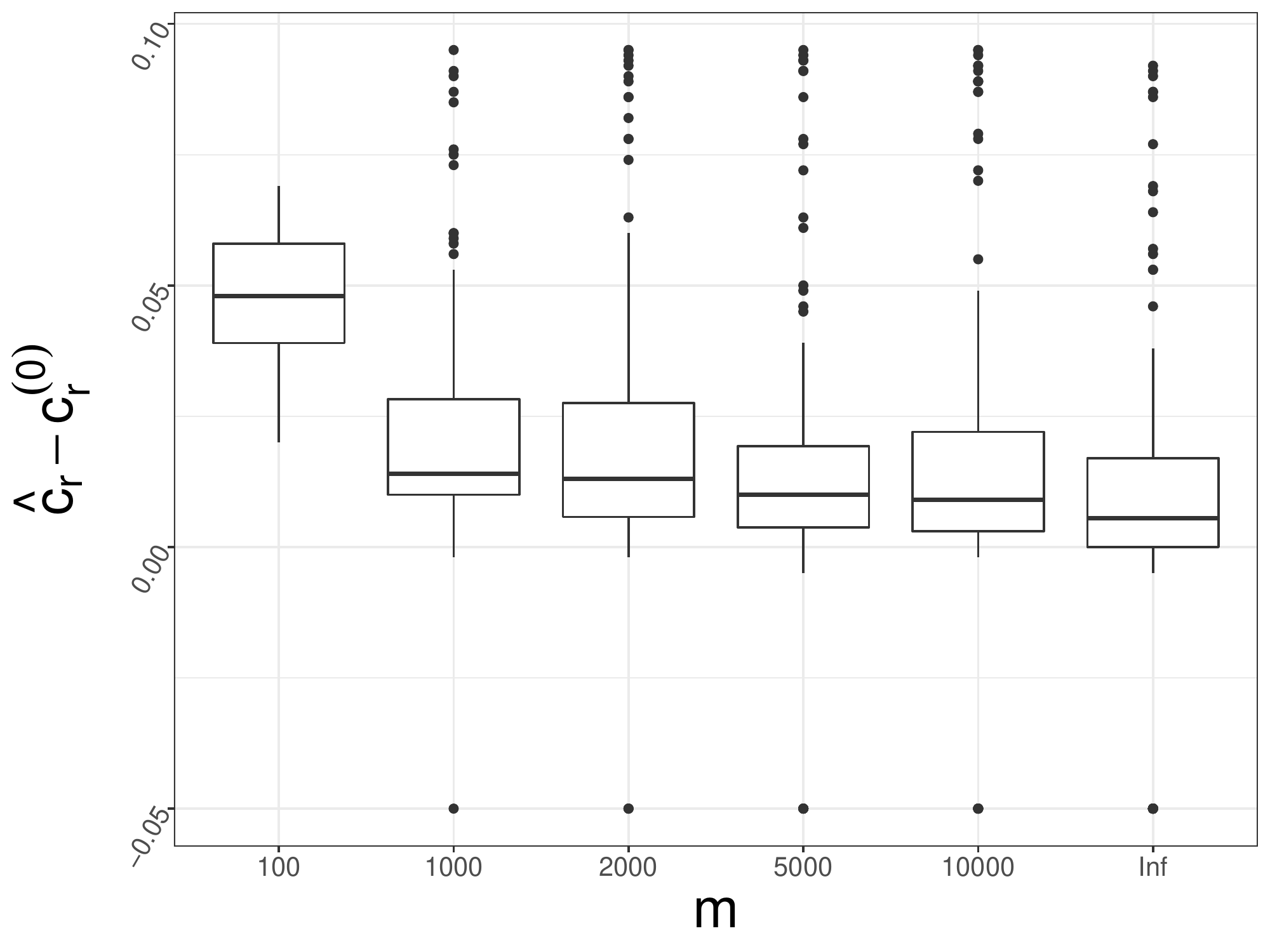}
    \subcaption{}
    %\subfloat[original model]{\hspace{.5\linewidth}}    
  \end{minipage}
  \begin{minipage}{0.33\textwidth}
    \centering
    \includegraphics[width=\textwidth]{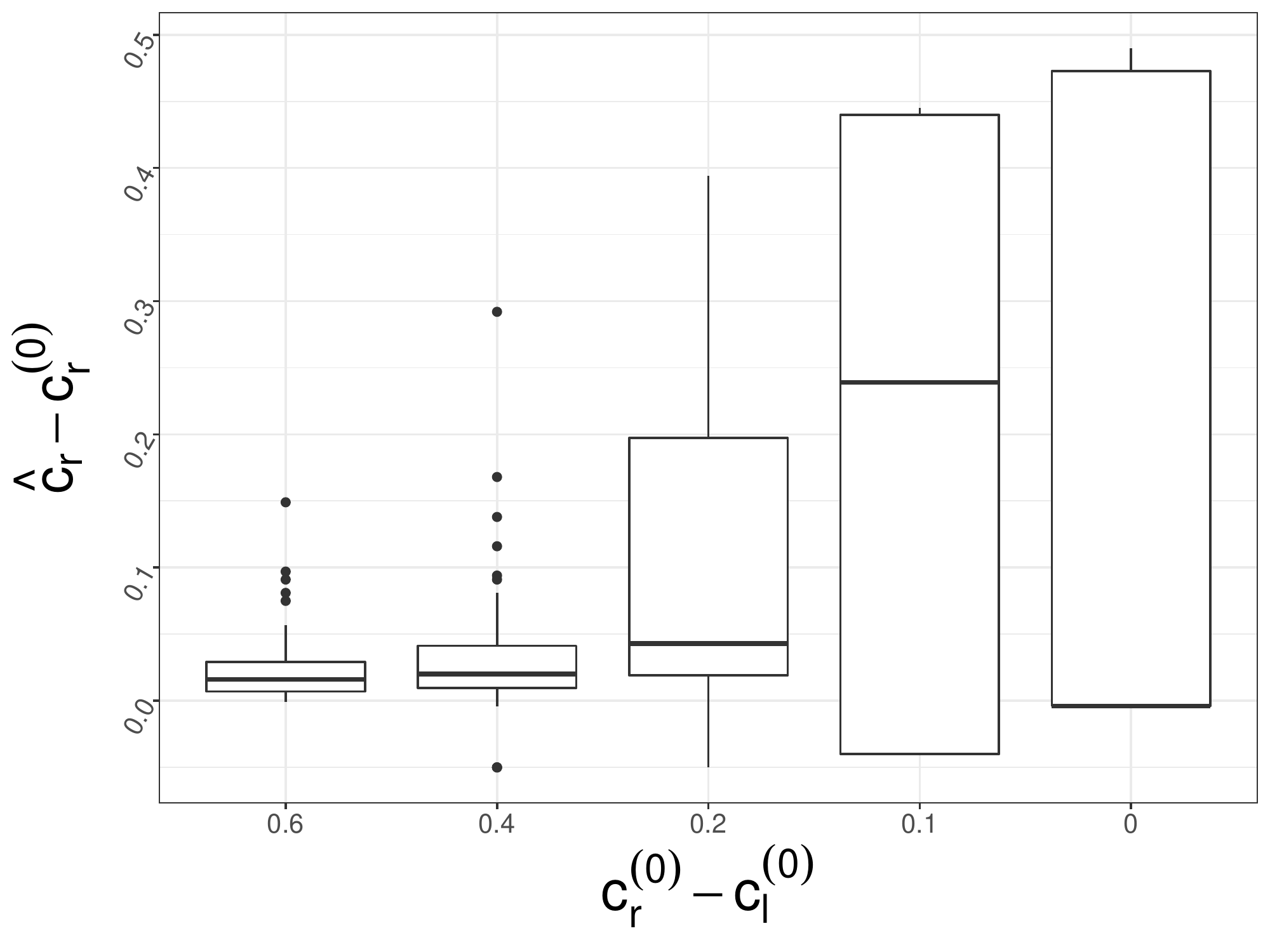}
    \subcaption{}
    %\subfloat[original model]{\hspace{.5\linewidth}}    
  \end{minipage}
    \begin{minipage}{0.33\textwidth}
      \centering
      \includegraphics[width=\textwidth]{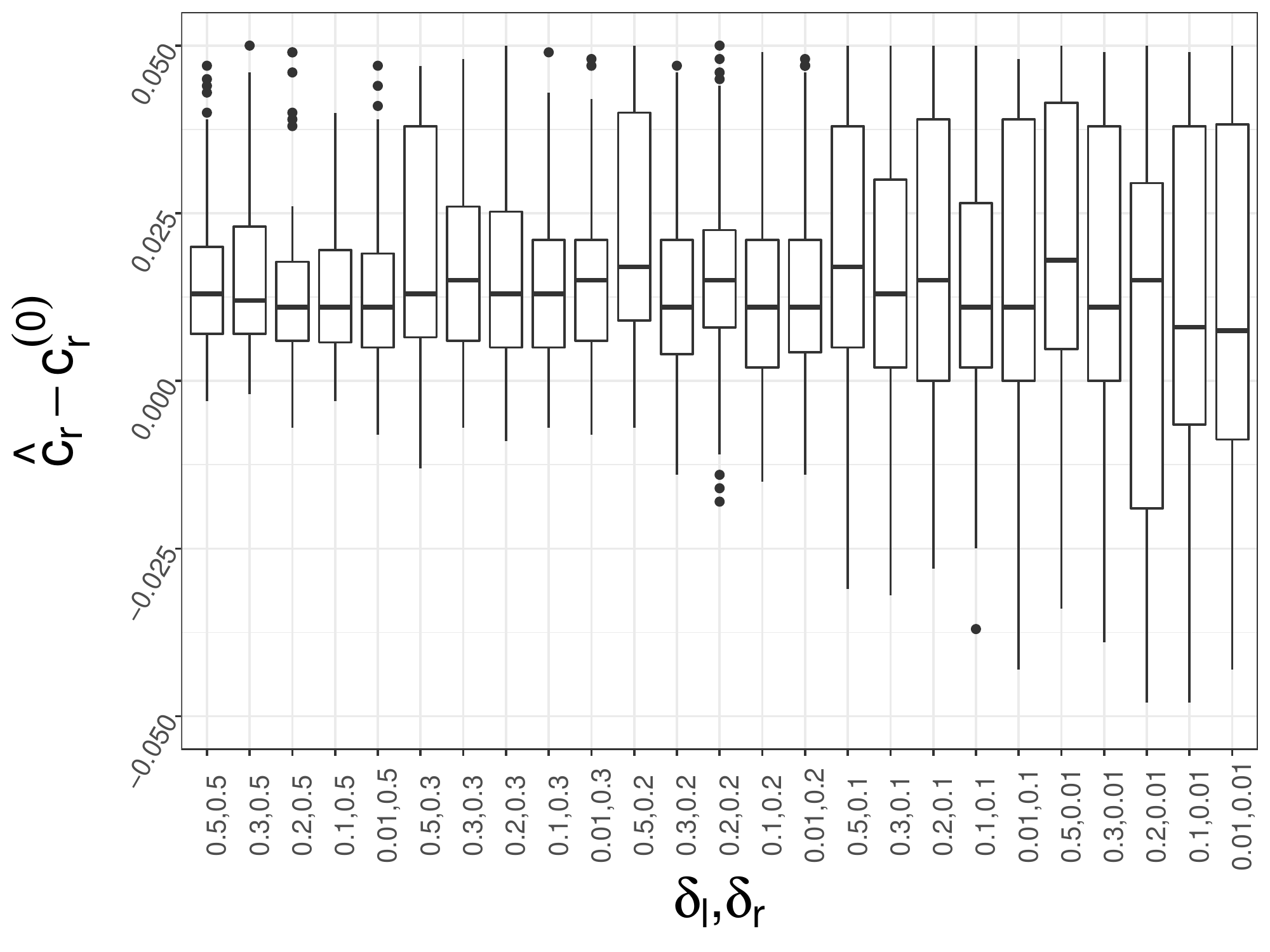}
    \subcaption{}
    %\subfloat[original model]{\hspace{.5\linewidth}}    
  \end{minipage}
  \caption{The estimation of $\hat{c}_r$ under the linear valley model. (a) with respect to $m$; (b) with respect to the width of the middle flat region; (c) with respect to the gap sizes.}\label{fig:lin_convergence}
\end{figure}

\subsubsection{Sensitivity of the algorithm hyper-parameters}
Algorithm \ref{algo:gU_grid_search} (Ucut) mainly have three tuning parameters: the middle point $\mu$, the left gap $d_l$ and the right gap $d_r$. For practical use, the three tuning parameters may be misspecified. For example, the middle point is not easy to spot, or the left gap and the right gap are too small. We use the default model parameters as specified in Section \ref{sec:bmu_sim_data_generation}. Let $d_l$ and $d_r$ be $0.8$ times the true normalized gaps $\tilde{\delta}_l$ and $\tilde{\delta}_r$ respectively. We vary the choice of the middle point $\mu$. Figure \ref{fig:lin_sensitivity} (a) shows that the estimation of $c_r$ is not sensitive to the choice of $\mu$ as long as it is picked within the flat region $[0.3, 0.9]$. If $\mu$ is picked left to the flat region, the $\hat{c}_r$ has a larger variance but it is more conservative in the sense that $\hat{c}_r > c_r^{(0)}$ in most cases. If $\mu$ is picked right to the flat region, the $\hat{c}_r$ tends to be $\min\{\mu, c_r^{(0)}\}$.

Next, we fix $\mu = 0.5$ but consider $d_l = \kappa \times \tilde{\delta}_l$ and $d_r = \kappa \times \tilde{\delta}_r$, where $\kappa \in \{1, 0.9, 0.8, 0.5, 0.2,$ $0.1, 0.01\}$. We do not consider $\kappa > 1$ because there might not exist feasible $(c_l, c_r)$ that satisfies the change-point-constraint. Figure \ref{fig:lin_sensitivity} (b) shows that when $\kappa$ is within $[0.5, 1]$ the estimation is satisfying. The estimated $c_r$ can be slightly smaller than the true $c_r^{(0)}$ when $\kappa < 0.5$ but in a tolerable range.

In a nutshell, the choices of $\mu$, $d_l$ and $d_r$ are crucial to Algorithm \ref{algo:gU_grid_search}. But the sensitivity analysis indicates that it is not necessary to be excessively cautious. In practice, picking these parameters by eyeballs can give a safe estimation in most cases.

\begin{figure}
  \begin{minipage}{0.49\textwidth}
    \centering
    \includegraphics[width=\textwidth]{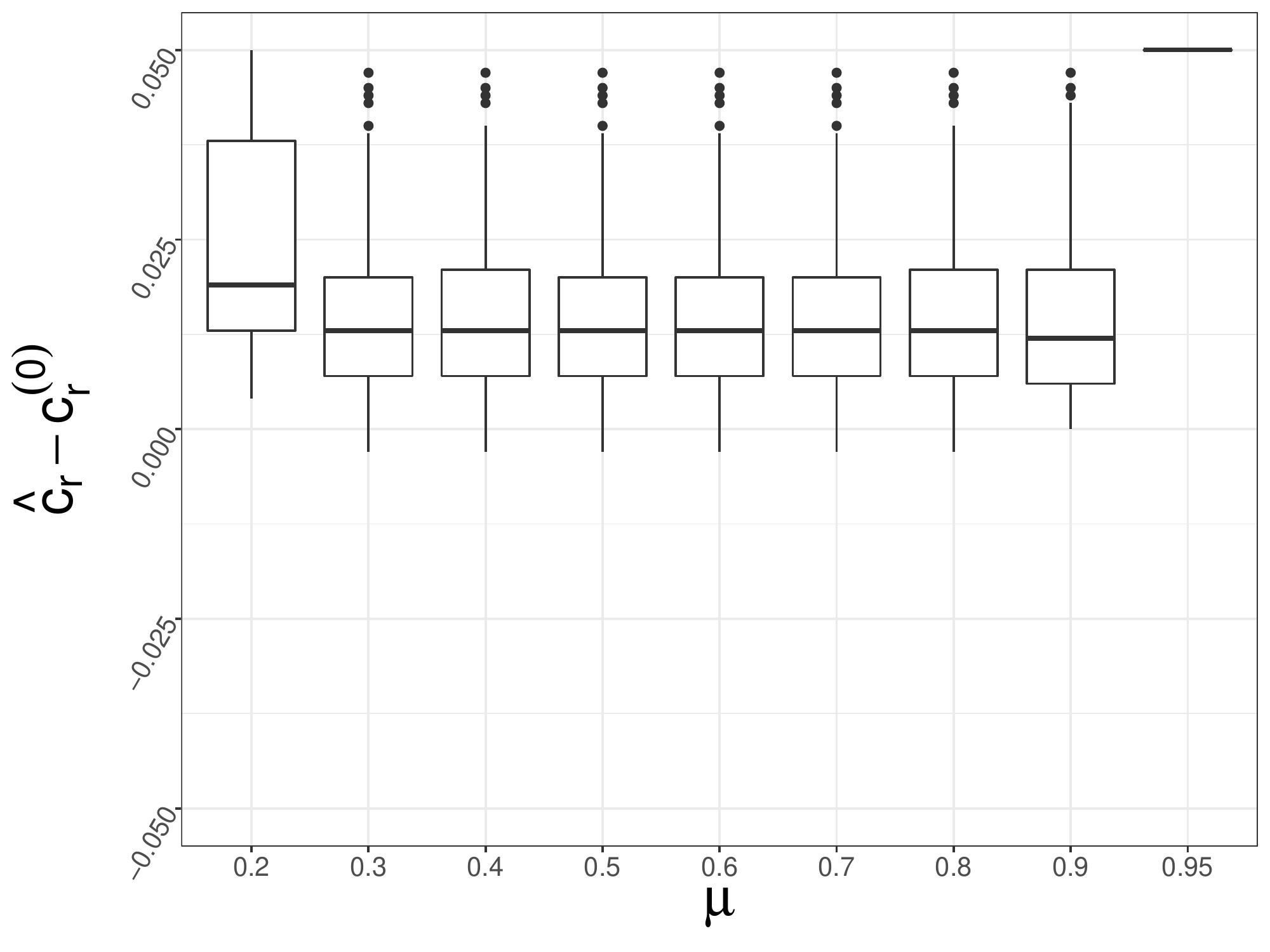}
    \subcaption{}
    %\subfloat[original model]{\hspace{.5\linewidth}}    
  \end{minipage}
  \begin{minipage}{0.49\textwidth}
    \centering
  \includegraphics[width=\textwidth]{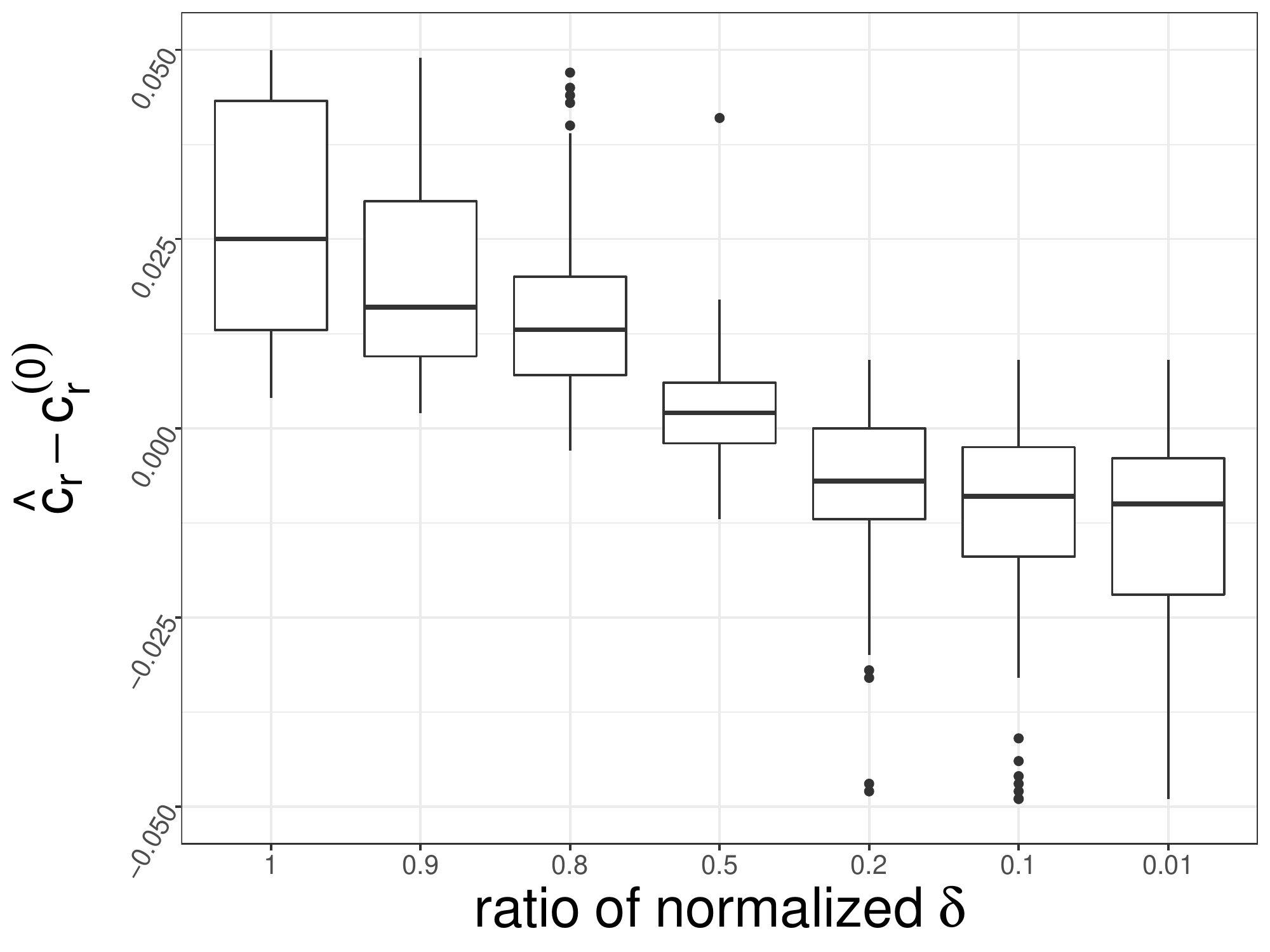}      
    \subcaption{}
    %\subfloat[original model]{\hspace{.5\linewidth}}    
  \end{minipage}
      \caption{The estimation of $\hat{c}_r$ under the linear valley model. (a) with respect to the choice of the middle point $\mu$; (b) with respect to the choice of the input $d_l$ and $d_r$. Here $d_l = \kappa \times \tilde{\delta}_l$ and $d_r = \kappa \times \tilde{\delta}_r$, where $\kappa$ is a ratio of the normalized $\delta$'s.}\label{fig:lin_sensitivity}
\end{figure}

\subsubsection{Comparison to other methods}
To examine the power and the capacity of controlling FDR of \GFCMethod\/, we consider the two-group model as specified in Figure \ref{fig:lin_exp} (b). The middle part is not identifiable, which means that the samples of the alternative distribution can not be distinguished from those of the null distribution. To reflect this point, we arbitrarily set the proportion of the null distribution in the middle part $\tau_0$. The goal is to identify the right cutoff $c_r$ but not $\tau_0$ because it is impossible to infer $\tau_0$.

We compare \GFCMethod\/ to the other four methods that are studied in \citet{gauran2018empirical}, i.e., ZIGP (Zero-inflated Generalized Poisson), ZIP (Zero Inflated Poisson), GP (Generalized Poisson) and P (Poisson). The data studied in this paper and the associated primary purpose are highly similar to ours. First, they study mutation counts for a specific protein domain, which have excess zeros and are assumed to follow a zero-inflated model. Moreover, the four mentioned methods are used to select significant mutation counts (i.e., cutoff) while controlling a given level of Type I error via False Discovery Rate procedures. Therefore, the four methods are good competitors for \GFCMethod\/ in view of the data distribution and the method usage.

Figure \ref{fig:lin_comp_rlt} shows that GP and P use rather small cutoffs and have too large FDRs. ZIGP and ZIP are over-conservative if the target FDR level is too low at $0.005$ or $0.01$, thus having quite low power. They perform better when the target FDR level is set to be $0.05$. On the other hand, \GFCMethod\/ can control FDR at $0.01$ if we directly use $\hat{c}_r$ as the cutoff. The associated power is better than those of ZIGP and ZIP. In order to loosen the FDR control and get higher power, it is fine to use a slightly smaller cutoff than $\hat{c}_r$. From this result, we confirm that \GFCMethod\/ is a better fit for the scenario where the middle part is not distinguishable.

\begin{figure}
  \centering
  \includegraphics[width=0.6\textwidth]{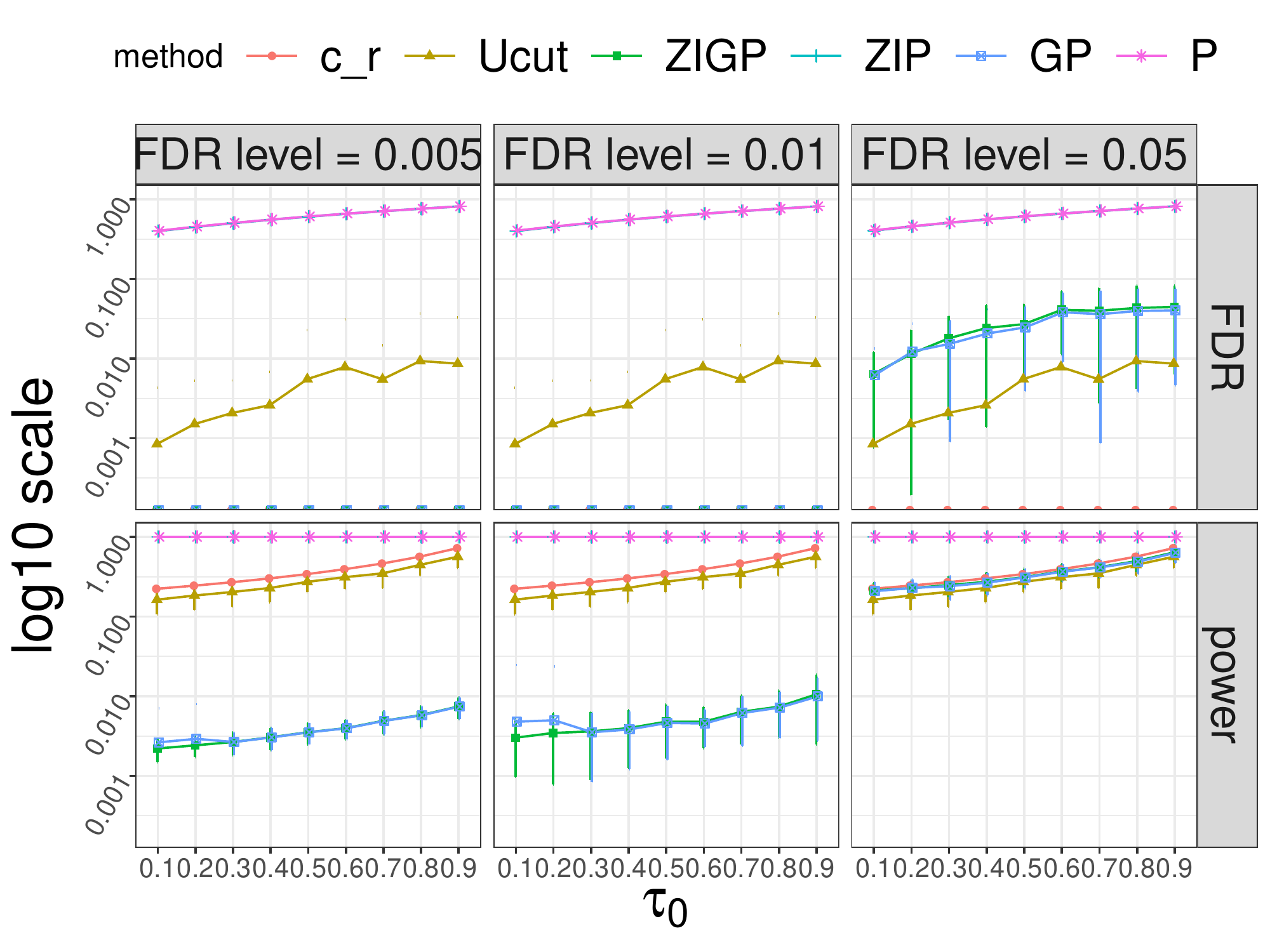}
  \caption{FDR and power of \GFCMethod\/ and other competing methods.}\label{fig:lin_comp_rlt}
\end{figure}

\subsection{Application to Real Data}\label{sec:bmu_real_exp}
We further demonstrate the performance of \GFCMethod\/ on the real datasets used in \citet{liu2019genefishing}. To be specific, we apply the GeneFishing method to four GTEx RNAseq datasets, liver, Artery Coronary, Transverse Colon, and Testis; see Table \ref{tbl:GTEx_profile} for details. We leverage the same set of $21$ bait genes used in \citet{liu2019genefishing}. The number of fishing rounds is set to be $m = 10,000$.

\begin{table}[ht]
  \caption{Details of GTEx RNAseq datasets.}\label{tbl:GTEx_profile}
  \centering
\begin{tabular}{c|c|c}
\hline
                 & \# samples & \# genes  \\\hline
Liver            &$119$ & $18,845$\\
Artery-Coronary  &$133$ &$20,597$ \\
Colon-Transverse & $196$ & $21,695$\\
Testis           & $172$ & $31,931$\\\hline
\end{tabular}
\end{table}

Once the CFRs are generated, we apply Algorithm \ref{algo:gU_grid_search} with the middle point $\mu = 0.5$, $d_l = 0.1$ and $d_r = 0.01$. We take $d_l$ to be ten times $d_r$ because there are much more zeros than ones in CFRs. As shown in Section \ref{sec:bmu_numerical_exp}, \GFCMethod\/ is not sensitive to the three hyper-parameters. The change of these parameters lays little influence on the results. Table \ref{tbl:GTEx_result} shows that for each tissue \GFCMethod\/ gives the estimator of $c_r$ that yields $50$ to $80$ discoveries. We estimate the false discovery rate using the second approach in \citet{liu2019genefishing}. Note that for Artery-Coronary, the estimated $c_r = 0.972$ by \GFCMethod\/ (which gives $\widehat{FDR} \approx 10^{-3}$) is less extreme than simply using $0.990$ (which gives $\widehat{FDR} \approx 10^{-4}$). It implicates that \GFCMethod\/ adapts to the tissue and can pick a cutoff with a reasonable false discovery rate.

\begin{table}[ht]
  \caption{Estimation of $c_r$ by Algorithm \ref{algo:gU_grid_search} on four tissues, where $\mu = 0.5$, $d_l = 0.1$ and $d_r = 0.01$. The second column is the estimated $c_r$ using bootstrap by sampling $70\%$ of the CFRs.}\label{tbl:GTEx_result}
  \centering
\begin{tabular}{c|c|c|c|c}
\hline
                 & $\hat{c}_r$  & bootstrapping estimation  & \# discovery (use $\hat{c}_r$) & $\widehat{FDR}$ \\ \hline
Liver            & $0.995$ & $0.993 (0.005)$ & $52$                         & $1.4 \times 10^{-3}$                                                \\ 
Artery-Coronary  &$0.972$ &$0.976 (0.009)$ & $85$                        & $5.7\times 10^{-3}$                                               \\ 
Colon-Transverse & $0.989$ & $0.991 (0.049)$ & $57$                         & $1.2 \times 10^{-4}$                                                \\ 
Testis           & $0.993$ & $0.992 (0.001) $ & $73$                        & $0.010$                                               \\ \hline
\end{tabular}
\end{table}

In addition, we also apply the GeneFishing method to a single-cell data of the pancreas cells from Tabula Muris\footnote{https://tabula-muris.ds.czbiohub.org}. It contains $849$ cells from mice and $5,220$ genes expressed in enough cells out of about $20,000$ genes. We find out $9$ bait genes based on the pancreas insulin secretion gene ontology (GO) term.

Unlike Figure \ref{fig:hist_CFR}, the CFRs of this data set do not appear in a U shape. Instead, we observe a unimodal pattern around zero and an increasing pattern around one (Figure \ref{fig:pancreas_hist}). Nonetheless, it does not hinder us from using \GFCMethod\/ to determine the cutoff, since we are mainly concerned about the right cutoff and Section \ref{sec:bmu_numerical_exp} demonstrates that \GFCMethod\/ is conservative even if the model is misspecified. By using $\mu = 0.5$, $d_l = 0.1$ and $d_r = 0.01$, \GFCMethod\/ gives $0.994$ as the estimation of the right cutoff, which discovers $77$ genes. By doing the GO enrichment analysis, we find out that these identified genes are enriched for the GO of response to ethanol with p-value $0.0021$, the GO of positive regulation of fatty acid biosynthesis with p-value $0.0055$, and the GO of eating behavior with p-value $0.0079$. These GOs have been shown to relate to insulin secretion in literature  \citep{huang2008ethanol,nolan2006fatty, tanaka2003eating}, which indicates the effectiveness of \GFCMethod\/.

\begin{figure}
  \centering
  \includegraphics[width = 0.5\textwidth]{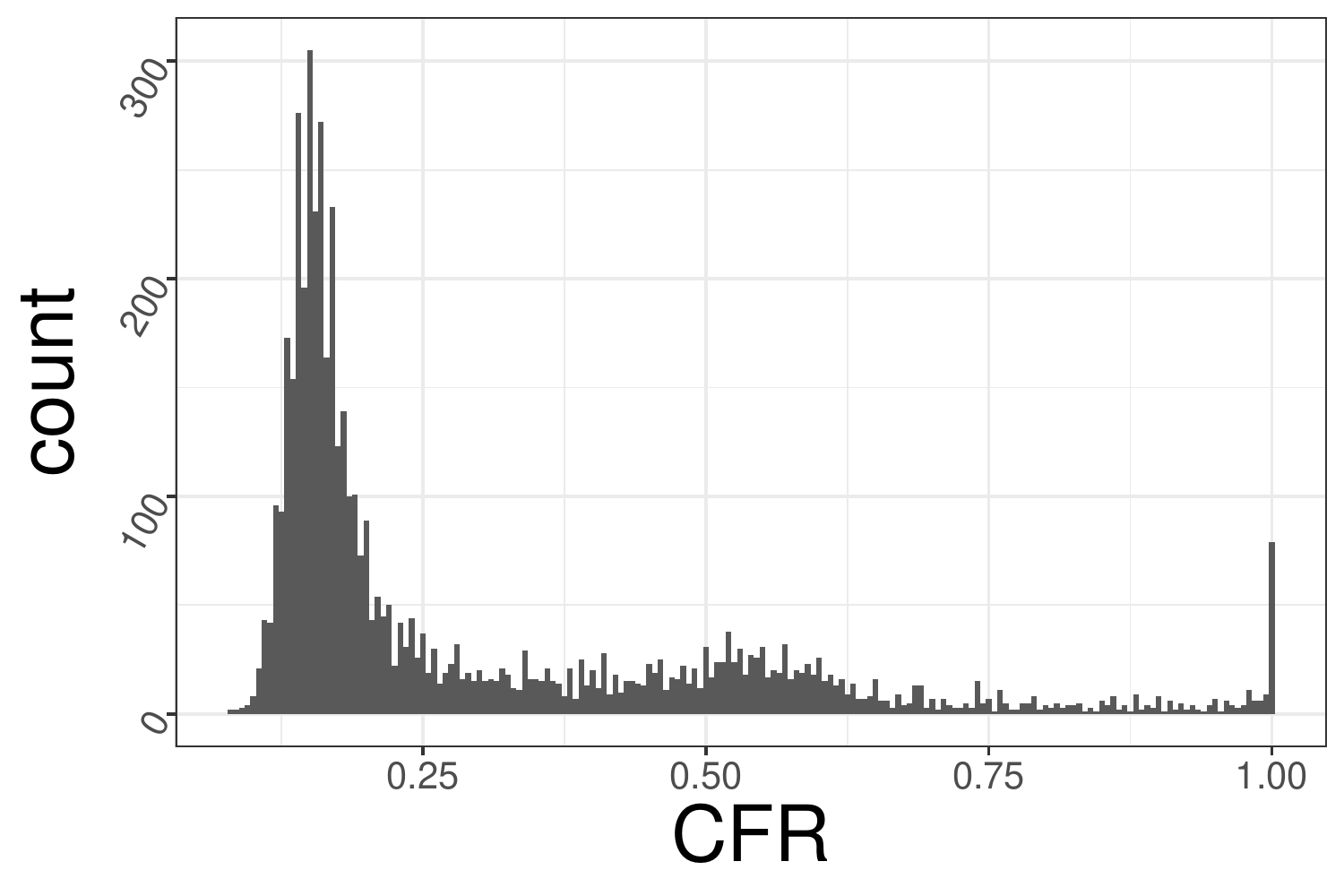}
  \caption{The CFRs of the single cell data of the pancreas tissue.}\label{fig:pancreas_hist}
\end{figure}

\section{Discussion}\label{sec:bmu_discussion}
In this work, we analyze the binomial mixture model \eqref{eq:binomial_mixture} under the U-shape constraint, which is motivated by the results of the GeneFishing method \citep{liu2019genefishing}. The contributions of this work are two-fold. First, to the best of our knowledge, this is the first paper to investigate the relationship between the binomial size $m$ and the sample size $n$ for the binomial mixture model under various conditions for $F$. Second, we provide a convenient tool for the downstream decision-making of the GeneFishing method.

Despite the identifiability issue of the binomial mixture model, we show that the estimator of the underlying distribution deviates from the true distribution, in some distance, at most a small quantity that depends on $m$. The implication is that to have the same convergence rate as if there is no binomial randomness, we need $m = \Omega(n^{\frac{2}{3}})$ for the empirical CDF and the Grenander estimator when the underlying density is respectively bounded and monotone. % However, when the underlying density is smooth, the partial theoretical results suggest that the condition can be relaxed to $m = \Omega(n^{\frac{1}{3}})$, which are yet rigorously established.
It is also of great interest to further investigate how the minimal $m$ hinges on the smoothness of the underlying distribution, e.g., studying the kernel density estimator and the smoothing spline estimator under the binomial mixture model.

To answer the motivating question of how large the CFR should be so that the associated sample can be regarded as a discovery in the GeneFishing method, we propose the BMU model to depict the underlying distribution and the NPMLE method \GFCMethod\/ to determine the cutoff. This estimator comprises two Grenander estimators, thus having a cubic convergence rate as the Grenander estimator when $m = \Omega(n^{\frac{2}{3}})$. We also show that the estimated cutoff is larger than the true cutoff with high probability. On the other hand, all these results are established for a bounded density. We leave it as future work to study the relationship between the convergence rate and the smoothness of the density.

Finally, we conclude the paper with a discussion on the empirical performance of \GFCMethod\/. The simulation studies reassure the practitioners that \GFCMethod\/ is not sensitive to the choice of its hyper-parameters --- the default setup can handle most cases. In the real data example used in \citet{liu2019genefishing}, \GFCMethod\/ gives a cutoff with a low estimated FDR using much less time than \citet{liu2019genefishing}. In the other example of the pancreas single-cell dataset, we use GeneFishing with \GFCMethod\/ and find out $77$ genes that are shown to be related to pancreas insulin secretion by GO enrichment analysis. Therefore, we are interested and confident in discovering exciting signals in biology or other fields if GeneFishing with \GFCMethod\/ can be applied to more real datasets.        % However, when $BMU$ does not hold, we have not yet figured the minimal $m$ to ensure the convergence of \GFCMethod\/, which we leave for future exploration. 
% The simulation studies indicate that \GFCMethod\/ is robust to the three hyper-parameters, even if the model is misspecified. Therefore, we recommend \GFCMethod\/ as a cost-effective and robust tool for the downstream analysis of the GeneFishing method.

%% \appendix
%% \input{appendix}

%\bibliographystyle{apalike}
%\bibliography{thesis}

\printbibliography

% %%%%%%%%%%%%%%
% %% Appendix %%
% %%%%%%%%%%%%%%
 \clearpage
 \appendix

\begin{center}
  {\huge Appendix}
\end{center}
\section{The Identifiability Issue of Binomial Mixture Model}\label{appendix:bmu_identifiability}
\iffalse
Under the model \eqref{eq:binomial_mixture}, we cannot directly observe the binomial rate $s_i$. Instead, only $\hat{s}_i$ is accessible, which can be regarded as the noisy observation of $s_i$ due to a measurement error. Rewrite $\hat{s}_i = s_i + \varepsilon_i$, where $\varepsilon_i$ is a conditional mean zero random variable (given the true binomial rate $s_i$) induced by the binomial randomness. Note that $\epsilon_i$ hinges on $s_i$, thus complicating the inference and estimation of $F$ based on $\hat{s}_i$'s. From the theoretical perspective, it has been known for a long time that the binomial mixture model with a known binomial size suffers from an unidentifiability issue that cannot be side-stepped \citep{lindsay1981properties}. \citet{wood1999binomial} empirically shows that this issue is less worrying when $m$ is large. On the other hand, there lacks a good understanding of the regime when the binomial size can be relatively large compared to the sample size. 
\fi

We review existing results for mixtures of distributions and the binomial mixture model in the literature. A general mixture model is defined as
\begin{equation}
  H(x) = \int_{\Omega} h(x|\theta) dG(\theta), \label{eq:general_mixture}
\end{equation}
where $h(\cdot | \theta)$ is a density function for all $\theta \in \Omega$, $h(x| \cdot)$ is a Borel measurable function for each $x$ and $G$ is a distribution function defined on $\Omega$. The family $h(x|\theta)$, $\theta \in \Omega$ is referred to as the kernel of the mixture and $G$ as the mixing distribution function. In order to devise statistical procedures for inferential purposes, an important requirement is the identifiability of the mixing distribution. Without this condition, it is not meaningful to estimate the distribution either non-parametrically or parametrically. The mixture $H$ defined by \eqref{eq:general_mixture} is said to be identifiable if there exists a unique $G$ yielding $H$, or equivalently,if the relationship
$$H(x) = \int_{\Omega} h(x| \theta) dG_1(\theta) = \int_{\Omega} h(x| \theta) dG_2(\theta)$$
implies $G_1(\theta) = G_2(\theta)$ for all $\theta \in \Omega$.

The identifiability problems concerning finite and countable mixtures (i.e. when the support of $F$ in \eqref{eq:general_mixture} is finite and countable respectively) have been investigated by \citet{teicher1963identifiability, patil1966identifiability,yakowitz1968identifiability,tallis1969identifiability,fraser1981identifiability,tallis1982identifiability,kent1983identifiability}. Examples of identifiable finite mixtures include: the family of Gaussian distribution $\{N(\mu, \sigma^2), -\infty < \mu < \infty, 0 < \sigma^2 < \infty \}$, the family of one-dimensional Gaussian distributions, the family of one-dimensional gamma distributions, the family of multiple products of exponential distributions, the multivariate Gaussian family, the union of the last two families, the family of one-dimensional Cauchy distributions, etc.

For sufficient conditions for identifiability of arbitrary mixtures, \citet{teicher1961identifiability} studied the identifiability of mixtures of additive closed families, while \citet{barndorff1965identifiability} discussed the identifiability of mixtures of some restricted multivariate exponential families. \citet{luxmann1987identifiability} has given a sufficient condition for the identifiability of a large class of discrete distributions, namely that of the power-series distributions. Using topological arguments, he has shown that if the family in question is infinitely divisible, mixtures of this family are identifiable. For example, Poisson, negative binomial, logarithmic series are infinitely divisible, so arbitrary mixtures are identifiable. 

On the other hand, despite being a power-series distribution, the binomial distribution is not infinitely divisible. So its identifiability is not established for the success parameter \citep{sapatinas1995identifiability}. In fact, the binomial mixture has often been regarded as unidentifiable, as $F$ can be determined only up to its first $m$ moments when $H$ is known exactly. To be more specific,  $h(x| s)$ is a linear function of the first $m$ moments $ \mu_r= \int_0^1 s^r dF(s), r = 1,2,\ldots, m$ of $F(s)$, for every $x$. Therefore, with the same first $m$ moments, any other $F'(s)$  will make the same mixed distribution $H(x)$. To ensure the identifiability for the binomial mixture model, it is a common practice to assume that $F$ corresponds to a finite discrete pmf or a parametric density (e.g., beta distribution). 

In particular, there are two results for the identifiability of binomial mixture model \citep{teicher1963identifiability}:
\begin{enumerate}
\item If $h(x| s)$ in \eqref{eq:general_mixture} is a binomial distribution with a known binomial size $m$, and the support of $F$ only contain $K$ points. A necessary and sufficient condition that the identifiability holds is that $m \geq 2K - 1$.
\item Consider $h(x| m_j, s_j)$ as a binomial distribution with binomial size $m_j$ and probability $s_j$, where $0 < s_j < 1$, $j = 1, 2, \ldots$ and the support of $F$ is $\{s_1, s_2, \ldots \}$. If $m_j \neq m_{j'}$ for $j = j'$, then \eqref{eq:general_mixture} is identifiable.
\end{enumerate}

In this study, we are interested in the regime where the support size of $F$ may not be finite, and thus the identifiability may fail for the binomial mixture model. Some efforts are made without identifiability. \citet{lord1975empirical, sivaganesan1993robust} constructed credible intervals for the Bayes estimators of each point mass and that of $F$,  which rely on the lower order moments of the mixing distribution. \citet{wood1999binomial} empirically shows that their proposed nonparametric maximum likelihood estimator of $F$ is unique with high probability when $m$ is large. However, these results are far from satisfying in terms of our ultimate goal --- estimate or infer the underlying distribution $F$.

\section{A motivating Example: GeneFishing}\label{appendix:bmu_gene_fishing}
The study of model \eqref{eq:binomial_mixture} with a large $m$ is motivated by the GeneFishing method \citep{liu2019genefishing}. Provided some knowledge involved in a biological process as ``bait'', GeneFishing was designed to ``fish'' (or identify) discoveries that are yet identified related to this process. In this work, the authors used a set of pre-identified $21$ ``bait genes'', all of which have known roles in cholesterol metabolism, and then applied GeneFishing to three independent RNAseq datasets of human lymphoblastoid cell lines. They found that this approach identified other genes with known roles not only in cholesterol metabolism but also with high levels of consistency across the three datasets. They also applied GeneFishing to GTEx human liver RNAseq data and identified gene glyoxalase I (GLO1). In a follow-up wet-lab experiment, GLO1 knockdown increased levels of cellular cholesterol esters. 

The GeneFishing procedure is as follows, as shown in Figure \ref{fig:genefishing}:
\begin{enumerate}
\item Split the $n$ candidate genes randomly into many sub-search-spaces of $L$ genes per sub-group (e.g., $L = 100$), then added to by the bait genes. This step is the key reduction of search space, facilitating making the ``signal'' standing out from the ``noise''.
\item Construct the Spearman co-expression matrices for gene pairs contained within each sub-search-space. Apply the spectral clustering algorithm (with the number of clusters equal to $2$) to each matrix separately. In most cases, the bait genes cluster separately from the candidate genes. But in some instances, candidate gene(s) related to the bait genes will cluster with them. When this occurs, the candidate gene is regarded as being ``fished out''.
\item Repeat steps 1 and 2 (defining one round of GeneFishing) $m$ times (e.g., $m = 10,000$) to reduce the impact that a candidate gene may randomly co-cluster with the bait genes.
\item Aggregate the results from all rounds, and the $i$-th gene is fished out $X_i$ times out of $m$. The final output is a table that records the ``capture frequency rate'' ($CFR_i := \hat{s}_i = X_i/m$). The ``fished-out'' genes with large CFR values are thought of as ``discoveries''. Note, however, instead of considering these ``discoveries'' to perform a specific or similar function as the bait genes, we only believe they are likely to be functionally related to the bait genes. Figure \ref{fig:hist_CFR} displays the distribution of $X_i$'s with $m = 10,000$ and the number of total genes $n = 21,000$ on four tissues for the cholesterol-relevant genes.
\end{enumerate}

\begin{figure}[ht]
  \centering
  \includegraphics[width=0.8\textwidth]{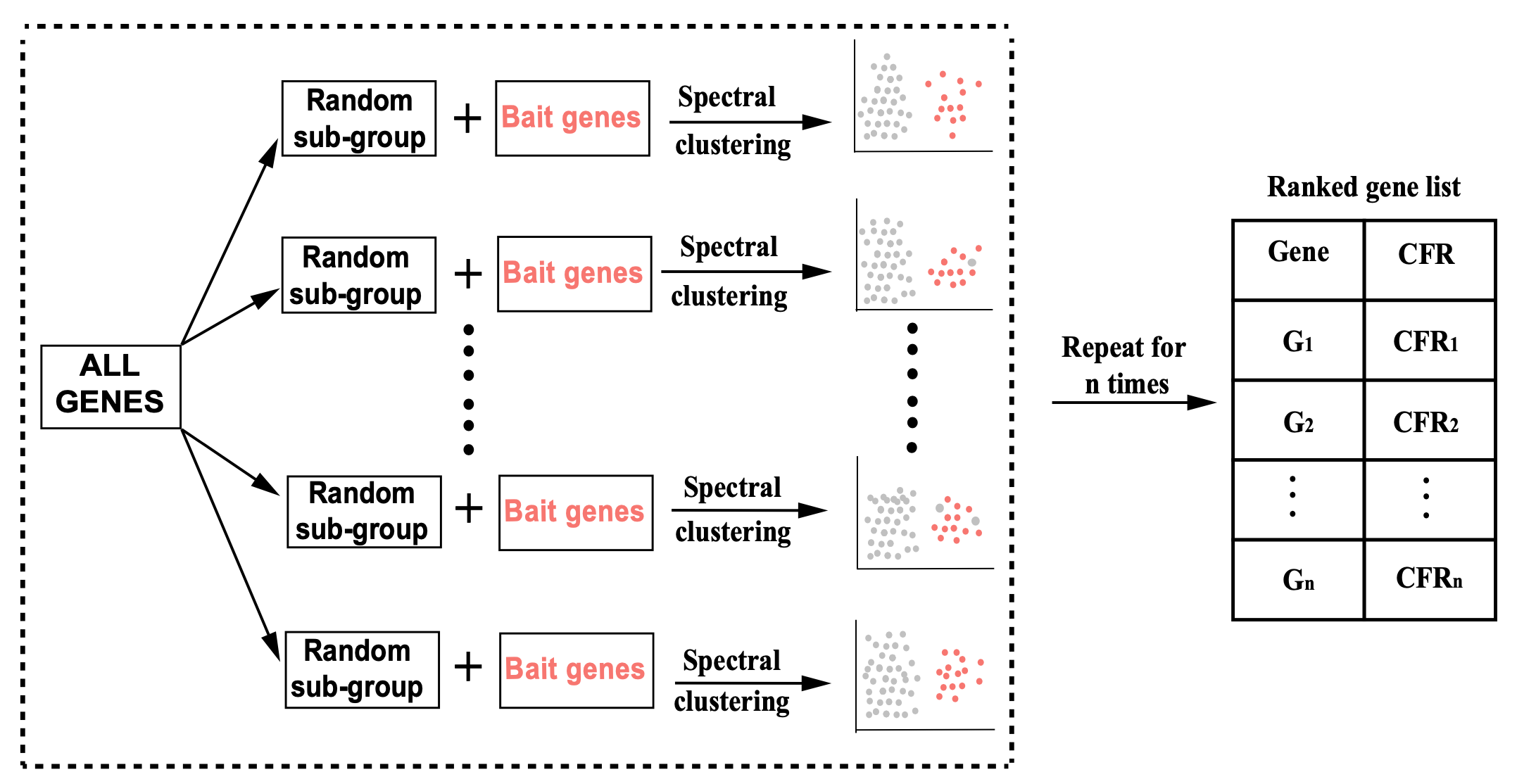}
  \caption{Workflow of GeneFishing (Fig 1 (e) of \citet{liu2019genefishing}). CFR stands for Capture Frequency Rate.}\label{fig:genefishing}
\end{figure}

Compared to other works for ``gene prioritization'', GeneFishing has three merits. First, it takes care of extreme inhomogeneity and low signal-to-noise ratio in high-throughput data using dimensionality reduction by subsampling. Second, it uses clustering to identify $21$ tightly clustered bait genes, which is a data-driven confirmation of the domain knowledge. Finally, GeneFishing leverages the technique of results aggregation (motivated by a bagging-like idea) in order to prioritize genes relevant to the biological process and reduce false discoveries.

However, there remains an open question on how large a CFR should be so that the associated gene is labeled as a ``discovery''. \citet{liu2019genefishing} picked a large cutoff $0.99$ by eye. It is acceptable when the histograms are sparse in the middle as in the liver or the transverse colon tissues (Figure \ref{fig:hist_CFR} (a)(b)), since cutoff$=0.25$ and cutoff$=0.75$ make little difference in determining the discoveries. On the other hand, for the artery coronary and testis tissues (Figure \ref{fig:hist_CFR} (c)(d)), the non-trivial middle part of the histogram is a mixture of the null (irrelevant to the biological process) distribution and the alternative (relevant to the biological process) distribution. The null and the alternative are hard to separate since the middle part is quite flat. Existing tools using parametric models to estimate local false discovery rates \citep{gauran2018empirical} are not able to account for the middle flatness well. They tend to select a smaller cutoff and produce excessive false discoveries. \citet{liu2019genefishing} provided a permutation-like procedure to compute approximate p-values and false discovery rates (FDRs). Nonetheless, there are two problems with this procedure. On the one hand, it is substantially computationally expensive, considering another round of permutation is added on top of numerous fishing rounds. On the other hand, the permutation idea is based on a strong null hypothesis that none of the candidate non-bait genes are relevant to the bait genes, thus producing an extreme p-value or FDR, which is unrealistic.

To identify a reasonable cutoff on CFRs, we assume there is an underlying fishing rate $s_i$, reflecting the probability that the $i$-th gene is fished out in each GeneFishing round. The fishing rates are assumed to be independently sampled from the same distribution $F$. Thus, the observation $CFR_i$ (or $\hat{s_i}$) can be modelled by \eqref{eq:binomial_mixture}. We mention that the independence assumption is raised mainly for convenience but not realistic de facto since genes are interactive and correlated in the same pathways or even remotely. Consequently, the effective sample size is smaller than expected. However, this assumption is still acceptable by assuming only a handful of candidate genes are related to the bait genes, which is reasonable from the biological perspective. Furthermore, we notice that in Figure \ref{fig:hist_CFR} there exists a clear pattern that the histogram is decreasing on the left-hand side and increasing on the right-hand side for all four tissues. In the middle, liver and colon transverse display sparse densities while artery coronary and testis exhibit flat ones. Thus, we can impose a U shape constraint on the associated density of $F$ (see Section \ref{sec:bmu_model} for details). Then the original problem becomes finding out the cutoff where the flat middle part transits to the increasing part on the right-hand side. 

\iffalse
Some genes can be highly dependent on each other \citep{liu2019genefishing}. For instance, genes in the same pathways interact with each other, which explains the co-expression phenomenon. However, it can be difficult and computationally intractable to identify the joint distribution of all the genes explicitly. , , which is a commonly adopted approximation in statistical learning, e.g., naive Bayes, mean field variational Bayes \qye{[CITATION]}.  Thus, Model \eqref{eq:binomial_mixture} is a reasonable surrogate model to characterize the GeneFishing process. 
\fi

\section{Review of Grenander Estimator}\label{appendix:bmu_review_grenander}
Monotone density models are often used in survival analysis and reliability analysis in economics---see \citet{huang1994estimating,huang1995estimation}. We can apply maximum likelihood for the monotone density estimation. Suppose that $X_1, \ldots, X_n$ is a random sample from a density $f$ on $[0, \infty)$ that is known to be nonincreasing; the maximum likelihood estimator $\ftil_n$ is defined as the nonincreasing density that maximizes the log-likelihood $\ell(f) = \sum_{i=1}^n \log f(X_i).$ \citet{grenander1956theory} first showed that this optimization problem has a unique solution under the monotone assumption --- so the estimator is also called the Grenander estimator. The Grenander estimator is explicitly given as the left derivative of the least concave majorant of the empirical distribution function $F_n$. The least concave majorant of $F_n$ is defined as the smallest concave function $\tilde{F}_n$ with $\tilde{F}_n \geq F_n$ for every $x$. % This can be found by attaching a rope at the origin $(0, 0)$ and winding this from above around the empirical distribution function $F_n$.
Because $\tilde{F}_n$ is concave, its derivative is nonincreasing.%\qye{[Introduct more details on the construction of the least concave majorant.]}

\citet{marshall1965maximum} showed that Grenander estimator is consistent when $f$ is decreasing, as stated in Theorem \ref{thm:grenander_consistency}.
\begin{theorem}[\citet{marshall1965maximum}]\label{thm:grenander_consistency}
  
  Suppose that $X_1, \ldots, X_n$ are i.i.d random variables with a decreasing density $f$ on $[0, \infty)$. Then the Grenander estimator $\ftil_n$ is uniformly consistent on closed intervals bounded away from zero: that is, for each $c > 0$, we have
  $$\sup_{c \leq x < \infty} |\ftil_n(x) - f(x)| \rightarrow 0 \text{ a.s.}$$
\end{theorem}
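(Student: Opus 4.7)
The plan is to reduce the claim to (i) uniform convergence of the least concave majorant of $F_n$ to $F$, and (ii) a deterministic argument that converts uniform convergence of concave CDFs into uniform convergence of their (left-)derivatives on compact intervals bounded away from $0$, and finally (iii) a tail argument that lets one replace ``compact subinterval'' by $[c,\infty)$.

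First I would invoke Marshall's lemma: the map that sends a function to its least concave majorant is a contraction in the supremum norm when the target is already concave. Since $F$ itself is concave (because $f$ is nonincreasing), this yields
\[
\sup_x |\tilde{F}_n(x) - F(x)| \;\le\; \sup_x |F_n(x) - F(x)|,
\]
where $\tilde{F}_n$ is the least concave majorant of $F_n$. The right-hand side tends to $0$ a.s. by the Glivenko--Cantelli theorem, so $\tilde{F}_n \to F$ uniformly on $[0,\infty)$ a.s.

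Next I would pass from CDFs to densities. Both $\tilde{F}_n$ and $F$ are concave, and $\tilde{f}_n$ and $f$ are their left derivatives. For any $x>0$ at which $f$ is continuous and any $h>0$, concavity gives the sandwich
\[
\frac{\tilde{F}_n(x) - \tilde{F}_n(x-h)}{h} \;\ge\; \tilde{f}_n(x) \;\ge\; \frac{\tilde{F}_n(x+h) - \tilde{F}_n(x)}{h},
\]
and similarly for $F$ with $f$. Combining the uniform convergence $\tilde F_n \to F$ with the continuity of $f$ at $x$, letting $n\to\infty$ and then $h\to 0$, one obtains $\tilde{f}_n(x) \to f(x)$ a.s. Since $f$ is monotone and hence has at most countably many discontinuities, this pointwise convergence holds at a.e.\ $x$. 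By the classical Polya-type result (monotone functions converging pointwise on a dense set to a continuous monotone limit on a compact set converge uniformly there), convergence is uniform on every compact subinterval $[c,M]$ on which $f$ is continuous.

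The main obstacle is the interaction between two complications: handling the tail $[M,\infty)$ to promote ``uniform on $[c,M]$'' to ``uniform on $[c,\infty)$,'' and handling possible discontinuities of $f$. For the tail I would use that $f$ is nonincreasing and integrable, hence $f(M)\to 0$ as $M\to\infty$; since $\tilde f_n$ is also nonincreasing and $\int_0^\infty \tilde f_n = 1$, one can bound $\tilde f_n(M)$ for $n$ large in terms of $\tilde F_n(M) - \tilde F_n(M/2)$ (which is close to $F(M) - F(M/2)$), making both $f$ and $\tilde f_n$ uniformly small on $[M,\infty)$ for large $M$. For possible interior discontinuities of $f$, one first reduces to the case of continuous $f$ by density/approximation, or argues directly that at a jump point of $f$, the concavity sandwich above still pins $\tilde f_n(x)$ between the left and right values of $f$ for large $n$; since the theorem only asks for convergence of $\tilde f_n - f$ on $[c,\infty)$ and not at the jump itself, a standard covering of $[c,\infty)$ by small intervals on which $f$ has small oscillation, together with the a.s.\ uniform control on $\tilde F_n - F$, closes the argument.
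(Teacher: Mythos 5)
The paper does not actually prove this statement: it appears in the appendix's literature review and is quoted from \citet{marshall1965maximum}, so there is no in-paper proof to compare against. Judged on its own, your argument follows the standard route --- Marshall's lemma plus Glivenko--Cantelli to get $\sup_x|\tilde{F}_n(x) - F(x)| \to 0$ a.s., the concavity sandwich to transfer this to left derivatives at continuity points, a Polya/Dini-type step for uniformity on compacts, and the tail bound $\tilde{f}_n(M) \le \bigl(\tilde{F}_n(M)-\tilde{F}_n(M/2)\bigr)/(M/2)$ together with $f(M)\le 1/M$ to extend from $[c,M]$ to $[c,\infty)$ --- and all of these steps are sound.

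The one genuine gap is your treatment of discontinuities of $f$. The supremum in the statement ranges over all of $[c,\infty)$, so it does include any jump point $x_0$ of $f$; your remark that ``the theorem only asks for convergence \dots not at the jump itself'' is not correct. At such an $x_0$ the concavity sandwich yields only $f(x_0^+) \le \liminf_n \tilde{f}_n(x_0) \le \limsup_n \tilde{f}_n(x_0) \le f(x_0^-)$, and this is sharp: $\tilde{f}_n(x_0)$ need not converge to whichever value the chosen version of $f$ assigns at $x_0$. Likewise, the proposed covering of $[c,\infty)$ by intervals on which $f$ has small oscillation is impossible near a jump of size $\delta$, since every interval containing $x_0$ has oscillation at least $\delta$. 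Your proof is therefore complete exactly when $f$ is continuous on $[c,\infty)$, which is how this uniform-consistency statement is standardly meant and how it should be read here; for a genuinely discontinuous version of $f$ the displayed supremum need not tend to zero at the jump, and no argument can repair that without reinterpreting the statement (e.g., restricting to continuity points or to a suitably chosen version of $f$).
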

The inconsistency of the Grenander estimator at $0$, when $f(0)$ is bounded, was first pointed out by \citet{woodroofe1993penalized}. \citet{balabdaoui2011grenander}  later extended this result to other situations, where they consider different behavior near zero of the true density. Theorem \ref{thm:grenander_inconsistency_0} explicitly characterizes the behavior of $\ftil_n$ at zero.
\begin{theorem}[\citet{woodroofe1993penalized}]\label{thm:grenander_inconsistency_0}
  
  Suppose that $f$ is a decreasing density on $[0, \infty)$ with $0 < f(0) < \infty$, and let $N(t)$ denote a rate $1$ Poisson process. Then
  $$\frac{\ftil_n(0)}{f(0)} \overset{d}{\rightarrow} \sup_{t > 0} \frac{N(t)}{t} \overset{d}{=} \frac{1}{U},$$
  where $U$ is a uniform random variable on the unit interval.
\end{theorem}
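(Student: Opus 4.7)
The plan is to reduce the problem to a statement about a Poisson process via a standard ``local Poissonization'' of the empirical process near zero, then pass the supremum through a weak limit. First, I would note that because $F_n(0) = 0$ and the least concave majorant $\tilde F_n$ is concave and passes through $(0,0)$, its left derivative at $0$ equals the maximal initial slope of any chord, namely
\begin{equation*}
\ftil_n(0) \;=\; \sup_{x > 0} \frac{F_n(x)}{x} \;=\; \max_{i \geq 1} \frac{i/n}{X_{(i)}}.
\end{equation*}
Rescaling via the time change $t = n f(0) x$ and writing $M_n(t) = n F_n\!\bigl(t/(n f(0))\bigr)$, this becomes
\begin{equation*}
\frac{\ftil_n(0)}{f(0)} \;=\; \sup_{t > 0} \frac{M_n(t)}{t}.
\end{equation*}

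Second, I would argue that $M_n$ converges, in the Skorokhod sense on compacts of $[0,\infty)$, to a rate-1 Poisson counting process $N$. This follows from the assumption $f(0^+) = f(0) \in (0,\infty)$ together with the monotonicity of $f$: the number of sample points falling in $[0, t/(n f(0))]$ is Binomial with mean $n \cdot F(t/(nf(0))) \to t$, and these counts have the required limit joint distribution by the classical Poisson-limit theorem for rare-events counts in an empirical process. The delicate step, and the main obstacle, is to justify interchanging the supremum with the limit. For the upper tail in $t$ I would use that monotonicity of $f$ gives $F(x)/x \leq f(0)$, so $F_n(x)/x$ concentrates near this deterministic bound on any region bounded away from zero; this prevents the supremum from being attained at macroscopic $x$, and equivalently forces the supremum in $t$-coordinates to localize in a window $[0, T]$ with $T$ large but fixed. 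Combined with the SLLN fact that $N(t)/t \to 1$ almost surely, a standard tightness argument (truncate at $T$, control the residual by a Chebyshev-type bound on $F_n$ away from $0$) then yields
\begin{equation*}
\sup_{t > 0} \frac{M_n(t)}{t} \;\overset{d}{\longrightarrow}\; \sup_{t > 0} \frac{N(t)}{t}.
\end{equation*}

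Third, I would establish the distributional identity $\sup_{t > 0} N(t)/t \overset{d}{=} 1/U$ directly. The event $\{\sup_{t>0} N(t)/t \leq s\}$ for $s \geq 1$ is the event that $N(t) \leq st$ for every $t > 0$, equivalently $T_k \geq k/s$ for every $k \geq 1$, where $T_k$ is the $k$-th arrival time. Conditioning on $T_{\lceil s \rceil} = t$ and using the fact that the ordered arrivals on $[0,t]$ are distributed as the order statistics of uniforms, this probability reduces to a classical ballot-problem calculation whose value is $1 - 1/s$; taking $s < 1$ gives $0$ trivially since $N(t)/t \to 1$ almost surely. Thus the limiting random variable has CDF $(1 - 1/s) \mathbb{I}(s \geq 1)$, which is precisely the law of $1/U$ for $U$ uniform on $(0,1)$. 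Combining the three steps gives the theorem, with the weak-convergence/tightness step in the middle being the technical crux.
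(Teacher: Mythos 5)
The paper does not actually prove this statement: Theorem \ref{thm:grenander_inconsistency_0} is quoted verbatim from \citet{woodroofe1993penalized} in the literature-review appendix on the Grenander estimator, with no proof supplied. So there is nothing in the paper to compare your argument against; I can only assess it on its own terms. Your outline is the standard and essentially correct route, and it matches the argument used in the cited source and in \citet{balabdaoui2011grenander}: the chord representation $\ftil_n(0)=\sup_{x>0}F_n(x)/x=\max_i \frac{i/n}{X_{(i)}}$ at the origin, the rescaling $t=nf(0)x$ and local Poissonization of the empirical process near $0$, and the Daniels/ballot identity $\Pbb(\sup_{t>0}N(t)/t\le s)=1-1/s$ for $s\ge 1$ (which for $s<1$ is indeed $0$ since $N(t)/t\to 1$ a.s.). All three ingredients are right.

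The one place where your sketch would not survive as written is the localization of the supremum, which you correctly identify as the crux but then propose to handle with ``a Chebyshev-type bound on $F_n$ away from $0$.'' That tool is too weak: the rescaled window $t>T$ corresponds to $x>T/(nf(0))$, which is still of order $1/n$, and a DKW or Chebyshev bound on $\sup_x|F_n(x)-F(x)|$ divided by such an $x$ blows up like $\sqrt{n}/T$, so it does not show that the supremum over $t>T$ is asymptotically dominated by the supremum over $t\le T$. What is actually needed is a weighted empirical-process or linear-boundary-crossing estimate --- e.g.\ the tail version of Daniels' inequality controlling $\Pbb(\exists\, i\ge k:\ i/n\ge s\,U_{(i)})$ for $s>1$, which decays geometrically in $k$ --- or, more cleanly, the sandwich $\frac{i/n}{X_{(i)}}=\frac{i/n}{U_{(i)}}\cdot\frac{F(X_{(i)})}{X_{(i)}}$ with $U_{(i)}=F(X_{(i)})$, using $F(x)/x\le f(0)$ everywhere (monotonicity) for the upper bound and $F(x)/x\to f(0)$ as $x\downarrow 0$ for the matching lower bound over indices $i\le k_n$ with $k_n\to\infty$ slowly; the uniform quantities $\max_i \frac{i/n}{U_{(i)}}$ then have \emph{exactly} the $1/U$ law for every $n$ by Daniels' theorem, which closes the argument without any tightness bookkeeping. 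With that repair the proof is complete; as it stands, the interchange of supremum and limit is asserted rather than established.
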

\citet{birge1989grenander} proved that Grenander estimator has a cubic root convergence rate in the sense of $L_1$ norm, as in Theorem \ref{thm:l1_grenander}. \citet{van2003smooth} pointed out that the rate of convergence of the Grenander estimator is slower than that of the monotone kernel density estimator when the underlying function is smooth enough.
\begin{theorem}[\citet{birge1989grenander}]\label{thm:l1_grenander}
  
  Suppose $f$ is a decreasing density on $[0, \infty)$ with $0 < f(0) < \infty$. it follows that 
  $$\Ebb_f \int_{0}^1 |\ftil_{n}(x) - f(x)| dx \leq C \cdot n^{-\frac{1}{3}},$$
  where $C$ is a constant that only depends on $f$.
\end{theorem}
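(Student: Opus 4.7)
My plan is to follow the classical bias--variance strategy for shape-constrained density estimation, combining Marshall's lemma with a piecewise-constant approximation of $f$. The overall idea is to transfer the $n^{-1/2}$ rate for the empirical CDF into an $n^{-1/3}$ rate for the density, by balancing a bin-width-controlled bias against the bin-aggregated stochastic fluctuations of the least concave majorant.

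First I would note that since $f$ is nonincreasing, $F$ is concave, so Marshall's lemma yields $\|\tilde{F}_n - F\|_\infty \le \|F_n - F\|_\infty$, where $\tilde{F}_n$ is the least concave majorant of $F_n$; combined with the DKW inequality this gives $\Ebb\|\tilde{F}_n - F\|_\infty \le c_0/\sqrt{n}$. Next I would fix a partition $0 = x_0 < x_1 < \cdots < x_K = 1$ and let $\bar{f}$ be the step function equal to $(F(x_k) - F(x_{k-1}))/(x_k - x_{k-1})$ on $[x_{k-1}, x_k]$, with $\bar{\ftil}_n$ defined analogously from $\tilde{F}_n$, and decompose
\[
\int_0^1 |\ftil_n - f|\,dx \;\le\; \int_0^1 |\ftil_n - \bar{\ftil}_n|\,dx \;+\; \int_0^1 |\bar{\ftil}_n - \bar{f}|\,dx \;+\; \int_0^1 |\bar{f} - f|\,dx.
\]
The bias term uses monotonicity of $f$: on each bin the $L^1$ deviation from the mean is at most $(x_k - x_{k-1})(f(x_{k-1}) - f(x_k))$, summing to at most $\max_k(x_k - x_{k-1}) \cdot f(0)$. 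The stochastic term telescopes to $\sum_k |(\tilde{F}_n - F)(x_k) - (\tilde{F}_n - F)(x_{k-1})|$, which admits the crude bound $2K\,\Ebb\|F_n - F\|_\infty = \Ocal(K/\sqrt{n})$. The oscillation term is bounded by $\max_k(x_k - x_{k-1}) \cdot \Ebb[\ftil_n(0^+) - \ftil_n(1^-)]$ using monotonicity of $\ftil_n$, and Theorem \ref{thm:grenander_inconsistency_0} applied at the endpoints gives $\Ebb\,\ftil_n(0^+) = \Ocal(f(0))$, making this $\Ocal(f(0)/K)$ for an equidistant grid.

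The hard part will be the bin-aggregated stochastic term: naive balancing of bias $\Ocal(1/K)$ against $\Ocal(K/\sqrt{n})$ only delivers the $n^{-1/4}$ rate. To reach $n^{-1/3}$, I would replace the crude union bound on $\sum_k |(\tilde{F}_n - F)(x_k) - (\tilde{F}_n - F)(x_{k-1})|$ by a Cauchy--Schwarz or chaining argument exploiting that the bin increments behave asymptotically like $n^{-1/2}$ times increments of a Brownian bridge with variance $\Ocal((x_k - x_{k-1})/n)$, yielding $\sum_k |\cdot| \lesssim \sqrt{K/n}$ in expectation. The new balance $1/K \asymp \sqrt{K/n}$ then gives $K \asymp n^{1/3}$ and the target rate $n^{-1/3}$. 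Making this sharper aggregation precise for the LCM---which is a nonlinear functional of $F_n$---is the technical heart of Birge's argument; one route is to work with $F_n - F$ directly (so independence and Brownian-bridge approximations are available) and reintroduce the LCM only via Marshall's monotonicity at the end, another is a bracketing-entropy bound on the class of monotone densities dominated by $f_{\max}$, which furnishes the same $\sqrt{K/n}$ stochastic scaling.
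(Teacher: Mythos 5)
Your high-level architecture (localize to bins, telescope the bias to $f(0)/K$, aggregate the stochastic increments by Cauchy--Schwarz to $\sqrt{K/n}$, balance at $K\asymp n^{1/3}$) reproduces the correct arithmetic of Birg\'e's proof, but the step you yourself flag as ``the technical heart'' is genuinely missing, and neither of the routes you sketch closes it. The term $\sum_k |(\tilde F_n - F)(x_k)-(\tilde F_n - F)(x_{k-1})|$ involves the least concave majorant, and Marshall's lemma only controls $\sup_x|\tilde F_n(x) - F(x)|$; it says nothing about bin increments, so ``reintroducing the LCM only via Marshall's monotonicity at the end'' does not yield $\sqrt{K/n}$ --- the LCM can redistribute a single large fluctuation of $F_n$ across many bins, and the only bound Marshall plus DKW delivers for this sum is the crude $2K\,\Ebb\|F_n-F\|_\infty$, which gives back $n^{-1/4}$. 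What Birg\'e actually uses (Lemma 1 of \citet{birge1989grenander}, reproduced in this paper's noisy setting as Lemma \ref{lemma:mean_l1_grenander}) is a different localization device: the $L_1$ risk of the \emph{global} Grenander estimator is dominated in expectation by that of the estimator obtained by computing the LCM \emph{separately on each piece} of the partition. That lemma converts the problem into local ones, each bounded by $bf(J_i) + |f(J_i)-N_i/n| + b\tilde f_n(J_i) \lesssim bf(J_i) + \sqrt{f(J_i)/n}$ via the conditional DKW inequality on $J_i$; the partition is then chosen adaptively as level sets of $f$ (not an equidistant grid), producing the functional $L(\mathcal{P},f,Cn^{-1/2})$ whose infimum is $\Ocal(n^{-1/3})$ for bounded decreasing $f$. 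Without this lemma, or a fully executed substitute such as the bracketing-entropy argument you only name, the proof does not go through.

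Separately, one concrete step is false as written: you bound the oscillation term by $\max_k(x_k-x_{k-1})\cdot \Ebb\bigl[\ftil_n(0^+)-\ftil_n(1^-)\bigr]$ and claim $\Ebb\,\ftil_n(0^+) = \Ocal(f(0))$ from Theorem \ref{thm:grenander_inconsistency_0}. That theorem gives $\ftil_n(0)/f(0)\overset{d}{\rightarrow} 1/U$ with $U$ uniform on $(0,1)$, whose mean is infinite; in fact $\Ebb\,\ftil_n(0)=\infty$ for every $n$, since $\ftil_n(0)\geq F_n(X_{(1)})/X_{(1)} = 1/(nX_{(1)})$ and $\Ebb[1/X_{(1)}]=\infty$ when $f$ is bounded. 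The defect is patchable (handle the first bin separately via $\tilde F_n(1/K)\leq F(1/K)+\|\tilde F_n-F\|_\infty$ and bound $\ftil_n(1/K)\leq f(0)+K\|\tilde F_n-F\|_\infty$ on the remaining bins), but as stated that part of your bound is vacuous.
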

\citet{rao1970estimation} first obtained the limiting distribution of the Grenander estimator at a point. He has proved that $\sqrt[3]{n}(\ftil_n(t_0) - f(t_0))$ converges to the location of the maximum of the process $\{B(x) - x^2, x \in \Rbb\}$, where $f'(t_0) < 0$ and $B(x)$ is the standard two-sided Brownian motion on $\Rbb$ such that $B(0) = 0$; see \citet{rao1970estimation}. \citet{wang1992nonparametric} extends this result to the flat region and a higher order derivative, as stated in Theorem \ref{thm:grenander_local_asymp}. 

\begin{theorem}[\citet{wang1992nonparametric}]\label{thm:grenander_local_asymp}
  
  Suppose $f$ is a decreasing density on $[0, 1]$ and is smooth at $t_0 \in (0, 1)$. It follows that
  \begin{itemize}
  \item [(A)] If $f$ is flat in a neighborhood of $t_0$. Let $[a, b]$ be the flat part containing $t_0$. Then,
    $$\sqrt{n}(\ftil_{n}(t_0) - f(t_0)) \overset{d}{\rightarrow} \hat{S}_{a, b}(t_0),$$
    where $\hat{S}_{a, b}(t)$ is the slope at $F(t)$ of the least concave majorant in $[F(a), F(b)]$ of a standard Brownian Bridge in $[0, 1]$.
  \item [(B)] If $f(t) - f(t_0) \sim f^{(k)}(t_0)(t - t_0)^k$ near $t_0$ for some $k$ and $f^{(k)}(t_0) < 0$. Then,
    $$n^{\frac{k}{2k+1}}[\frac{f^k(t_0) |f^{(k)}(t_0)|}{(k+1)!}]^{-\frac{1}{2k+1}} (\ftil_{n}(t_0) - f(t_0)) \overset{d}{\rightarrow} V_k(0),$$
    where $V_k(t)$ is the slope at $t$ of the least concave majorant of the process $\{B(t) - |t|^{k+1}, t \in (-\infty, \infty)\}$, and $B(t)$ is a standard two-sided Brownian motion on $\Rbb$ with $B(0) = 0$.
  \end{itemize}
\end{theorem}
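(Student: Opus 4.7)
The plan is to combine the classical switching relation for Grenander-type estimators with a strong Brownian bridge coupling of the empirical process. Because $\tilde f_n$ is the left derivative of the least concave majorant (LCM) of $F_n$, a standard isotonic argument gives the master identity
\[
\tilde f_n(t_0) \leq c \iff \operatorname{argmax}_{t\in[0,1]}\bigl[F_n(t) - c\,t\bigr] \leq t_0,
\]
which converts the distribution of $\tilde f_n(t_0)-f(t_0)$ into the distribution of the location of a maximum of a perturbed empirical process. This is what I would use throughout both cases.

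For part (B), I would localize around $t_0$ by substituting $t = t_0 + n^{-1/(2k+1)} u$ and parametrizing the threshold as $c = f(t_0) + n^{-k/(2k+1)} v$. The deterministic part $F(t) - c\,t$ then contributes the Taylor term $\tfrac{f^{(k)}(t_0)}{(k+1)!}(t-t_0)^{k+1} - (c-f(t_0))(t-t_0)$, which after rescaling becomes $n^{-(k+1)/(2k+1)}\bigl[\tfrac{f^{(k)}(t_0)}{(k+1)!}u^{k+1} - v u\bigr]$. By the Koml\'os--Major--Tusn\'ady coupling, $\sqrt n\,(F_n-F)$ is uniformly within $O(\log n/\sqrt n)$ of a Brownian bridge $B_n\circ F$, and on a window of width $n^{-1/(2k+1)}$ this bridge is indistinguishable at the relevant scale from $\sqrt{f(t_0)}\,B(u)$ for a two-sided Brownian motion $B$ with $B(0)=0$. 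Matching exponents on the two contributions shows that $n^{k/(2k+1)}(\tilde f_n(t_0)-f(t_0))$ converges in distribution to $\operatorname{argmax}_u\bigl[\sqrt{f(t_0)}\,B(u) + \tfrac{f^{(k)}(t_0)}{(k+1)!}u^{k+1}\bigr]$, and Brownian scaling rewrites this as $V_k(0)$ with precisely the constant in the statement.

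For part (A), on the flat region $[a,b]$ one has $F(t)=F(a)+f(t_0)(t-a)$, so subtracting $f(t_0)\,t$ removes the deterministic slope entirely. The slope of the LCM of $F_n$ at $t_0$ therefore equals $f(t_0)$ plus the slope at $t_0$ of the LCM of the centered process $F_n - F$ restricted to $[a,b]$. The KMT coupling identifies $\sqrt n(F_n-F)$ uniformly with $B_n\circ F$, and the time-change by $F$ maps $[a,b]$ bijectively to $[F(a),F(b)]$. Since the slope of the LCM on a compact interval, evaluated at an interior point of a flat stretch, is a continuous functional in the uniform topology on the base interval, the continuous mapping theorem yields $\sqrt n(\tilde f_n(t_0)-f(t_0)) \Rightarrow \hat S_{a,b}(t_0)$.

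The main obstacles I anticipate are the two localization arguments. For (B) I would need to show that the global argmax of $F_n(t) - c\,t$ on $[0,1]$ actually falls inside the shrinking window of size $n^{-1/(2k+1)}$ with probability tending to one; this requires a tail bound on $F_n-F$ outside that window combined with the strict curvature of $F(t)-f(t_0)\,t$ coming from the hypothesis $f^{(k)}(t_0)<0$, and is the point where the smoothness at $t_0$ is essential. For (A) the subtlety is that the LCM of $F_n$ on $[0,1]$ at $t_0$ need not coincide with the LCM on $[a,b]$ at $t_0$; I would argue via the strict monotonicity of $f$ just outside $[a,b]$ that with probability tending to one the two LCMs do agree at $t_0$, thereby reducing the problem to $[a,b]$ and allowing the continuous mapping argument to go through.
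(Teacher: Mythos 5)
Your proposal is correct and follows essentially the same route the paper takes for this result and its noisy analogue (Theorem \ref{thm:noisy_grenander_local_asymp}): the switching relation between the Grenander slope and the argmax of $F_n(t)-ct$, the KMT Brownian-bridge coupling, and localization to the flat interval in case (A) or to a window of width $n^{-1/(2k+1)}$ in case (B). The only difference is one of detail, not of method --- you spell out the rescaling and localization steps for part (B) that the paper dispatches with ``follows in a similar manner.''
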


\section{Proof of Proposition \ref{prop:CDF_deviation_bounds}}\label{appendix:bmu_proof_CDF_deviation_bounds}
\begin{proof}
  Suppose
  $$f(x) = 1.8 \cdot \Ibb(x \in [0, 1/2]) + 0.2 \cdot \Ibb(x \in (1/2, 1]).$$
  Note that
  \begin{eqnarray}
    && \Pbb(s \leq k/m) - \Pbb(\hat{s} \leq k/m)\nonumber\\
    &=& \int_0^{k/m} f(u) du - \sum_{r \leq k} \int_0^1 {m \choose r} u^r (1 - u)^{m - r} f(u)du \nonumber \\
    &=& \int_0^1 \left (\Ibb[u \leq k/m] - \sum_{r \leq k}{m \choose r} u^r (1 - u)^{m - r} \right ) f(u)du \label{eq:CDF_diff}
  \end{eqnarray}
  Decompose $f(x) = f_1(x) + f_2(x)$, where $f_1(x) = 1.6 \cdot \Ibb(x \in [0, 1/2])$, $f_2(x) \equiv 0.2$. The previous example shows that the difference for the $f_2$ part in Equation \eqref{eq:CDF_diff} is at most $\frac{0.2}{m+1}.$ So we only need to take care of the $f_1$ part in Equation \eqref{eq:CDF_diff}, i.e.,
  $$1.6 \times \int_0^{1/2} \left (\Ibb[u \leq k/m] - \sum_{r \leq k}{m \choose r} u^r (1 - u)^{m - r} \right ) du = 1.6 \times \int_0^{1/2} B_k(m, u) du,$$  
  provided $k/m \leq 1/2$. Here $B_k(m, x) = \sum_{r = k}^m {m \choose k} x^r (1 - x)^{m - r}$. Define
  $$A_k(m, x) = [{m \choose k} x^k (1 - x)^{m - k + 1}] \cdot [(k+1)/(k+1 - (m + 1)x)].$$
  \citet{bahadur1960some}[Theorem 1] indicates that $1 \leq A_{k}(m, x)/B_k(m, x) \leq 1 + z^{-2}$, where $z = (k - mx)/(mx(1-x))^{\frac{1}{2}}$. Let $x = 1/2 - \epsilon$. By Stirling's formula and Taylor expansion on $\log(1 + \epsilon)$, we can obtain 
  $$A_{m/2}(m, 1/2 - \epsilon) \sim \frac{1}{\sqrt{2\pi m}}e^{-2m\epsilon^2},$$
  and we can rewrite
  $$z = \frac{\sqrt{m} \epsilon}{\sqrt{1/4 - \epsilon^2}}.$$
  So we have
  \begin{eqnarray*}
    B_{m/2}(m, 1/2 - \epsilon) \geq \frac{A_{m/2}(m, 1/2 - \epsilon)}{1 + z^{-2}} \sim  \frac{2\sqrt{m} \epsilon e^{-2m\epsilon^2}}{4(m - 1)\epsilon^2 + 1} \geq \frac{2\sqrt{m} \epsilon e^{-2m\epsilon^2}}{4m\epsilon^2 + 1}.
  \end{eqnarray*}
  Then it follows that
  $$\int_0^{1/2} B_{m/2}(m, 1/2 - \epsilon) d\epsilon \geq \int_0^{1/\sqrt{m}} \frac{2\sqrt{m} \epsilon e^{-2m\epsilon^2}}{4m\epsilon^2 + 1} d\epsilon= \frac{1}{\sqrt{m}} \int_0^1 \frac{2e^{-2u^2}}{4u^2 + 1} du \approx \frac{0.81}{\sqrt{m}}.$$
  Together, we have $\Pbb(s \leq 1/2) - \Pbb(\hat{s} \leq 1/2) \geq \frac{C}{\sqrt{m}} + \varepsilon \cdot m^{-1}$, where $\varepsilon$ is a residual term with $|\varepsilon| \leq K$, $C$ and $K$ are positive constants.
\end{proof}

\section{Proof of Proposition \ref{prop:CDF_deviation_smooth}}\label{appendix:bmu_proof_CDF_deviation_smooth}
\begin{proof}  
  The proof of this theorem follows from that of Theorem \ref{thm:noisy_hist_density_upper_risk}, so we defer most details to the proof of the latter.

  From Equation \eqref{eq:hist_bias_decomposition}, we know that
    \begin{eqnarray*}
   && \Pbb (\hat{s}_1 \in B(x)) - \Pbb(s_1 \in B(x)) \nonumber\\
                                          &=& \sum_{d = 1}^D [\Pbb(\hat{s}_1 \in B(x), s_1 \in B(x + d\cdot h)) - \Pbb(s_1 \in B(x), \hat{s}_1 \in B(x + d\cdot h))]\nonumber \\
   && + \sum_{d = 1}^D [\Pbb(\hat{s}_1 \in B(x), s_1 \in B(x - d\cdot h)) - \Pbb(s_1 \in B(x), \hat{s}_1 \in B(x - d\cdot h))]\nonumber\\
                                          && + \Pbb(\hat{s}_1 \in B(x), s_1 \in B(x + d \cdot h): d \geq D + 1\text{ or } d \leq -D-1)\nonumber\\
                                          && + \Pbb(s_1 \in B(x), \hat{s}_1 \in B(x + d \cdot h): d \geq D + 1\text{ or } d \leq -D-1),\nonumber
    \end{eqnarray*}
    where $B(x)$ is still defined as the bin that contains $x$ among $[0,h], (h, 2h], \ldots, (1-h, 1]$. Note that
    \begin{eqnarray}
      && \Pbb(\hat{s}_1 \leq x) - \Pbb(s_1 \leq x) \nonumber\\
      &=& \Pbb(\hat{s}_1 \leq x, s_1 > x) - \Pbb(\hat{s}_1 >x, s_1 \leq x)\nonumber\\
      &=& \Pbb(\hat{s}_1 \in B(x), s_1 \in B(x), \hat{s}_1 \leq x, s_1 > x) - \Pbb(\hat{s}_1 \in B(x), s_1 \in B(x), \hat{s}_1 > x, s_1 \leq x) \nonumber\\
      &&+\sum_{{d = 1, 2, \ldots} \atop {d'=1,2,\ldots}} [\Pbb(\hat{s}_1 \in B(x - dh), s_1 \in B(x + d'h)) - \Pbb(\hat{s}_1 \in B(x + dh), s_1 \in B(x - d'h))].\nonumber\\
      &=& \Pbb(\hat{s}_1 \in B(x), s_1 \in B(x), \hat{s}_1 \leq x, s_1 > x) - \Pbb(\hat{s}_1 \in B(x), s_1 \in B(x), \hat{s}_1 > x, s_1 \leq x) \nonumber\\
      &&+(\sum_{|d - d'| \leq D} + \sum_{|d-d'| > D})[\Pbb(\hat{s}_1 \in B(x - dh), s_1 \in B(x + d'h))\nonumber\\
      && ~~~~~~~~~~~~~~~~~~~~~~~~~~~- \Pbb(\hat{s}_1 \in B(x + dh), s_1 \in B(x - d'h))].\nonumber
    \end{eqnarray}
    Following the proof of bounding $\Pbb(\hat{s}_1 \in B(x), s_1 \in B(x + d \cdot h): d \geq D + 1\text{ or } d \leq -D-1)$ in Theorem \ref{thm:noisy_hist_density_upper_risk}, we can bound
    \begin{eqnarray*}
      &&\vert \sum_{|d-d'| > D}[\Pbb(\hat{s}_1 \in B(x - dh), s_1 \in B(x + d'h)) - \Pbb(\hat{s}_1 \in B(x + dh), s_1 \in B(x - d'h)) \vert \\
     &\leq& 2f_{\max} \cdot \exp(-2mD^2h^2) = \frac{2f_{\max}}{m},
      \end{eqnarray*}
     where $D = \lceil \sqrt{\frac{\log m}{2mh^2}} \rceil$.\\
     Following the proof of bounding $\vert \sum_{d = 1}^D [\Pbb(\hat{s}_1 \in B(x), s_1 \in B(x + d\cdot h)) - \Pbb(s_1 \in B(x), \hat{s}_1 \in B(x + d\cdot h))] \vert $ in Theorem \ref{thm:noisy_hist_density_upper_risk}, we can bound
     \begin{eqnarray*}
       &&|\Pbb(\hat{s}_1 \in B(x), s_1 \in B(x), \hat{s}_1 \leq x, s_1 > x) - \Pbb(\hat{s}_1 \in B(x), s_1 \in B(x), \hat{s}_1 > x, s_1 \leq x)|\\
       &\leq& K_1 \cdot (\frac{f_{\max}\cdot h}{\sqrt{m}} + \frac{f_{\max}}{m} + \frac{h\cdot f'_{\max}}{\sqrt{m}}) + \vert \Ecal\vert,
       \end{eqnarray*}
    where $\vert \Ecal \vert \leq K_2 \cdot (\frac{h\cdot f_{\max}}{\sqrt{m}} + h^2 f_{\max} + \frac{f_{\max}}{m})$ for some constant $K_2$ that only depends on $a$.\\
    To bound
    $$\vert \sum_{|d-d'| \leq D}[\Pbb(\hat{s}_1 \in B(x - dh), s_1 \in B(x + d'h)) - \Pbb(\hat{s}_1 \in B(x + dh), s_1 \in B(x - d'h)) \vert ,$$
    we still follow the proof of bounding $\vert \sum_{d = 1}^D [\Pbb(\hat{s}_1 \in B(x), s_1 \in B(x + d\cdot h)) - \Pbb(s_1 \in B(x), \hat{s}_1 \in B(x + d\cdot h))] \vert $ in Theorem \ref{thm:noisy_hist_density_upper_risk}. The problem is that there are $D^2$ terms instead $D$ terms. By carefully arranging the $D^2$ terms, and using the fact that $\sum_{d =1}^{\infty} \exp(-C_0d^2)< \infty$ for any positive $C_0$, it can be easily seen that we still get the same bound as above. In total, we have
    $$\vert \Pbb(\hat{s}_1 \leq x) - \Pbb(s_1 \leq x)\vert  \leq K_3 \cdot (f_{\max} + f'_{\max}) \cdot (\frac{h}{\sqrt{m}} + h^2 + \frac{1}{m}),$$
    where $K_3$ is some constant that only hinges on $a$. To minimize the upper bound, take $h = \frac{1}{\sqrt{m}}$. Thus, it follows that
    $$\sup_{x \in [a, 1-a]} |F^{(m)}(x) - F(x)| \leq \frac{C}{m},$$
    where $C$ is some constant that only depends on $f$ and $a$.
\end{proof}

\section{Proof of Theorem \ref{thm:noisy_hist_density_upper_risk}}\label{appendix:bmu_proof_noisy_hist_density_upper_risk}

Theorem \ref{thm:noisy_hist_density_upper_risk} relies on the local limit theorem of binomial distribution, as follows.

\begin{lemma}
  Suppose $X \sim Binom(m, s)$, with $0 < s < 1$. For any $a < b$ such that $\Delta := \max\{|\frac{a}{m} - s|, |\frac{b}{m} - s|\} \rightarrow 0$ as $m \rightarrow \infty$, then 
  $$\Pbb(a \leq X \leq b) = \left [\int_{\frac{a - ms}{\sqrt{ms(1-s)}}}^{\frac{b - ms}{\sqrt{ms(1-s)}}} \frac{1}{\sqrt{2\pi}} \exp(-t^2/2) dt \right ](1 + \varepsilon_1) + \varepsilon_2,$$
  where $\varepsilon_1$ is the error from the Gaussian approximation to Binomial point mass function, and $\varepsilon_2$ is the error from the summation series approximating the integral. Specifically, we have
  $$\vert \varepsilon_1 \vert \leq K \cdot \Delta,$$
  $$\vert \varepsilon_2\vert \leq C \cdot [\frac{\exp(-m\delta^2)}{\sqrt{m}} + \frac{b-a}{\sqrt{m}} \cdot \Delta  \cdot \exp(-m\delta^2)],$$
  where $\delta:= \min_{x \in [\frac{a}{m}, \frac{b}{m}]} |s - x|$, $K$ and $C$ are two positive constants that depend on $s$ via $\frac{1}{s(1-s)}$.
  \label{lemma:DeMoivreLaplace}
\end{lemma}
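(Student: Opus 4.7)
The plan is to combine a pointwise local (De Moivre--Laplace) approximation of each binomial pmf with a Riemann-sum-to-integral comparison. Writing $p = k/m$, Stirling's formula gives
\[
\Pbb(X=k) = \frac{1}{\sqrt{2\pi m p(1-p)}}\exp\bigl(-m D(p\|s)\bigr)\bigl(1 + O(1/m)\bigr),
\]
where $D(p\|s) = p\log(p/s) + (1-p)\log((1-p)/(1-s))$. Taylor-expanding $D(p\|s) = \frac{(p-s)^2}{2s(1-s)} + O(|p-s|^3)$ about $p=s$ (most cleanly via the integral remainder $D(p\|s)=\int_s^p \frac{p-u}{u(1-u)}du$), together with $1/\sqrt{p(1-p)} = (1/\sqrt{s(1-s)})(1 + O(|p-s|))$, gives for each $k\in[a,b]$
\[
\Pbb(X=k) = \frac{1}{\sqrt{2\pi m s(1-s)}}\exp\!\left(-\frac{(k-ms)^2}{2ms(1-s)}\right)(1 + \eta_k),
\]
with $|\eta_k| \le K_1(|p-s| + m|p-s|^3 + 1/m)$ and $K_1$ depending on $s$ only through $1/(s(1-s))$.

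Since $|p-s|\le \Delta$ uniformly over $k\in[a,b]$ in the regime $\Delta\to 0$, all three terms collapse into $K\Delta$. Summing over $k$ and factoring out this uniform relative error produces
\[
\Pbb(a\le X\le b) = (1+\tilde\varepsilon_1)\sum_{k=a}^b h\,\phi(z_k),
\]
with $|\tilde\varepsilon_1|\le K\Delta$, where $h = 1/\sqrt{ms(1-s)}$, $z_k = (k-ms)h$, and $\phi(t) = \frac{1}{\sqrt{2\pi}}e^{-t^2/2}$. This yields the first factor of the claim; the remaining work is to compare the Riemann sum to $\int_{z_a}^{z_b}\phi(t)dt$.

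I interpret $\sum_{k=a}^b h\phi(z_k)$ as the midpoint rule on $[z_a - h/2, z_b + h/2]$ and split the resulting error into two pieces. (i) The shift of integration limits produces two overhang pieces bounded by $\tfrac{h}{2}\phi(z_a)$ and $\tfrac{h}{2}\phi(z_b)$; since $\min(z_a^2, z_b^2) \ge m\delta^2/(s(1-s)) \ge 2m\delta^2$, these contribute $O(e^{-m\delta^2}/\sqrt{m})$, matching the first term of $\varepsilon_2$. (ii) The midpoint quadrature error is bounded by $\tfrac{(b-a)h^2}{24}\max_{[z_a,z_b]}|\phi''|$; using $|\phi''(t)|=|t^2-1|\phi(t)$ and the unimodality of $\phi$ on each half line, $\max_{[z_a,z_b]}|\phi''| \lesssim (m\Delta^2 + 1)\,e^{-m\delta^2/s(1-s)}$, which yields $O((b-a)\Delta e^{-m\delta^2}/\sqrt{m})$, matching the second term of $\varepsilon_2$. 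Setting $\varepsilon_1 = \tilde\varepsilon_1$ and absorbing the leftover factor $(1+\tilde\varepsilon_1)$ into the constant $C$ completes the decomposition.

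The main obstacles I anticipate are: (a) controlling the cubic remainder $m|p-s|^3$ in $\eta_k$ uniformly over $[a,b]$, which relies on the integral-remainder form of $D(p\|s)$ together with the assumption that $s$ stays bounded away from $0$ and $1$ (so that $1/(s(1-s))$ and its analogue in the remainder are harmless); and (b) in the Riemann-sum comparison, simultaneously capturing the polynomial factor $(b-a)$ and the Gaussian-tail decay $e^{-m\delta^2}$ when $s$ lies outside $[a/m, b/m]$, which requires exploiting the monotonicity/unimodality of $\phi$ on each half line rather than a naive global bound on $|\phi''|$.
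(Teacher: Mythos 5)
The paper does not actually prove this lemma: its entire ``proof'' is the remark that the details ``can be easily obtained by adapting that of \citet{dunbar2011moivre}'', i.e., the classical De Moivre--Laplace argument. Your route --- Stirling's formula for the pmf, Taylor expansion of $D(p\|s)$ about $p=s$, factoring out a uniform relative error, then a midpoint-rule comparison of the Riemann sum with the Gaussian integral --- is exactly that classical argument, so in spirit you are doing what the paper intends. The Riemann-sum half (overhang pieces of size $\tfrac{h}{2}\phi(z_a)$, $\tfrac{h}{2}\phi(z_b)$ plus a quadrature error controlled by $\phi''$) is sound and does reproduce both terms of $\varepsilon_2$, using $|z_a|,|z_b|\ge \sqrt{m}\,\delta/\sqrt{s(1-s)}\ge 2\sqrt{m}\,\delta$ and, when $b>a$, the elementary fact $\Delta\ge (b-a)/(2m)$ to absorb the stray $1/m$ factors.

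The gap is the step ``all three terms collapse into $K\Delta$.'' From $|\eta_k|\le K_1(|p-s|+m|p-s|^3+1/m)$ you need $m\Delta^3\le K\Delta$, i.e., $m\Delta^2=O(1)$, and this does not follow from the hypothesis $\Delta\to 0$. Take $\Delta=m^{-1/4}$: then $m\Delta^3=m^{1/4}\to\infty$, and since the cubic coefficient $\frac{2s-1}{6s^2(1-s)^2}$ in the expansion of $D(p\|s)$ is nonzero for $s\ne 1/2$, the relative error of the Gaussian approximation to the pmf at the endpoint $k=a$ or $k=b$ is of order $e^{\Theta(m\Delta^3)}-1$, not $O(\Delta)$. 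So the uniform factorization $(1+\tilde\varepsilon_1)\sum_k h\,\phi(z_k)$ with $|\tilde\varepsilon_1|\le K\Delta$ is not available as written. To close this you must either (i) add the hypothesis $m\Delta^2=O(1)$, or equivalently state the bound as $|\varepsilon_1|\le K(\Delta+m\Delta^3+1/m)$ --- which is what the local limit theorem actually delivers and which suffices for the paper's application, where $\Delta=O(\sqrt{\log m/m})$ --- or (ii) abandon the uniform relative error and bound the absolute error $\sum_k|\Pbb(X=k)-h\,\phi(z_k)|$ directly, using the fact that the indices where the relative error is large carry exponentially small weight on both sides. As written, your argument establishes the claim only under an unstated strengthening of the hypothesis; to be fair, the lemma itself is stated too strongly for the same reason, and the paper supplies no argument against which to check this.
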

The detailed proof of Lemma \ref{lemma:DeMoivreLaplace} can be easily obtained by adapting that of \citet{dunbar2011moivre}. Now we prove Theorem \ref{thm:noisy_hist_density_upper_risk}:\\
\begin{proof}
  Denote by $R_x = \Ebb (f(x) - \hat{f}_{n,m}(x))^2$ the risk at a point $x$. We decompose the risk into the variance and the bias square as follows.
  \begin{equation}
    R_x = \Var(\hat{f}_{n,m}(x)) + (\Ebb \hat{f}_{n,m}(x) - f(x))^2.\label{eq:hist_bias_var}
  \end{equation}
  For the variance part, denote by $p_l = \Pbb(m\cdot \hat{s}_1 \in B_l)$. We have
  \begin{eqnarray}
    \int_a^{1-a} \Var(\hat{f}_{n,m}(x))dx = \sum_{aL \leq l \leq (1 - a)L} \int_{B_l} \frac{p_l(1-p_l)}{nh^2} \leq K_1 \cdot \frac{1}{nh},\label{eq:hist_var}
  \end{eqnarray}
  where $K_1$ is a positive constant that relies on $a > 0$.
  For the bias part, since
  \begin{equation}
    (\Ebb \hat{f}_{n,m}(x) - f(x))^2 \leq 2(\Ebb \hat{f}_{n,m}(x) - \Ebb \hat{f}_n(x))^2 + 2 (\Ebb \hat{f}_n(x) - f(x))^2, \label{eq:hist_noiseless_noisy_bias}
  \end{equation}
    and it is well known that
\begin{equation}
  \int_0^1 (\Ebb\hat{f}_n(x) - f(x))^2 dx \leq K_2 \cdot h^2,\label{eq:hist_noiseless_bias}
\end{equation}
where $K_2$ is positive constant that only relies on $f$. We only need to consider $\Ebb \hat{f}_{n,m}(x) - \Ebb \hat{f}_n(x)$. By definition,
  \begin{eqnarray}
    && \Ebb \hat{f}_{n,m}(x) - \Ebb \hat{f}_n(x)\nonumber\\
    &=& \frac{1}{h}[\Pbb (\hat{s}_1 \in B(x)) - \Pbb(s_1 \in B(x))]\nonumber\\
                                          &=& \frac{1}{h}[\Pbb(\hat{s}_1 \in B(x), s_1 \not \in B(x)) - \Pbb(s_1 \in B(x), \hat{s}_1 \not \in B(x))]\nonumber\\
                                          &=& \frac{1}{h}\{\sum_{d = 1}^D [\Pbb(\hat{s}_1 \in B(x), s_1 \in B(x + d\cdot h)) - \Pbb(s_1 \in B(x), \hat{s}_1 \in B(x + d\cdot h))]\nonumber \\
   && + \sum_{d = 1}^D [\Pbb(\hat{s}_1 \in B(x), s_1 \in B(x - d\cdot h)) - \Pbb(s_1 \in B(x), \hat{s}_1 \in B(x - d\cdot h))]\nonumber\\
                                          && + \Pbb(\hat{s}_1 \in B(x), s_1 \in B(x + d \cdot h): d \geq D + 1\text{ or } d \leq -D-1)\nonumber\\
                                          && + \Pbb(s_1 \in B(x), \hat{s}_1 \in B(x + d \cdot h): d \geq D + 1\text{ or } d \leq -D-1)\}.\label{eq:hist_bias_decomposition}
  \end{eqnarray}
  By McDiarmid’s inequality, it follows that
  \begin{eqnarray*}
    &&\Pbb(s_1 \in B(x), \hat{s}_1 \in B(x + d \cdot h): d \geq D + 1\text{ or } d \leq -D-1)\nonumber\\
    &=&\int_{s_1 \in B(x)} [\Ebb \Ibb(\hat{s}_1 \in B(x + d \cdot h): d \geq D + 1\text{ or } d \leq -D-1 |s_1)] f(s_1)ds_1\nonumber\\                                                                                             &\leq& \int_{s_1 \in B(x)} \Pbb(|\hat{s}_1 - s_1| \geq D\cdot h | s_1)f(s_1)ds_1\nonumber\\                                                                                        &\leq& \int_{s_1 \in B(x)} \exp(-2mD^2h^2) f(s_1) ds_1\nonumber\\
                                                                                             &\leq& f_{\max}\cdot h \cdot \exp(-2mD^2h^2),
  \end{eqnarray*}
  where $f_{\max}$ is the maximal value of $f$ in $[0, 1]$. Take $D = \lceil \sqrt{\frac{\log m}{4mh^2}} \rceil$, i.e., the least integer that is not smaller than $\sqrt{\frac{\log m}{4mh^2}}$. Then we have
  \begin{equation}
    \Pbb(s_1 \in B(x), \hat{s}_1 \in B(x + d \cdot h): d \geq D + 1\text{ or } d \leq -D-1) \leq K_3 \cdot \frac{f_{\max}\cdot h}{\sqrt{m}},\label{eq:hist_remote_rate}
  \end{equation}
   where $K_3$ is a universal positive constant. Similarly, we can show that $\Pbb(\hat{s}_1 \in B(x), s_1 \in B(x + d \cdot h): d \geq D + 1\text{ or } d \leq -D-1) \leq K_4 \cdot \frac{f_{\max}}{m}$ for some positive constant $K_4$. Next, we investigate $\Pbb(\hat{s}_1 \in B(x), s_1 \in B(x + d\cdot h)) - \Pbb(s_1 \in B(x), \hat{s}_1 \in B(x + d\cdot h))$. Denote by $l(x)$ and $r(x)$ the left boundary and the right boundary of the interval $B(x)$. By Lemma \ref{lemma:DeMoivreLaplace}, it follows that
  \begin{eqnarray*}
    && \Pbb(s_1 \in B(x), \hat{s}_1 \in B(x + d\cdot h))\\
    &=& \int_{s_1 \in B(x)} \left [\sum_{k: \frac{k}{m} \in B(x + d\cdot h)} {m \choose k} (s_1)^k(1-s_1)^{m-k} \right ] f(s_1)ds_1\\
    &= & \int_{s \in B(x)} \left \{ \underbrace{\left [ \int_{t \in B(x+d\cdot h)}\frac{\sqrt{m}}{\sqrt{2\pi s(1-s)}}\exp(-\frac{m(t - s)^2}{2s(1-s)})dt\right ] \cdot (1+ \varepsilon_5 \cdot (d+1)h)}_{\text{(I)}} \right .\\
    && \left . ~~~~~~~~~~~ + \underbrace{\varepsilon_6 \cdot \frac{\exp(-m[r(x)+(d-1)h - s]^2) }{\sqrt{m}}}_{\text{(II)}} \right .\\
    && \left . ~~~~~~~~~~~ + \underbrace{\varepsilon_7 \cdot \sqrt{m}h\cdot [r(x) + dh - s] \cdot \exp (-m[r(x) + (d-1)h - s]^2 }_{\text{(III)}} \right \}  f(s)ds,
  \end{eqnarray*}
  where $\vert \varepsilon_5 \vert \leq K_5$, $\vert \varepsilon_6 \vert \leq K_6$, $\vert \varepsilon_7 \vert \leq K_7$ and $K_5, K_6, K_7$ are positive constants that only depend on $a$.
  We consider the summation of the $D$ error terms in $\Pbb(s_1 \in B(x), \hat{s}_1 \in B(x + d\cdot h))$, $d = 1, \ldots, D$:
  \begin{enumerate}
  \item [(I)]
    \begin{eqnarray*}
      &&  \sum_{d = 1}^D \int_{s \in B(x)} \left [ \int_{t \in B(x+d\cdot h)}\frac{\sqrt{m}}{\sqrt{2\pi s(1-s)}}e{-\frac{m(t - s)^2}{2s(1-s)}}dt\right ]  K_5 (d+1)hf(s)ds\\
      &\leq & K_8h^2f_{\max} + \sum_{d = 3}^D K_9 \int_{s \in B(x)} \left [ \int_{t \in B(x+d\cdot h)} \sqrt{m}(d+1)h \cdot e^{-\frac{m(t - s)^2}{2s(1-s)}}dt\right ] f(s)ds\\
      &\leq& K_8h^2f_{\max} + \sum_{d = 3}^D K_9 \int_{s \in B(x)} \left [ \int_{t \in B(x+d\cdot h)} \sqrt{m}(d+1)h \cdot e^{-m(t - s)^2}dt\right ]  f(s)ds\\
      &\leq& K_8h^2f_{\max} +  \sum_{d = 3}^D K_9 \int_{s \in B(x)} \left [ \int_{t \in B(x+d\cdot h)} 2\sqrt{m}(d-1)h \cdot e^{-m(t - s)^2}dt\right ] f(s)ds\\
      &\leq& K_8h^2f_{\max} + \sum_{d = 3}^D K_9 \int_{s \in B(x)} \left [ \int_{t \in B(x+d\cdot h)} 2\sqrt{m}(t-s) \cdot e^{-m(t - s)^2}dt\right ] f(s)ds\\
      &=& K_8h^2f_{\max} \\
      && + \sum_{d = 3}^D \frac{2K_9}{\sqrt{m}}\int_{s \in B(x)} \left [ e^{-m(r(x)  - s + (d-1)h)^2} - e^{-m(r(x) - s + dh)^2} \right ]f(s)ds\\
      &=& K_8h^2f_{\max} + \tilde{K}_9 \frac{h f_{\max}}{\sqrt{m}},
    \end{eqnarray*}
    where $K_8, K_9, \tilde{K}_9$ are positive constants that only depend on $a$.
  \item [(II)]
  \begin{eqnarray*}
    && \sum_{d = 1}^D \int_{s\in B(x)} K_6 \cdot \frac{\exp(-m[r(x)+(d-1)h - s]^2) }{\sqrt{m}} \cdot  f(s) ds\\
    &\leq& K_6 \cdot \frac{f_{\max}}{\sqrt{m}} \cdot \int_{0}^{(D-1)h} \exp(-mt^2)dt \\
    &\leq& \tilde{K}_6 \cdot \frac{f_{\max}}{m},
  \end{eqnarray*}
  where $\tilde{K}_6$ is a positive constant that only depends on $a$.
\item [(III)]
    \begin{eqnarray*}
      && \sum_{d = 1}^D \int_{s\in B(x)} K_7 \cdot \sqrt{m}h\cdot [r(x) + dh - s] \cdot e^{-m[r(x) + (d-1)h - s]^2} \cdot f(s) ds\\
      &=& \sum_{d = 1}^D \int_{s\in B(x)} K_7 \cdot \sqrt{m}h\cdot [r(x) + (d-1)h - s + h] \cdot e^{-m[r(x) + (d-1)h - s]^2} \cdot f(s) ds\\
    &\leq& \tilde{K}_7 \cdot \left \{h^2 \cdot f_{\max} + \frac{h\cdot f_{\max}}{\sqrt{m}} \cdot \sum_{d = 1}^D [e^{-m (d-1)^2h^2} - e^{-md^2h^2}] \right \}\\
    &=& \tilde{K}_7 \cdot \left \{ h^2 \cdot f_{\max} + \frac{h\cdot f_{\max}}{\sqrt{m}} \cdot  [1 - e^{-mD^2h^2}]\right \}\\
    &\leq& \tilde{K}_7 \cdot (h^2 \cdot f_{\max} + \frac{h\cdot f_{\max}}{\sqrt{m}}),       
    \end{eqnarray*}
    where $\tilde{K}_7 > 0$ only depends on $a$.
\end{enumerate}
We call the summation of the $D$ error terms by $\Ecal$, which satisfies $\vert \Ecal\vert \leq  K_{10} \cdot (\frac{h\cdot f_{\max}}{\sqrt{m}} + h^2 f_{\max} + \frac{f_{\max}}{m})$, where $K_{10} > 0$ only depends on $a$. Similarly, for the summation of the $D$ error terms $\tilde{\Ecal}$ in $\Pbb(\hat{s}_1 \in B(x), s_1 \in B(x + d\cdot h))$, $d = 1, \ldots,D$, we have the same rate. Now we consider
\begin{eqnarray}
  &&\vert \sum_{d = 1}^D \Pbb(\hat{s}_1 \in B(x), s_1 \in B(x + d\cdot h)) - \Pbb(s_1 \in B(x), \hat{s}_1 \in B(x + d\cdot h)) \vert \nonumber\\
  &\overset{(i)}{=}& \sum_{d = 1}^D \int_{s \in B(x)} \int_{t \in B(x+d\cdot h)} \left \vert \frac{\sqrt{m}}{\sqrt{2\pi s(1-s)}}e^{-\frac{m(t - s)^2}{2s(1-s)}}f(s) \right .\nonumber\\
  && ~~~~~~~~~~~~~~~~~~~~~~~~~~~~~ \left .- \frac{\sqrt{m}}{\sqrt{2\pi t(1-t)}}e^{-\frac{m(s - t)^2}{2t(1-t)}}f(t)\right \vert dtds + \vert \Ecal \vert  + \vert \tilde{\Ecal} \vert \nonumber\\
  &\overset{(ii)}{\leq}& \sum_{d = 1}^D \int_{s \in B(x)} \int_{t \in B(x+d\cdot h)} \left \vert \frac{|2\tilde{s}-1|}{(1-\tilde{s})^{\frac{5}{2}}\tilde{s}^{\frac{5}{2}}} \cdot \frac{m^{\frac{3}{2}}(t-s)^{3}}{2\sqrt{2\pi}} e^{-\frac{m(t-s)^2}{2\tilde{s}(1-\tilde{s})}}f(\tilde{s})\right \vert dtds\nonumber\\
  && + \sum_{d = 1}^D \int_{s \in B(x)} \int_{t \in B(x+d\cdot h)} \left \vert \frac{|2\tilde{s}-1|}{(1-\tilde{s})^{\frac{3}{2}}\tilde{s}^{\frac{3}{2}}} \cdot \frac{m^{\frac{1}{2}}(t-s)}{2\sqrt{2\pi}} e^{-\frac{m(t-s)^2}{2\tilde{s}(1-\tilde{s})}}f(\tilde{s}) \right \vert dtds\nonumber\\
    && + \sum_{d = 1}^D \int_{s \in B(x)} \int_{t \in B(x+d\cdot h)} \left \vert \frac{1}{(1-\tilde{s})^{\frac{1}{2}}\tilde{s}^{\frac{1}{2}}} \cdot \frac{m^{\frac{1}{2}}(t-s)}{\sqrt{2\pi}} e^{-\frac{m(t-s)^2}{2\tilde{s}(1-\tilde{s})}}f'(\tilde{s}) \right \vert dtds + \vert \Ecal \vert + \vert \tilde{\Ecal}\vert \nonumber\\
   &\overset{(iii)}{\leq}& \sum_{d = 1}^D \int_{s \in B(x)} \int_{t \in B(x+d\cdot h)} \left \vert  K_{11} \cdot f_{\max} \cdot m^{\frac{3}{2}}(t-s)^{3} e^{-m(t-s)^2}\right \vert dtds\nonumber\\
  && + \sum_{d = 1}^D \int_{s \in B(x)} \int_{t \in B(x+d\cdot h)} \left \vert K_{12} \cdot (f_{\max} + f'_{\max}) \cdot m^{\frac{1}{2}}(t-s) e^{-m(t-s)^2}\right \vert dtds + \vert \Ecal \vert + \vert \tilde{\Ecal}\vert \nonumber\\
  &\overset{(iv)}{=}& K_{13} \cdot (\frac{f_{\max}\cdot h}{\sqrt{m}} + \frac{f_{\max}}{m} + \frac{h\cdot f'_{\max}}{\sqrt{m}}) + \vert \Ecal \vert + \vert \tilde{\Ecal}\vert , \label{eq:hist_right_nbh_rate}
\end{eqnarray}
where $K_{11}, K_{12}, K_{13}$ are positive constants that only depend on $a$. Equation (i) uses the Fubini's theorem; Inequality (ii) applies the mean value theorem to the function $g(s) = \frac{1}{s(1-s)}\exp(-\frac{A}{2s(1-s)})f(s)$, where $A$ is a constant; Inequality (iii) holds since $0< a \leq s \leq \tilde{s} \leq t \leq 1-a < 1$, thus $\frac{1}{(1 - \tilde{s})\tilde{s}}$ is bounded, and $\exp(-\frac{m(t-s)^2}{2\tilde{s}(1- \tilde{s})})$ attains the maximal when $\tilde{s} = 1/2$; Inequality (iv) is obtained via integral by part. Similarly, $\sum_{d = 1}^D [\Pbb(\hat{s}_1 \in B(x), s_1 \in B(x - d\cdot h)) - \Pbb(s_1 \in B(x), \hat{s}_1 \in B(x - d\cdot h))]$ has the same rate as \eqref{eq:hist_right_nbh_rate}. Putting \eqref{eq:hist_bias_decomposition}\eqref{eq:hist_remote_rate}\eqref{eq:hist_right_nbh_rate} together, we have
\begin{equation}
  |\Ebb \hat{f}_{n,m}(x) - \Ebb \hat{f}_n(x)| \leq K_{14} \cdot (f_{\max} + f'_{\max}) \cdot (\frac{1}{\sqrt{m}} + h + \frac{1}{mh}), \label{eq:hist_bias_rate}
\end{equation}
where $K_{14}$ is some constant that only depends on $a$. Combining Inequalities \eqref{eq:hist_bias_var}\eqref{eq:hist_var} \eqref{eq:hist_noiseless_noisy_bias}\eqref{eq:hist_noiseless_bias} and Inequality \eqref{eq:hist_bias_rate}, it follows that
$$R(a, 1 - a) \leq C_1 \cdot (h^2 + \frac{1}{m} + \frac{1}{m^2h^2} + \frac{1}{nh}),$$
The minimal risk is no larger than $C_4 \cdot n^{-\frac{2}{3}}$, which is attained when $h = C_3 \cdot n^{-\frac{1}{3}}$, $m \geq C_2 \cdot n^{\frac{2}{3}}$. Here $C_1, C_2, C_3, C_4$ are positive constants that only depend on $a$ and $f$. 
\end{proof}

\section{Proof of Theorem \ref{thm:noisy_hist_mass_upper_risk}}\label{appendix:bmu_proof_noisy_hist_mass_upper_risk}
\begin{proof}
  Note for the point mass function $p(x) = \sum_{k=1}^K \alpha(k) \Ibb(x = x_k)$, we have an additional information that only hold for the discrete case but not for the density case,
  $$\max_{x \in I_{k+d}}|x - x_k| = (d+\frac{1}{2})\cdot \frac{1}{K}.$$
  We follow the proof of Theorem \ref{thm:noisy_hist_density_upper_risk} and can show that
  $$R(a, 1 - a) \leq C_1 \cdot ( \frac{1}{n} + \frac{1}{m} + \frac{K^2}{m^2}).$$
  When $m \geq C_2 \cdot \sqrt{n}\max\{K, \sqrt{n}\}$, we have $R(a, 1-a) \leq C_3 \cdot \frac{1}{n}$. Here where $C_1, C_2, C_3 > 0$ do not depend on $n$, $m$ and $K$. 
\end{proof}

\section{Proof of Theorem \ref{thm:noisy_l1_grenander}}\label{appendix:bmu_proof_noisy_l1_grenander}

%\subsection{$L_1$ convergence of the $\ftil_{n,m}$ for the binomial mixture model}
We first define a few concepts. Let $J$ denote the interval $[a, b)$, where $a = 0$ and $b = 1$ in our setup. We set
  $$ l(J) = b - a,  f(J) = \int_J f(t) dt,  \Delta f(J) = f(b) - f(a), $$
    $$  bf(J) = \int_J |f(J)/l(J) - f(t)|dt.$$
    Any finite increasing sequences $\{x_i\}_{0 \leq x_i \leq q}$ with $x_0 = a$, $x_q = b$ generates a partition $\mathcal{P}$ of $J$ into intervals $J_i = [x_{i-1}, x_i)$, $1 \leq i \leq q$. When no confusion arises, we put $f_i$ for $f(J_i)$, $\Delta f_i$ for $\Delta f(J_i)$ and so on. Set a functional $L(\mathcal{P},f, z)$ defined for positive $z$ by
      $$L(\mathcal{P}, f, z) = \sum_{i = 1}^q [bf(J_i) + z(f(J_i))^{1/2}] = \sum_{i = 1}^q [bf_i + z f_i^{1/2}].$$      

Before proving Theorem \ref{thm:noisy_l1_grenander}, we state the needed lemma, which is adapted from Lemma 1 in \citet{birge1989grenander}. The proof also follows that of Lemma 1 in \citet{birge1989grenander}.
\begin{lemma}\label{lemma:mean_l1_grenander}
  Let $\mathcal{P} = \{J_i\}_{1 \leq i \leq q}$ be some partition of $J$, $F$ an absolutely continuous distribution function, $F_{n,m}$ the corresponding empirical c.d.f based on $\hat{s}_i$ and $\tilde{F}_{n,m} = \tilde{F}_{n,m}^J$, $\tilde{F}_{n,m}^i = \tilde{F}_{n,m}^{J_i}$ the related Grenander estimators defined on the associative intervals. Define
  $$\bar{F}_{n,m}(x) = \sum_{i = 1}^q \tilde{F}_{n,m}^i(x) \Ibb_{x \in J_i},$$
  with $f$ and $\bar{f}_{n,m}$ to be the respective derivatives of $F$ and $\bar{F}_{n,m}$. Then
  $$\Ebb\left [ \int_J |\tilde{f}_{n,m}(x) - f(x)| dx\right ] \leq \Ebb \left [ \int_J |\bar{f}_{n,m}(x) - f(x)| dx \right ].$$
\end{lemma}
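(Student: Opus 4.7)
The plan is to follow the LCM-based comparison framework pioneered by \citet{birge1989grenander}, leveraging the pointwise ordering of the two candidate CDFs against $F_{n,m}$.

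First, I would establish the pointwise chain $F_{n,m}(x) \leq \bar{F}_{n,m}(x) \leq \tilde{F}_{n,m}(x)$ for all $x \in J$. The left inequality is the defining majorant property of each $\tilde{F}_{n,m}^i$. For the right inequality, observe that $\tilde{F}_{n,m}$, being concave on $J$ and majorizing $F_{n,m}$, restricts to a concave majorant of $F_{n,m}|_{J_i}$ on each $J_i$; since $\tilde{F}_{n,m}^i = \bar{F}_{n,m}|_{J_i}$ is by definition the least such majorant on $J_i$, we get $\tilde{F}_{n,m} \geq \bar{F}_{n,m}$ pointwise. A crucial by-product is that $\bar{F}_{n,m}$ is pinned at $F_{n,m}$ at every partition point $x_i$, whereas $\tilde{F}_{n,m}$ is free to lie strictly above there; this rigidity is the structural reason $\bar{f}_{n,m}$ exhibits larger local oscillations than $\tilde{f}_{n,m}$.

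Second, I would convert the density-level comparison into a CDF-level one using the total-variation identity: for integrable $g,h$ on $J$, $\int_J |g-h|\,dx = \sup_A \int_A(g-h)\,dx - \inf_A \int_A(g-h)\,dx$ with $A$ ranging over measurable subsets. Applied to $g = \tilde{f}_{n,m}$ or $\bar{f}_{n,m}$ and $h = f$, this bounds each $L_1$ discrepancy by local sub-interval excursions of the corresponding CDF from $F$. The pointwise CDF ordering from the first step, together with the concavity of $F$ (since $f$ is decreasing), then gives the excursion of $\tilde{F}_{n,m} - F$ a one-sided control that the excursion of $\bar{F}_{n,m} - F$ lacks.

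The main obstacle is converting this CDF-level inequality into a density-level $L_1$ inequality, since the sign-change structures of $\tilde{f}_{n,m}-f$ and $\bar{f}_{n,m}-f$ are different and the two estimators do not even have the same total mass on each $J_i$. Following Birge, I would reduce the general case to a one-step refinement (partition $\mathcal{P}$ refines a coarser partition by adding a single cut point), and on the split interval make the comparison explicit via Marshall's lemma: the LCM of $F_{n,m}$ on the combined interval is the smallest concave majorant there, whereas each piecewise LCM is additionally forced through $F_{n,m}$ at the cut. Summing the per-sub-interval comparison across $i$ and taking expectations then yields the claim. Because the argument is purely geometric at the level of LCMs, the fact that $F_{n,m}$ is built from the noisy observations $\hat{s}_i$ rather than from the $s_i$ plays no role, and the lemma transfers verbatim from the noiseless setting of \citet{birge1989grenander}.
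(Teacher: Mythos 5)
Your first step is right and matches the intended argument: $F_{n,m}\leq \bar{F}_{n,m}\leq \tilde{F}_{n,m}$ pointwise, so $\tilde{F}_{n,m}$ is in fact the least concave majorant of $\bar{F}_{n,m}$, and the whole lemma reduces to a deterministic (pathwise) statement about one application of the LCM operator, after which taking expectations is immediate. (The paper itself gives no more detail than ``follows Lemma 1 of \citet{birge1989grenander}''.) But the core of that deterministic statement is missing from your proposal, and you say so yourself: you never actually show that passing from $\bar{f}_{n,m}$ to $\tilde{f}_{n,m}$ cannot increase the $L_1$ distance to $f$. The route you gesture at does not close the gap. Marshall's lemma is a sup-norm contraction for $\tilde{F}_{n,m}-F$ and says nothing about $\int|\tilde{f}_{n,m}-f|$; and a ``one-step refinement'' induction over the partition is the wrong decomposition, precisely because (as you note) $\tilde{F}_{n,m}$ and $\bar{F}_{n,m}$ do not assign the same mass to each $J_i$, so no per-$J_i$ comparison can work.

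The missing idea is the $L_1$-contraction of the LCM (isotonization) operator toward any \emph{decreasing} target. Concretely: the open set $\{\tilde{F}_{n,m}>\bar{F}_{n,m}\}$ is a union of maximal intervals $(u_k,v_k)$ on each of which $\tilde{F}_{n,m}$ is affine and agrees with $\bar{F}_{n,m}$ at both endpoints, so $\tilde{f}_{n,m}\equiv c_k:=\frac{1}{v_k-u_k}\int_{u_k}^{v_k}\bar{f}_{n,m}$ there, while $\tilde{f}_{n,m}=\bar{f}_{n,m}$ a.e. off these intervals. Since $f$ is decreasing, $c_k-f$ is increasing and changes sign at most once, say at $t\in[u_k,v_k]$, whence
$$\int_{u_k}^{v_k}|c_k-f|=\int_{u_k}^{v_k}(c_k-f)-2\int_{u_k}^{t}(c_k-f)\leq \int_{u_k}^{v_k}(\bar{f}_{n,m}-f)-2\int_{u_k}^{t}(\bar{f}_{n,m}-f)\leq\int_{u_k}^{v_k}|\bar{f}_{n,m}-f|,$$
using $\int_{u_k}^{t}(c_k-f)=\tilde{F}_{n,m}(t)-\bar{F}_{n,m}(u_k)-(F(t)-F(u_k))\geq\int_{u_k}^{t}(\bar{f}_{n,m}-f)$ and equality of the total increments over $[u_k,v_k]$. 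Summing over $k$ and taking expectations gives the lemma. Note this is where the monotonicity of $f$ enters essentially (your appeal to ``concavity of $F$'' hints at it but is never used); without it the averaging step fails. Your closing observation that the noise in $\hat{s}_i$ is irrelevant to this lemma is correct.
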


In the proof, there are many similar notations that might be confusing. I list all of them below for clarity and the convenience of reference.
\begin{itemize}
\item Let $F^{(m)}(x) := \mathbb{P}(\hat{s}_i \leq x)$, i.e. the c.d.f of $\hat{s}_i$. Let $f^{(m)}$ be the derivative of $F_{n,m}$.
\item For any interval $J$, $F_{n,m}^J(x) = \frac{1}{n} \sum_{i=1}^n \Ibb(\hat{s}_i \leq x; \hat{s}_i \in J)$, $F_n^J(x) = \frac{1}{n} \sum_{i=1}^n \Ibb(s_i \leq x; s_i \in J)$. $F_{n,m}$ and $F_n$ correspond to $F_{n,m}^J(x)$ and $F_{n}^J(x)$ with $J = [0, 1]$ in our setup.
\item $\tilde{F}^J_{n,m}$ and $\tilde{F}^J_n$ are the respective least concave majorants of $F_{n,m}^J$ and $F_n^J$ condition on the interval $J$. Let $\tilde{f}^J_{n,m}$ and $\tilde{f}^J_n$ be the derivatives of $\tilde{F}^J_{n,m}$ and $\tilde{F}^J_n$. $\tilde{F}_{n,m}$, $\tilde{F}_n$ and $\tilde{f}_{n,m}$, $\tilde{f}_n$ correspond to $\tilde{F}_{n,m}^J(x)$, $\tilde{F}_{n}^J(x)$ and $\tilde{f}^J_{n,m}$, $\tilde{f}^J_{n}$.
\item For any partition $\mathcal{P} = \{J_i\}_{1 \leq i \leq q}$, let $\bar{f}_{n,m}$ be the derivative of $\bar{F}_{n,m}(x) = \sum_{i = 1}^q \Ibb(x \in J_i) \tilde{F}_{n,m}^{J_i}(x) .$
\item $l(J) = b - a$,  $f(J) = \int_J f(t) dt$,  $\Delta f(J) = f(b) - f(a)$, $bf(J) = \int_J |f(J)/l(J) - f(t)|dt$.
\item For any partition $\mathcal{P} = \{J_i\}_{1 \leq i \leq q}$, $L^J(\mathcal{P}, f, z) = \sum_{i = 1}^q [bf(J_i) + z(f(J_i))^{1/2}] = \sum_{i = 1}^q [bf_i + z f_i^{1/2}]$. $L^J(f, z) = \inf_{\mathcal{P}} L^J(\mathcal{P}, f, z)$.
\item $M := \int_0^b f^p(t) dt$ for some $p > 2$; $H:= \lim_{x \rightarrow b-}f(x)$.
\item For an interval $I := [a^I, b^I)$,
  \begin{itemize}
  \item Let $N$ and $N_m$ be the number of $s_i$'s and $\hat{s}_i$'s that fall in $I$, respectively.
  \item Define $G$ and $G^{(m)}$ to be the respective conditional c.d.f's of the $s_i$'s and $\hat{s}_i$'s that fall in $I$,  $g$ and $g^{(m)}$ their derivatives.
  \item Define $G_N(x) = \sum_{i = 1}^n \Ibb[s_i \leq x; s_i \in I]$,  $G_{N_m}(x) := \sum_{i = 1}^n \Ibb[\hat{s}_i \leq x; \hat{s}_i \in I]$.
  \item Define $\tilde{G}_{N_m, m}$ and $\tilde{G}_N$ be the respective least concave majorants of $G_{N_m}$ and $G_N$ conditional on $I$. Let $\tilde{g}_{N_m, m}$ and $\tilde{g}_{N}$ be derivatives of $\tilde{G}_{N_m, m}$ and $\tilde{G}_N$ respectively.
  \end{itemize}
\end{itemize}

\begin{proof}
  
  We want to show that
  \begin{equation*}
    \Ebb_f \left [ \int_J |f(x) - \tilde{f}_{n,m}(x)| dx \right ] \leq 3 L^J(f, \tilde{K}\cdot n^{-\frac{1}{2}}) + \tilde{C}\cdot f(J) \cdot m^{-\frac{1}{2}}, \label{eq:functional_upper_bound}
  \end{equation*}
  where $\tilde{K}, \tilde{C}$ are universal constants. Then, the $L_1$ convergence of $\tilde{f}_{n,m}$ only hinges on the characteristics of $f$. For example, when $f$ is a decreasing function on $J = [0, 1]$ such that $M := \int_0^b f^p(t) dt < + \infty$ for some $p > 2$ and $H = \lim_{x \rightarrow b-}f(x) > 0$, then Proposition 4 of \citet{birge1989grenander} shows that
  \begin{equation}
    z^{-2/3}L^J(f, z) \leq 3/2(H/(H-h))^{p-1}(bMH^{2-p}/(p - 2))^{1/3},
    \label{eq:functional_bound}
  \end{equation}
  where $h^3 = z^2b^{-2}MH^{2-p}/(p-2)$ and $Mz^2 < (p-2)b^2H^{p+1}$. It implies that $\tilde{f}_{n,m}$ has an $L_1$ convergence rate at $3L^J(f,  \tilde{K} \cdot n^{-\frac{1}{2}}) + \tilde{C}\cdot f(J) \cdot m^{-\frac{1}{2}} \leq C_1 \cdot n^{-\frac{1}{3}} + C_2 \cdot m^{-\frac{1}{2}}$, where $C_1$, $C_2$ are some positive constants.

  By Lemma \ref{lemma:mean_l1_grenander} it is sufficient to prove that for any partition $\mathcal{P} = \{J_i\}_{1 \leq i \leq q}$ of $J$, we have
  $$\Ebb_f \left [ \int_J |f(x) - \bar{f}_{n,m}(x) | dx\right ] \leq 3 \sum_{i = 1}^q [bf_i + \sqrt{f_i} \cdot \tilde{K}\cdot n^{-\frac{1}{2}}] + \sum_{i = 1}^q \tilde{C}\cdot f_i \cdot m^{-\frac{1}{2}},$$
  where $\bar{f}_{n,m}$ is the derivative of $\bar{F}_{n,m}(x) = \sum_{i = 1}^q \tilde{F}_{n,m}^{J_i}(x) \Ibb(x \in J_i)$. This is certainly true if for any arbitrary sub-interval $I = [a^I, b^I)$ of $J$, the below inequality holds
  $$\Ebb_f \left [ \int_I |f(x) - \tilde{f}_{n,m}^I|dx\right ] \leq 3 \left [ bf(I) + \sqrt{f(I)}\cdot \tilde{K} \cdot n^{-\frac{1}{2}}\right ]  +  \tilde{C}\cdot f(I) \cdot m^{-\frac{1}{2}}.$$
  In order to prove this inequality, we assume there are $N$ $s_i$'s and $N_m$ $\hat{s}_i$'s falling in the interval $I$ respectively. Here, $N$ has a binomial distribution $Binomial(n, f(I))$ and $N_m$ has a binomial distribution $Binomial(n, f^{(m)}(I))$, where $f^{(m)}$ is the derivative of the c.d.f $F^{(m)} = \Pbb[\hat{s}_i \leq x]$. Then with $\tilde{f}_{n,m} = \tilde{f}_{n,m}^I$,
  \begin{equation}
    \int_I |f(x) - \tilde{f}_{n,m}(x)| dx \leq bf(I) + |f(I) - N/n| + |N/n - N_m/n| + b\tilde{f}_{n,m}(I).
    \label{eq:l1_decomposition}
  \end{equation}
  The only difficulty comes from the last term. Define $G$ and $G^{(m)}$ to be the respective conditional c.d.f's of the $s_i$'s and $\hat{s}_i$'s that fall in $I$,  $g$ and $g^{(m)}$ their derivatives. Then
  $$\Ebb_{f^{(m)}} [b\tilde{f}_{n,m}(I)|N_m] = N_m/n \Ebb_{g^{(m)}} [b\tilde{g}_{N_m, m}(I) | N_m]$$
  because the joint distribution of the $N_m$ $\hat{s}_i$'s falling in $I$ given $N_m$ is the same as the distribution of $N_m$ i.i.d variables from $G^{(m)}$. If $U(x)$ is the uniform c.d.f on $I$, then
  \begin{eqnarray*}
    1/2b\tilde{g}_{N_m, m}(I) &\overset{(a)}{=}& \sup_{x \in I}[\tilde{G}_{N_m, m}(x) - U(x)]\\
    &\overset{(b)}{=}& \sup_{x \in I} [G_{N_m}(x) - U(x)]\\
    &\leq& \sup_{x \in I} [ G_{N_m}(x) - G(x)] + \sup_{x \in I} [G(x) - U(x)]\\
    &\leq& \sup_{x \in I} [G_{N_m}(x) - G(x)] + 1/2 bg(I).
  \end{eqnarray*}
  Here Equation $(a)$ holds because this is an equivalent expression of the total variation for $\tilde{G}_{N_m, m}(x)$ with a non-increasing derivative and $U(x)$ with a flat density. Equation $(b)$ holds because
  \begin{itemize}
  \item $\tilde{G}_{N_m, m}(x) \geq G_{N_m}(x)$ for any $x$ and the equality occurs when the derivative of $\tilde{G}_{N_m, m}$ changes.
  \item $\tilde{G}_{N_m,m}(x) - U(x)$ attains the maximum at a point which corresponds to a change of the derivative of $\tilde{G}_{N_m, m}$.
  \end{itemize}
  
  Since $bf(I) = f(I)bg(I)$, we get
  $$\Ebb_{f^{(m)}} [b\tilde{f}_{n,m}(I) | N_m] \leq N_m/n \left [ 2\Ebb_{g^{(m)}} [\sup_{x \in I} (G_{N_m(x) }- G(x)) | N_m] + bf(I)/f(I)\right ],$$
  and using Corollary \ref{corollary:noisy_DKW},
  $$\Ebb_{f^{(m)}} [b\tilde{f}_{n,m}(I) | N_m] \leq N_m/n \left [ K \cdot (\frac{1}{\sqrt{N_m}} + \frac{1}{\sqrt{m}}) + bf(I)/f(I)\right ],$$

  where $K > 1$. Plug in this result into the inequality \eqref{eq:l1_decomposition}, and with the Cauchy-Schwarz inequality, we have
  \begin{eqnarray*}
    && \Ebb_f \int_I |f(x) - \tilde{f}_{n,m}(x)|dx\\
    &\leq& bf(I) + \Ebb |f(I) - N/n|\\
    && + K\sqrt{f^{(m)}(I)/n} + K \cdot f^{(m)}(I)/\sqrt{m} + bf(I) \cdot f^{(m)}(I)/f(I)  + \Ebb |N/n - N_m/n|.
    \end{eqnarray*}
    By the Cauchy-Schwarz inequality, it follows that
    $$\Ebb |f(I) - N/n| \leq \sqrt{\frac{f(I)(1 - f(I))}{n}}.$$
    Note that 
  \begin{eqnarray*}
    f^{(m)}(I) &=&\Ebb N_m/n\\
               &=& \Ebb N/n + \Ebb [N_m - N]/n \\
               &\leq& f(I) + \Ebb |N_m - N|/n\\
               &=& f(I) + \Ebb[\Ebb[|N_m - N|/n|N]]\\
               &\overset{(c)}{\leq}& f(I) + \Ebb[(\frac{C}{\sqrt{m}}) \cdot N/n]\\
               &=& f(I) (1 + C \cdot m^{-\frac{1}{2}})
  \end{eqnarray*}
  where Inequality (c) holds because of Proposition \ref{prop:CDF_deviation} with $C > 0$ (note that $N_m/n = F_{n,m}(b^I) - F_{n,m}(a^I)$, and $N/n = F_n(b^I) - F_n(a^I)$). Thus, for $m \geq C^2/9$, it follows that
  %$$\Ebb_f \int_I |f(x) - \tilde{f}_{n,m}(x)|dx \leq 3[bf(I) + \sqrt{f(I)} \cdot (\frac{ \tilde{K}}{\sqrt{n}} + \frac{ \tilde{C}}{\sqrt{m}})],$$
  %where $\tilde{K}$ and $\tilde{C}$  are positive constants.
  %%%%
  % can be improved to the following bound, which gives m = Omega(n^{2/3}). But it will be hard to make consistent with other results. So we choose a relatively looser bound.
  $$\Ebb_f \int_I |f(x) - \tilde{f}_{n,m}(x)|dx \leq 3(bf(I) + K\sqrt{f(I)/n}) + \frac{2f(I)}{\sqrt{m}} \cdot (K + C).$$
  %%%%
  Then we complete the initial claim by letting $\tilde{K} = K$ and $\tilde{C} = 2(K+C)$. Finally, as Proposition 4 in \citet{birge1989grenander}, we construct the partition $\mathcal{P} = \{J_i\}_{j \leq i\leq q}$ of $J$ where $j$ is the integer such that $jh \leq H < (j+1)h$, $J_q = \{x | f(x) \geq q\}$, $J_j = \{x | f(x) < (j+1)h\}$, and $J_i = \{x | ih \leq f(x) < (i+1)h\}$ for $q > i > j$. Since $f_{\max} < \infty$, there is only finite number of intervals. It can be shown that when $q$ is the smallest integer that is larger than $f_{\max}$, this partition can give the inequality \eqref{eq:functional_bound}.
\end{proof}

\section{Proof of local asymptotics}\label{appendix:bmu_proof_noisy_grenander_local_asymp}

\subsection{Local Inference of $F_{n,m}$ when $F$ is absolutely continuous}
Theorem \ref{thm:noisy_DKW} shows that the empirical CDF $F_{n,m}$ is a consistent estimator of the population CDF. We also want to understand the uncertainty of the empirical CDF. The Koml{\' o}s-Major-Tusn{\' a}dy (KMT) approximation shows that $\sqrt{n}(F_n(x) - F(x))$ can be approximated by a sequence of Brownian bridges $\{B_n(x), 0 \leq x \leq 1\}$ \citep{komlos1975approximation}. This result can be extended to the empirical CDF based on $\hat{s}_i$'s; see Theorem \ref{thm:noisy_KMT_approx}. The proof is similar to Theorem \ref{thm:noisy_DKW} by splitting $F_{n,m}(x) - F(x)$ into $F_{n,m}(x) - F^{(m)}(x)$ and $F^{(m)}(x) - F(x)$. The former can be bounded by the original KMT approximation and the latter one can be bounded using Proposition \ref{prop:CDF_deviation}. Another thing needs taking care of is to bound $B_n(F(x)) - B_n(F^{(m)}(x))$, which is in fact the sum of two independent Gaussian random variables with zero mean and variance upper bounded by $\vert F^{(m)}(x) - F(x) \vert$.

\begin{theorem}[Local inference of $F_{n,m}$]\label{thm:noisy_KMT_approx}
     Suppose $F$ corresponds to a density $f$ on $[0, 1]$ with $f_{\max} < \infty$.  There exists a sequence of Brownian bridges $\{B_n(x), 0 \leq x \leq 1\}$ such that
  $$\Pbb\left \{ \sup_{0 \leq x \leq 1} |\sqrt{n}(F_{n,m}(x) - F(x)) - B_n(F(x))| > \frac{\sqrt{2\pi} f_{\max}\cdot \sqrt{n}}{\sqrt{m}} + \frac{a\log n}{\sqrt{n}} + t   \right \} \leq b (e^{-c\sqrt{n}t} + e^{-dmt^2}),$$
  for all positive integers $n$ and all $t > 0$, where $a$, $b$, $c$ and $d$ are positive constants.
\end{theorem}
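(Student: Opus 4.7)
The plan is to decompose the target via triangle inequality into three pieces, each controlled by a different tool. Writing
\begin{align*}
\sqrt{n}(F_{n,m}(x) - F(x)) - B_n(F(x)) &= \underbrace{\bigl[\sqrt{n}(F_{n,m}(x) - F^{(m)}(x)) - B_n(F^{(m)}(x))\bigr]}_{A_1(x)} \\
&\quad + \underbrace{\sqrt{n}\bigl(F^{(m)}(x) - F(x)\bigr)}_{A_2(x)} \\
&\quad + \underbrace{\bigl[B_n(F^{(m)}(x)) - B_n(F(x))\bigr]}_{A_3(x)},
\end{align*}
I expect $A_1$, $A_2$, $A_3$ to contribute respectively the KMT error $a\log n/\sqrt{n}$, the bias $\sqrt{2\pi}f_{\max}\sqrt{n}/\sqrt{m}$, and the extra $t$-type deviation appearing in the threshold.

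For $A_1$, since $\hat s_1, \ldots, \hat s_n$ are i.i.d.\ draws from the common distribution $F^{(m)}$, I invoke the classical KMT approximation of \citet{komlos1975approximation} to obtain a Brownian bridge $B_n$ coupled to the sample with
$$\Pbb\left\{\sup_x |A_1(x)| > (a\log n + s)/\sqrt{n}\right\} \leq b_1 e^{-c_1 s};$$
substituting $s = \sqrt{n}\,t$ produces the $e^{-c\sqrt{n}t}$ contribution, and this is precisely the bridge $B_n$ that appears in the statement.

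For $A_2$, Proposition \ref{prop:CDF_deviation} delivers the deterministic bound $\sup_x |F^{(m)}(x) - F(x)| \leq \sqrt{2\pi}f_{\max}/\sqrt{m}$, so $\sup_x |A_2(x)| \leq \sqrt{2\pi}f_{\max}\sqrt{n}/\sqrt{m}$ uniformly, with no probability cost. This accounts for the non-random offset in the advertised threshold.

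For $A_3$, since $|F^{(m)}(x) - F(x)| \leq \Delta := \sqrt{2\pi}f_{\max}/\sqrt{m}$ uniformly in $x$, the quantity $A_3(x)$ is a Brownian-bridge increment over an interval of length at most $\Delta$, hence pointwise mean-zero Gaussian with variance bounded by $\Delta$. A Brownian-bridge modulus-of-continuity estimate, or equivalently a covering-and-union-bound argument exploiting that $F^{(m)}$ is a step function with at most $m+1$ distinct values, should then yield a uniform tail $\Pbb(\sup_x |A_3(x)| > t) \leq b_2 e^{-d m t^2}$. Combining the three bounds by union bound produces the claimed inequality. The main obstacle is precisely this $A_3$ step: a naive pointwise Gaussian tail gives an exponent only of order $t^2\sqrt{m}$, so obtaining the advertised $e^{-dmt^2}$ rate requires carefully combining the discreteness of $F^{(m)}$ (finitely many jumps at $\{k/m\}_{k=0}^m$) with a chaining argument at scale $\Delta$, while checking that the residual modulus term $\sqrt{\Delta\log(1/\Delta)}$ is absorbed into the $a\log n/\sqrt{n}$ allowance already present in the KMT contribution.
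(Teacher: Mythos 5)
Your decomposition into $A_1+A_2+A_3$ and the tools you assign to each piece (classical KMT for $A_1$, Proposition \ref{prop:CDF_deviation} for the deterministic bias $A_2$, a Brownian-bridge increment bound for $A_3$) are exactly the paper's argument, which is given only as a sketch in Appendix \ref{appendix:bmu_proof_noisy_grenander_local_asymp}. The obstacle you flag in the $A_3$ step is genuine and is not resolved by the paper either: since the increment variance is only bounded by $|F^{(m)}(x)-F(x)|\leq \sqrt{2\pi}f_{\max}/\sqrt{m}$, the natural uniform Gaussian tail has exponent of order $\sqrt{m}\,t^2$ rather than the stated $m t^2$, so your proposal reproduces the paper's reasoning including its weakest point.
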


\subsection{Proof of Theorem \ref{thm:noisy_grenander_local_asymp}}
This proof is adapted from \citet{wang1992nonparametric}. Define
$$U_{n,m}(a) = \sup \{x: F_{n,m}(x) - ax \text{ is maximal }\}.$$
Then with probability one, we have the switching relation
\begin{equation}\label{eq:switching_relation}
  \ftil_{n,m}(t) \leq a \Leftrightarrow U_{n,m}(a) \leq t.
\end{equation}

By the relation \eqref{eq:switching_relation}, we have
$$\Pbb(\sqrt{n}(\ftil_{n,m}(t_0) - f(t_0)) \leq x) = \Pbb(U_{n,m}(f(t_0) + n^{-\frac{1}{2}}x) \leq t_0).$$
From the definition of $U_{n,m}$,  it follows that
\begin{eqnarray*}
  U_{n,m}(f(t_0) + n^{-\frac{1}{2}}x) &=& \sup \{s: F_{n,m} (s) - (f(t_0) + n^{-\frac{1}{2}} x)s \text{ is maximal }\}\\
   &=&\sup \{s: \sqrt{n} (F_{n,m} (s) - F(s)) + \sqrt{n}(F(s) - f(t_0)) -  xs \text{ is maximal }\}
\end{eqnarray*}
By Theorem \ref{thm:noisy_KMT_approx},
$$\sqrt{n} (F_{n,m} - F(s)) = B_n(F(s)) + \Ocal(\frac{\sqrt{n}}{\sqrt{m}}) + \Ocal_p (\frac{\log n}{\sqrt{n}}),$$
where $\{B_n, n\in N\}$ is a sequence of Brownian Bridges, constructed on the same space as the $F_n$. So the limit distribution of $U_{n,m}(f(t_0) + n^{-\frac{1}{2}}x)$ is the same as that of the location of the maximum of the process $\{B_n(F(s)) + \sqrt{n}(F(s) - f(t_0)(s) - xs, s \geq 0\}$. Note that $F(s)$ is concave and linear in $[a, b]$, then
$$F(s) = F(a) + f(t_0)(s - a) \text{ for } s \in [a, b],$$
and
$$F(s) - f(t_0)(s - a) < F(a) \text{ for } s \not \in [a, b].$$
Hence the location of the maximum of $\{(B_n(F(s)) + \sqrt{n}(F(s) - f(t_0)s) - xs, s\geq 0\}$ behaves asymptotically as that of
$$\{B(F(s)) -xs, a \leq s \leq b\} = \{B(F(a) + f(t_0)(s-a)) - xs, a \leq s \leq b\},$$
where $B$ is a standard Brownian bridge in $[0, 1]$. Thus,
\begin{eqnarray*}
  &&\Pbb (\sqrt{n}(\ftil_{n,m}(t_0) - f(t_0)) \leq x) \rightarrow\\
  && \Pbb(\text{ the location of the maximum of } \{B(F(s)) - xs, a\leq s \leq b \leq t_0\})\\
  &=& \Pbb (\hat{S}_{a,b}(t_0) \leq x),
\end{eqnarray*}
by the definition of $\hat{S}$. That completes the proof of Part (A). The proof of Part (B) follows in a similar manner.

\section{Proof of Theorem \ref{thm:true_c0_c1_valid}}\label{appendix:bmu_proof_true_c0_c1_valid}
\begin{proof}
  Let $\hat{\alpha}_l(c_l)$, $\hat{\alpha}_r(c_r)$, $\hat{\alpha}_{mid}(c_l, c_r)$, $\tilde{g}_l(c_l)$, $\tilde{g}_r(c_r)$ be the output of Algorithm \ref{algo:gU_grid_search} with the input $c_l$, $c_r$ and $d_l$, $d_r$. The corresponding estimator of $f$ is termed as $\ftil_{n,m}$.
  For simplicity, we consider $c_r^{(0)} = 1$, i.e., the case where there is only the decreasing part and the flat part. For a general case where $c_r^{(0)} < 1$, we just need to focus on $[0, \mu]$.

  By Theorem \ref{thm:noisy_l1_grenander}, we know that $\Ebb_f \int_0^{c_l^{(0)}} |\ftil_{n,m}(x) - f(x)|dx \leq K_1 \cdot N_l(c_l^{(0)})^{-\frac{1}{3}}$  when $m \geq C_1 \cdot [N_l(c_l^{(0)})]^{\frac{2}{3}}$ for some positive constants $K_1$ and $C_1$ that only depend on $f$. For $c_l > c_l^{(0)}$, it is easy to see that $\lim_{x \rightarrow (c_l)_{-}}\ftil_{m,n}(x) - \lim_{x \rightarrow (c_l)_{+}}\ftil_{m,n}(x) = \varepsilon \cdot N_l(c_l^{(0)})^{-1/2}$, where $\varepsilon$ is a residual term with $\vert\varepsilon\vert \leq K_2$ for some positive constant $K_2$, because the estimator of the flat region converges at a square-root rate \citep{wang1992nonparametric}. In other words, it is unlikely to find a desired gap beyond $c_l^{(0)}$.

  If $\lim_{x \rightarrow (c_l^{(0)})_{-}} \ftil_{n,m}(x) \geq \frac{\hat{\alpha}_{mid}(c_l^{(0)}, 1)}{1 - c_l^{(0)}} + d_l$, we select the desired $c_l = c_l^{(0)}$. Otherwise, define
  $$ t \text{ be the maximal } c_l \text{ such that } \left \{
  \begin{array}{l}
   c_l < c_l^{(0)}\\
  \ftil_{n,m}(c_l) \geq \frac{\hat{\alpha}_{mid}(c_l^{(0)}, 1)}{1 - c_l^{(0)}} + d_l.\\
  \end{array}
  \right .
  $$
  Then $\forall t < c_l \leq c_l^{(0)}$, it follows that
  \begin{eqnarray*}
    f(c_l) - \ftil_{n,m}(c_l) &=& f(c_l) - \lim_{x \rightarrow (c_l^{(0)})_{+}}f(x) - (\ftil_{n,m}(c_l) - \frac{\hat{\alpha}_{mid}(c_l^{(0)}, 1)}{1 - c_l^{(0)}})\\
    && + \lim_{x \rightarrow (c_l^{(0)})_{+}}f(x) - (\frac{\hat{\alpha}_{mid}(c_l^{(0)}, 1)}{1 - c_l^{(0)}})\\
                              &\geq& \delta_l - d_l - K_2 \cdot (N_{mid}(c_l^{(0)}, 1)^{-\frac{1}{2}}),
%    &=& d_0 + \varepsilon \cdot (N_{mid}(c_l^{(0)}, 1)^{-\frac{1}{2}}),
  \end{eqnarray*}
 When $d_l < \delta_l$, it implies that
$$\Ebb_f \int_{t}^{c_l^{(0)}} \vert f(x) - \ftil_{n,m}(x) \vert dx \geq (\delta_l - d_l  - K_2 \cdot N_{mid}(c_l^{(0)}, 1)^{-\frac{1}{2}}) \cdot (c_l^{(0)} - t).$$
  Since the $L_1$ distance between $f$ and $\ftil_{n,m}$ reduces at a cubic-root rate, it follows that $c_l^{(0)} - t \leq C_2 \cdot N_l(c_l^{(0)})^{-1/3}$ for some positive constant $C_2$. So $c_l = t$ is the desired cutoff.
  Finally, we have that
  \begin{eqnarray*}
    \int_0^1 |\ftil_{n,m}(x) - f(x)|dx &=& \int_0^{t} |\ftil_{n,m}(x) - f(x)|dx + \int_t^{c_l^{(0)}} |\frac{\hat{\alpha}_{mid}(t, 1)}{1 - t} - f(x)|dx\\
    && + \int_{c_l^{(0)}}^1 |\frac{\hat{\alpha}_{mid}(t, 1)}{1 - t} - f(x)|dx\\
                                       &\leq& \int_0^{t} |\ftil_{n,m}(x) - f(x)|dx + (\frac{1}{1-t} + f_{\max}) \cdot |c_l^{(0)} - t|\\
    && + |\frac{\hat{\alpha}_{mid}(t, 1)}{1 - t} - \frac{\alpha_{mid}(c_l^{(0)}, 1)}{1 - c_l^{(0)}}| \cdot |1 - c_l^{(0)}|\\
    &\leq& C_4 \cdot N_l(c_l^{(0)})^{-1/3},
  \end{eqnarray*}
  where the last inequality holds because $|c_l^{(0)} - t| \leq C_2 \cdot  N_l(c_l^{(0)})^{-1/3}$ and $f$ is bounded can imply  $|\frac{\hat{\alpha}_{mid}(t, 1)}{1 - t} - \frac{\alpha_{mid}(c_l^{(0)}, 1)}{1 - c_l^{(0)}}| \leq K_3 \cdot N_l(c_l^{(0)})^{-1/3}$. Here $C_4$, $K_3$ are two positive constants that only depend on $f$.
\end{proof}

\section{Proof of Theorem \ref{thm:feasible_c0_c1_CI}}\label{appendix:bmu_proof_feasible_c0_c1_CI}
\begin{proof}
  Given the interior point $\mu$ in the flat region, the left decreasing part and the right increasing part are disentangled. Therefore, we only need to consider the left side, and the right side can be proven in the same way.  
  A necessary condition for $c_l$ being identified as feasible for the change-point-gap constraint in Algorithm \ref{algo:gU_grid_search} is that
  $$\tilde{g}_l(c_l) \geq \frac{\hat{\alpha}_{mid}(c_l, \mu)}{\hat{\alpha}_l(\mu) \cdot (\mu - c_l)} + \frac{d_l}{\hat{\alpha}_l(\mu)},$$
  where $\hat{\alpha}_{mid}(c_l, \mu) = N_{mid}(c_l, \mu)/n$ and $\hat{\alpha}_l(\mu) = N_l(\mu)/n$. It is easy to see that 
  $$\hat{\alpha}_l(\mu) = \alpha_l(\mu) + \varepsilon_1 \cdot n^{-\frac{1}{2}},$$
  and
  $$\hat{\alpha}_{mid}(c_l, \mu) = \alpha_{mid}(c_l, \mu) + \varepsilon_2 \cdot n^{-\frac{1}{2}},$$
  where $\alpha_l(\mu) = \Ebb\hat{\alpha}_l(\mu)$ and $\alpha_{mid}(c_l, \mu) = \Ebb\hat{\alpha}_{mid}(c_l, \mu)$; $\varepsilon_1$ and $\varepsilon_2$ are residual terms with $\max (\vert \varepsilon_1 \vert, \vert \varepsilon_2 \vert) \leq K$  for some universal positive constant $K$ by McDiarmid inequality. So the necessary condition is
  \begin{equation}\label{eq:necessary_change_point_gap}
    \tilde{g}_l(c_l) \geq \frac{\alpha_{mid}(c_l, \mu)}{\alpha_l(\mu) \cdot (\mu - c_l)} + \frac{d_l}{\alpha_l(\mu)} + \varepsilon_3 \cdot n^{-\frac{1}{2}},
  \end{equation}
  with $\vert \varepsilon_3 \vert \leq C$ for some constant $C$ that only depends on $\alpha_l(\mu)$, $\mu - c_l$ and $d_l$. If $c_l > c_l^{(0)}$ and the constraint is violated at $c_l$, then for any $c_l' > c_l$ the constraint is violated at $c_l'$ with high probability since $\tilde{g}_l(c_l) \geq \tilde{g}_l(c_l')$ and $\frac{\alpha_{mid}(c_l, \mu)}{\alpha_l(\mu) \cdot (\mu - c_l)} = \frac{\alpha_{mid}(c_l', \mu)}{\alpha_l(\mu) \cdot (\mu - c_l')}$ ($c_l$ and $c_l'$ are both in the flat region). Therefore, to see if $\hat{c}_l > c_l^{(0)}$, we only need to investigate the smallest $c_l$ with $c_l > c_l^{(0)}$ that is in the searching space of Algorithm \ref{algo:gU_grid_search}. By Theorem \ref{thm:noisy_grenander_local_asymp}, when $m/\cdot N_l(c_l^{(0)}) \rightarrow \infty$, it follows that
  $$\sqrt{N_l(\mu)}(\tilde{g}_l(c_l) - \frac{\alpha_{mid}(c_l, \mu)}{\alpha_l(\mu) \cdot (\mu - c_l)}) \overset{d}{\rightarrow} \hat{S}_{[c_l^{(0)}, \mu]}(c_l).$$
  By the necessary condition \eqref{eq:necessary_change_point_gap}, then asymptotically we have
  \begin{eqnarray*}
    \Pbb[c_l > c_l^{(0)}] &\leq& \Pbb \left (\sqrt{N_l(\mu)}(\tilde{g}_l(c_l) - \frac{\alpha_{mid}(c_l, \mu)}{\alpha_l(\mu) \cdot (\mu - c_l)})  \geq \sqrt{N_l(\mu)} \cdot \frac{d_l}{\alpha_l(\mu)} + \varepsilon_3 \cdot \sqrt{\frac{N_l(\mu)}{n}}\right ) \\
    &\approx& \Pbb \left (\hat{S}_{c_l^{(0)}, \mu}(c_l)  \geq \sqrt{N_l(\mu)} \cdot \frac{d_l}{\alpha_l(\mu)} + \varepsilon_3 \cdot \sqrt{\frac{N_l(\mu)}{n}}\right )\\
                          &\leq&  \Pbb \left (\hat{S}_{c_l^{(0)}, \mu}(c_l)  \geq \sqrt{N_l(c_l^{(0)})} \cdot \frac{d_l}{\alpha_l(\mu)} - C \cdot \sqrt{\frac{N_l(\mu)}{n}}\right ).\\
                          &\leq&  \Pbb \left (\hat{S}_{c_l^{(0)}, \mu}(c_l)  \geq \sqrt{N_l(c_l^{(0)})} \cdot \frac{d_l}{\alpha_l(\mu)} - C \right ).\\
                          &\approx&  \Pbb \left (\hat{S}_{c_l^{(0)}, \mu}(c_l^{(0)})  \geq \sqrt{N_l(c_l^{(0)})} \cdot \frac{d_l}{\alpha_l(\mu)} - C \right ).
  \end{eqnarray*}
  where the last approximation holds because $c_l$ is the smallest one searching candidate with $c_l > c_l^{(0)}$, so it is very close to $c_l^{(0)}$.
\end{proof}

\section{Proof of Theorem \ref{thm:c0_c1_converge_finite_sample}}\label{appendix:bmu_proof_c0_c1_converge_finite_sample}
%% \begin{theorem}\label{thm:c0_c1_converge_finite_sample}
%%     Suppose $f$ is a distribution satisfying Model \eqref{eq:gU_model} and Assumption \ref{asmpt:density_gap}. The data is generated as $s_i \overset{i.i.d}{\sim} f$ and $X_i_i|s_i \sim Binomial(m, s_i)$, $i = 1, \ldots, n$. Let $\delta_l = c_l - c_l^{(0)}$, $\delta_r = c_r - c_r^{(0)}$. When Assumption \ref{asmpt:conds_for_theory} holds and $m = \Omega(\max\{N_l(c_l)^{-1/3}, N_r(c_r)^{-1/3}\})$, $|\delta_l^*| = \Ocal(N_l(c_l)^{-1/3})$, $|\delta_r^*| = \Ocal(N_r(c_r)^{-1/3})$, where $\delta_l^*$ and $\delta_r^*$ are associated with $c_l^*$ and $c_r^*$ output by Algorithm \ref{algo:gU_grid_search}.
%% \end{theorem}
\begin{proof}
  Let $\hat{\alpha}_l(c_l)$, $\hat{\alpha}_r(c_r)$, $\hat{\alpha}_{mid}(c_l, c_r)$, $\tilde{g}_l(c_l)$, $\tilde{g}_r(c_r)$ be the output of Algorithm \ref{algo:gU_grid_search} with the input $c_l$, $c_r$ and $d_l$, $d_r$. The corresponding estimator of $f$ is termed as $\ftil_{n,m}$. Let $H(c_l, c_r, \ftil_{n,m})$ be the log likelihood function associated with the optimization problem \eqref{opt:gU_mle_known_c0_c1_emp_mass}, and $N(x) = \# \{\hat{s}_i: \hat{s}_i = x\}$. It follows that
  \begin{eqnarray*}
    \frac{1}{n}\Ebb_f H(c_l, c_r, \ftil_{n,m}) &=& \frac{1}{n} \Ebb_f \sum_{\hat{s}_i} \log \ftil_{n,m}(\hat{s}_i)\\
    &=& \Ebb_f \log \ftil_{n,m}(\hat{s}_1)\\
    &=& -KL(f||\ftil_{n,m}) + C,
  \end{eqnarray*}
  where $C = \Ebb_f \log f(\hat{s}_1)$. From the relations between total variation, Kullback-Leibler divergence and the $\chi^2$ distance,
  $$TV(P, Q) \leq \sqrt{KL(P || Q)} \leq \sqrt{\chi^2(P||Q)},$$ we have
  \begin{eqnarray*}
    \frac{1}{n}\Ebb_f H(c_l, c_r, \ftil_{n,m}) = -\Ebb_f KL(f || \ftil_{n,m}) + C &\leq& -\Ebb_f(\int_0^1 |f(x) - \ftil_{n,m}(x)|dx)^2 + C\\
    &\leq& -(\Ebb_f \int_0^1 |f(x) - \ftil_{n,m}(x)|dx)^2 + C,
  \end{eqnarray*}
  where the last inequality uses the Jensen's inequality. On the other hand, it follows that
  \begin{eqnarray*}
    \frac{1}{n}\Ebb_f H(c_l, c_r, \ftil_{n,m}) &=& -\Ebb_f KL(f || \ftil_{n,m}) + C \\
    &\geq& -\Ebb_f \int_0^1 (f(x) - \ftil_{n,m}(x))^2/f(x)dx + C\\
    &\geq& -\Ebb_f \int_0^1 (f(x) - \ftil_{n,m}(x))^2/f_{min}dx + C\\
    &\geq& -\frac{1}{f_{min}} \cdot (\Ebb_f\int_0^1 |f(x) - \ftil_{n,m}(x)|dx)^2 + C,
  \end{eqnarray*}
  Then the problem is reduced to bound $\Ebb_f\int_0^1 |f(x) - \ftil_{n,m}(x)|dx$. From Theorem \ref{thm:true_c0_c1_valid}, we know that when $m \geq C_1 \cdot \left (\max (N_l(c_l^{(0)}), N_r(c_r^{(0)})) \right )^{\frac{2}{3}}$, there exist $c_l$ and $c_r$ in the neighborhoods of $c_l^{(0)}$ and $c_r^{(0)}$ respectively, such that the resulting estimator $\ftil_{n,m}$ satisfies $\Ebb_f\int_0^1 |f(x) - \ftil_{n,m}(x)|dx \leq K_1 \cdot (N_l(c_l^{(0)})^{-1/3} + N_r(c_r^{(0)})^{-1/3})$ for some positive constants $C_1$ and $K_1$. Along with Lemma \ref{lemma:l1_delta01}, we conclude the desired result.
  ~\\
\end{proof}

\begin{lemma}\label{lemma:l1_delta01}
  Let $\ftil_{n,m}$ be the solution by Algorithm \ref{algo:gU_grid_search} with input $c_l$ and $c_r$ and the corresponding $flag = \textit{True}$. Assume $f_{\max} < \infty$ and $f_{\min} > 0$. If $m \geq C_1 \cdot \left (\max (N_l(c_l^{(0)}), N_r(c_r^{(0)})) \right )^{\frac{2}{3}}$, then $|\Delta_l| \leq C_2 \cdot N_l(c_l^{(0)})^{-1/3}, |\Delta_r| \leq C_3 \cdot N_r(c_r^{(0)})^{-1/3}$ is a necessary condition for
  $$\Ebb_f \int_0^1 |f(x) - \ftil_{n,m}(x)|dx \leq C_4 \cdot (N_l(c_l^{(0)})^{-1/3} + N_r(c_r^{(0)})^{-1/3}),$$
  %$$\Ebb_f \int_0^1 |f(x) - \ftil_{n,m}(x)|dx = c_l \cdot |\Delta_r| + c_r \cdot |\Deltac__2| + \Ocal(N_l(\max\{c_l, c_l^{(0)}\})^{-1/3}) + \Ocal(N_r(\min\{c_r, c_r^{(0)}\})^{-1/3}),$$
  where $C_1, C_2, C_3, C_4$ are four constants depending on $d_l$, $d_r$, $f_{\max}$ and $f_{\min}$; $\Delta_l = c_l - c_l^{(0)}$, $\Delta_r = c_r - c_r^{(0)}$.
\end{lemma}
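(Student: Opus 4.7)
The plan is to produce a deterministic (up to small random errors) lower bound of the form
\[
\int_0^1|f(x)-\ftil_{n,m}(x)|\,dx \;\geq\; \rho_l\,|\Delta_l| + \rho_r\,|\Delta_r| - \varepsilon_{n,m},
\]
where $\rho_l,\rho_r>0$ depend only on the gap parameters $(d_l,\delta_l,d_r,\delta_r)$ and $\varepsilon_{n,m}=O(n^{-1/2})$ captures the replacement of empirical masses by their population counterparts. Taking expectations and feeding in the hypothesis $\Ebb\int|f-\ftil_{n,m}|\leq C_4(N_l(c_l^{(0)})^{-1/3}+N_r(c_r^{(0)})^{-1/3})$ then produces the claimed bounds on $|\Delta_l|,|\Delta_r|$, because the assumption $m\geq C_1(\max\{N_l,N_r\})^{2/3}$ forces $N_l,N_r\lesssim n$ and hence $n^{-1/2}=o(N_l^{-1/3})$.

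To obtain $\rho_l$ I would split by the sign of $\Delta_l$. If $\Delta_l>0$, then $I_l:=[c_l^{(0)},c_l]$ lies in the flat region, so $f(x)=\alpha_{mid}/(c_r^{(0)}-c_l^{(0)})$ on $I_l$, whereas monotonicity of $\tilde g_l$ together with the feasibility flag of Algorithm \ref{algo:gU_grid_search} forces
\[
\ftil_{n,m}(x) \;\geq\; \hat\alpha_l(\mu)\tilde g_l(c_l) \;\geq\; \frac{\hat\alpha_{mid}(c_l,\mu)}{\mu-c_l}+d_l
\]
throughout $I_l$. A DKW/McDiarmid control of $|\hat\alpha_\bullet-\alpha_\bullet|$ at rate $O(n^{-1/2})$ then yields $|\ftil_{n,m}(x)-f(x)|\geq d_l-O(n^{-1/2})$ on $I_l$. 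If instead $\Delta_l<0$, then $I_l:=[c_l,c_l^{(0)}]$ sits inside the decreasing piece of $f$, on which $\ftil_{n,m}$ equals the flat middle estimate $\hat\alpha_{mid}(c_l,c_r)/(c_r-c_l)\approx\alpha_{mid}/(c_r^{(0)}-c_l^{(0)})$, while monotonicity of $f_l$ combined with Assumption \ref{asmpt:density_gap} gives $f_l(x)\geq f_l(c_l^{(0)-})\geq \alpha_{mid}/(c_r^{(0)}-c_l^{(0)})+\delta_l$. Hence $|\ftil_{n,m}-f|\geq\delta_l-O(n^{-1/2})$ on $I_l$. Either way $\rho_l:=\min(d_l,\delta_l)$ works, and the symmetric argument near $c_r^{(0)}$ produces $\rho_r:=\min(d_r,\delta_r)$.

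Integrating the two pointwise lower bounds over $I_l$ and $I_r$ delivers the advertised inequality, and the conclusion follows as described in the first paragraph. The hard parts will be (a) uniformly controlling the empirical-mass approximations over all candidate pairs $(c_l,c_r)$ in the grid explored by Algorithm \ref{algo:gU_grid_search} — a uniform DKW-type bound for $F_{n,m}-F^{(m)}$ combined with Proposition \ref{prop:CDF_deviation} should suffice — and (b) making the $O(n^{-1/2})$ slack negligible relative to $\min(d_l,\delta_l,d_r,\delta_r)$, so that the lower-bound constants $\rho_l,\rho_r$ do not degenerate. The resulting constants $C_2,C_3$ will then depend on $f$ through $1/\rho_l$, $1/\rho_r$, together with $f_{\max}$ and $f_{\min}$ as stated.
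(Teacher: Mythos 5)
Your treatment of the case $\Delta_l>0$ is sound and is close in spirit to the paper's: both arguments exploit the feasibility flag $\tilde g_l(c_l)\ge \tilde d_l$ to force the left estimator to sit at least $d_l$ above the flat level on $[c_l^{(0)},c_l]$, where $f$ itself is flat (the paper phrases this through the jump $|a-b|$ of $\ftil_{n,m}$ at $c_l$ and an averaging inequality, but the mechanism is the same). The case $\Delta_l<0$, however, contains a genuine gap. You assert that on $I_l=[c_l,c_l^{(0)}]$ the estimator equals $\hat\alpha_{mid}(c_l,c_r)/(c_r-c_l)\approx\alpha_{mid}/(c_r^{(0)}-c_l^{(0)})$ up to an $O(n^{-1/2})$ error, and conclude $|\ftil_{n,m}-f|\ge\delta_l-O(n^{-1/2})$ pointwise on $I_l$. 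But the discrepancy between $\alpha_{mid}(c_l,c_r)/(c_r-c_l)$ and the true flat height is \emph{deterministic and of order $|\Delta_l|$}, not $O(n^{-1/2})$: when $c_l<c_l^{(0)}$ the interval $(c_l,c_r]$ swallows the piece $[c_l,c_l^{(0)})$ on which $f_l$ exceeds the flat level by at least $\delta_l$, so
$$\frac{\alpha_{mid}(c_l,c_r)}{c_r-c_l}-\frac{\alpha_{mid}(c_l^{(0)},c_r^{(0)})}{c_r^{(0)}-c_l^{(0)}}=\frac{\int_{c_l}^{c_l^{(0)}}\bigl(f_l(x)-f_{flat}\bigr)dx}{c_r-c_l}\;\in\;\Bigl[\tfrac{\delta_l|\Delta_l|}{c_r-c_l},\,\tfrac{f_{\max}|\Delta_l|}{c_r-c_l}\Bigr]$$
(taking $c_r=c_r^{(0)}$ for simplicity). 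Your pointwise bound on $I_l$ therefore degrades to $\delta_l-O(|\Delta_l|)-O(n^{-1/2})$, which can be negative precisely in the regime where $|\Delta_l|$ is not already small --- i.e.\ the regime the lemma is supposed to rule out. The advertised master inequality with $\rho_l=\min(d_l,\delta_l)$ does not follow.

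The fix is essentially what the paper does: instead of integrating over $I_l$, integrate the flat-level discrepancy over the \emph{true} flat region $(c_l^{(0)},c_r^{(0)}]$, whose length is a constant. The displayed inequality above then gives directly $\int_{c_l^{(0)}}^{c_r^{(0)}}|\ftil_{n,m}-f|\,dx\ge \delta_l\,(c_r^{(0)}-c_l^{(0)})\,|\Delta_l|/(c_r-c_l)-O(N_l(c_l^{(0)})^{-1/3})$, where the $N_l^{-1/3}$ term absorbs the error (controlled by Theorem \ref{thm:noisy_l1_grenander}) in replacing $\int_0^{c_l}\ftil_{n,m}$ by $\int_0^{c_l}f$. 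In other words, the term you discarded as negligible noise is exactly the quantity that carries the $|\Delta_l|$ lower bound in this case. Alternatively, a dichotomy works: if the estimated flat level is within $\delta_l/2$ of $f_{flat}$ your $I_l$ argument goes through with $\rho_l=\delta_l/2$, and otherwise the $L_1$ error over the true flat region is bounded below by a constant, contradicting the hypothesis for large $N_l(c_l^{(0)})$. Either repair should be stated explicitly; as written, the $\Delta_l<0$ branch does not close.
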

\begin{proof}
  For simplicity, we consider $c_r^{(0)} = 1$, i.e., the case where there is only the decreasing part and the flat part. For a general case where $c_r^{(0)} < 1$, we just need to focus on $[0, \mu]$. If $c_l < c_l^{(0)}$, the $L_1$ distance between $\ftil_{n,m}$ and $f$ is
  \begin{eqnarray*}
    && \Ebb_f \int_0^1 |\ftil_{n,m}(x) - f(x)|dx\\
    &=& \Ebb_f \int_0^{c_l^{(0)}} |\ftil_{n,m}(x) - f(x)|dx  + \Ebb_f \int_{c_l^{(0)}}^1 |\ftil_{n,m}(x) - f(x)|dx \\
    &\overset{(a)}{\geq}& -K_1 \cdot N_l(c_l)^{-1/3} + \Ebb_f \int_{c_l^{(0)}}^1 |\ftil_{n,m}(x) - f(x)|dx\\
    &\overset{(b)}{=}& -K_1 \cdot N_l(c_l)^{-1/3} + \Ebb_f |\frac{1 - \int_0^{c_l} \ftil_{n,m}(x) dx}{1 - c_l} - \frac{1 - \int_0^{c_l^{(0)}} f(x) dx}{1 - c_l^{(0)}}|(1-c_l^{(0)})\\
    &\overset{(c)}{\geq}& -K_2 \cdot N_l(c_l^{(0)})^{-1/3} + |\frac{1 - \int_0^{c_l} f(x) dx}{1 - c_l} - \frac{1 - \int_0^{c_l^{(0)}} f(x) dx}{1 - c_l^{(0)}}|(1-c_l^{(0)})\\
    &=& -K_2 \cdot N_l(c_l^{(0)})^{-1/3} + |\frac{-\Delta_l + (1 - c_l) \int_{c_l}^{c_l^{(0)}} f(x) dx + \Delta_l \int_0^{c_l^{(0)}} f(x) dx}{1 - c_l}|\\
    &=& -K_2 \cdot N_l(c_l^{(0)})^{-1/3} + |\frac{-\Delta_l - (1 - c_l) \Delta_l \gamma + \Delta_l \int_0^{c_l^{(0)}} f(x) dx}{1 - c_l}|\\
    &=& -K_2 \cdot N_l(c_l^{(0)})^{-1/3} + \kappa|\Delta_l|,
  \end{eqnarray*}
  where $K_1$ and $K_2$ are two positive constants that only depend on $f$, $\min_{x \in [c_l, c_l^{(0)}]} f(x) \leq \gamma \leq \max_{x \in [c_l, c_l^{(0)}]} f(x)$, $\kappa = |\frac{1 + (1 - c_l) \gamma - \int_0^{c_l^{(0)}} f(x) dx}{1 - c_l}| < \infty$ since $f_{\max} < \infty$. The inequality (a) and the equation (c) are obtained using Theorem \ref{thm:noisy_l1_grenander}. The equation (b) makes use of assumption that the right hand side is a flat region. Then, if there exists $C_4 > 0$ such that $\Ebb_f \int_0^1 |\ftil_{n,m}(x) - f(x)|dx \leq C_4 \cdot N_l(c_l^{(0)})^{-1/3}$, then $|\Delta_l| \leq C_2 \cdot N_l(c_l^{(0)})^{-1/3}$, for some positive constant $C_2$.

  Next, we investigate the case when $c_l > c_l^{(0)}$. Denote $a := \lim_{x \rightarrow (c_l)_{-}} \ftil_{n,m}(x)$, $b := \lim_{x \rightarrow (c_l)_{+}}\ftil_{n,m}(x)$, $c := \int_{c_l^{(0)}}^{c_l} \ftil_{n,m}(x) dx$, and $\alpha \equiv f(x) (\text{when}~x > c_l^{(0)})$. It is easy to check that
$$|c - \alpha| \cdot |\Delta_l| \leq \int_{c_l^{(0)}}^{c_l} |\ftil_{n,m}(x) - \alpha| dx.$$
  Using the $L_1$ convergence of the Grenander estimator, it follows that 
  $$|\Delta_l| |c - \alpha| + (1-c_l) |b - \alpha| \leq \int_{c_l^{(0)}}^{c_l} |\ftil_{n,m}(x) - \alpha| dx + (1 - c_l)|b - \alpha| \leq \int_0^1 |\ftil_{n,m}(x) - f(x)| dx.$$
  If there exists $C_4 > 0$ such that $\int_0^1 |\ftil_{n,m}(x) - f(x)| dx  \leq C_4 \cdot N_l(c_l^{(0)})^{-1/3}$,  we have
  \begin{eqnarray*}
    \frac{|\Delta_l|}{1 - c_l^{(0)}} |a - b| &\leq& \frac{|\Delta_l|}{1 - c_l^{(0)}} |c - b|\\
    &\leq& \frac{|\Delta_l|}{1 - c_l^{(0)}} |c - \alpha| + \frac{|\Delta_l|}{1 - c_l^{(0)}} |\alpha - b|\\
    &\leq& \frac{|\Delta_l|}{1 - c_l} |c - \alpha| + |\alpha - b| \leq \frac{C_4}{1 - c_l} \cdot N_l(c_l^{(0)})^{-1/3}.
  \end{eqnarray*}
  If the output $flag$ of Algorithm \ref{algo:gU_grid_search} is \textit{True}, $a - b \geq \frac{\alpha_{mid}(c_l, \mu)}{\alpha_l(\mu) (1 - c_l)} + \frac{d_l}{\alpha_l(\mu)} + K_3 \cdot n^{-1/2}$ for some positive constant $K_3$. Then it must follow that $|\Delta_l| \leq C_2 \cdot N_l(c_l^{(0)})^{-1/3}$, where $C_2 > 0$. So far, we have proven the lemma for the left hand side. For the right hand side, it can be proven similarly.
\end{proof}

\section{More simulation studies}\label{sec:bmu_more_simulation}
Besides the linear valley model specified in Section \ref{sec:bmu_sim_data_generation}, we also consider a non-linear model and a misspecified model.

For the non-linear model, we replace the left linear part in the linear valley model with an unnormalized decreasing function $f_l = Beta(x/c_l; 0.5, 1.5)/c_l \cdot 3/20$, $x \in [0, c_l]$. We replace the right linear part with an unnormalized increasing function $f_r = Beta(x/(1 - c_r); 2, 0.8)/(1-c_r)\cdot 1/20$, $x \in (c_r, 1]$. Here $Beta(x; \alpha, \beta)$ is a density of Beta distribution with parameters $\alpha$ and $\beta$. In this case, we use $m = 10^4$. Applying \GFCMethod\/ to the synthetic data generated by this model, we observe similar results as those from the linear valley model; see Figure \ref{fig:general_convergence}\ref{fig:general_sensitivity}\ref{fig:general_comp_rlt}. It indicates that \GFCMethod\/ can detect the cutoff in a satisfying range as long as the underlying model satisfies Assumption \ref{asmpt:U_shape} and Assumption \ref{asmpt:density_gap}. \GFCMethod\/ is particularly useful when the middle part is ``uniform'' and not easy to tell apart the samples of the alternative distribution from those of the null distribution.

For the misspecified model, we replace the left linear part with an unnormalized unimodal function $f_l = Beta(x/c_l; 1.5, 5)/c_l \cdot 3$, $x \in [0, c_l]$. We replace the right linear part with an unnormalized unimodal function $f_r = Beta(x/(1 - c_r); 2.5, 1.5)/(1-c_r)$, $x \in (c_r, 1]$. The results can be found in Figure \ref{fig:unimodal_convergence},\ref{fig:unimodal_sensitivity},\ref{fig:unimodal_comp_rlt}. We observe that the estimated $c_r$ is not satisfying until $m$ attains $10^4$. So in the other experiments, we use $m = 10^4$. We find that although the variance of $\hat{c}_r$ becomes larger than the estimate for the correctly specified model, the detected cutoff tends to be larger than the truth. It implies that if the model does not align with Assumption \ref{asmpt:U_shape} and Assumption \ref{asmpt:density_gap}, \GFCMethod\/ is still useful because it is conservative.

% \subsection{General Decreasing-Uniform-Increasing Model}\label{appendix:bmu_more_results_general}

\begin{figure}
  \begin{minipage}{0.33\textwidth}
    \centering
    \includegraphics[width=\textwidth]{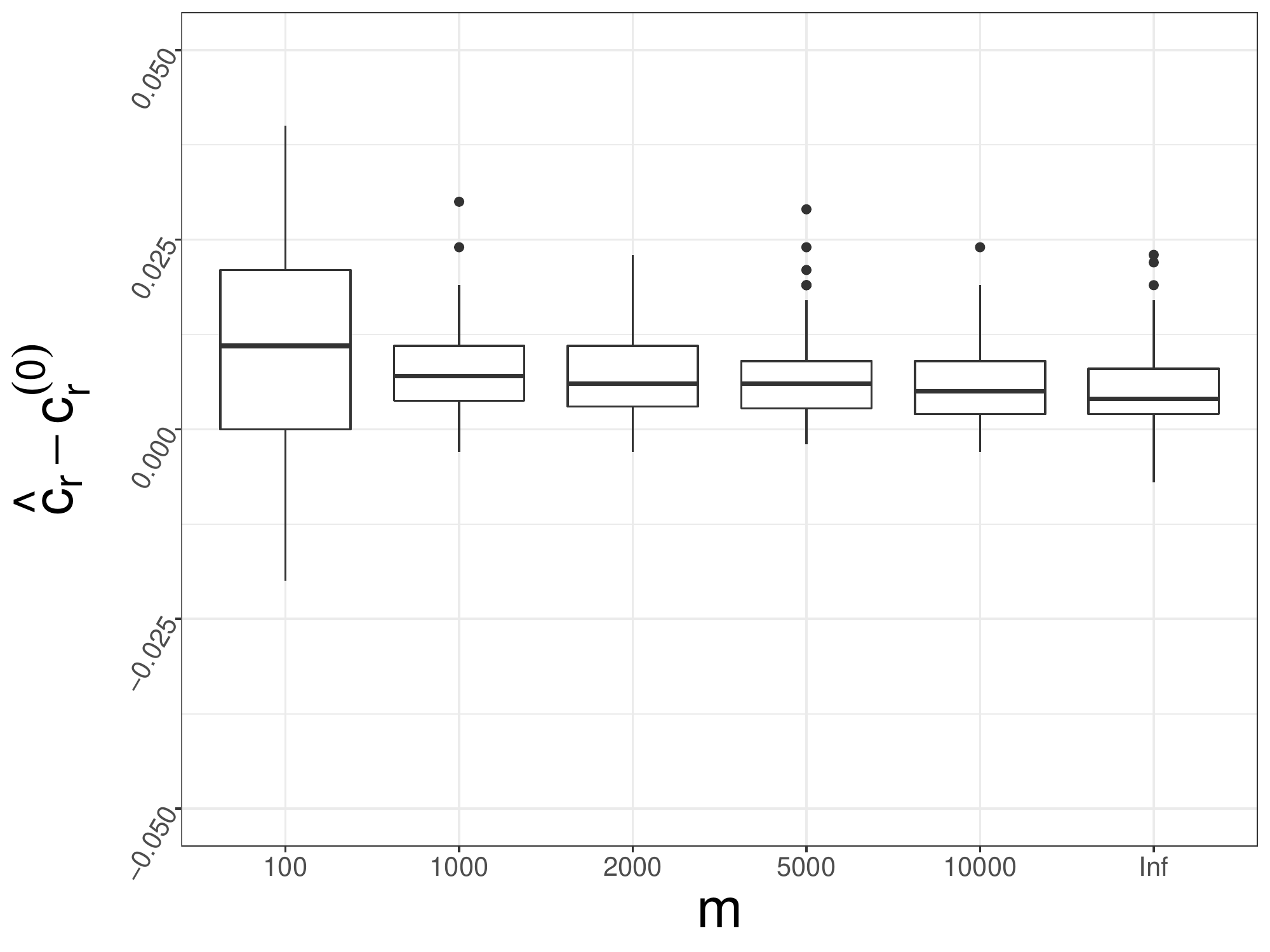}
    \subcaption{}
    %\subfloat[original model]{\hspace{.5\linewidth}}    
  \end{minipage}
  \begin{minipage}{0.33\textwidth}
    \centering
    \includegraphics[width=\textwidth]{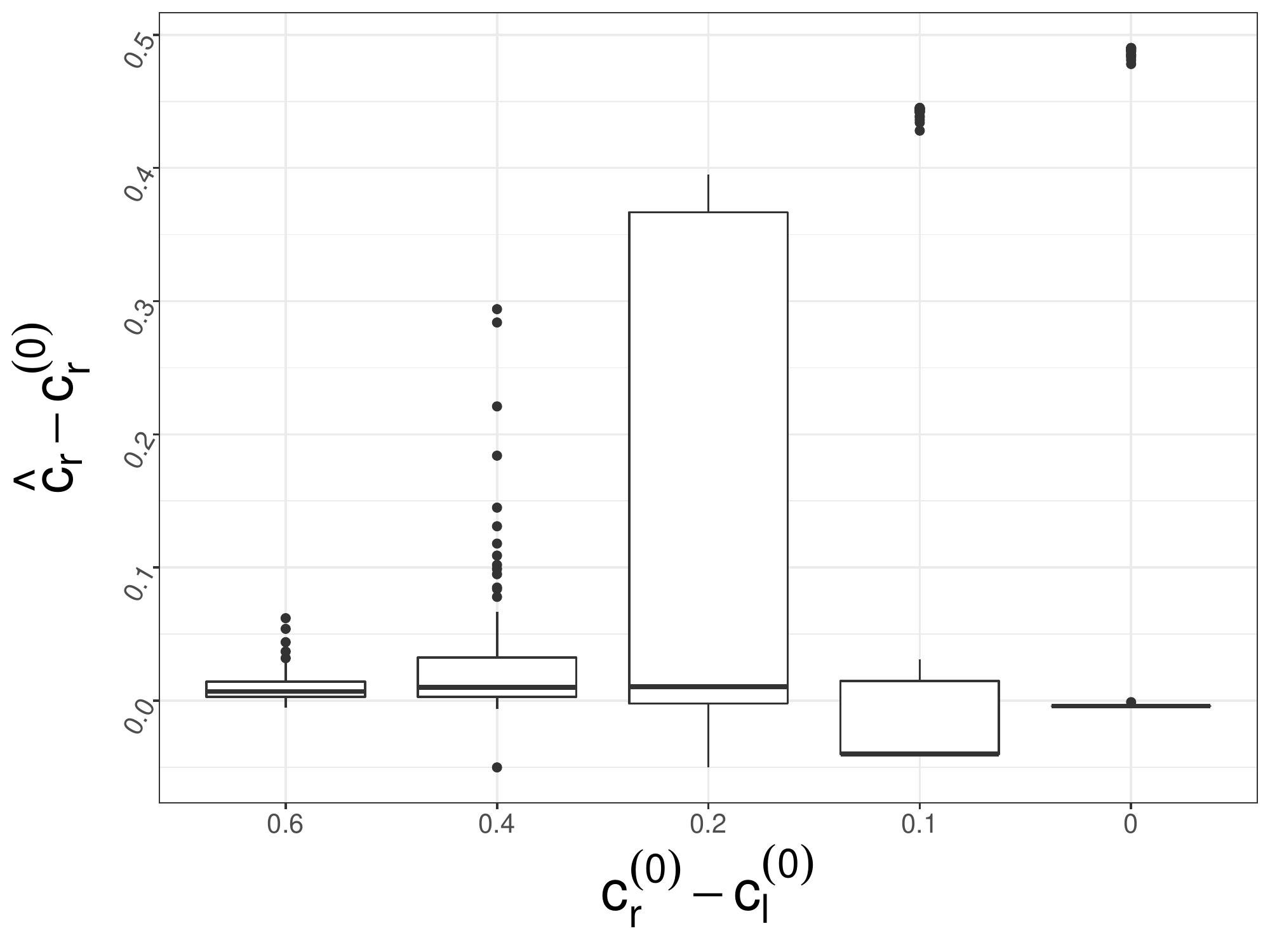}
    \subcaption{}
    %\subfloat[original model]{\hspace{.5\linewidth}}    
  \end{minipage}
    \begin{minipage}{0.33\textwidth}
      \centering
      \includegraphics[width=\textwidth]{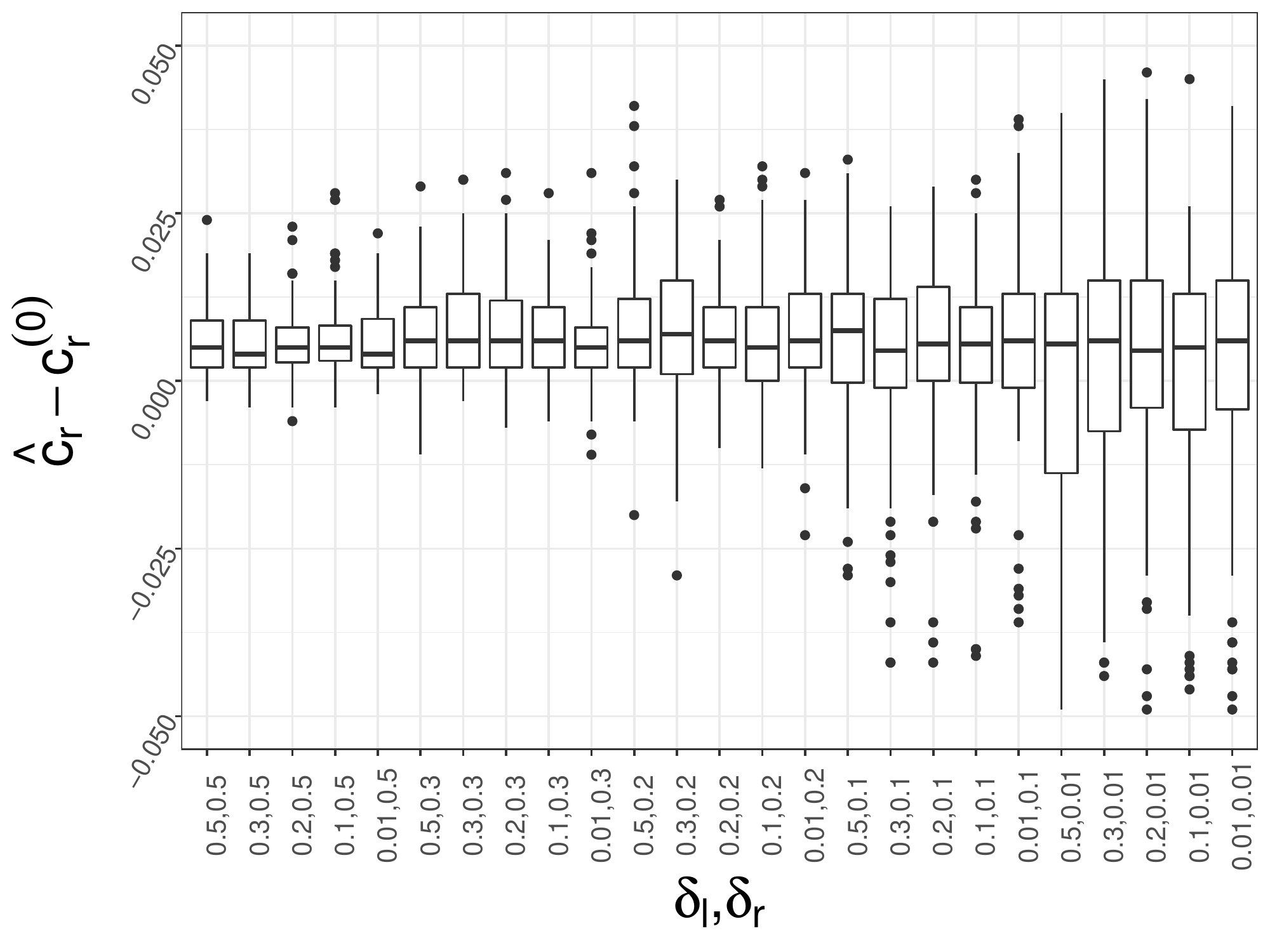}
    \subcaption{}
    %\subfloat[original model]{\hspace{.5\linewidth}}    
  \end{minipage}
  \caption{The estimation of $\hat{c}_r$ under the non-linear decreasing-uniform-increasing model. (a) with respect to $m$; (b) with respect to the width of the middle flat region; (c) with respect to the gap sizes.}\label{fig:general_convergence}
\end{figure}

% \begin{figure}[ht]
%   \centering
%   \includegraphics[width=0.6\textwidth]{m_general_rlt.pdf}
%   \caption{The convergence of $\hat{c}_r$ with respect to $m$ on the non-linear decreasing-uniform-increasing model.}\label{fig:m_general_rlt}
% \end{figure}

% \begin{figure}[ht]
%   \centering
%     \includegraphics[width=0.6\textwidth]{middle_dist_general_rlt.pdf}
%   \caption{The estimation of $\hat{c}_r$ with respect to the width of the middle flat region on the non-linear decreasing-uniform-increasing model.}\label{fig:middle_dist_general_rlt}
% \end{figure}

% \begin{figure}[ht]
%   \centering
%   \includegraphics[width=0.6\textwidth]{d_general_rlt.pdf}
%     \caption{The estimation of $\hat{c}_r$ with respect to the gap sizes on the non-linear decreasing-uniform-increasing model.}\label{fig:d_general_rlt}
% \end{figure}

\begin{figure}
  \begin{minipage}{0.49\textwidth}
    \centering
    \includegraphics[width=\textwidth]{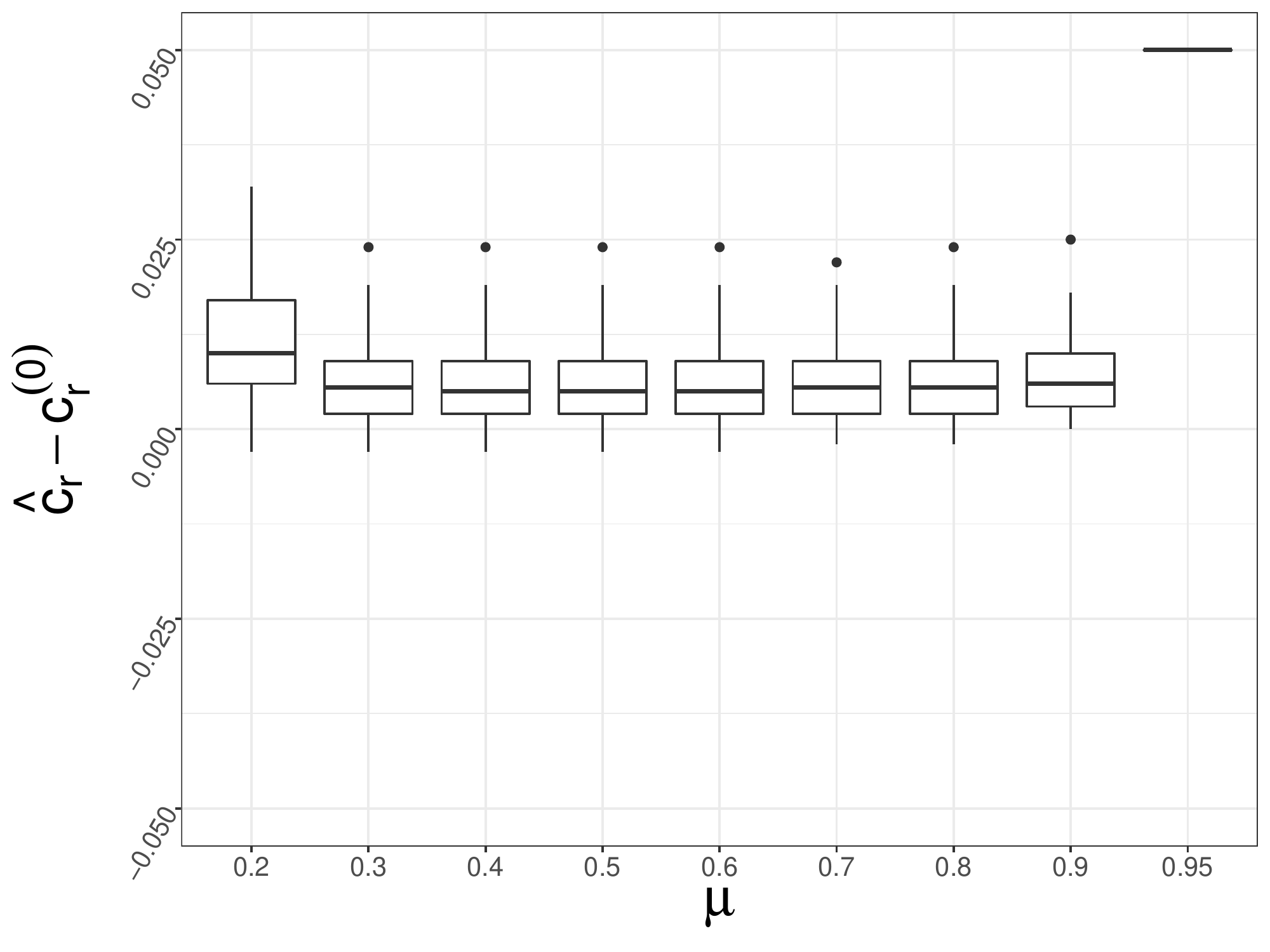}
    \subcaption{}
    %\subfloat[original model]{\hspace{.5\linewidth}}    
  \end{minipage}
  \begin{minipage}{0.49\textwidth}
    \centering
  \includegraphics[width=\textwidth]{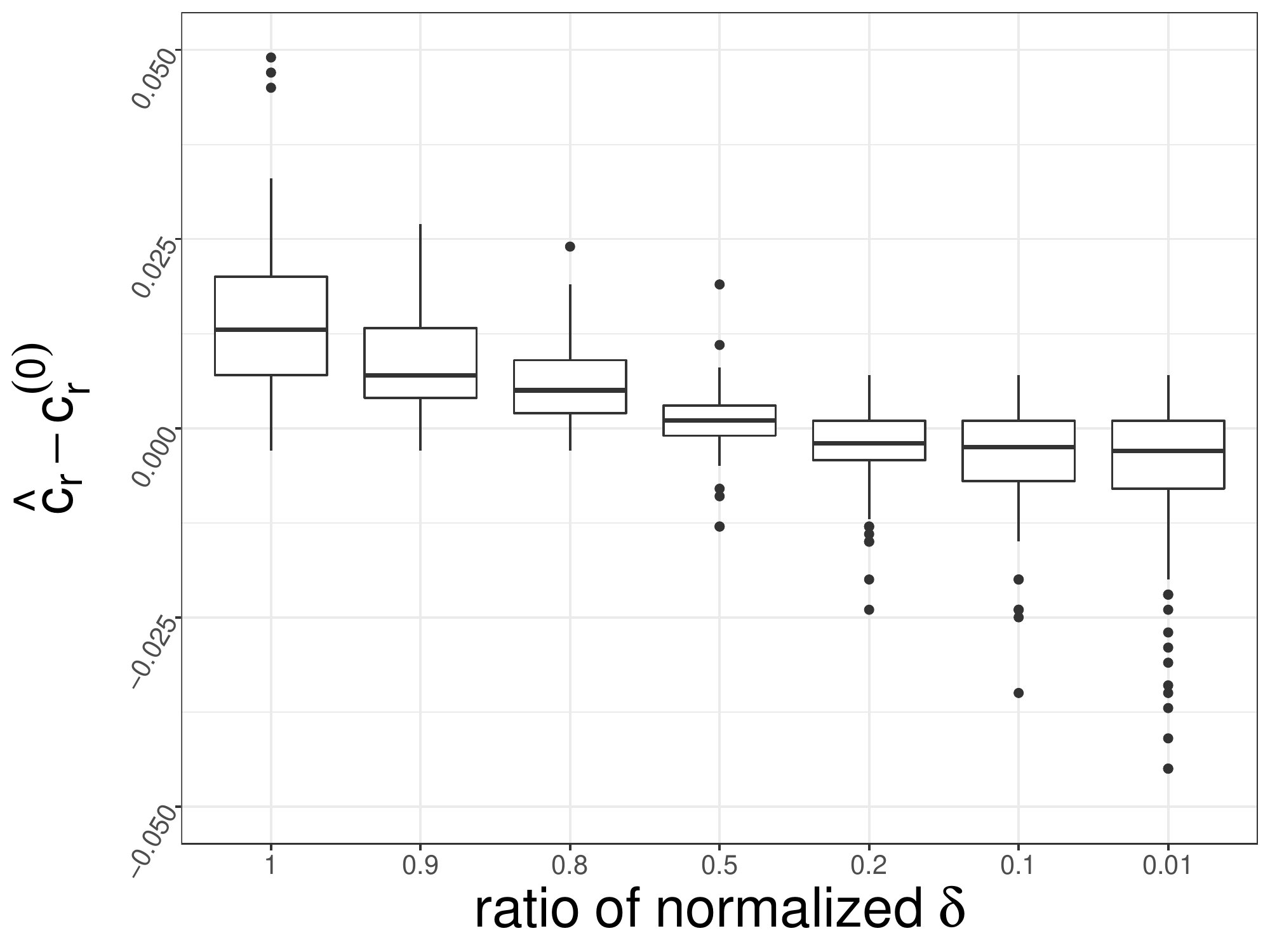}      
    \subcaption{}
    %\subfloat[original model]{\hspace{.5\linewidth}}    
  \end{minipage}
      \caption{The estimation of $\hat{c}_r$ under the non-linear decreasing-uniform-increasing model. (a) with respect to the choice of the middle point $\mu$; (b) with respect to the choice of the input $d_l$ and $d_r$. Here $d_l = \kappa \times \tilde{\delta}_l$ and $d_r = \kappa \times \tilde{\delta}_r$, where $\kappa$ is a ratio of the normalized $\delta$'s.}\label{fig:general_sensitivity}
\end{figure}

% \begin{figure}[ht]
%   \centering
%   \includegraphics[width=0.6\textwidth]{mu_general_rlt.pdf}
%   \caption{The estimation of $\hat{c}_r$ with respect to the choice of the middle point $\mu$ on the non-linear decreasing-uniform-increasing model.}\label{fig:mu_general_rlt}
% \end{figure}

% \begin{figure}[ht]
%   \centering
%   \includegraphics[width=0.6\textwidth]{delta_ratio_general_rlt.pdf}
%   \caption{The estimation of $\hat{c}_r$ with respect to the choice of the input $d_l$ and $d_r$ on the non-linear decreasing-uniform-increasing model.. Here $d_l = \kappa \times \tilde{\delta}_l$ and $d_r = \kappa \times \tilde{\delta}_r$, where $\kappa$ is a ratio of the normalized $\delta$'s.}\label{fig:delta_ratio_general_rlt}
% \end{figure}

\begin{figure}[ht]
  \centering
  \includegraphics[width=0.6\textwidth]{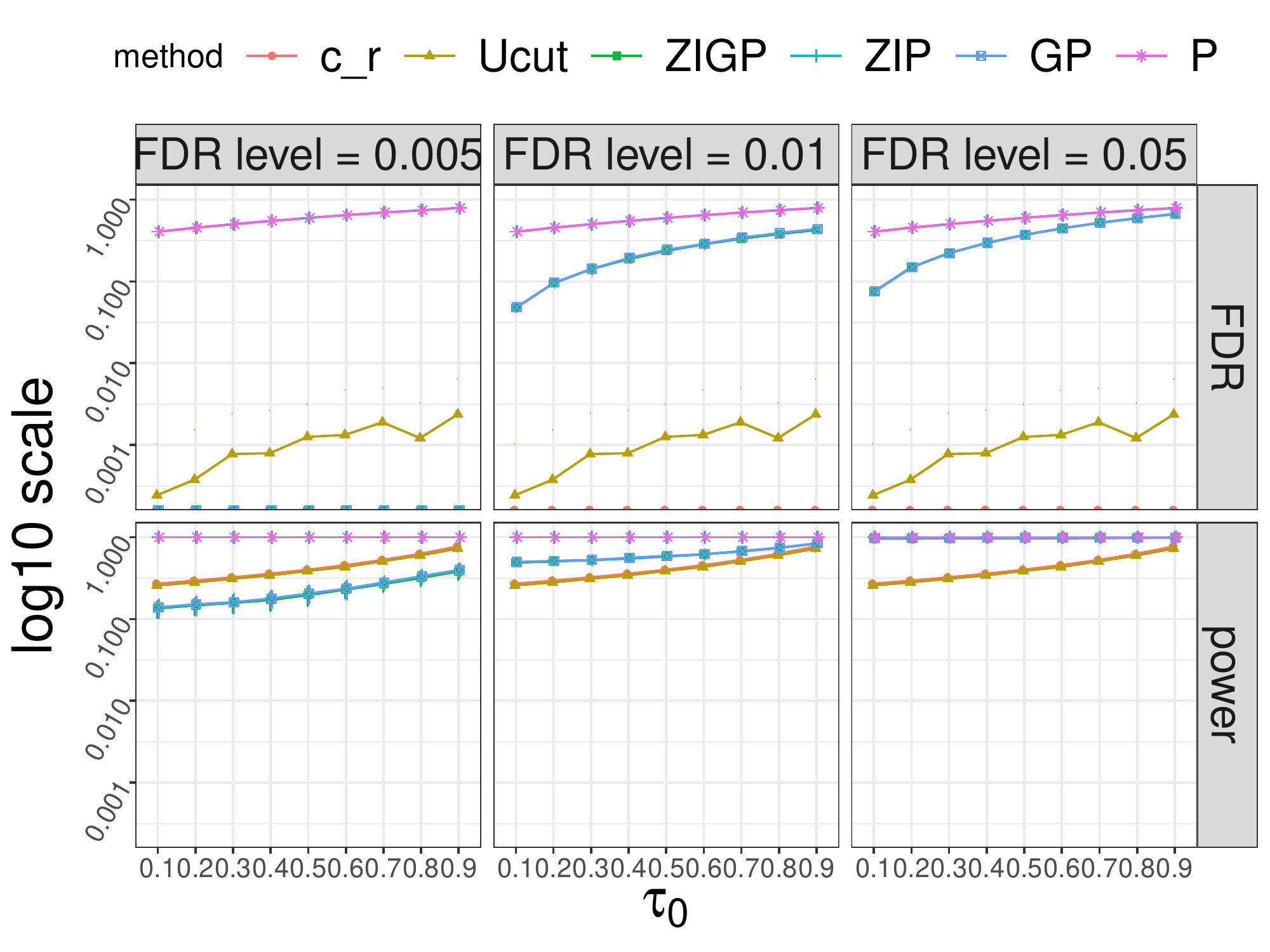}
  \caption{FDR and power of Algorithm \ref{algo:gU_grid_search} and other competing methods on the non-linear decreasing-uniform-increasing model.}\label{fig:general_comp_rlt}
\end{figure}

% \subsection{Model Misspecification: Two unimodals on two ends}\label{appendix:bmu_more_results_unimodal}

\begin{figure}
  \begin{minipage}{0.33\textwidth}
    \centering
    \includegraphics[width=\textwidth]{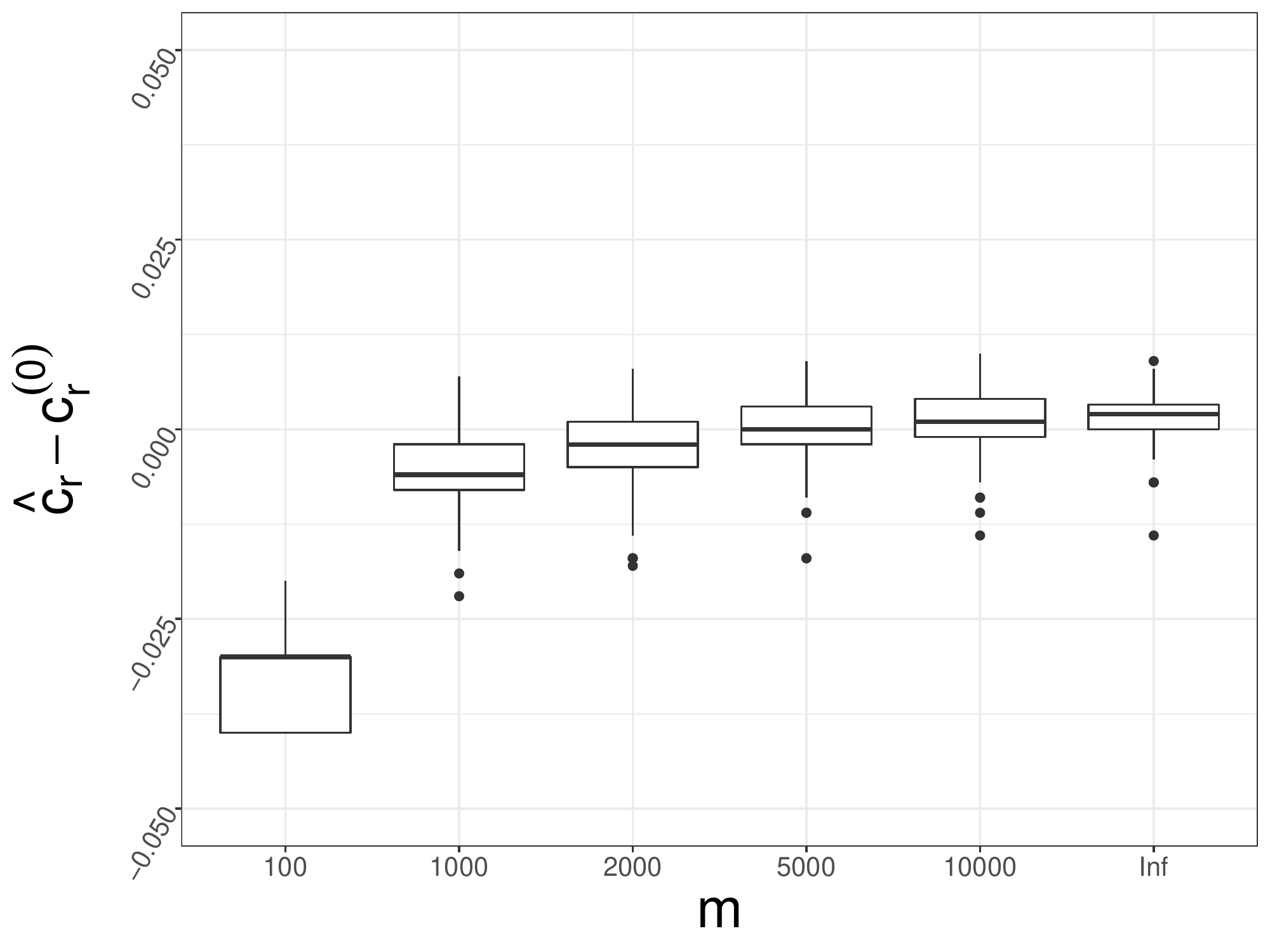}
    \subcaption{}
    %\subfloat[original model]{\hspace{.5\linewidth}}    
  \end{minipage}
  \begin{minipage}{0.33\textwidth}
    \centering
    \includegraphics[width=\textwidth]{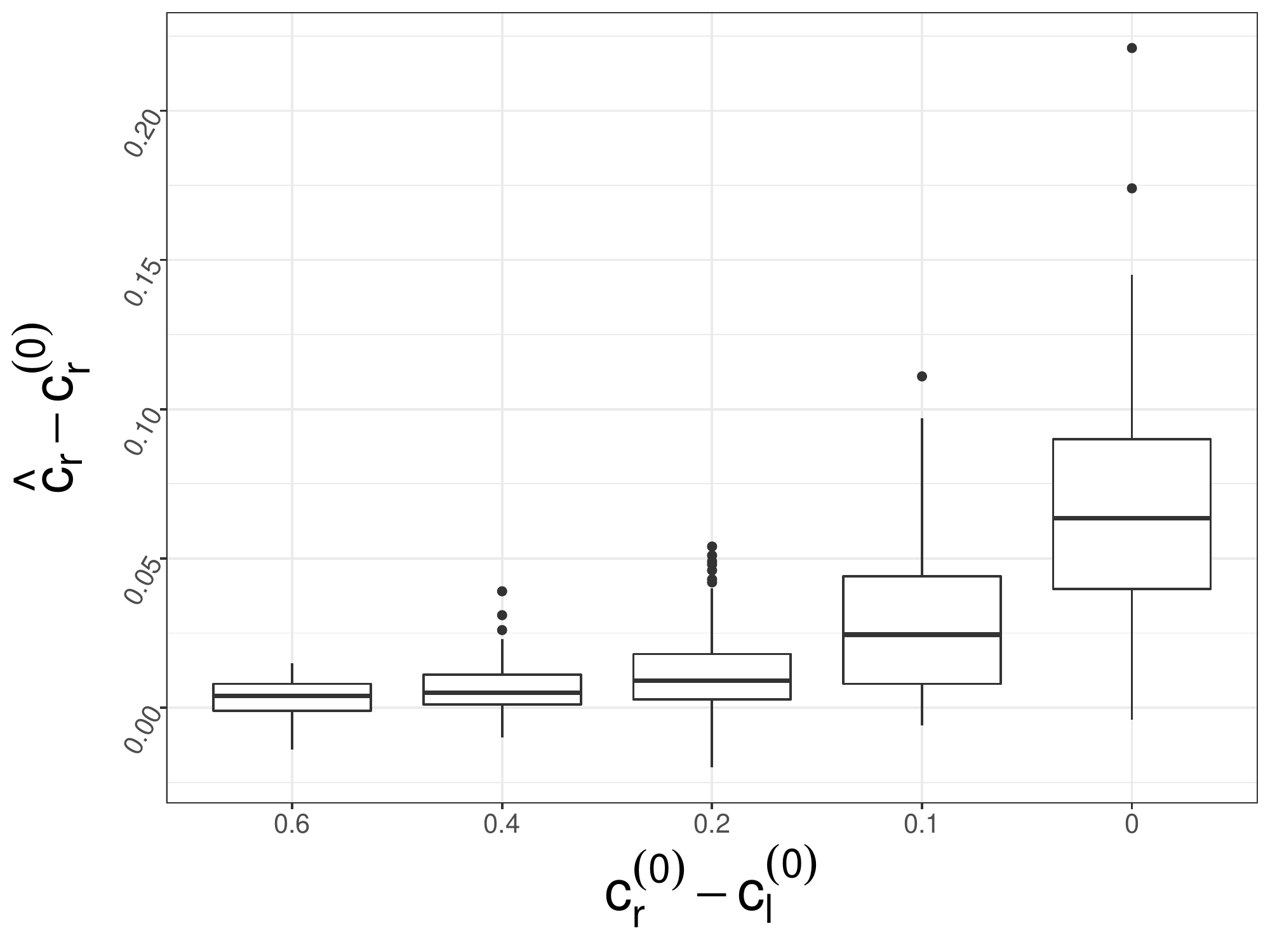}
    \subcaption{}
    %\subfloat[original model]{\hspace{.5\linewidth}}    
  \end{minipage}
    \begin{minipage}{0.33\textwidth}
      \centering
      \includegraphics[width=\textwidth]{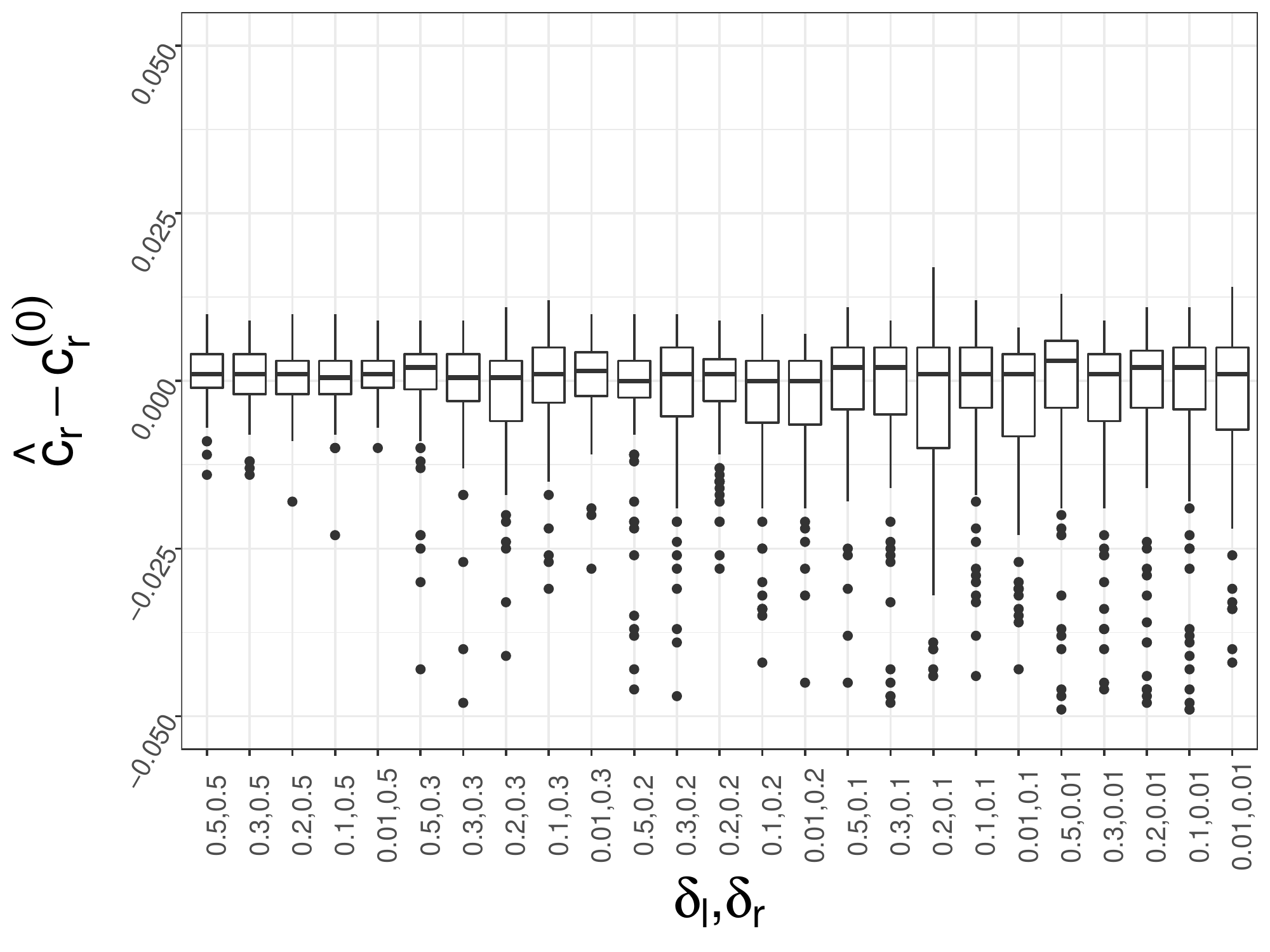}
    \subcaption{}
    %\subfloat[original model]{\hspace{.5\linewidth}}    
  \end{minipage}
  \caption{The estimation of $\hat{c}_r$ under the unimodal model. (a) with respect to $m$; (b) with respect to the width of the middle flat region; (c) with respect to the gap sizes.}\label{fig:unimodal_convergence}
\end{figure}

% \begin{figure}[ht]
%   \centering
%   \includegraphics[width=0.6\textwidth]{m_unimodal_rlt.pdf}
%   \caption{The convergence of $\hat{c}_r$ with respect to $m$ on the misspecified  model.}\label{fig:m_unimodal_rlt}
% \end{figure}

% \begin{figure}[ht]
%   \centering
%     \includegraphics[width=0.6\textwidth]{middle_dist_unimodal_rlt.pdf}
%   \caption{The estimation of $\hat{c}_r$ with respect to the width of the middle flat region on the misspecified  model.}\label{fig:middle_dist_unimodal_rlt}
% \end{figure}

% \begin{figure}[ht]
%   \centering
%   \includegraphics[width=0.6\textwidth]{d_unimodal_rlt.pdf}
%     \caption{The estimation of $\hat{c}_r$ with respect to the gap sizes on the misspecified  model.}\label{fig:d_unimodal_rlt}
% \end{figure}

\begin{figure}
  \begin{minipage}{0.49\textwidth}
    \centering
    \includegraphics[width=\textwidth]{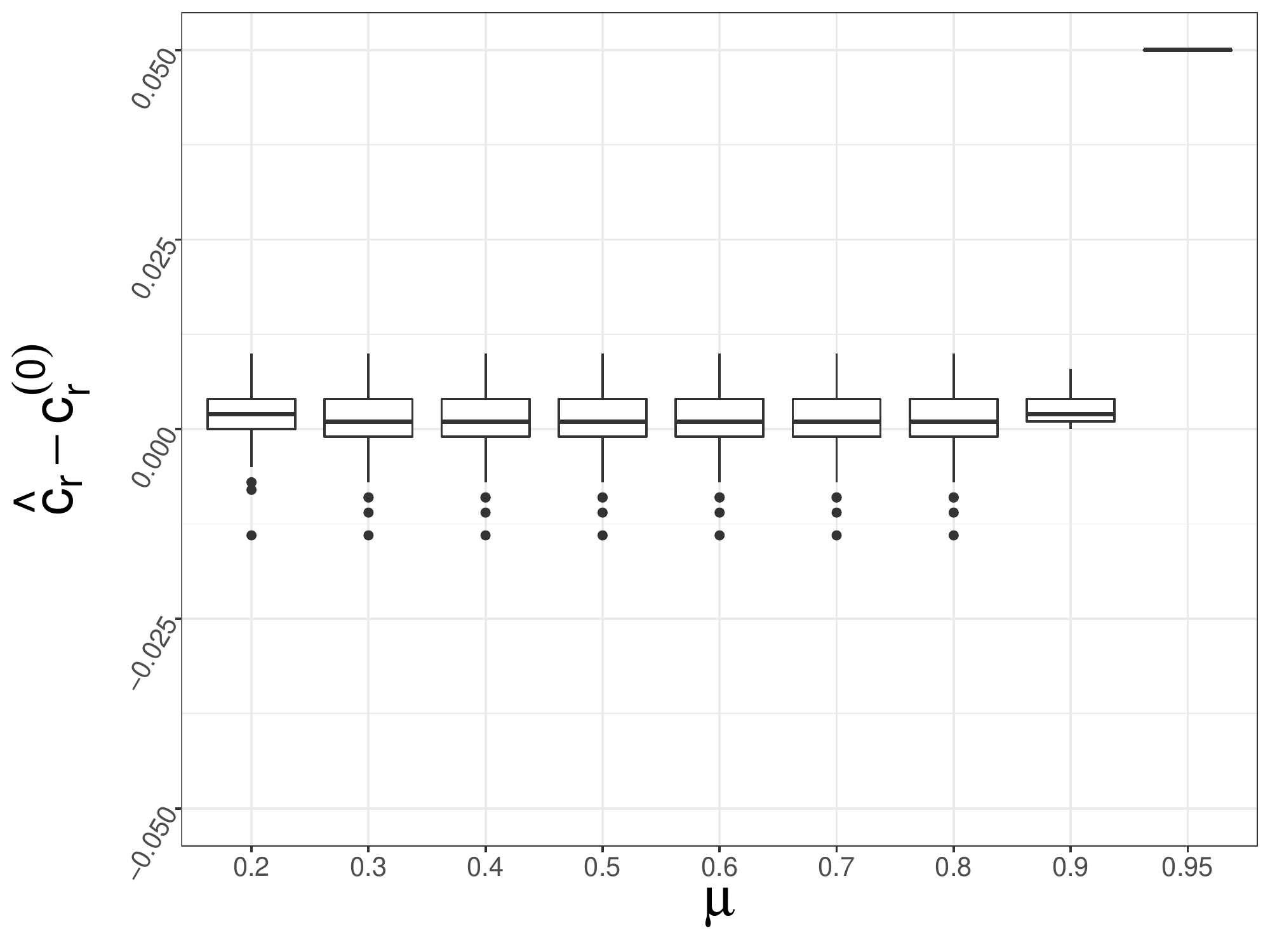}
    \subcaption{}
    %\subfloat[original model]{\hspace{.5\linewidth}}    
  \end{minipage}
  \begin{minipage}{0.49\textwidth}
    \centering
  \includegraphics[width=\textwidth]{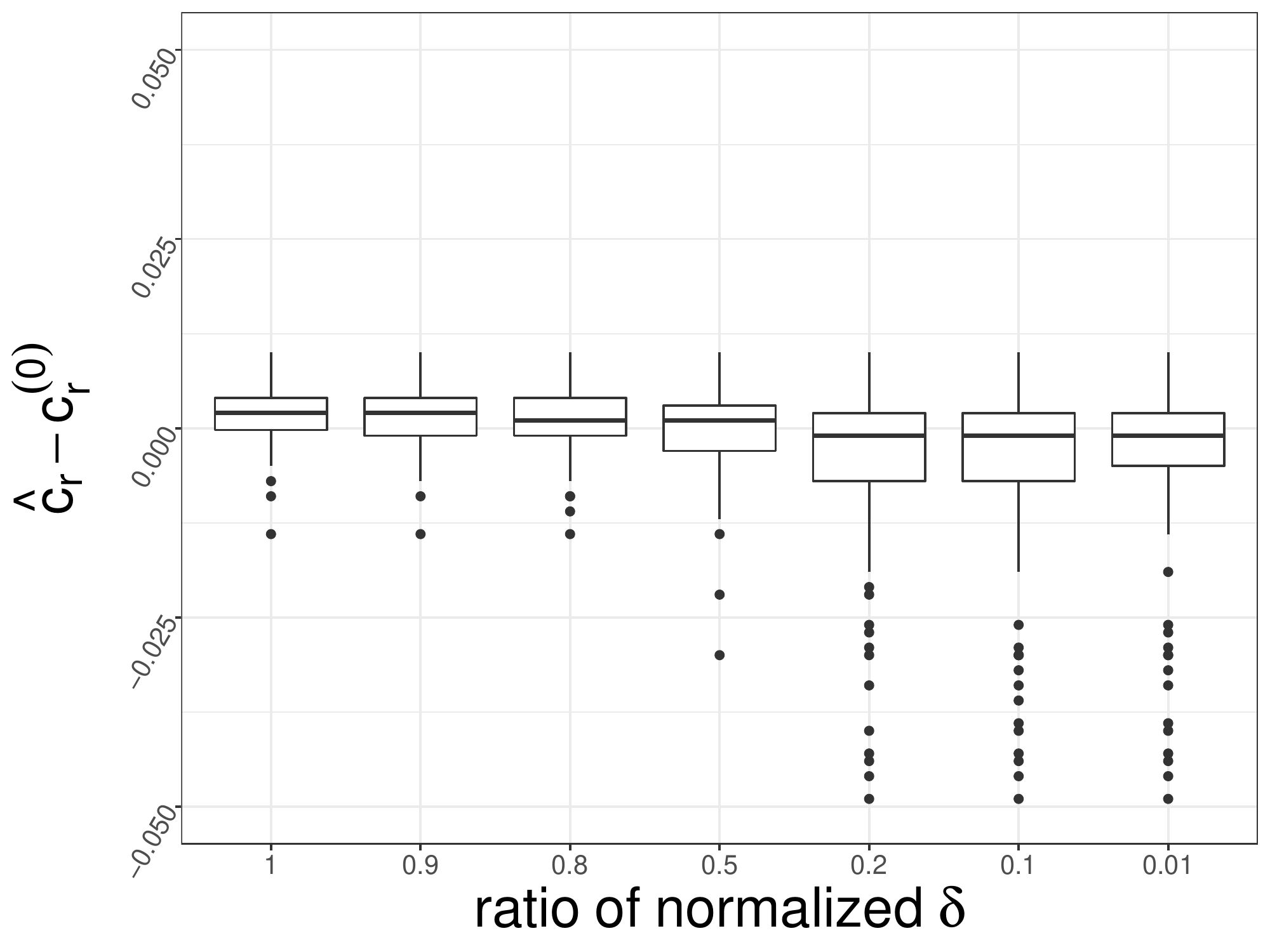}      
    \subcaption{}
    %\subfloat[original model]{\hspace{.5\linewidth}}    
  \end{minipage}
      \caption{The estimation of $\hat{c}_r$ under the unimodal model. (a) with respect to the choice of the middle point $\mu$; (b) with respect to the choice of the input $d_l$ and $d_r$. Here $d_l = \kappa \times \tilde{\delta}_l$ and $d_r = \kappa \times \tilde{\delta}_r$, where $\kappa$ is a ratio of the normalized $\delta$'s.}\label{fig:unimodal_sensitivity}
\end{figure}

% \begin{figure}[ht]
%   \centering
%   \includegraphics[width=0.6\textwidth]{mu_unimodal_rlt.pdf}
%   \caption{The estimation of $\hat{c}_r$ with respect to the choice of the middle point $\mu$ on the misspecified  model.}\label{fig:mu_unimodal_rlt}
% \end{figure}

% \begin{figure}[ht]
%   \centering
%   \includegraphics[width=0.6\textwidth]{delta_ratio_unimodal_rlt.pdf}
%   \caption{The estimation of $\hat{c}_r$ with respect to the choice of the input $d_l$ and $d_r$ on the misspecified  model. Here $d_l = \kappa \times \tilde{\delta}_l$ and $d_r = \kappa \times \tilde{\delta}_r$, where $\kappa$ is a ratio of the normalized $\delta$'s.}\label{fig:delta_ratio_unimodal_rlt}
% \end{figure}

\begin{figure}[ht] 
  \centering
  \includegraphics[width=0.6\textwidth]{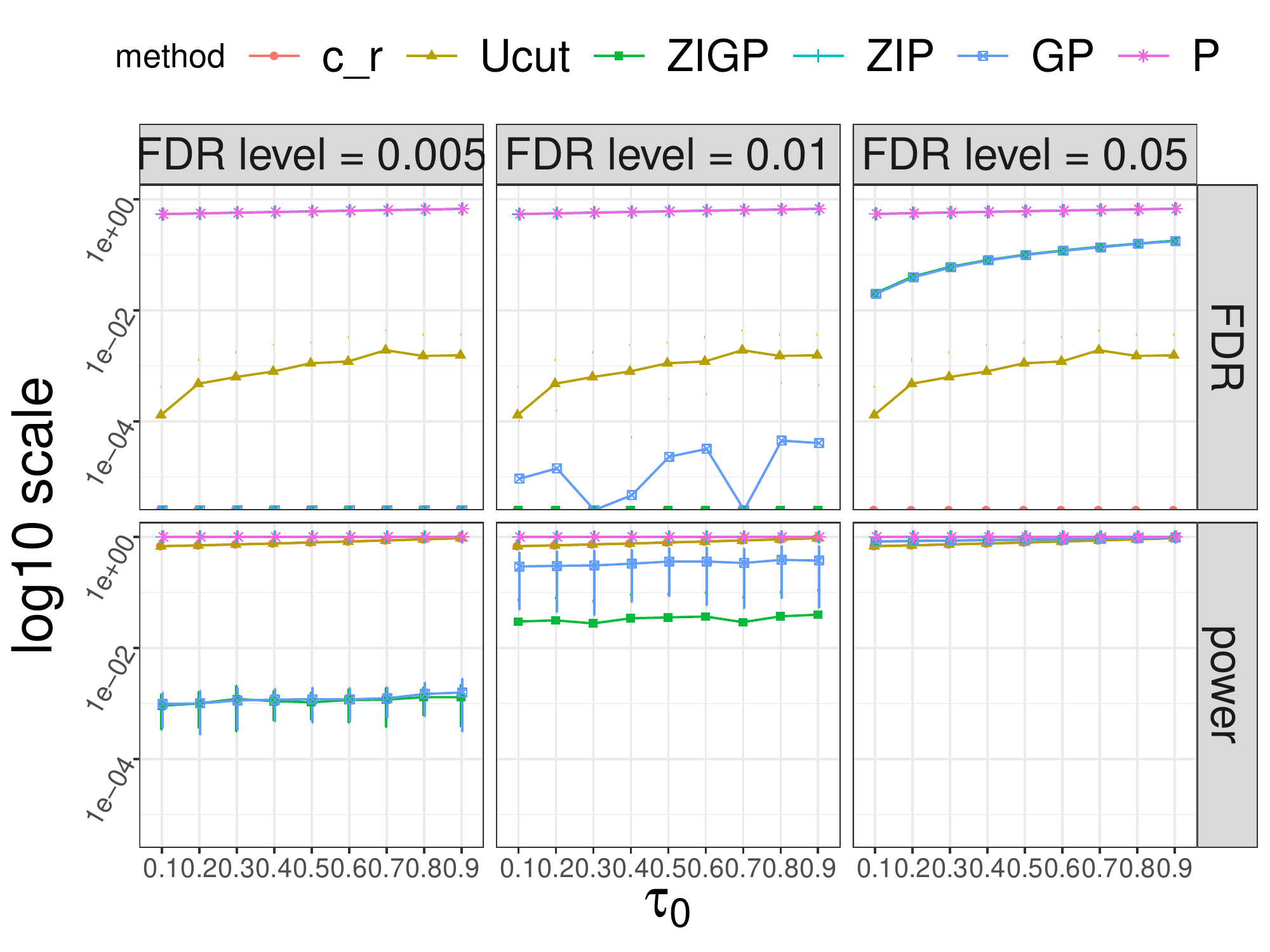}
  \caption{FDR and power of Algorithm \ref{algo:gU_grid_search} and other competing methods on the misspecified  model.}\label{fig:unimodal_comp_rlt}
\end{figure}

%\clearpage
%\printbibliography

\end{document}